\documentclass[12pt,notitlepage]{article}%
\usepackage[margin=1in]{geometry} 
\usepackage[toc,title,page]{appendix}
\usepackage{amsthm}
\usepackage{setspace}
\usepackage{eurosym}
\usepackage{subfigure}
\usepackage{xcolor}
\usepackage{url}
\usepackage{supertabular}
\usepackage{amssymb}
\usepackage{graphicx}
\usepackage[colorlinks,linkcolor=links,citecolor=cites,urlcolor=MyDarkBlue]%
{hyperref}
\usepackage{dsfont}
\usepackage{amsfonts}
\usepackage{amsmath}
\usepackage{amstext}
\usepackage{mathtools}
\usepackage{tikz}
\usepackage{rotating}
\usepackage{lscape}
\usepackage{chngpage}
\usepackage{esint}
\usepackage{natbib}
\usepackage{bm}
\usepackage{extarrows}
\usepackage[utf8]{inputenc}
\usepackage{calc}
\usepackage{accents}
\usepackage{booktabs}
\usepackage{abstract}
\usepackage{etoolbox}
\usetikzlibrary{arrows.meta, decorations.pathreplacing, calc}
 
\definecolor{nodeblue}{RGB}{155,194,230}
\definecolor{nodegreen}{RGB}{146,208,152}
\definecolor{nodeyellow}{RGB}{255,230,153}
\definecolor{nodegray}{RGB}{217,217,217}

\AtBeginEnvironment{footnote}{\singlespacing}
\AtBeginEnvironment{thebibliography}{\singlespacing}

\usetikzlibrary{arrows.meta, decorations.pathreplacing, calc}

\providecommand{\U}[1]{\protect\rule{.1in}{.1in}}

\newtheorem{theorem}{Theorem}

\newtheorem{claim}{Claim}

\newtheorem{definition}{Definition}
\newtheorem{example}{Example}

\newtheorem{lemma}{Lemma}

\newtheorem{proposition}{Proposition}
\newtheorem{remark}{Remark}

\newtheorem{fact}{Fact}
\newtheorem*{method}{Methodology}

\DeclareMathOperator*{\argmax}{argmax}

\definecolor{MyDarkBlue}{rgb}{0,0.08,0.45}
\definecolor{cites}{HTML}{324b13}
\definecolor{links}{HTML}{1a663b}
\definecolor{MyLightMbuyera}{cmyk}{0.1,0.8,0,0.1}
\hypersetup{
colorlinks,citecolor=blue,filecolor=black,linkcolor=blue,urlcolor=blue
}

\begin{document}

\title{Optimal linear-payment auction design with aftermarket collaboration\thanks{We would like to thank James Bergin, Xun Chen, Yunan Li, Tingjun Liu, Jingfeng Lu, Ning Sun, Guofu Tan, Mengxi Zhang, Jun Zhang, Charles Zheng, Yu Zhou, and all participants in the 2023 Asian Meeting of the Econometric Society, the 2025 World Congress of the Econometric Society and the 7th World Congress of the Game Theory Society for their helpful comments. All remaining errors are ours.}}
\author{\textsc{Dazhong Wang}\thanks{School of Digital Economics and Management, Nanjing University, No.1520 Taihu Avenue, Suzhou, Jiangsu 215163, China. E-mail: \textit{wangdazhong@nju.edu.cn}.} \and \textsc{Ruqu Wang}\thanks{Department of Economics, Queen's University. E-mail: \textit{wangr@queensu.ca}.}\and \textsc{Xinyi Xu}\thanks{\textit{Corresponding author.} Lingnan College, Sun Yat-Sen University, Guangzhou, Guangdong, China. E-mail: \textit{xuxinyi3@mail.sysu.edu.cn}.}}
\date{}
\maketitle

{\linespread{1}
\begin{abstract}

This paper studies optimal auction design when valuations depend endogenously on post-auction collaboration between the seller and the winning bidder. Both parties exert non-contractible efforts after the auction, generating a double moral hazard problem alongside adverse selection. We analyze two role structures---winner-pivotal and seller-pivotal collaboration---and characterize optimal direct mechanisms using linear payment schemes that combine cash transfers with proportional value sharing. The optimal mechanism allocates the asset to the bidder with the highest virtual surplus, employs a deterministic value-sharing rule, and achieves full type revelation through the signal realization rule. Comparing the two scenarios yields three main findings. First, regarding value sharing, the seller secures a strictly higher share under seller-pivotal collaboration: for sufficiently low-type winners, the seller extracts the entire value, whereas under winner-pivotal collaboration every winner must retain a positive share to sustain his critical effort. Second, regarding effort exertion, the pivotal party always exerts higher post-auction effort than the supporting party, and each party exerts greater effort when pivotal than when providing support. Third, seller-pivotal collaboration yields strictly higher seller revenue than winner-pivotal collaboration for any type distribution. Finally, these optimal mechanisms can be implemented through ascending auctions with endogenously determined linear contracts.

\paragraph{Keywords:}Linear-payment auction; Optimal design; Pivotal role divergence; Non-contractible efforts\\
\emph{JEL Classification}: D44; D82.
	
\end{abstract}
}



\newpage

\section{Introduction}

\subsection{Motivation}

Auction designers often sell assets or projects whose valuations endogenously depend on subsequent collaboration between the seller and winning bidder. Examples include franchising (potential franchisees bid for licensing rights and collaborate with franchisors in brand development), patent licensing (firms compete for technology licenses requiring ongoing collaboration in technical development), and infrastructure concessions (construction firms compete for public projects with post-auction collaboration determining project success).

These settings share a common feature: sellers employ auctions to select partners for post-auction collaboration, and this collaboration exhibits \textit{pivotal role divergence}. Either the seller or the winning bidder assumes the pivotal role in value creation. In franchising and infrastructure concessions, the winning bidder typically plays the pivotal role through operational responsibility. In patent licensing, the licensor's continued research efforts are often essential for value creation. This pivotal role divergence determines which party's effort is critical for generating value. We illustrate this through four examples.

\begin{example}[Pharmaceutical Drug Development]\label{eG1}
A biotechnology firm (seller) auctions commercialization rights for a drug compound to pharmaceutical companies (bidders). The winning company plays the pivotal role by directing mass production, regulatory approval, and marketing strategies. Value creation depends critically on the pharmaceutical company's commercialization efforts; without these efforts, the drug generates no market value. The biotechnology firm provides supporting technical data but cannot substitute for the winner's pivotal role. This exemplifies winner-pivotal collaboration.
\end{example}

\begin{example}[TikTok U.S. Operating Rights]\label{eG4}
ByteDance (seller) auctions operating rights for TikTok's U.S.\ business to American technology companies (bidders) through an informal competitive process. The winning company plays the pivotal role by managing user acquisition, content moderation, and advertiser relationships; without these operational efforts, the platform generates no commercial value. ByteDance provides algorithmic technology and ongoing technical support as the supporting party. This exemplifies winner-pivotal collaboration.
\end{example}

\begin{example}[University Technology Licensing]\label{eG2}
A research university (seller) auctions licensing rights for artificial intelligence algorithms to technology companies (bidders). The university plays the pivotal role by providing continuous algorithm refinements and research insights that determine commercial viability. Without the university's continued scientific development, the licensed technology becomes obsolete regardless of the company's implementation capabilities. This exemplifies seller-pivotal collaboration.
\end{example}

\begin{example}[Telecommunications Spectrum License Auction]\label{eG3}
A national telecommunications regulator (seller) auctions spectrum licenses to mobile network operators (bidders). In underserved rural markets, the regulator's continued contributions---subsidizing backbone infrastructure and coordinating utility access---are pivotal: without this support, commercially viable service is infeasible regardless of operator capability. This exemplifies seller-pivotal collaboration. In dense urban markets, by contrast, backbone infrastructure is already in place, and commercial success depends on the operator's capital deployment and marketing. The regulator provides the license and policy framework in a supporting role, exemplifying winner-pivotal collaboration.
\end{example}

These examples illustrate a common pattern: one party's effort is indispensable for value creation---if this party exerts zero effort, total value collapses to zero. We call this party the \textit{pivotal} party and model aftermarket value as Value $=$ $($Inherent productivity $+$ Support effort$)$ $\times$ Pivotal effort. We analyze two scenarios: \textbf{winner-pivotal} and \textbf{seller-pivotal} collaboration. In winner-pivotal scenarios, the winner's effort is pivotal, and productivity captures the synergy of the winning bidder's operational capabilities, determined by his inherent characteristics and enhanced by the seller's supporting contributions (Examples \ref{eG1} and \ref{eG4}). In seller-pivotal scenarios, the seller's effort is pivotal, and productivity reflects the winner's intrinsic implementation capacity, strengthened through the winner's resource investments (Example \ref{eG2}). The pivotal role need not be fixed: as Example \ref{eG3} illustrates, the same seller can be pivotal in one market segment and supportive in another.

\subsection{Results}

Both scenarios feature double moral hazard alongside adverse selection. Post-auction actions are non-contractible, and value is generated after the auction. Standard one-shot cash payments are insufficient to incentivize post-auction collaboration. Optimal payment schemes must combine fixed cash transfers with value-sharing arrangements contingent on realized outcomes. We focus on linear payments---a combination of fixed cash transfers and proportional value sharing---which are tractable and widely used in practice \citep[see, e.g.,][]{Stiglitz1974,Weitzman1980,arrow1985economics,Gibbons1992}.\footnote{When adverse selection and double-sided moral hazard coexist, the optimal mechanism design approach is typically inapplicable for non-linear contracts.} The winner compensates the seller through such a linear combination. Auctions with linear payments under double moral hazard remain largely unexplored in the existing literature.

We characterize optimal auctions with aftermarket collaboration under both scenarios. The seller commits to an auction rule comprising bidding patterns and linear payment schemes. In perfect Bayesian equilibrium, the seller and winner choose post-auction efforts that maximize their payoffs based on public signals revealed during the auction. By the revelation principle \citep{myerson1982}, we restrict attention to direct mechanisms with linear payments. These ``direct linear mechanisms'' generate allocation rules, payment rules (linear payments for winners, cash payments for losers), and signal realization rules conditional on reported types. Mechanism feasibility requires incentive-compatibility (truthful reporting constitutes a Bayesian Nash equilibrium) and individual rationality (non-negative expected payoffs).

To characterize the optimal mechanism, we first establish aftermarket equilibrium strategies, which differ across leadership scenarios. We then determine the upper bound of the seller's expected payoff across all feasible direct linear mechanisms, and present a feasible mechanism achieving this upper bound.

The optimal mechanism allocates the asset to the bidder with the highest virtual surplus rather than the highest type, where virtual surplus is non-negative and increasing in type. When bidders are non-identically distributed, hazard rates enter differently into each bidder's virtual surplus, generating discriminatory allocation. Under both scenarios, the optimal signal realization rule reveals the winner's type, achieving full information revelation. The optimal payment rule specifies a deterministic value-sharing rate and cash transfers satisfying a binding expected-payment condition.

Comparing optimal mechanisms across different pivotal structures reveals three main findings. First, regarding \textit{value sharing}, under seller-pivotal collaboration, the seller consistently secures a majority share (exceeding half) of the eventual value, with a pooling region where the seller extracts the entire value from sufficiently low-type winners. Under winner-pivotal collaboration, even the lowest-type winner must receive a positive value share to prevent value collapse. Consequently, the seller always obtains a higher value share under seller-pivotal collaboration. Second, concerning \textit{effort exertion}, the pivotal party consistently exerts higher effort: under winner-pivotal scenarios, the winner exerts higher effort than the seller; under seller-pivotal scenarios, the seller exerts higher effort than the winner. Each party also chooses higher effort when pivotal than when providing support. Third, regarding \textit{seller's revenue}, seller-pivotal collaboration yields higher revenue than winner-pivotal collaboration for any distribution. This revenue advantage stems from reduced information rents rather than higher social surplus. Under seller-pivotal collaboration, the seller's effort alone sustains value creation, allowing the mechanism to extract surplus from high types without leaving excessive rents. Under winner-pivotal collaboration, the winner's effort is essential, requiring the mechanism to compensate the winner more generously, resulting in higher information rent costs.

We implement optimal mechanisms through ascending auctions combining cash bids and linear contracts. The winner's linear contract---comprising value share and cash payment---is determined endogenously by his bid and the penultimate bidder's drop-out price. Perfect implementation is unattainable in seller-pivotal scenarios because the seller can extract the entire value by imposing full value-sharing on bidders below type thresholds. To approximate optimality, we allow sellers to retain some value share for winning bidders with extremely low types. In equilibrium, bidders adopt ``quasi-dominant strategies": each bidder exits at his true virtual surplus regardless of others' strategies.

We extend our analysis to examine effort substitution and a generalized value creation function nesting both role structures via an interdependence parameter. Under effort substitution, similar optimal mechanism structures emerge with full value extraction, since winner efforts do not drive value creation. Under the generalized nesting structure, we characterize sufficient conditions for the optimal mechanism to exist and show that the seller optimally chooses her own effort to be more critical, consistent with the revenue advantage of seller-pivotal collaboration.

\subsection{Related literature}

This study builds on the literature on auctions with security-form payments. \cite{DeMarzoetal2005} establish that steeper securities yield higher seller revenue in competitive auctions. \cite{GorbenkoMalenko2011} analyze competition among sellers in securities auctions and show that seller competition pushes equilibrium security designs toward flatter instruments. Subsequent work characterizes optimal equity and royalty auctions across a range of environments: \cite{Liu2016} for heterogeneous bidders, \cite{LIU2019} for two-dimensional private types, \cite{WANG2022} for interdependent valuations, and \cite{BERNHARDT2020} for settings with endogenous entry and fixed royalty rates. \cite{Abhishek2013} show that royalty auctions dominate their cash counterparts when valuations are interdependent. \cite{LiuJF2021} establish full surplus extraction via equity-plus-cash mechanisms when bidders' outside options relate concavely to their valuations. \cite{Sogo2017} shows that in equity auctions requiring post-auction investment, the seller may benefit from withholding information because equity payments distort the winner's investment incentives. \cite{WANG2023} compare standard royalty auctions in which bids signal and affect the seller's post-auction effort. \cite{Liyunan2023} characterize optimal linear mechanisms with pre-auction information acquisition. While \cite{Sogo2017} and \cite{WANG2023} incorporate one-sided post-auction moral hazard, none of these papers address bilateral moral hazard between the seller and winner. Our paper extends this literature by embedding a double moral hazard problem into the optimal security-bid auction framework: the seller's non-contractible post-auction effort shapes both the optimal value-sharing rate and the allocation rule.

We contribute to the literature on auction design with bidder moral hazard. \cite{McAfee1986} derive optimal cost-sharing contracts in competitive procurement where the winner exerts effort to reduce costs. \cite{Tirole1987} show that the optimal incentive auction can be implemented as a dominant-strategy mechanism with transfers that depend on the first and second bids, and establish the optimality of linear contracts when realized costs are non-contractible. \cite{McAfee1987} characterize conditions under which linear direct mechanisms are optimal when adverse selection and bidder moral hazard coexist. We extend these results by introducing seller moral hazard: the seller's active role in post-auction value creation means she cannot credibly commit to her effort level, requiring a mediator to restore mechanism tractability. Our setting also relates to the broader literature on post-auction interactions. \cite{Goeree2003} shows that aftermarket signaling biases bids upward and breaks the strategic equivalence between second-price and English auctions; \cite{Haile03} shows that resale opportunities introduce common-value elements that reshape bidder valuations. \cite{Giovannoni2014} study how reputational concerns arising from aftermarket competition distort bidding across disclosure rules. \cite{Piotr2020} characterizes optimal cutoff mechanisms when the designer's involvement extends beyond the auction, and \cite{Audrey2024} study how anticipated aftermarket competition shapes the optimal allocation rule. \cite{BosPollrich2025} analyze how bidders' signaling incentives toward an outside receiver shape optimal auction design and information disclosure. We share the concern with post-auction value creation but differ in that collaboration between seller and winner is cooperative rather than competitive, and involves bilateral moral hazard.

Our research relates to contract design under double moral hazard. \cite{Eswaran1985} first formalize bilateral moral hazard in agricultural share contracts, showing how linear sharing rules balance incentives under joint production. \cite{Romano1994} and \cite{Bhattacharyya1995} establish conditions under which linear sharing contracts are optimal when both contracting parties face moral hazard. \cite{KIM1998} show that the optimality of linear sharing is not robust to risk aversion: the unique optimal contract under double moral hazard with a risk-averse agent is non-linear. \cite{CarrollTE} demonstrate the robustness of linear contracts under double moral hazard: among contracts satisfying monotonicity and limited liability, linear contracts perform optimally across a broad class of effort distributions. We embed this structure into an auction mechanism, linking the linear-contract literature to optimal mechanism design. The linear-quadratic specification adopted here---bilinear value functions with quadratic effort costs---extends the canonical principal-agent framework of \cite{Holm1987} and \cite{Holm1991} to a bilateral effort setting (see \cite{TheoryofIncentive2002} and \cite{ContractTheory2005} for comprehensive treatments). Value-sharing and ad valorem arrangements of the kind studied here are empirically prevalent, as documented by \cite{Rao1995} and \cite{Hagiu2019}.

Since the seller cannot commit to her post-auction effort, our problem aligns with mechanism design with a form of limited commitment. \cite{Laffont1988} first demonstrate that the standard revelation principle fails in such settings. \cite{myerson1982} establishes the communication revelation principle, showing that outcomes achievable through any communication system can be replicated by an incentive-compatible direct mechanism operated by a mediator. We build on this insight: a trustworthy mediator manages the auction, generates post-auction public signals, and restores the tractability of the direct revelation approach. \cite{Bester2007} show that a mediator controlling the principal's observations resolves the commitment problem under single-crossing conditions; our framework adopts a similar structure. \cite{Skreta2015} characterizes optimal auctions for a sequentially rational seller who allocates to the highest positive virtual-surplus buyer conditional on updated beliefs. \cite{Doval2022} study a dynamic mechanism-selection game under limited commitment and characterize the set of direct mechanisms that replicate any equilibrium outcome; their signal-space construction directly informs our treatment of the outcome-equivalent direct mechanism.

This paper is organized as follows: Section \ref{S2} introduces the model, including functional forms representing pivotal role divergence, linear payment schemes, and aftermarket equilibrium. Section \ref{mechanimschapter} characterizes optimal mechanisms under both scenarios and compares their features. Section \ref{Secimp} presents implementations. Section \ref{Diss} discusses extensions. Section \ref{S6} concludes. Omitted proofs are in the \hyperref[APPPPPPPPPPPP]{Appendices}.

\section{The model \label{S2}}

A risk-neutral seller auctions an indivisible asset or a project to $n$ risk-neutral bidders. Let $\mathcal{N} = \{1,2,\ldots,n\}$ denote the set of bidders. Each bidder $i \in \mathcal{N}$ has a private type $\theta_i$ that represents the bidder's inherent productivity and ability, independently and identically drawn from a continuously differentiable distribution $F_i(\cdot)$ with density $f_i(\cdot)>0$ on support $\Theta_i \equiv [\underline{\theta}_i, \overline{\theta}_i]$, where $\underline{\theta}_i \geq 0$. We assume that $\underline{\theta}_i \cdot f_i(\underline{\theta}_i) < 1$ and that the hazard rate $\frac{f_i(\cdot)}{1 - F_i(\cdot)}$ is increasing for all $i \in \mathcal{N}$.\footnote{The increasing hazard rate condition is satisfied when the density is log-concave, a standard assumption in auction theory. This holds for many common distributions, including uniform, normal, and exponential distributions \citep[see][]{BB05}.}

We denote $\bm{\theta} \equiv (\theta_1, \theta_2, \dots, \theta_n)$, $\bm{\theta}_{-i} \equiv (\theta_1, \dots, \theta_{i-1}, \theta_{i+1}, \dots, \theta_n)$, $\bm{\Theta} \equiv \bigtimes_{i \in \mathcal{N}} \Theta_i$, and $\bm{\Theta}_{-i} \equiv \bigtimes_{j \in \mathcal{N}, j \neq i} \Theta_j$ as the profile of bidders' types and their corresponding supports. The joint densities are $\bm{f}(\bm{\theta}) \equiv \prod_{i \in \mathcal{N}} f_i(\theta_i)$ and $\bm{f}_{-i}(\bm{\theta}_{-i}) \equiv \prod_{j \in \mathcal{N}, j \neq i} f_j(\theta_j)$. We use tildes to distinguish random variables from their realizations: $\widetilde{\theta}_i$ denotes the random variable and $\theta_i$ denotes its realization. All parties have no time discounting.

\subsection{Value creation: pivotal role divergence}

The asset's value is created through aftermarket collaboration between the seller and winning bidder. Both parties simultaneously choose non-contractible efforts $e_i^s \in E_i^s \subseteq \mathbb{R}_+$ and $e_i^b \in E_i^b \subseteq \mathbb{R}_+$, respectively. These efforts, together with the winner's type $\theta_i$, determine the asset's value in a multiplicative form:
\begin{equation}\label{REV-generallll}
\text{Value $(V_i)$} = \left(\text{Inherent Productivity} + \text{Support Effort}\right) \times \text{Pivotal Effort},
\end{equation}
where one party plays the pivotal role while the other provides support, as illustrated by Examples~\ref{eG1}--\ref{eG3}. We analyze two cases characterized by this pivotal role divergence.

\paragraph{Winner-pivotal:} The ``Pivotal Effort'' term in \eqref{REV-generallll} is $e_i^b \in E^b_i$. The ``Inherent Productivity'' captures the winner's asset-specific production capability, determined by his type and enhanced by the seller's supporting effort $e_i^s$. The value function becomes:
\begin{equation}\label{Vstructure-1}
V_i(\theta_i,e_i^s, e_i^b) = (\theta_i + e_i^s) \cdot e_i^b.\footnote{In reality, the value creation function involves a random term $\widetilde{\epsilon}$, i.e., $\widetilde{V}_i = (\theta_i + e_i^s) \cdot e_i^b + \widetilde{\epsilon}$, rendering efforts unverifiable. However, the random term does not affect the linear payment scheme design (Section \ref{LPsec}). We omit it for simplicity.}
\end{equation}

\paragraph{Seller-pivotal:} The ``Pivotal Effort'' term in \eqref{REV-generallll} is $e_i^s \in E^s_i$. The ``Inherent Productivity'' reflects the winner's intrinsic ability to operate and promote the asset, enhanced by his supporting effort $e_i^b$. The value function becomes:
\begin{equation}\label{Vstructure-2}
V_i(\theta_i,e_i^s, e_i^b) = (\theta_i + e_i^b) \cdot e_i^s.
\end{equation}

In both scenarios, post-auction efforts incur quadratic costs of $\frac{1}{2}e^2$ for $e\in\{e_i^s, e_i^b\}$.\footnote{The quadratic cost function is standard in the principal-agent moral hazard literature \citep[see, e.g.,][]{Rao1995,Hagiu2019}.}

\begin{fact}[Pivotal role]\label{leadeffectDEF}
As shown in \eqref{Vstructure-1} and \eqref{Vstructure-2}, the ``pivotal role'' arises when one party's effort determines whether any value is created: if this party exerts zero effort, total value is zero regardless of the other party's contribution.
\end{fact}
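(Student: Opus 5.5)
The Fact follows from substituting a zero pivotal effort into the value functions \eqref{Vstructure-1} and \eqref{Vstructure-2} and checking that the product vanishes, whatever the other arguments are. I will treat the two scenarios separately and then note the contrast with the supporting party.

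\emph{Winner-pivotal case.} Fix any type $\theta_i \in \Theta_i$ and any seller effort $e_i^s \in E_i^s \subseteq \mathbb{R}_+$. Setting $e_i^b = 0$ in \eqref{Vstructure-1} gives
\begin{equation*}
V_i(\theta_i, e_i^s, 0) = (\theta_i + e_i^s)\cdot 0 = 0 .
\end{equation*}
So no value is created, however large the seller's supporting contribution $e_i^s$ or the winner's type $\theta_i$ may be.

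\emph{Seller-pivotal case.} Fix any $\theta_i$ and any winner effort $e_i^b \in E_i^b$. Setting $e_i^s = 0$ in \eqref{Vstructure-2} gives
\begin{equation*}
V_i(\theta_i, 0, e_i^b) = (\theta_i + e_i^b)\cdot 0 = 0 ,
\end{equation*}
and again this holds regardless of the winner's contribution.

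To confirm that the label ``pivotal'' singles out exactly one party, I will also check that the supporting party's zero effort does not collapse value. In the winner-pivotal case, $V_i(\theta_i, 0, e_i^b) = \theta_i e_i^b$, which is strictly positive whenever $\theta_i > 0$ and $e_i^b > 0$. This requires $\theta_i > 0$, which holds for all types if $\underline{\theta}_i > 0$ and for all but the bottom type otherwise. The seller-pivotal case is symmetric. This asymmetry is the content of the Fact: the multiplicative structure \eqref{REV-generallll} makes the pivotal effort a necessary factor, while the supporting effort enters only additively with productivity.

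There is no real obstacle here. The only care needed is to quantify over all types and all effort levels of the non-pivotal party, and to note that nonnegativity of the effort sets is not even needed for the zero-product conclusion.
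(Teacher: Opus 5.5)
Your proof is correct and matches the paper, which justifies this Fact only by direct inspection of \eqref{Vstructure-1} and \eqref{Vstructure-2}: setting the pivotal effort to zero makes the product vanish for every type and every level of the other party's effort. Your extra check that the supporting party's zero effort does not destroy value when $\theta_i>0$ goes beyond what the Fact asserts, but it is accurate and consistent with the paper's later contrast between the two collaboration modes.
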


The involvement of both the seller's and the winning bidder's non-contractible post-auction efforts in aftermarket collaboration generate a double moral hazard problem alongside adverse selection. We analyze these two scenarios separately.

\subsection{Linear-payment auction \label{LPsec}}

Since value is generated after the auction, one-shot cash payments are unable to incentivize post-auction investment. To provide adequate incentives for aftermarket collaboration, the seller and winner must adopt a value-sharing arrangement based on the realized value in addition to any cash payment during the auction. Therefore, the payment mode that the seller commits to must involve both cash and value sharing. 

For tractability and practical operability, we consider a linear payment scheme.\footnote{Linear payments and contracts are standard in the mechanism design literature.} Formally, we denote the linear payment scheme for winning bidder $i$ as:
\begin{equation*}
c_i = (\alpha_i, t_i^w),
\end{equation*}
where $\alpha_i \in [0,1]$ is the share of realized value that the winner transfers to the seller, and $t_i^w \in \mathcal{T}_i^w \subseteq \mathbb{R}_+$ is the winner's cash payment. Losing bidders may pay a fixed amount $t_j^l \in \mathcal{T}_j^l \subseteq \mathbb{R}_+$ for $j \neq i$, depending on the auction format. Losing bidders do not pay any value share.\footnote{The analysis follows \cite{LiuJF2021}, which examines scenarios where winners pay with equity and cash combinations.}

The seller may commit to a cash-bid auction, though other bidding formats are possible, such as bids on value-sharing rates or direct bidding on the linear payment scheme.\footnote{As noted in \cite{DeMarzoetal2005}, bidding formats can be diversified. The optimal mechanism can be implemented through a cash-bid auction with a linear payment scheme (Section \ref{Secimp}).} The seller determines a selection rule based on bids, selects a winner, and collects the linear payment after value realization.\footnote{Without time discounting, bidders are indifferent between making cash payments before or after aftermarket collaboration.}

For the value creation functions \eqref{Vstructure-1} and \eqref{Vstructure-2}, under the linear payment scheme $c_i = (\alpha_i, t_i^w)$, the seller's ex-post expected payoff is:
\begin{equation}\label{expostpayoff-seller}
\widehat{\text{Rev}} = \underbrace{t_i^w + \sum_{j\neq i}t_j^l}_{\text{seller's cash income}} + \underbrace{\alpha_i \cdot V_i(\theta_i,e_i^s, e_i^b) -\frac{1}{2}\cdot \left(e_i^s\right)^2}_{\text{seller's value-sharing-related payoff}},
\end{equation}
The winning bidder $i$'s ex-post expected payoff is:
\begin{equation}\label{expostpayoff-winner}
\hat{U}_i = - t_i^w + \underbrace{\left(1-\alpha_i\right) \cdot V_i(\theta_i,e_i^s, e_i^b) - \frac{1}{2}\cdot \left(e_i^b\right)^2}_{\text{winner's value-sharing-related payoff}},
\end{equation}
and $\hat{U}_j = -t_j^l$ for all $j\neq i$.

\subsection{Aftermarket equilibrium}

The interaction is a dynamic game where the seller commits to an auction rule (bidding pattern and linear payment scheme) but cannot commit to her post-auction effort. The seller and winner simultaneously choose efforts during aftermarket collaboration. In perfect Bayesian equilibrium (if any), both parties choose efforts to maximize their value-sharing payoffs (value shares net of effort costs, as in \eqref{expostpayoff-seller} and \eqref{expostpayoff-winner}). These equilibrium strategies depend on signals revealed during the auction. The seller relies on signal realizations to infer the winner's type.\footnote{Unlike classical signal/information disclosure literature \citep[see, e.g.,][]{Piotr2020,Audrey2024}, which considers disclosure to third parties, our setting involves disclosure to the seller who participates in value creation. If a separating equilibrium exists, the seller can infer the winner's type. If the auction induces pooling or semi-pooling equilibria, the seller's posterior beliefs cannot achieve perfect inference. Determining separating equilibrium existence requires conditions on bidders' valuations (such as single-crossing conditions). However, such conditions assume exogenous valuations, whereas in our setting, valuations are endogenously determined by non-contractible efforts, making it impossible to confirm separating equilibrium existence at the auction stage.}

Following the literature \citep[e.g.,][]{Genzkow2011,Piotr2020,Audrey2024}, signals are publicly announced after the auction.\footnote{The signal space is a subspace of the bid space, denoted $\bm{S} \subseteq \bm{B}$. For example, when $\bm{S} = \bm{B}$, this represents auctions with ex-post bid disclosure, commonly mandated for credibility, as in open-outcry ascending auctions and spectrum auctions.} This ensures the seller's posterior beliefs are common knowledge during aftermarket collaboration.\footnote{If the seller's posterior beliefs were private information, characterizing equilibrium efforts during aftermarket collaboration would be unsolvable.} For any public signal realization $\bm{s} \in \bm{S}$, the seller forms posterior beliefs $\mu(\cdot | \bm{s}) \in \Delta(\Theta_i)$ over winning bidder $i$'s type, which is common knowledge.

In perfect Bayesian equilibrium (if any), given the winner's identity $i \in \mathcal{N}$, each party's equilibrium strategy $\varepsilon^s_i \in \Delta( E^s_i)$ and $\varepsilon^b_i \in \Delta( E^b_i)$ satisfies (from \eqref{expostpayoff-seller} and \eqref{expostpayoff-winner}):
\begin{equation}\label{effortsderive}
\begin{split}
\varepsilon^s_i =  & \varepsilon^s_i(\mu(\cdot | \bm{s})) \in \argmax_{\hat{\varepsilon}^s_i \in \Delta( E^s_i)}\left\{\mathbb{E}_{\hat{\varepsilon}^s_i}\left[\alpha_i\cdot \mathbb{E}_{\widetilde{\theta}_i}\left[\left.\mathbb{E}_{\varepsilon^b_i(\widetilde{\theta}_i)}\left[V_i(\widetilde{\theta}_i, e_i^s, e^b_i)\right]\right|\mu(\cdot | \bm{s})\right] - \frac{1}{2}\cdot \left(e_i^s\right)^2\right]\right\},\\
\varepsilon^b_i = & \varepsilon^b_i(\theta_i, \mu(\cdot | \bm{s})) \in  \argmax_{\hat{\varepsilon}^b_i \in \Delta( E^b_i)}\left\{\mathbb{E}_{\hat{\varepsilon}^b_i}\left[\left(1-\alpha_i\right)\cdot \mathbb{E}_{\varepsilon^s_i(\mu)}\left[V_i(\theta_i, e_i^s, e^b_i)\right] - \frac{1}{2}\cdot \left(e^b_i\right)^2\right]\right\},
\end{split}
\end{equation}
which maximize the net value-sharing payoffs of both parties simultaneously. The timeline of interaction between the seller and bidders can be summarized in the following figure:
\vspace{1em}

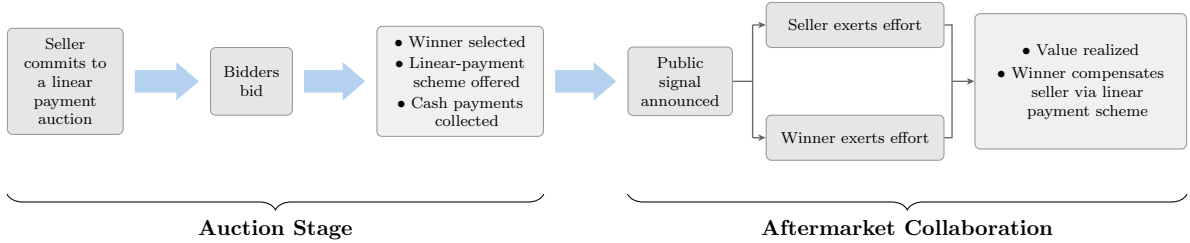
\begin{figure}[htp!]
\centering
\begin{tikzpicture}[font=\scriptsize, scale=0.75, every node/.style={scale=0.75}]
 
\definecolor{lightblue}{RGB}{180,210,240}
\definecolor{lgray}{RGB}{230,230,230}
 
\newcommand{\arrgap}{0.2}
\newcommand{\arrh}{0.20}
\newcommand{\arrtip}{0.34}
\newcommand{\arrtipx}{0.40}
 
\tikzset{
  basenode/.style={draw=black!40, rounded corners=2pt, fill=lgray,
    align=center, inner sep=5pt, font=\scriptsize},
  nd1/.style={basenode, text width=1.7cm, minimum height=1.2cm},
  nd2/.style={basenode, text width=1.1cm, minimum height=1.2cm},
  nd3/.style={basenode, fill=lgray!60, text width=2.6cm, minimum height=1.8cm, align=center},
  nd4/.style={basenode, text width=1.5cm, minimum height=1.2cm},
  nd5/.style={basenode, text width=2.8cm, minimum height=0.8cm},
  nd6/.style={basenode, fill=lgray!60, text width=3.4cm, minimum height=2.4cm, align=center},
  darkarr/.style={-{Stealth[length=3pt,width=2.5pt]}, line width=0.5pt, black!60},
  darkline/.style={line width=0.5pt, black!60},
}
 
\node[nd1] (B1) at (0, 0) {Seller commits to\\a linear payment\\auction};
\node[nd2] (B2) at (3.3, 0) {Bidders\\bid};
\node[nd3] (B3) at (7.0, 0) {$\bullet$~Winner selected\\[1pt]
$\bullet$~Linear-payment\\
\hspace{0.8em}scheme offered\\[1pt]
$\bullet$~Cash payments\\
\hspace{0.8em}collected};
\node[nd4] (B4) at (10.9, 0) {Public signal\\announced};
 
\path let \p1=(B1.east), \p2=(B2.west) in
  \pgfextra{
    \pgfmathsetmacro{\axs}{\x1/1cm + \arrgap}
    \pgfmathsetmacro{\axe}{\x2/1cm - \arrgap}
    \pgfmathsetmacro{\axb}{\axe - \arrtipx}
    \fill[lightblue] (\axs, \arrh) -- (\axb, \arrh) -- (\axb, \arrtip) -- (\axe, 0) -- (\axb, -\arrtip) -- (\axb, -\arrh) -- (\axs, -\arrh) -- cycle;
  };
 
\path let \p1=(B2.east), \p2=(B3.west) in
  \pgfextra{
    \pgfmathsetmacro{\axs}{\x1/1cm + \arrgap}
    \pgfmathsetmacro{\axe}{\x2/1cm - \arrgap}
    \pgfmathsetmacro{\axb}{\axe - \arrtipx}
    \fill[lightblue] (\axs, \arrh) -- (\axb, \arrh) -- (\axb, \arrtip) -- (\axe, 0) -- (\axb, -\arrtip) -- (\axb, -\arrh) -- (\axs, -\arrh) -- cycle;
  };
 
\path let \p1=(B3.east), \p2=(B4.west) in
  \pgfextra{
    \pgfmathsetmacro{\axs}{\x1/1cm + \arrgap}
    \pgfmathsetmacro{\axe}{\x2/1cm - \arrgap}
    \pgfmathsetmacro{\axb}{\axe - \arrtipx}
    \fill[lightblue] (\axs, \arrh) -- (\axb, \arrh) -- (\axb, \arrtip) -- (\axe, 0) -- (\axb, -\arrtip) -- (\axb, -\arrh) -- (\axs, -\arrh) -- cycle;
  };
 
\node[nd5] (B5a) at (14.0,  1.0) {Seller exerts effort};
\node[nd5] (B5b) at (14.0, -1.0) {Winner exerts effort};
 
\coordinate (junc) at ($(B4.east) + (0.35, 0)$);
\draw[darkline] (B4.east) -- (junc);
\draw[darkline] (junc) -- (junc |- B5a.west);
\draw[darkline] (junc) -- (junc |- B5b.west);
\draw[darkarr]  (junc |- B5a.west) -- (B5a.west);
\draw[darkarr]  (junc |- B5b.west) -- (B5b.west);
 
\node[nd6] (B6) at (18.0, 0) {$\bullet$~Value realized\\[2pt]
$\bullet$~Winner compensates\\
\hspace{0.8em}seller via linear\\
\hspace{0.8em}payment scheme};
 
\coordinate (merge) at ($(B6.west) + (-0.4, 0)$);
\draw[darkline] (B5a.east) -| (merge);
\draw[darkline] (B5b.east) -| (merge);
\draw[darkarr]  (merge) -- (B6.west);
 
\coordinate (braceY) at (0, -2.0);
 
\draw[decorate, decoration={brace, mirror, amplitude=6pt}, line width=0.4pt]
  (B1.south west |- braceY) -- (B3.south east |- braceY)
  node[midway, below=9pt, font=\small\bfseries] {Auction Stage};
 
\draw[decorate, decoration={brace, mirror, amplitude=6pt}, line width=0.4pt]
  (B4.south west |- braceY) -- (B6.south east |- braceY)
  node[midway, below=9pt, font=\small\bfseries] {Aftermarket Collaboration};
 
\end{tikzpicture}
\caption{Linear-payment auction with aftermarket collaboration}
\label{}
\end{figure}


\section{Optimal Mechanism \label{mechanimschapter}}

\subsection{Direct linear mechanism and feasibility \label{DirrrrrMech}}

We characterize the optimal linear-payment auction under the existence of a perfect Bayesian equilibrium. By the revelation principle \citep[see][]{myerson1982,Doval2022}, every linear-payment auction with aftermarket collaboration is outcome-equivalent to a direct mechanism with linear payment schemes executed by a trustworthy mediator. We focus on direct mechanisms (``direct linear mechanisms"), represented by $(\bm{q}, \bm{c}, \bm{\pi})$, where bidders truthfully report their types. A direct linear mechanism consists of:

\begin{enumerate}
\item[(1)] \textbf{Allocation rule}: Given reported types, the mechanism determines the winning probability $q_i(\bm{\theta}) \in [0,1]$ for each bidder $i$, subject to $\sum_{i \in \mathcal{N}} q_i(\bm{\theta}) \leq 1$. The asset may not be sold if the seller's expected payoff is negative.
\item[(2)] \textbf{Payment rule}: The mechanism determines payment schemes $(c_i, t_i^l) \in [0,1] \times \mathbb{R}_{+}^2$ for each bidder $i$:
\begin{enumerate}
\item[(2.1)] The winning bidder $i$ receives a linear payment scheme $c_i = (\alpha_i, t_i^w)$, drawn according to probability measure $\kappa_i(\cdot, \cdot | \bm{\theta}) \in \Delta([0,1] \times \mathcal{T}_i^w)$.
\item[(2.2)] Each losing bidder $i$ pays a fixed amount $t_i^l$, drawn according to probability measure $\nu_i^l(\cdot|\bm{\theta}) \in \Delta(\mathcal{T}_i^l)$.
\end{enumerate}
\item[(3)] \textbf{Signal realization rule}: Given the winner's identity $i \in \mathcal{N}$ and payment profile
\begin{equation}\label{bmC}
\bm{c} = \left(c_i, \bm{t}^l\right) = \left(\alpha_i, t_i^w, \bm{t}_{-i}^l\right),    
\end{equation}
the mechanism generates a public signal realization $s \in S_i$, where $S_i$ denotes the winner's signal space. The signal is drawn from probability measure $\pi_i(\cdot|i,\bm{c},\bm{\theta}) \in \Delta(S_i)$. No additional signals are transmitted.
\end{enumerate}

A trustworthy mediator operates the mechanism, revealing the winner's identity, determining payment schemes, and generating a public signal regarding the winner through a random draw.\footnote{\cite{Doval2022} study dynamic mechanism design under limited commitment---a dynamic mechanism-selection game between an uninformed designer and a privately informed agent where the designer commits only to short-term mechanisms. They identify a class of direct mechanisms that replicate the outcomes of any equilibrium in this mechanism-selection game by restricting the signal set to the designer's posterior beliefs about the agent's type. Our model is a two-period mechanism-selection game, so in the outcome-equivalent direct mechanism, the signal space can be $S_i = \Delta(\Theta_i)$ as in \cite{Doval2022}. In \cite{myerson1982}, the outcome-equivalent direct mechanism publicly recommends decisions; in our framework, this corresponds to recommending the seller's and winner's post-auction efforts, which induces the seller's posterior beliefs about the winner's type. \cite{Bester2007} similarly introduces a mediator to resolve the seller's commitment problems.} Based on the public signal realization $s\in S_i$ and payment profile \eqref{bmC}, the seller forms posterior beliefs about the winner's type:
\begin{equation}\label{sellerposteriorbelief}
\mu = \mu(\cdot |i, \bm{c}, s) \in \Delta(\Theta_i).
\end{equation}

Subsequently, both parties choose their aftermarket equilibrium strategies:
\begin{equation*}
\left(\varepsilon^s_i, \varepsilon^b_i\right) = \left(\varepsilon^s_i(\mu), \varepsilon^b_i(\theta_i, \mu)\right) \in \Delta( E^s_i)\times \Delta( E^b_i),
\end{equation*}
as specified in \eqref{effortsderive}. The seller receives cash payments collected by the mediator and value shares from the winner (with no discounting due to absence of time preference).

We denote bidder $i$'s expected payoff from reporting $\hat{\theta}_i \in \Theta_i$ as $U_i(\hat{\theta}_i, \varepsilon^b_i, \varepsilon^s_i, \theta_i)$, given that other bidders report truthfully and both parties adopt their aftermarket equilibrium strategies.\footnote{Due to seller moral hazard, our direct mechanism incorporates a signal revelation rule. Bidder misreporting affects allocation, payment, and the seller's posterior beliefs through the signal-revealing rule.} Let $U_i(\theta_i) \triangleq U_i(\theta_i, \varepsilon^b_i, \varepsilon^s_i, \theta_i)$ denote bidder $i$'s expected payoff from truthful reporting.

\begin{definition}\label{d1}
A direct linear mechanism $(\bm{q}, \bm{c}, \bm{\pi})$ is \textbf{feasible} if it satisfies both incentive compatibility and individual rationality. Incentive compatibility requires that truthful reporting by each bidder constitutes a Bayesian Nash equilibrium, such that $U_i(\theta_i) \geq U_i(\hat{\theta}_i, \varepsilon^b_i, \varepsilon^s_i,\theta_i)$ for all $i \in \mathcal{N}$ and all $\theta_i, \hat{\theta}_i \in \Theta_i$. Individual rationality requires that each bidder obtains non-negative expected payoff from participation, i.e., $U_i(\theta_i) \geq 0$ for all $i \in \mathcal{N}$ and all $\theta_i \in \Theta_i$.
\end{definition}

We characterize the optimal linear mechanism for each scenario based on the pivotal role in aftermarket collaboration.

\begin{remark}[Linear-quadratic specification]\label{rmk:LQ}
Our model adopts a linear-quadratic specification, a commonly used framework in the contract theory and principal-agent literature \citep[see, e.g.,][]{Holm1987,Holm1991,TheoryofIncentive2002,ContractTheory2005}. As shown in \eqref{expostpayoff-seller} and \eqref{expostpayoff-winner}, under both winner- and seller-pivotal collaboration, each party's optimization problem exhibits the standard linear-quadratic form given the other party's effort. Specifically, under the winner-pivotal scenario, the seller's ex-post expected payoff $t_i^w + \sum_{j\neq i}t_j^l + \alpha_i \cdot (\theta_i + e_i^s) \cdot e_i^b - \frac{1}{2}(e_i^s)^2$ is linear in $e_i^s$ with a quadratic cost, and symmetrically for the winner. The seller-pivotal case is analogous, with $e_i^s$ and $e_i^b$ swapped in the value function. Since the value function is bilinear in both efforts under either collaboration mode, this specification extends the classical single-agent linear-quadratic framework to a double moral hazard setting. It also preserves analytical tractability and transparently reveals how the pivotal role shapes the optimal mechanism---including the allocation rule, linear payment scheme, post-auction effort choices, and the seller's payoff.
\end{remark}

\subsection{Winner-pivotal collaboration \label{WLcaseSec}}

We begin with the scenario where the winner's effort plays the pivotal role, characterized by the function
\begin{equation*}
V_i(\theta_i, e_i^s, e^b_i) = (\theta_i + e_i^s) \cdot e_i^b,
\end{equation*}
as in \eqref{Vstructure-1}. Both parties' value-sharing payoffs are concave in their effort choices. Each party's aftermarket equilibrium strategy $\varepsilon^s_i \in \Delta( E^s_i)$ and $\varepsilon^b_i \in \Delta( E^b_i)$ degenerates to a Dirac measure, $\delta(\cdot)$. We use the superscript ``$wp$" to denote winner-pivotal collaboration. For any signal realization $s \in S_i$ and posterior belief \eqref{sellerposteriorbelief}, $(\varepsilon^{wp-s}_i, \varepsilon^{wp-b}_i) = (\delta_{\{e^{wp-s}_i\}}, \delta_{\{e^{wp-b}_i\}})$ satisfies:
\begin{equation*}
\begin{split}
e^{wp-s}_i \in & \argmax_{e_i^s \in E^s_i}\left\{
\begin{gathered}
\alpha_i\cdot \mathbb{E}_{\widetilde{\theta}_i}\left[\left.\underbrace{\mathbb{E}_{\varepsilon^{wp-b}_i(\widetilde{\theta}_i)}\left[\left(\widetilde{\theta}_i + e_i^s\right)\cdot e^b_i\right]}_{= \left(\widetilde{\theta}_i + e_i^s\right)\cdot e^{wp-b}_i(\widetilde{\theta}_i)}\right|\mu\right]\\
- \frac{1}{2}\cdot \left(e_i^s\right)^2
\end{gathered}
\right\} \xlongequal{F.O.C.} \alpha_i\cdot \mathbb{E}_{\widetilde{\theta}_i}\left[\left.e^{wp-b}_i(\widetilde{\theta}_i) \right|\mu\right],\\
e^{wp-b}_i \in & \argmax_{e_i^b \in E^b_i}\left\{
\begin{gathered}
\left(1-\alpha_i\right)\cdot \underbrace{\mathbb{E}_{\varepsilon^{wp-s}_i(\mu)}\left[\left(\theta_i + e_i^s\right)\cdot e^b_i\right]}_{= \left(\theta_i + e^{wp-s}_i(\mu)\right)\cdot e^b_i} \\
- \frac{1}{2}\cdot \left(e^b_i\right)^2
\end{gathered}
\right\} \xlongequal{F.O.C.} \left(1-\alpha_i\right) \cdot \left(\theta_i + e^{wp-s}_i(\mu)\right),
\end{split}
\end{equation*}
which are derived from \eqref{effortsderive}. Solving the above system of equations yields:
\begin{equation}\label{Optefforttype2}
\begin{split}
e^{wp-s}_i & = e^{wp-s}_i(\mu) = \frac{(1-\alpha_i) \cdot \alpha_i\cdot  \mathbb{E}\left[\left.\widetilde{\theta}_i\right|\mu\right]}{1 - (1-\alpha_i) \cdot \alpha_i},\\
e^{wp-b}_i & = e^{wp-b}_i(\theta_i,\mu) = (1-\alpha_i)\cdot \left(\theta_i + \frac{(1-\alpha_i) \cdot \alpha_i\cdot \mathbb{E}\left[\left.\widetilde{\theta}_i\right|\mu\right]}{1 - (1-\alpha_i) \cdot \alpha_i}\right),
\end{split}
\end{equation}
where $\mathbb{E}[\widetilde{\theta}_i | \mu] = \mathbb{E}[\widetilde{\theta}_i | \mu(\cdot |i, \bm{c}, s)]$ denotes the posterior mean of the winner's type. Therefore, for any direct linear mechanism $(\bm{q}, \bm{c}, \bm{\pi})$, bidder $i$'s expected payoff under winner-pivotal collaboration, $U^{wp}_i(\hat{\theta}_i, e^{wp-b}_i, e^{wp-s}_i, \theta_i)$, is given by:
\begin{equation}\label{Uireporting}
\begin{split}
& U^{wp}_i(\hat{\theta}_i, e^{wp-b}_i, e^{wp-s}_i, \theta_i) = \int_{\bm{\Theta}_{-i}} q_i(\hat{\theta}_i,\bm{\theta}_{-i}) \int_{\bm{\mathcal{T}}_{-i}^l}\int_{[0,1]\times \mathcal{T}_i^w} \\
& \cdot \int_{S_i} \left\{
\begin{gathered}
(1-\alpha_i) \cdot \left(\theta_i +  \underbrace{\frac{(1-\alpha_i) \cdot \alpha_i\cdot  \mathbb{E}\left[\left.\widetilde{\theta}_i\right|\mu(\cdot |i, \bm{c}, s)\right]}{1 - (1-\alpha_i) \cdot \alpha_i}}_{e^{wp-s}_i} \right)\\
\cdot \underbrace{(1-\alpha_i)\cdot \left(\theta_i + \frac{(1-\alpha_i) \cdot \alpha_i\cdot  \mathbb{E}\left[\left.\widetilde{\theta}_i\right|\mu(\cdot |i, \bm{c}, s)\right]}{1 - (1-\alpha_i) \cdot \alpha_i}\right)}_{e^{wp-b}_i}\\
- \frac{1}{2}\cdot\left(\underbrace{(1-\alpha_i)\cdot \left(\theta_i + \frac{(1-\alpha_i) \cdot \alpha_i\cdot \mathbb{E}\left[\left.\widetilde{\theta}_i\right|\mu(\cdot |i, \bm{c}, s)\right]}{1 - (1-\alpha_i) \cdot \alpha_i}\right)}_{e^{wp-b}_i}\right)^2 - t_i^w
\end{gathered}
\right\} \pi_i(d s|i,\bm{c},\hat{\theta}_i,\bm{\theta}_{-i}) \\
&\cdot \kappa_i(d\alpha_i,dt_i^w|\hat{\theta}_i,\bm{\theta}_{-i})\bm{\nu}_{-i}^l(d\bm{t}_{-i}^l|\hat{\theta}_i,\bm{\theta}_{-i})\bm{f}_{-i}(\bm{\theta}_{-i}) d\bm{\theta}_{-i}\\
& + \int_{\bm{\Theta}_{-i}} \left(1 - q_i(\hat{\theta}_i,\bm{\theta}_{-i})\right) \int_{\mathcal{T}_i^l} \left(-t_i^l\right)\nu_i^{l}(d t_i^l|\hat{\theta}_i,\bm{\theta}_{-i})\bm{f}_{-i}(\bm{\theta}_{-i}) d\bm{\theta}_{-i},
\end{split}
\end{equation}
where $\bm{\mathcal{T}}_{-i}^l \equiv \bigtimes_{j \in \mathcal{N}, j \neq i} \mathcal{T}_j^l$ and $\bm{\nu}_{-i}^l(d\bm{t}_{-i}^l|\hat{\theta}_i,\bm{\theta}_{-i}) \equiv \prod_{j \in \mathcal{N}, j \neq i} \nu_j^l(d t_j^l|\hat{\theta}_i,\bm{\theta}_{-i})$.

The seller's objective is to maximize expected payoff. According to \eqref{expostpayoff-seller}, the seller's payoff comprises the sum of bidders' cash payments and value shares, net of effort costs. Under winner-pivotal collaboration, for any feasible direct linear mechanism (cf. Definition \ref{d1}), the seller's expected equilibrium payoff is given by (see Appendix \ref{9pf} for derivations):
\begin{equation}\label{REV-000}
\begin{split}
& \text{Rev}^{wp}(\bm{q}, \bm{c}, \bm{\pi}) = \sum_{i\in\mathcal{N}} \int_{\bm{\Theta}} q_i(\bm{\theta}) \cdot \int_{\bm{\mathcal{T}}_{-i}^l}\int_{[0,1]\times \mathcal{T}_i^w} \\
& \cdot\int_{S_i} \left\{
\begin{gathered}
\frac{1-\alpha_i^2}{2}\cdot \theta_i^2 + (1-\alpha_i^2) \cdot \theta_i \cdot \frac{(1-\alpha_i) \cdot \alpha_i\cdot  \mathbb{E}\left[\left.\widetilde{\theta}_i\right|\mu(\cdot |i, \bm{c}, s)\right]}{1 - (1-\alpha_i) \cdot \alpha_i}\\
- \frac{\alpha_i^2}{2}\cdot\left(\frac{(1-\alpha_i) \cdot \alpha_i\cdot  \mathbb{E}\left[\left.\widetilde{\theta}_i\right|\mu(\cdot |i, \bm{c}, s)\right]}{1 - (1-\alpha_i) \cdot \alpha_i}\right)^2 \\
- (1-\alpha_i)^2 \cdot \theta_i \cdot \frac{1 - F_i(\theta_i)}{f_i(\theta_i)} \\
- (1-\alpha_i)^2 \cdot \frac{(1-\alpha_i) \cdot \alpha_i\cdot  \mathbb{E}\left[\left.\widetilde{\theta}_i\right|\mu(\cdot |i, \bm{c}, s)\right]}{1 - (1-\alpha_i) \cdot \alpha_i} \cdot \frac{1 - F_i(\theta_i)}{f_i(\theta_i)}
\end{gathered}
\right\} \pi_i(d s|i,\bm{c},\bm{\theta}) \\
&\cdot \kappa_i(d\alpha_i,dt_i^w|\bm{\theta}) \bm{\nu}_{-i}^l(d\bm{t}_{-i}^l|\hat{\theta}_i,\bm{\theta}_{-i}) \bm{f}(\bm{\theta}) d \bm{\theta} - \sum_{i\in\mathcal{N}}U_i^{wp}(\underline{\theta}_i).
\end{split}
\end{equation}

The seller's expected payoff depends on the allocation rule, payment rule, signal realization rule, and bidders' lowest-type payoffs. This complexity precludes direct application of the standard Myersonian approach \citep[][]{myerson1981}.\footnote{In standard mechanisms, the seller's expected payoff can be expressed as the expectation of $q_i(\bm{\theta})$ times virtual valuation. The optimal mechanism allocates to the bidder with highest non-negative virtual valuation, which is increasing in type.} We propose the following approach.

\begin{method}\label{Method}
We characterize the optimal mechanism in two steps:
\begin{itemize}
\item Step 1: Establish an upper bound on the seller's expected payoff across all feasible direct mechanisms.
\item Step 2: Propose a mechanism achieving this upper bound.
\end{itemize}
\end{method}

For \textbf{Step 1}, we establish:

\begin{lemma}\label{uppbdd}
Under winner-pivotal collaboration, for any feasible mechanism $(\bm{q}, \bm{c}, \bm{\pi})$, the seller's expected payoff is bounded above by:
\begin{equation*}
\overline{\text{Rev}}^{wp}(\bm{q}, \bm{c}, \bm{\pi}) = \int_{\bm{\Theta}} \max\left\{\phi_1^{wp}(\theta_1), \ldots, \phi_i^{wp}(\theta_i),\ldots,\phi_n^{wp}(\theta_n)\right\} \bm{f}(\bm{\theta}) d\bm{\theta}.
\end{equation*}
Here, $\phi_i^{wp}(\theta_i)$ represents the \textit{virtual surplus} for bidder $i$ with type $\theta_i$, defined as:
\begin{equation}\label{Lmm1-Psi}
\phi_i^{wp}(\theta_i) = \max_{\alpha_i \in [0,1]} \underbrace{\left\{\frac{1 - 2\alpha_i \cdot(1 - \alpha_i)\cdot \alpha_i - \alpha_i^4}{2(1 - (1-\alpha_i) \cdot \alpha_i)^2} \cdot \theta_i^2 - \frac{(1-\alpha_i)^2}{1 - (1-\alpha_i) \cdot \alpha_i}  \cdot \frac{1-F_i(\theta_i)}{f_i(\theta_i)}\cdot \theta_i
\right\}}_{\text{defined as }\Psi_i^{wp}(\alpha_i,\theta_i)}.
\end{equation}
\end{lemma}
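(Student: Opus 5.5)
The plan is to start from the revenue representation \eqref{REV-000}, which Appendix~\ref{9pf} derives from incentive compatibility via the envelope argument. I then bound the integrand, signal realization by signal realization, by the full-revelation value $\Psi_i^{wp}(\alpha_i,\theta_i)$. Individual rationality gives $U_i^{wp}(\underline{\theta}_i)\ge 0$, so the last term of \eqref{REV-000} can be dropped and the seller's payoff is at most the first integral.

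Fix $i$, $\alpha=\alpha_i$, and write $h(\theta)=\frac{1-F_i(\theta)}{f_i(\theta)}$, $k=\frac{(1-\alpha)\alpha}{1-(1-\alpha)\alpha}\in[0,1/3]$ and $m=\mathbb{E}[\widetilde\theta_i\mid\mu(\cdot|i,\bm c,s)]$. The integrand in \eqref{REV-000} is then
\[
G(\theta,m)=\tfrac{1-\alpha^2}{2}\theta^2+k m\big[(1-\alpha^2)\theta-(1-\alpha)^2h(\theta)\big]-\tfrac{\alpha^2k^2}{2}m^2-(1-\alpha)^2\theta h(\theta).
\]
A direct computation, using $1+k=1/(1-(1-\alpha)\alpha)$, shows that $G(\theta,\theta)=\Psi_i^{wp}(\alpha,\theta)$. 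So $\Psi_i^{wp}$ is exactly the integrand that would arise if the signal fully revealed the winner's type.

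The key step is to show $\mathbb{E}_\mu[G(\widetilde\theta,m)]\le\mathbb{E}_\mu[\Psi_i^{wp}(\alpha,\widetilde\theta)]$, where $\mu$ is the posterior after $(i,\bm c,s)$ and $m$ is its mean. This step needs the posterior to be Bayes-consistent with the mediator's joint law of $(\bm\theta,\bm c,s)$ conditional on $i$ winning. That consistency lets me regroup the integral in \eqref{REV-000}, via iterated expectations, as an outer integral over $(i,\bm c,s)$ and an inner expectation over $\theta_i\sim\mu$.

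Taking expectations of $G(\theta,m)-G(\theta,\theta)$ under $\mu$ and using $\mathbb{E}_\mu[\widetilde\theta]=m$, I expect
\[
\mathbb{E}_\mu[G(\widetilde\theta,m)-G(\widetilde\theta,\theta)]=\mathrm{Var}_\mu(\widetilde\theta)\,k\Big[\tfrac{\alpha^2k}{2}-(1-\alpha^2)\Big]+(1-\alpha)^2k\,\mathrm{Cov}_\mu\big(\widetilde\theta,h(\widetilde\theta)\big).
\]
The covariance term is $\le 0$ because the increasing hazard rate makes $h$ decreasing. The bracket is $\le 0$: after cancelling $(1-\alpha)$, the inequality $\tfrac{\alpha^2k}{2}\le 1-\alpha^2$ reduces to $\alpha^3\le 2(1+\alpha)(1-\alpha+\alpha^2)=2(1+\alpha^3)$, which always holds.

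Hence, for every realization, the expected integrand is at most $\mathbb{E}_\mu[\Psi_i^{wp}(\alpha_i,\widetilde\theta_i)]\le\mathbb{E}_\mu[\phi_i^{wp}(\widetilde\theta_i)]$. Undoing the regrouping, revenue is bounded by $\sum_i\int q_i(\bm\theta)\phi_i^{wp}(\theta_i)\bm f(\bm\theta)d\bm\theta$.

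Finally, $\Psi_i^{wp}(1,\theta)=0$, so $\phi_i^{wp}\ge 0$. Together with $\sum_iq_i\le 1$, this gives $\sum_iq_i\phi_i^{wp}(\theta_i)\le\max_j\phi_j^{wp}(\theta_j)$ pointwise, which yields the bound in the statement.

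I expect the main obstacle to be making the regrouping rigorous. The posterior $\mu(\cdot|i,\bm c,s)$ conditions on the full payment profile and the signal, and the type of the winner must be integrated against that posterior. This requires a careful disintegration of the mediator's joint measure $q_i\,\kappa_i\,\bm\nu^l_{-i}\,\pi_i\,\bm f$.

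The algebra showing $G(\theta,\theta)=\Psi_i^{wp}$ and the sign of the variance coefficient is routine but should be checked carefully.
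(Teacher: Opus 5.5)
Your proposal is correct and follows the paper's proof in Appendix~\ref{lm3pf} essentially step by step. The paper regroups \eqref{REV-000} over posterior cells $(i,\bm{c},s)$ by a change of measure, replaces the posterior mean by the true type using $\mathbb{E}^2[\widetilde{\theta}_i\mid\mu]\le\mathbb{E}[\widetilde{\theta}_i^2\mid\mu]$ and the covariance inequality for the decreasing $\frac{1-F_i}{f_i}$, drops $U_i^{wp}(\underline{\theta}_i)\ge 0$ by individual rationality, and maximizes pointwise over $\alpha_i$ and the allocation; your variance--covariance identity packages exactly these two inequalities. If anything, your version is more careful than the appendix: you check the sign of the coefficient that actually multiplies $\mathbb{E}^2[\widetilde{\theta}_i\mid\mu]$, and you note that the $\theta_i\frac{1-F_i(\theta_i)}{f_i(\theta_i)}$ term integrates directly to $\mathbb{E}_\mu[\widetilde{\theta}_i h(\widetilde{\theta}_i)]$, so only the cross term needs the covariance inequality.
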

\begin{proof}
Please refer to Appendix \ref{lm3pf}.
\end{proof}

Each bidder's virtual surplus is endogenously determined by the maximizing value-sharing scheme:
\begin{equation}\label{alphamax111-type2}
\alpha_i^{wp-\max} = \alpha_i^{wp-\max}(\theta_i) \in \argmax_{\alpha_i\in[0,1]} \Psi_i^{wp}(\alpha_i,\theta_i),
\end{equation}
which satisfies:

\begin{lemma}\label{Lm-characterize-alphamax}
$\alpha_i^{wp-\max}(\theta_i)$ is unique, decreasing in $\theta_i$, and interior with $\alpha_i^{wp-\max}(\theta_i) \in (0,1)$.
\end{lemma}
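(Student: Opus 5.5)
The plan is to separate the type dependence of $\Psi_i^{wp}$ into a single scalar and then apply a single-crossing (monotone comparative statics) argument in $\alpha_i$.

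\textbf{Reduction to one parameter.} Write $D(\alpha)=1-(1-\alpha)\alpha$. Define
$$A(\alpha)=\frac{1-2\alpha^2(1-\alpha)-\alpha^4}{2D(\alpha)^2},\qquad B(\alpha)=\frac{(1-\alpha)^2}{D(\alpha)},\qquad h_i(\theta_i)=\frac{1-F_i(\theta_i)}{f_i(\theta_i)}.$$
Then $\Psi_i^{wp}(\alpha_i,\theta_i)=\theta_i^2\bigl[A(\alpha_i)-x_i(\theta_i)B(\alpha_i)\bigr]$ with $x_i(\theta_i)=h_i(\theta_i)/\theta_i$. For $\theta_i>0$, maximizing $\Psi_i^{wp}$ in $\alpha_i$ is therefore the same as maximizing
$$G(\alpha,x)=A(\alpha)-xB(\alpha)$$
at $x=x_i(\theta_i)$. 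At $\theta_i=0$ (possible only if $\underline\theta_i=0$) the objective is identically zero. I would treat this measure-zero point by continuity, defining $\alpha_i^{wp-\max}(0)$ as the limit.

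\textbf{Monotonicity of the parameter.} The increasing hazard rate assumption makes $h_i$ nonincreasing, and $1/\theta_i$ is strictly decreasing. Hence $x_i(\theta_i)$ is strictly decreasing in $\theta_i$. It therefore suffices to show that the unique maximizer $\alpha^*(x)$ of $G(\cdot,x)$ is interior and increasing in $x$.

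\textbf{Boundary behaviour and interiority.} A direct computation gives the following values at the endpoints.

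| | $A$ | $A'$ | $B$ | $B'$ |
|---|---|---|---|---|
| $\alpha=0$ | $1/2$ | $1$ | $1$ | $-1$ |
| $\alpha=1$ | $0$ | $-1$ | $0$ | $0$ |

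Consequently $\partial_\alpha G(0,x)=1+x>0$ and $\partial_\alpha G(1,x)=-1<0$ for every $x\ge 0$. Neither endpoint is a maximizer, so any maximizer lies in $(0,1)$.

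\textbf{Monotone comparative statics.} One checks that $B'(\alpha)=-(1-\alpha)(1+\alpha)/D(\alpha)^2<0$ on $(0,1)$. Hence $\partial^2 G/\partial\alpha\,\partial x=-B'(\alpha)>0$, so $G$ has strictly increasing differences in $(\alpha,x)$. Once uniqueness is established, this yields that $\alpha^*(x)$ is increasing in $x$, and hence $\alpha_i^{wp-\max}(\theta_i)$ is decreasing in $\theta_i$.

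\textbf{Uniqueness (main obstacle).} Since $-B'>0$ on $(0,1)$, the first-order condition $\partial_\alpha G=0$ is equivalent to
$$g(\alpha)\equiv \frac{A'(\alpha)}{-B'(\alpha)}=-x .$$
The endpoint values give $g(0)=1$ and $g(\alpha)\to-\infty$ as $\alpha\to1^-$, because $A'(1)=-1$ while $-B'(1)=0$. I would try to prove that $g$ is strictly decreasing on $(0,1)$. Then the first-order condition has exactly one root, it is the global maximizer by the sign pattern of $\partial_\alpha G$, and the root moves up strictly as $x$ rises, which also gives strict monotonicity.

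To verify that $g$ is strictly decreasing, I would clear the common powers of $D$ so that $g$ becomes a ratio of polynomials in $\alpha$. I would then check the sign of the numerator of $g'$ on $(0,1)$, either by factoring out $(1-\alpha)$ or by the substitution $\alpha=1-u$ and showing that all coefficients share a sign. If that fails, the fallback is to show that $G(\cdot,x)$ is strictly quasi-concave, i.e.\ that $\partial_\alpha G$ changes sign only once, by directly bounding $\partial^2_\alpha G$ at any critical point. This polynomial sign check is the step I expect to require the most care.
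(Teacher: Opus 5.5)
Your plan is correct, and it takes a more economical route than the paper. The paper proves the three claims separately:
\begin{itemize}
\item \emph{Interiority} via Kuhn--Tucker, which is equivalent to your endpoint derivatives $\partial_\alpha G(0,x)=1+x$ and $\partial_\alpha G(1,x)=-1$.
\item \emph{Strict monotonicity} by log-differentiating the stationary condition $k(\alpha)=x_i(\theta_i)$, where $k(\alpha)=\frac{\alpha^3-3\alpha^2+4\alpha-1}{(1-\alpha+\alpha^2)(1-\alpha^2)}=-g(\alpha)$. This amounts to checking $k'>0$ only where the cubic is positive.
\item \emph{Uniqueness} by a separate curvature argument. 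It writes $\partial^2_\alpha\Psi_i^{wp}=\theta_i^2\Gamma_i/(1-\alpha+\alpha^2)^4$, shows $\Gamma_i$ is strictly convex in $\alpha$, splits into the cases $x_i\in[0,\frac{1}{2}]$, $[\frac{1}{2},1]$, $(1,\infty)$, and compares against the corner values.
\end{itemize}
Your factorization $\partial_\alpha G=(-B')(g+x)$ with $-B'>0$ does all three at once. The single fact that $g$ is strictly decreasing gives single crossing of $\partial_\alpha G$, hence global uniqueness, interiority and strict comparative statics, so the $\Gamma_i$ case analysis becomes unnecessary. You only need $k'>0$ on the interval where the cubic $\alpha^3-3\alpha^2+4\alpha-1$ is nonnegative; the cubic is increasing, since its derivative is $3(1-\alpha)^2+1>0$. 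On the complement, $k<0\le x$ and so $\partial_\alpha G>0$ automatically. The paper's own log-derivative computation therefore already suffices for uniqueness.

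On the step you flagged: using $(1+\alpha)(1-\alpha+\alpha^2)=1+\alpha^3$ gives $k(\alpha)=\frac{\alpha-(1-\alpha)^3}{(1-\alpha)(1+\alpha^3)}$. The numerator of $k'$ is $P(\alpha)=\alpha^6-6\alpha^5+15\alpha^4-14\alpha^3+9\alpha^2-6\alpha+3$. The tactics you list do not work directly: $P(1)=2$, so $(1-\alpha)$ is not a factor, and with $u=1-\alpha$ you get $u^6-6u^3+12u^2-6u+2$, which has mixed signs. However, $P(\alpha)=(1-\alpha)^6+2-6\alpha^2(1-\alpha)\ge 2-\frac{8}{9}>0$ because $\alpha^2(1-\alpha)\le\frac{4}{27}$. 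So the key inequality holds on all of $[0,1)$.

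Your continuity convention at $\theta_i=0$ (possible when $\underline{\theta}_i=0$) does not rescue the statement there. Since $x_i(\theta_i)\to\infty$ as $\theta_i\to 0$, the limit is the corner $\alpha=1$, and at $\theta_i=0$ every $\alpha$ maximizes $\Psi_i^{wp}\equiv 0$. The lemma should be read for $\theta_i>0$, which the paper's proof also assumes tacitly when it divides by $\theta_i$.
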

\begin{proof}
Please refer to Appendix \ref{Lm-characterize-alphamax-PF}.
\end{proof}

Consequently, the virtual surplus defined in \eqref{Lmm1-Psi} can be written as:
\begin{equation}\label{phiphiphi-type2}
\phi_i^{wp}(\theta_i) = \max_{\alpha_i \in [0,1]} \Psi_i^{wp}(\alpha_i,\theta_i) = \Psi_i^{wp}(\alpha_i^{wp-\max}(\theta_i),\theta_i),
\end{equation}
and satisfies:
\begin{lemma}\label{increasing-of-surplus}
Under winner-pivotal collaboration, each bidder's virtual surplus is non-negative and increasing in $\theta_i$.
\end{lemma}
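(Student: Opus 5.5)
We prove the two claims separately, working directly from the definition $\phi_i^{wp}(\theta_i)=\max_{\alpha_i\in[0,1]}\Psi_i^{wp}(\alpha_i,\theta_i)$ in \eqref{Lmm1-Psi}. Write $D(\alpha)=1-(1-\alpha)\alpha$, $A(\alpha)=\frac{1-2\alpha^2(1-\alpha)-\alpha^4}{2D(\alpha)^2}$, $B(\alpha)=\frac{(1-\alpha)^2}{D(\alpha)}\ge 0$ and $h_i(\theta)=\frac{1-F_i(\theta)}{f_i(\theta)}$. Then $\Psi_i^{wp}(\alpha,\theta)=\theta\,\bigl(A(\alpha)\theta-B(\alpha)h_i(\theta)\bigr)$.

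\textbf{Non-negativity.} Evaluate at the corner $\alpha=1$. There $D(1)=1$, $B(1)=0$, and the numerator of $A$ is $1-0-1=0$, so $\Psi_i^{wp}(1,\theta_i)=0$. Hence $\phi_i^{wp}(\theta_i)\ge\Psi_i^{wp}(1,\theta_i)=0$ for every $\theta_i$.

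\textbf{Monotonicity.} The plan uses an envelope argument. By Lemma~\ref{Lm-characterize-alphamax}, the maximizer $\alpha_i^{wp-\max}(\theta_i)$ is unique and interior. Since $\Psi_i^{wp}$ is $C^1$ in $(\alpha,\theta)$ and $[0,1]$ is compact, the envelope theorem (Milgrom--Segal) gives
\[
\frac{d\phi_i^{wp}}{d\theta_i}=\partial_\theta\Psi_i^{wp}(\alpha^*,\theta_i)=2A(\alpha^*)\theta_i-B(\alpha^*)\bigl(h_i(\theta_i)+\theta_i h_i'(\theta_i)\bigr),
\]
with $\alpha^*=\alpha_i^{wp-\max}(\theta_i)$. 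Because the hazard rate is increasing, $h_i'\le 0$, so
\[
\frac{d\phi_i^{wp}}{d\theta_i}\ \ge\ 2A(\alpha^*)\theta_i-B(\alpha^*)h_i(\theta_i)\ =\ A(\alpha^*)\theta_i+\frac{\phi_i^{wp}(\theta_i)}{\theta_i}
\]
whenever $\theta_i>0$. It remains to show $A(\alpha)>0$ on $[0,1)$. One route is to factor the numerator: it equals $(1-\alpha)(1+\alpha)(1-\alpha^2)+\text{(nonneg.)}$, i.e., roughly $(1-\alpha)(1+\alpha+\alpha^2-\alpha^3)$, each factor positive on $[0,1)$. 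The other is to check directly that $1-2\alpha^2+2\alpha^3-\alpha^4>0$ there. Combined with $\phi_i^{wp}\ge 0$, this gives $d\phi_i^{wp}/d\theta_i\ge A(\alpha^*)\theta_i>0$ for $\theta_i>0$. The point $\theta_i=0$ is harmless, since there $\phi_i^{wp}(0)=0$ and the derivative bound still gives $\ge 0$. Thus $\phi_i^{wp}$ is increasing, and strictly increasing wherever $\theta_i>0$.

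\textbf{Main obstacle.} The delicate step is justifying the envelope formula. Lemma~\ref{Lm-characterize-alphamax} supplies uniqueness of the maximizer, and $\Psi_i^{wp}$ is smooth, so Danskin's theorem applies and yields differentiability of $\phi_i^{wp}$.

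Alternatively, we can avoid differentiability altogether with a revealed-preference argument. Take $\theta'>\theta$ and $\alpha=\alpha^*(\theta)$. Then $\phi_i^{wp}(\theta')\ge\Psi_i^{wp}(\alpha,\theta')$. We also have $\theta'^2\ge\theta^2$ together with $A\ge 0$, and $\theta' h_i(\theta')$ versus $\theta h_i(\theta)$ are controlled via $h_i'\le 0$. The last term needs care: $\theta h_i(\theta)$ itself may increase in $\theta$, so the differential bound above, which uses $\phi_i^{wp}\ge 0$ to absorb the $h_i$ term, is the cleaner route. I would present the derivative version and invoke Danskin's theorem.
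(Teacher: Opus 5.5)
Your proof is correct. It has the same skeleton as the paper's: non-negativity from $\Psi_i^{wp}(1,\theta_i)=0$, and monotonicity from the envelope formula after dropping the nonnegative term $-B(\alpha^*)\,\theta_i h_i'(\theta_i)$.

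The difference is in how you sign the remaining term $2A(\alpha^*)\theta_i-B(\alpha^*)h_i(\theta_i)$.
\begin{itemize}
\item The paper substitutes the interior stationary condition \eqref{CL-decrease-OptAlpha-00} to eliminate $h_i$. This gives the closed form $\frac{(1-\alpha^*)(2-2\alpha^*+3(\alpha^*)^2-(\alpha^*)^3+(\alpha^*)^4)\,\theta_i}{D(\alpha^*)^2(1+\alpha^*)}>0$, so it relies on the interiority part of Lemma~\ref{Lm-characterize-alphamax}.
\item You use the identity $2A\theta_i-Bh_i=A\theta_i+\phi_i^{wp}(\theta_i)/\theta_i$, together with the non-negativity just proved and $A>0$ on $[0,1)$. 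No first-order condition is needed; $\alpha^*<1$ is used only for strictness. Even a corner maximizer $\alpha^*=1$ would still give weak monotonicity.
\end{itemize}
Your argument is essentially the one the paper itself uses for the nested structure in Claim~\ref{claimoptincrease}. The paper's route yields an explicit expression; yours is shorter and transfers to settings where the FOC is unavailable.

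Two small corrections.
\begin{itemize}
\item The factorization you call ``roughly'' $(1-\alpha)(1+\alpha+\alpha^2-\alpha^3)$ is wrong. The numerator of $A$ equals $(1-\alpha)(1+\alpha-\alpha^2+\alpha^3)$. Your other decomposition, $(1-\alpha^2)^2+2\alpha^3(1-\alpha)$, is exact and already gives $A>0$ on $[0,1)$.
\item At $\theta_i=0$ the bound $A\theta_i+\phi_i^{wp}/\theta_i$ is undefined, and the maximizer is not unique because $\Psi_i^{wp}(\cdot,0)\equiv 0$. So do not claim the derivative bound there. Instead conclude from strict increase on $(0,\overline{\theta}_i]$ together with $\phi_i^{wp}(0)=0\le\phi_i^{wp}$. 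This gives strict monotonicity on the whole support.
\end{itemize}
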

\begin{proof}
Please refer to Appendix \ref{PFF-increasing-of-surplus}.
\end{proof}

For \textbf{Step 2}, we present a feasible mechanism achieving the upper bound in Lemma \ref{uppbdd}.

\begin{theorem}[Winner-pivotal collaboration]\label{ThmCompl1-type2}
A feasible direct linear mechanism $(\bm{q}^{wp^*}, \bm{c}^{wp^*},$ $\bm{\pi}^{wp^*})$ is optimal if:
\begin{enumerate}
\item[(i)] The allocation rule is given by:
\begin{equation}\label{tm11-11}
q_i^{wp^*}(\bm{\theta}) = \begin{cases}
1, &\text{if } \phi^{wp}_i(\theta_i)> \max_{j\neq i}\phi^{wp}_j(\theta_j)\\
0, & \text{otherwise}
\end{cases}
\end{equation}
for all $i \in \mathcal{N}$, where $\phi_i^{wp}(\cdot)$ is each bidder's virtual surplus as defined in \eqref{phiphiphi-type2}. Ties are broken uniformly at random.

\item[(ii)] The payment rule $\bm{c}^{wp^*}$ induces a deterministic value-sharing rule for the winner:
\begin{equation}\label{oppalphatype2}
\alpha_i^{wp^*}(\theta_i) = \alpha_i^{wp-\max}(\theta_i),
\end{equation}
where $\alpha_i^{wp-\max}(\theta_i)$ is defined in \eqref{alphamax111-type2}. Moreover, the cash payments upon winning and losing are determined by $\nu_i^{wp^*-w}(\cdot | \bm{\theta})$ and $\nu_i^{wp^*-l}(\cdot | \bm{\theta})$ that satisfy:\footnote{The definition of $\nu_i^{wp^*-w}(\cdot | \bm{\theta})$ refers to \eqref{parts_in_contract}.}
\begin{equation}\label{OptmechType2-cash}
\begin{split}
& \int_{\bm{\Theta}_{-i}} \left\{
\begin{gathered}
q_i^{wp^*}(\theta_i,\bm{\theta}_{-i}) \cdot \int_{\mathcal{T}_i^w} t_i^w\nu_i^{wp^*-w}(d t_i^w|\theta_i,\bm{\theta}_{-i})  \\
+ \left(1 - q_i^{wp^*}(\theta_i,\bm{\theta}_{-i})\right)\int_{\mathcal{T}_i^l} t_i^l\nu_i^{wp^*-l}(d t_i^l|\theta_i,\bm{\theta}_{-i})
\end{gathered}
\right\} \bm{f}_{-i}(\bm{\theta}_{-i}) d\bm{\theta}_{-i} \\
= & \int_{\bm{\Theta}_{-i}} \left\{
\begin{gathered}
q_i^{wp^*}(\theta_i,\bm{\theta}_{-i}) \cdot \frac{(1-\alpha_i^{wp-\max}(\theta_i))^2}{2} \\
\cdot \left(\theta_i + \frac{(1-\alpha_i^{wp-\max}(\theta_i)) \cdot \alpha_i^{wp-\max}(\theta_i)\cdot \theta_i}{1 - (1-\alpha_i^{wp-\max}(\theta_i)) \cdot \alpha_i^{wp-\max}(\theta_i)}\right)^2 \\
-\int_{\underline{\theta}_i}^{\theta_i} q_i^{wp^*}(\tau,\bm{\theta}_{-i}) \cdot (1-\alpha_i^{wp-\max}(\tau))^2\\
\cdot \left(\tau + \frac{(1-\alpha_i^{wp-\max}(\tau)) \cdot \alpha_i^{wp-\max}(\tau)\cdot \tau}{1 - (1-\alpha_i^{wp-\max}(\tau)) \cdot \alpha_i^{wp-\max}(\tau)}\right) d\tau
\end{gathered}
\right\} \bm{f}_{-i}(\bm{\theta}_{-i}) d\bm{\theta}_{-i}.
\end{split}
\end{equation}
This expression (integration form) shows that the cash payments for both the winner and the losers can be constructed in multiple ways.

\item[(iii)] The signal realization rule $\pi_i^{wp^*}(\cdot|i,\bm{c},\bm{\theta}) \in \Delta(S_i)$, conditional on the profile of bidders' reported types, winner's identity, and bidders' payments, is a Dirac measure:
\begin{equation*}
\pi_i^{wp^*}(\cdot|i,\bm{c},\bm{\theta}) = \delta_{\{\theta_i\}},
\end{equation*}
which assigns probability mass one to the winner's truthfully reported type $\theta_i$.
\end{enumerate}
\end{theorem}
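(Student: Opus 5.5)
Given Lemma \ref{uppbdd}, the proof reduces to two tasks: show that the proposed mechanism $(\bm{q}^{wp^*},\bm{c}^{wp^*},\bm{\pi}^{wp^*})$ is feasible, and show that its expected revenue equals $\overline{\text{Rev}}^{wp}$. I would do the revenue computation first, because it also fixes the required cash payments. Under the Dirac signal rule (iii), the seller's posterior after truthful reporting is $\delta_{\{\theta_i\}}$, so $\mathbb{E}[\widetilde{\theta}_i\mid\mu]=\theta_i$. Substituting this and a deterministic $\alpha_i=\alpha_i^{wp-\max}(\theta_i)$ into the integrand of \eqref{REV-000}, the braced term should collapse algebraically to $\Psi_i^{wp}(\alpha_i,\theta_i)$ of \eqref{Lmm1-Psi}. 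This is the same simplification that Lemma \ref{uppbdd} presumably relies on, since there the posterior-mean terms are bounded by the full-revelation value. The integrand is then $\phi_i^{wp}(\theta_i)$ by \eqref{phiphiphi-type2}. By Lemma \ref{increasing-of-surplus} we have $\phi_i^{wp}\ge 0$, so the allocation rule \eqref{tm11-11} gives $\sum_i q_i^{wp^*}\phi_i^{wp}=\max_i\phi_i^{wp}(\theta_i)$ almost everywhere (ties have measure zero). Setting $U_i^{wp}(\underline{\theta}_i)=0$ then yields exactly $\overline{\text{Rev}}^{wp}$, and the theorem follows from Lemma \ref{uppbdd}.

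Next I would verify incentive compatibility, which is where the cash condition \eqref{OptmechType2-cash} comes in. A misreport $\hat\theta_i$ has to be handled with care: the signal then announces $\hat\theta_i$, so the seller's effort is $e^{wp-s}_i$ evaluated at posterior mean $\hat\theta_i$, while the winner best-responds using his true $\theta_i$. The winner's gross aftermarket payoff is therefore $\frac{(1-\hat\alpha)^2}{2}\bigl(\theta_i+k(\hat\alpha)\hat\theta_i\bigr)^2$, where $\hat\alpha=\alpha_i^{wp-\max}(\hat\theta_i)$ and $k(\alpha)=\frac{(1-\alpha)\alpha}{1-(1-\alpha)\alpha}$. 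Write $Q_i(\hat\theta_i)$ for the interim winning probability and $T_i(\hat\theta_i)$ for the interim expected cash payment. Then $U_i(\hat\theta_i,\theta_i)=Q_i(\hat\theta_i)\,g(\hat\theta_i,\theta_i)-T_i(\hat\theta_i)$ with $g(\hat\theta,\theta)=\frac{(1-\hat\alpha)^2}{2}(\theta+k(\hat\alpha)\hat\theta)^2$. The envelope theorem gives $U_i'(\theta_i)=Q_i(\theta_i)(1-\alpha(\theta_i))^2(\theta_i+k\theta_i)$, and \eqref{OptmechType2-cash} is exactly the statement $U_i(\theta_i)=\int_{\underline{\theta}_i}^{\theta_i}U_i'(\tau)\,d\tau$. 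Hence $U_i(\underline\theta_i)=0$ and individual rationality follows from $U_i'\ge 0$.

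The remaining and hardest step is global IC. I would prove it with the integral-monotonicity criterion:
\[
U_i(\theta)-U_i(\hat\theta,\theta)=\int_{\hat\theta}^{\theta}\bigl[Q_i(\tau)\,\partial_2 g(\tau,\tau)-Q_i(\hat\theta)\,\partial_2 g(\hat\theta,\tau)\bigr]d\tau ,
\]
and this is nonnegative whenever the map $x\mapsto Q_i(x)\,\partial_2 g(x,\tau)$ is nondecreasing in $x$ for each fixed $\tau$. Since $\phi_i^{wp}$ is increasing (Lemma \ref{increasing-of-surplus}), $Q_i$ is nondecreasing. So it suffices to show that $\partial_2 g(x,\tau)=(1-\alpha(x))^2(\tau+k(\alpha(x))x)$ is nondecreasing in $x$, using that $\alpha$ is decreasing (Lemma \ref{Lm-characterize-alphamax}). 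The factor $(1-\alpha(x))^2$ increases in $x$, and $k(\alpha)x$ also needs to increase. The difficulty is that $k$ is increasing in $\alpha$ only for $\alpha<1/2$, so for $\alpha>1/2$ the decrease in $\alpha$ works against the growth of $x$. My first attempt would be to bound $\alpha^{wp-\max}$ from the first-order condition of $\Psi^{wp}_i$, or to show directly that $(1-\alpha)^2k(\alpha)x$ increases along the optimal path. If neither works, I would fall back on checking that the product $(1-\alpha)^2(\tau+kx)$ is monotone by differentiating with $\alpha'(x)$ obtained from the implicit function theorem applied to the first-order condition.
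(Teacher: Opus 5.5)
Your outline follows the paper's proof. Optimality comes from substituting the Dirac posterior and $\alpha_i^{wp-\max}$ into \eqref{REV-000}, so the integrand becomes $\phi_i^{wp}(\theta_i)$. Individual rationality comes from $U_i^{wp}(\underline{\theta}_i)=0$. Incentive compatibility comes from comparing $Q_i(\tau)\,\partial_2 g(\tau,\tau)$ with $Q_i(\hat\theta)\,\partial_2 g(\hat\theta,\tau)$, using that $Q_i$ and $(1-\alpha_i^{wp-\max})^2$ are nondecreasing. Your monotonicity formulation covers upward and downward misreports together, which the paper's ``without loss of generality'' leaves implicit.

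The gap is the one non-routine step: showing that $x\mapsto(1-\alpha(x))^2k(\alpha(x))\,x$ is nondecreasing (the paper's Claim \ref{CL-derivative}). You leave it as a list of attempts, and your diagnosis of it is reversed. Since $k'(\alpha)/k(\alpha)=\frac{1-2\alpha}{(1-\alpha)\alpha(1-\alpha+\alpha^2)}$ and $\alpha(\cdot)$ is decreasing, $k(\alpha(x))$ falls with $x$ exactly where $\alpha<\tfrac12$, i.e.\ for high types. The first-order condition cannot exclude that region: at $\overline{\theta}_i$ it gives $\alpha^3-3\alpha^2+4\alpha-1=0$, so $\alpha\approx 0.318$.

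In that region $k(\alpha(x))\,x$ need not increase, because $|\alpha'|$ is governed by $\frac{d}{dx}\ln\frac{1-F_i}{f_i}$ (see \eqref{CL-decrease-OptAlpha-01-01}). For example, take $f_i(\theta)\propto e^{c\theta}$ on $[0,1]$, which is log-concave and hence admissible. One can check that $k(\alpha(x))\,x$ is strictly decreasing near $x=1-1/c$ once $c$ is large. So the sufficient condition you set yourself (``$k(\alpha)x$ also needs to increase'') can fail. Bounding $\alpha'$ through the implicit function theorem will not give a distribution-free argument either.

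The missing idea is to treat $(1-\alpha)^2k(\alpha)=\frac{(1-\alpha)^3\alpha}{1-\alpha+\alpha^2}$ as a single function of $\alpha$. Its logarithmic derivative is
\[
\frac{1-4\alpha+2\alpha^2-2\alpha^3}{(1-\alpha)\alpha(1-\alpha+\alpha^2)}
=-\frac{(1-2\alpha)^2+2\left(\alpha^3-3\alpha^2+4\alpha-1\right)}{(1-\alpha)\alpha(1-\alpha+\alpha^2)},
\]
which is nonpositive whenever $\alpha^3-3\alpha^2+4\alpha-1\ge 0$. That inequality holds at $\alpha=\alpha_i^{wp-\max}(\theta_i)$ by the interior first-order condition \eqref{CL-decrease-OptAlpha-00}, whose right-hand side $\frac{1-F_i(\theta_i)}{\theta_i f_i(\theta_i)}$ is nonnegative.

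Combined with $\alpha_i^{wp-\max}$ decreasing (Lemma \ref{Lm-characterize-alphamax}), this makes $(1-\alpha(x))^2k(\alpha(x))$ nondecreasing in $x$, with no control of $|\alpha'|$ needed. Multiplying by $x\ge 0$ preserves monotonicity. With this step inserted your argument is complete and coincides with the paper's.
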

\begin{proof}
Please refer to Appendix \ref{APPEEE}.
\end{proof}

Theorem \ref{ThmCompl1-type2} establishes that the optimal direct linear mechanism selects the bidder with the highest virtual surplus rather than the highest type, where each bidder's virtual surplus is increasing in type. The asset can always be sold, since $\phi_i^{wp}(\theta_i) \geq 0$ for all $\theta_i\in \Theta_i$ by Lemma \ref{increasing-of-surplus}. This allocation rule can lead to discrimination due to bidder heterogeneity. Specifically, given the deterministic value-sharing rule, $\alpha_i^{wp^*}(\theta_i) = \alpha_i^{wp-\max}(\theta_i)$ for each winning bidder, variations in hazard rates across bidders with identical types result in different winning probabilities.\footnote{Distribution $F_i$ dominates $F_j$ in hazard rate order, denoted $F_i \succ_{hr} F_j$, if $\frac{f_i(\theta)}{1 - F_i(\theta)} < \frac{f_j(\theta)}{1 - F_j(\theta)}$ for all $\theta$ \cite[see, e.g.,][]{Moshe1994,Kirk2012}.}

Under the winner-pivotal scenario, the optimal direct linear mechanism generates linear payments for the winner with deterministic value shares that are decreasing in the bidder's type and independent of the number of bidders (see Lemma \ref{Lm-characterize-alphamax}). The winner's cash payments and losing bidders' cash payments are jointly determined by the optimal allocation rule and the value-sharing rule, which admits multiple constructions. Theorem \ref{ThmCompl1-type2} also establishes that the optimal signal realization rule takes a Dirac measure form; that is, the optimal feasible linear mechanism publicly reveals the winning bidder's truthfully reported type, achieving full revelation. Consequently, during aftermarket collaboration and given the equilibrium strategies in \eqref{Optefforttype2}, the seller and winner choose post-auction efforts under $(\bm{q}^{wp^*}, \bm{c}^{wp^*}, \bm{\pi}^{wp^*})$ as follows:
\begin{equation}\label{Optefforttype2-opt}
\begin{split}
e^{wp-s}_i(\theta_i) = & \frac{\left(1-\alpha_i^{wp^*}(\theta_i)\right) \cdot \alpha_i^{wp^*}(\theta_i)\cdot \theta_i}{1 - \left(1-\alpha_i^{wp^*}(\theta_i)\right) \cdot \alpha_i^{wp^*}(\theta_i)},\\
e^{wp-b}_i(\theta_i) = & \left(1-\alpha_i^{wp^*}(\theta_i)\right)\cdot \left(\theta_i + \frac{\left(1-\alpha_i^{wp^*}(\theta_i)\right) \cdot \alpha_i^{wp^*}(\theta_i)\cdot \theta_i}{1 - \left(1-\alpha_i^{wp^*}(\theta_i)\right) \cdot \alpha_i^{wp^*}(\theta_i)}\right)\\
= & \frac{\left(1-\alpha_i^{wp^*}(\theta_i)\right) \cdot \theta_i}{1 - \left(1-\alpha_i^{wp^*}(\theta_i)\right) \cdot \alpha_i^{wp^*}(\theta_i)},
\end{split}
\end{equation}
both of which are non-negative.

\subsection{Seller-pivotal collaboration \label{SLcaseSec}}

We examine seller-pivotal collaboration. The value creation function is $V(\theta_i, e_i^s, e_i^b) = (\theta_i + e_i^b) \cdot e_i^s$. Zero seller effort yields zero value, establishing the seller's pivotal role (Fact \ref{leadeffectDEF}).

Similar to \eqref{Optefforttype2}, each party's equilibrium strategy degenerates to a Dirac measure. We use the superscript ``$sp$" to denote seller-pivotal collaboration and derive the equilibrium strategies:
\begin{equation}\label{Optefforttype2-2}
\begin{split}
e^{sp-s}_i & = e^{sp-s}_i(\mu) = \frac{\alpha_i\cdot  \mathbb{E}\left[\left.\widetilde{\theta}_i\right|\mu(\cdot |i, \bm{c}, s)\right]}{1 - (1-\alpha_i) \cdot \alpha_i},\\
e^{sp-b}_i & = e^{sp-b}_i(\mu) = \frac{(1-\alpha_i) \cdot \alpha_i\cdot  \mathbb{E}\left[\left.\widetilde{\theta}_i\right|\mu(\cdot |i, \bm{c}, s)\right]}{1 - (1-\alpha_i) \cdot \alpha_i}.
\end{split}
\end{equation}

Unlike the winner-pivotal scenario, the winner's optimal effort no longer depends on his own type. Bidder $i$'s expected payoff $U^{sp}_i(\hat{\theta}_i, e^{sp-b}_i, e^{sp-s}_i, \theta_i)$ is:
\begin{equation}\label{Uireporting-2}
\begin{split}
& U^{sp}_i(\hat{\theta}_i, e^{sp-b}_i, e^{sp-s}_i, \theta_i)\\
= & \int_{\bm{\Theta}_{-i}} q_i(\hat{\theta}_i,\bm{\theta}_{-i}) \cdot \int_{\bm{\mathcal{T}}_{-i}^l}\int_{[0,1]\times \mathcal{T}_i^w} \\
& \cdot \int_{S_i} \left\{
\begin{gathered}
(1-\alpha_i) \cdot \left(\theta_i +  \frac{(1-\alpha_i) \cdot \alpha_i\cdot  \mathbb{E}\left[\left.\widetilde{\theta}_i\right|\mu(\cdot |i, \bm{c}, s)\right]}{1 - (1-\alpha_i) \cdot \alpha_i} \right)\\
\cdot \frac{\alpha_i\cdot  \mathbb{E}\left[\left.\widetilde{\theta}_i\right|\mu(\cdot |i, \bm{c}, s)\right]}{1 - (1-\alpha_i) \cdot \alpha_i}\\
- \frac{1}{2}\cdot\left(\frac{(1-\alpha_i) \cdot \alpha_i\cdot \mathbb{E}\left[\left.\widetilde{\theta}_i\right|\mu(\cdot |i, \bm{c}, s)\right]}{1 - (1-\alpha_i) \cdot \alpha_i}\right)^2 - t_i^w
\end{gathered}
\right\} \pi_i(d s|i,\bm{c},\hat{\theta}_i,\bm{\theta}_{-i}) \\
&\cdot \kappa_i(d\alpha_i,dt_i^w|\hat{\theta}_i,\bm{\theta}_{-i})\bm{\nu}_{-i}^l(d\bm{t}_{-i}^l|\hat{\theta}_i,\bm{\theta}_{-i}) \bm{f}_{-i}(\bm{\theta}_{-i}) d\bm{\theta}_{-i}\\
& - \int_{\bm{\Theta}_{-i}} \left(1 - q_i(\hat{\theta}_i,\bm{\theta}_{-i})\right) \int_{\mathcal{T}_i^l} t_i^l\nu_i^{l}(d t_i^l|\hat{\theta}_i,\bm{\theta}_{-i})\bm{f}_{-i}(\bm{\theta}_{-i}) d\bm{\theta}_{-i},
\end{split}
\end{equation}
which is analogous to $U^{wp}_i$ (see \eqref{Uireporting}).

Following the winner-pivotal scenario, we derive the seller's expected payoff for any feasible mechanism (see \eqref{Sllll-REV0}) and apply the same \textbf{two-step} approach from \hyperref[Method]{Methodology}.

\begin{lemma}\label{S-lead-LM}
Under seller-pivotal collaboration, for any feasible mechanism $(\bm{q}, \bm{c}, \bm{\pi})$, the seller's expected payoff is bounded above by:
\begin{equation*}
\overline{\text{Rev}}^{sp}(\bm{q}, \bm{c}, \bm{\pi}) = \int_{\bm{\Theta}} \max\left\{\phi_1^{sp}(\theta_1), \ldots, \phi_i^{sp}(\theta_i),\ldots,\phi_n^{sp}(\theta_n)\right\} \bm{f}(\bm{\theta}) d\bm{\theta}.
\end{equation*}
Here, $\phi_i^{sp}(\theta_i)$ represents the \textit{virtual surplus} for bidder $i$ with type $\theta_i$, defined as:
\begin{equation}\label{Lmm1-Psi-1}
\phi_i^{sp}(\theta_i) = \max_{\alpha_i \in [0,1]} \underbrace{\left\{\frac{2\alpha_i \cdot (1  - (1 - \alpha_i)\cdot \alpha_i) - \alpha_i^4}{2(1 - (1-\alpha_i) \cdot \alpha_i)^2}\cdot \theta_i^2 - \frac{(1-\alpha_i) \cdot\alpha_i}{1 - (1-\alpha_i) \cdot \alpha_i} \cdot \frac{1-F_i(\theta_i)}{f_i(\theta_i)}\cdot \theta_i
\right\}}_{\text{defined as }\Psi_i^{sp}(\alpha_i,\theta_i)}.
\end{equation}
\end{lemma}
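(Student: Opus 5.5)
The plan is to imitate the derivation behind Lemma~\ref{uppbdd}. We first write the seller's payoff for an arbitrary feasible mechanism as expected total surplus minus information rents. We then bound the resulting integrand signal by signal by the pointwise quantity $\Psi_i^{sp}(\alpha_i,\theta_i)$. Finally we maximize pointwise over $\alpha_i$ and over the identity of the winner.

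Throughout, write $D=1-(1-\alpha_i)\alpha_i$, $a=\alpha_i/D$, $b=(1-\alpha_i)\alpha_i/D$, and let $m=\mathbb{E}[\widetilde\theta_i\mid\mu(\cdot|i,\bm c,s)]$ be the posterior mean.

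\textbf{Step 1 (envelope and revenue).} By \eqref{Optefforttype2-2}, $e^{sp-s}_i=am$ and $e^{sp-b}_i=bm$, and neither depends on the true type.

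1. Hence in \eqref{Uireporting-2} the payoff from reporting $\hat\theta_i$ has the form $A_i(\hat\theta_i)+\theta_i B_i(\hat\theta_i)$.
2. Here $B_i(\hat\theta_i)=\mathbb{E}[q_i(1-\alpha_i)\,a\,m]\ge 0$, with the expectation taken under the report $\hat\theta_i$.
3. By the standard argument (IC makes $U^{sp}_i$ a supremum of affine functions, hence convex), we get $U^{sp}_i(\theta_i)=U^{sp}_i(\underline\theta_i)+\int_{\underline\theta_i}^{\theta_i}B_i(\tau)\,d\tau$.
4. Per winning realization, realized surplus is $(\theta_i+bm)am-\tfrac12 a^2m^2-\tfrac12 b^2m^2$. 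Subtracting integrated rents in the usual way (integration by parts against $f_i$) gives
\begin{equation*}
\text{Rev}^{sp}=\sum_i\int q_i\!\int\!\!\int \Big[a\theta_i m+\big(ab-\tfrac{a^2}{2}-\tfrac{b^2}{2}\big)m^2-b\,m\,\tfrac{1-F_i(\theta_i)}{f_i(\theta_i)}\Big]\pi_i\,\kappa_i\,\bm\nu^l_{-i}\,\bm f\,d\bm\theta-\sum_i U^{sp}_i(\underline\theta_i).
\end{equation*}
This is the analogue of \eqref{REV-000} and should coincide with \eqref{Sllll-REV0}.

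\textbf{Step 2 (removing the posterior mean; main obstacle).} The difficulty is that the integrand depends on $m$, which is a function of the whole signal, not of $\theta_i$ alone. I would resolve this by conditioning on $(i,\bm c,s)$, under which $\theta_i$ is distributed according to $\mu$.

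1. Conditionally, $\mathbb{E}[\theta_i m\mid s]=m^2$. So the conditional integrand equals $c\,m^2-b\,m\,\mathbb{E}[h_i(\widetilde\theta_i)\mid s]$, where $h_i=(1-F_i)/f_i$.
2. The coefficient is $c=a+ab-\tfrac{a^2}{2}-\tfrac{b^2}{2}$. One checks $c=\frac{2\alpha_i D-\alpha_i^4}{2D^2}$, which is exactly the $\theta_i^2$-coefficient in \eqref{Lmm1-Psi-1}.
3. Next verify $c\ge0$ on $[0,1]$. This holds because $2\alpha_i D\ge\alpha_i^4$, since $D\ge 3/4$ and $\alpha_i^3\le 1$. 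Jensen then gives $c\,m^2\le c\,\mathbb{E}[\theta_i^2\mid s]$.
4. For the rent term we need $m\,\mathbb{E}[h_i\mid s]\ge\mathbb{E}[\theta_i h_i(\theta_i)\mid s]$. This holds because $h_i$ is decreasing (increasing hazard rate), so $\operatorname{Cov}_\mu(\widetilde\theta_i,h_i(\widetilde\theta_i))\le0$ for any posterior $\mu$ by Chebyshev's association inequality. Since $b\ge0$, the term with $-b$ is bounded above.
5. Together, the conditional integrand is at most $\mathbb{E}[\Psi^{sp}_i(\alpha_i,\widetilde\theta_i)\mid s]$. Integrating back, this replaces the bracket by $\Psi^{sp}_i(\alpha_i,\theta_i)$ with the same measures; this bound is valid also for random $\alpha_i$, since we conditioned on $\bm c$.

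\textbf{Step 3 (pointwise maximization).}

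1. Bound $\Psi^{sp}_i(\alpha_i,\theta_i)\le\phi^{sp}_i(\theta_i)$.
2. Drop $-\sum_iU^{sp}_i(\underline\theta_i)\le0$ using IR.
3. Note that $\phi^{sp}_i\ge\Psi^{sp}_i(0,\theta_i)=0$. Combined with $\sum_iq_i\le1$, this gives $\sum_i q_i\phi^{sp}_i\le\max_j\phi^{sp}_j(\theta_j)$.

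Integrating against $\bm f$ then yields $\overline{\text{Rev}}^{sp}$.

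The only genuinely delicate points are the covariance inequality in Step 2 and the sign of $c$. The rest is bookkeeping parallel to the winner-pivotal case.
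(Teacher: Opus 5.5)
Your proposal is correct and follows essentially the same route as the paper: the envelope condition, the revenue identity \eqref{Sllll-REV0} obtained by conditioning on $(i,\bm c,s)$, Jensen's inequality together with the covariance inequality for the decreasing $(1-F_i)/f_i$, individual rationality, and pointwise maximization over $\alpha_i$ and then over the allocation. You also spell out steps the paper leaves implicit: that the $\theta_i^2$-coefficient is nonnegative, and that $\phi_i^{sp}\ge\Psi_i^{sp}(0,\theta_i)=0$, which is what allows passing from $\sum_i q_i\phi_i^{sp}$ to $\max_j\phi_j^{sp}$.
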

\begin{proof}
Please refer to Appendix \ref{PF-S-lead-LM}.
\end{proof}

Each bidder's virtual surplus is endogenously determined by the maximizing value-sharing scheme (like the winner-pivotal case in \eqref{phiphiphi-type2}):
\begin{equation}\label{alphastarSL}
\alpha_i^{sp-\max} = \alpha_i^{sp-\max}(\theta_i) \in \argmax_{\alpha_i\in[0,1]} \Psi_i^{sp}(\alpha_i,\theta_i),
\end{equation}
which satisfies:
\begin{lemma}\label{lm-sellerlead-alphamax}
$\alpha_i^{sp-\max}(\theta_i)$ is unique, non-increasing, and satisfies $\alpha_i^{sp-\max}(\theta_i) \geq \frac{1}{2}$ for all $\theta_i \in [\underline{\theta}_i, \overline{\theta}_i]$. Moreover,
\begin{equation*}
\alpha_i^{sp-\max}(\theta_i)\begin{cases}
= 1, & \text{for } \theta_i \in [\underline{\theta}_i, \theta_i^c]\\
\text{is decreasing in }\theta_i& \text{for } \theta_i \in [\theta_i^c, \overline{\theta}_i]
\end{cases},
\end{equation*}
where the threshold type $\theta_i^c$ is uniquely determined by $\frac{1 - F_i(\theta_i^c)}{\theta_i^c \cdot f_i(\theta_i^c)} = 1$.\footnote{The threshold type $\theta_i^c \in [\underline{\theta}_i, \overline{\theta}_i]$ exists since $\underline{\theta}_i \cdot f_i(\underline{\theta}_i) < 1$.}
\end{lemma}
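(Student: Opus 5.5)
The plan is to reduce the maximization of $\Psi_i^{sp}(\alpha_i,\theta_i)$ to a one-parameter problem. Write $D(\alpha)\equiv 1-(1-\alpha)\alpha$ and $r_i(\theta_i)\equiv \frac{1-F_i(\theta_i)}{\theta_i f_i(\theta_i)}$. For $\theta_i>0$, dividing by $\theta_i^2$ gives $\Psi_i^{sp}(\alpha,\theta_i)=\theta_i^2\,[A(\alpha)-r_i(\theta_i)B(\alpha)]$. Here
\begin{equation*}
A(\alpha)=\frac{\alpha}{D(\alpha)}-\frac{\alpha^4}{2D(\alpha)^2},\qquad B(\alpha)=\frac{\alpha(1-\alpha)}{D(\alpha)} .
\end{equation*}
So the argmax depends on $\theta_i$ only through $r_i(\theta_i)$. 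By the increasing hazard rate assumption, $r_i$ is strictly decreasing. We have $r_i(\overline{\theta}_i)=0$, and $r_i(\underline{\theta}_i)>1$ because $\underline{\theta}_i f_i(\underline{\theta}_i)<1$ (the value is $+\infty$ if $\underline{\theta}_i=0$). Hence $r_i(\theta_i^c)=1$ has a unique solution $\theta_i^c$, with $r_i>1$ below it and $r_i<1$ above it. It then suffices to show three things about $G_r(\alpha)\equiv A(\alpha)-rB(\alpha)$: the maximizer $\alpha^*(r)$ is unique; it equals $1$ iff $r\ge 1$; and it is strictly increasing in $r$ on $(0,1)$. Composing $\alpha^*$ with the decreasing map $r_i$ then gives the stated monotonicity in $\theta_i$.

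The first step is the bound $\alpha^*\ge \tfrac12$, which I will get by a reflection argument. Since $D(\alpha)=D(1-\alpha)$, we have $B(\alpha)=B(1-\alpha)$. A direct computation, using $\alpha^4-(1-\alpha)^4=(2\alpha-1)(2D-1)$, gives
\begin{equation*}
A(\alpha)-A(1-\alpha)=\frac{2\alpha-1}{2D(\alpha)^2}.
\end{equation*}
Thus $G_r(\alpha)>G_r(1-\alpha)$ for every $\alpha>\tfrac12$, so any maximizer lies in $[\tfrac12,1]$.

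Next I compute derivatives. One has $B'(\alpha)=(1-2\alpha)/D^2$ (using $D+\alpha(1-\alpha)=1$) and $A'(\alpha)=(1-\alpha-\alpha^3)/D^3$. Therefore
\begin{equation*}
G_r'(\alpha)=\frac{N_r(\alpha)}{D^3},\qquad N_r(\alpha)=(1-\alpha-\alpha^3)+r(2\alpha-1)D(\alpha).
\end{equation*}
At $\alpha=1$ this gives $N_r(1)=r-1$, which already identifies the pooling condition $r\ge 1$. Let $\alpha_0\in(\tfrac12,1)$ be the root of $\alpha^3+\alpha-1=0$. On $[\tfrac12,\alpha_0]$ both terms of $N_r$ are nonnegative, so $G_r$ is increasing there. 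On $(\alpha_0,1]$, $N_r(\alpha)$ has the sign of $r-R(\alpha)$, where
\begin{equation*}
R(\alpha)\equiv\frac{\alpha^3+\alpha-1}{(2\alpha-1)D(\alpha)},
\end{equation*}
with $R(\alpha_0)=0$ and $R(1)=1$. The key claim is that $R$ is strictly increasing on $[\alpha_0,1]$. Granting it, the three cases follow. If $r\ge1$, then $N_r\ge0$ on $[\tfrac12,1]$, so the unique maximizer is $\alpha=1$. If $0<r<1$, then $N_r$ changes sign exactly once, from $+$ to $-$, at $\alpha^*(r)=R^{-1}(r)\in(\alpha_0,1)$, which is the unique maximizer and is strictly increasing in $r$. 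If $r=0$ (that is, $\theta_i=\overline{\theta}_i$), the maximizer is $\alpha_0$. The case $\theta_i=0$ is trivial, and uniqueness there is settled by continuity or by convention.

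The main obstacle is the monotonicity of $R$. I plan to differentiate and show that the numerator
\begin{equation*}
(3\alpha^2+1)(2\alpha-1)D-(\alpha^3+\alpha-1)\bigl[2D+(2\alpha-1)^2\bigr]
\end{equation*}
is positive on $[\alpha_0,1]$. This is a quartic-type polynomial. I will check its sign by expanding it and bounding it on the interval, using $\alpha^3+\alpha-1\le 1$ and $2\alpha-1>0.36$ there. If that direct approach becomes messy, the fallback is to show that $G_r$ is strictly concave on $[\alpha_0,1]$, which also yields uniqueness, and then obtain monotonicity in $r$ from $\partial^2 G_r/\partial\alpha\,\partial r=-B'(\alpha)>0$ for $\alpha>\tfrac12$ via the implicit function theorem.
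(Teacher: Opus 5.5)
Your argument is correct. The step you flag as the main obstacle closes easily. The numerator of $R'$ expands to $-3\alpha^4+2\alpha^3+6\alpha^2-6\alpha+2$, and its derivative is $6(2\alpha-1)(1-\alpha^2)\ge 0$ on $[\tfrac12,1]$. So on that interval the numerator is at least its value $\tfrac{9}{16}$ at $\alpha=\tfrac12$, and no interval bounding is needed.

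The paper shares your reflection step, though it bounds the difference of the numerators of $A$ from below rather than using your exact identity $A(\alpha)-A(1-\alpha)=\frac{2\alpha-1}{2D^2}$. It also relies on the same key fact about $R$, but only locally: it log-differentiates the interior first-order condition $R(\alpha)=r_i(\theta_i)$ in $\theta_i$ and shows $\frac{d}{d\alpha}\ln R>0$ through a three-term decomposition.

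The routes differ in how uniqueness and pooling are handled. The paper proves uniqueness separately: it shows the factor $\Omega_i$ in $\partial^2\Psi_i^{sp}/\partial\alpha_i^2$ is convex in $\alpha_i$ and negative at $\tfrac12$ and $1$, so $\Psi_i^{sp}$ is strictly concave on $[\tfrac12,1)$ when $r_i\le 1$. It treats the pooling region with Kuhn--Tucker conditions, which only show that a corner at $1$ requires $r_i\ge 1$.

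Your single-crossing argument for $N_r$ gets three things from one global monotonicity fact:
\begin{itemize}
\item uniqueness;
\item the two-sided characterization ``$\alpha^*=1$ iff $r\ge 1$'';
\item strict monotonicity, as $R^{-1}\circ r_i$.
\end{itemize}
It needs no second-order computation and no implicit function theorem. It also makes explicit the converse in the pooling region (that $r_i\ge 1$ forces $\alpha^*=1$), which the paper asserts without spelling out. The paper's route is more standard, but it pays for that with the extra concavity calculation and a separate corner analysis.

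Your caveat about $\theta_i=0$ is accurate: there $\Psi_i^{sp}(\cdot,0)\equiv 0$, so uniqueness holds only by convention. The paper's proof does not address this point either.
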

\begin{proof}
Please refer to Appendix \ref{lm-sellerlead-alphamax-PF}.
\end{proof}

The virtual surplus defined in \eqref{Lmm1-Psi-1} can be expressed as:
\begin{equation}\label{phiSL}
\phi_i^{sp}(\theta_i) = \max_{\alpha_i \in [0,1]} \Psi_i^{sp}(\alpha_i,\theta_i) = \Psi_i^{sp}(\alpha_i^{sp-\max}(\theta_i),\theta_i),
\end{equation}
and satisfies:
\begin{lemma}\label{increasing-of-surplus-SP}
Under seller-pivotal collaboration, each bidder's virtual surplus is non-negative and increasing in $\theta_i$.
\end{lemma}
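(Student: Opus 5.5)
Write $h_i(\theta_i)\equiv \frac{1-F_i(\theta_i)}{f_i(\theta_i)}$ and $D(\alpha)\equiv 1-(1-\alpha)\alpha\in[\tfrac34,1]$, so that
\begin{equation*}
\Psi_i^{sp}(\alpha,\theta_i)=\frac{2\alpha D(\alpha)-\alpha^4}{2D(\alpha)^2}\,\theta_i^2-\frac{(1-\alpha)\alpha}{D(\alpha)}\,h_i(\theta_i)\,\theta_i .
\end{equation*}
The argument is an envelope-theorem argument built on Lemma \ref{lm-sellerlead-alphamax}.

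\textbf{Non-negativity.} Evaluate at the feasible choice $\alpha=1$. There $D(1)=1$ and the information-rent term vanishes because of the factor $(1-\alpha)$, so
\begin{equation*}
\Psi_i^{sp}(1,\theta_i)=\frac{2-1}{2}\,\theta_i^2=\tfrac12\theta_i^2\ \ge 0 .
\end{equation*}
Hence $\phi_i^{sp}(\theta_i)\ge \tfrac12\theta_i^2\ge 0$. This is immediate, and it already gives strict positivity whenever $\theta_i>0$.

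\textbf{Monotonicity: the general identity.} By Lemma \ref{lm-sellerlead-alphamax} the maximizer $\alpha_i^{sp-\max}(\theta_i)$ is unique, and $\Psi_i^{sp}$ is $C^1$ in $\theta_i$ on a compact $\alpha$-domain. The envelope theorem (Milgrom--Segal form) therefore makes $\phi_i^{sp}$ absolutely continuous with
\begin{equation*}
\frac{d\phi_i^{sp}}{d\theta_i}=\frac{\partial \Psi_i^{sp}}{\partial\theta_i}\Big|_{\alpha=\alpha_i^{sp-\max}}=\frac{2\alpha D-\alpha^4}{D^2}\,\theta_i-\frac{(1-\alpha)\alpha}{D}\,\big(h_i(\theta_i)\theta_i\big)' .
\end{equation*}
It is enough to show this is strictly positive almost everywhere. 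I split on the threshold $\theta_i^c$.

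\textbf{Region $\theta_i\in[\underline\theta_i,\theta_i^c]$.} Here $\alpha^{sp-\max}=1$, so $\phi_i^{sp}=\tfrac12\theta_i^2$. This is increasing in $\theta_i$, and strictly so except possibly at the single point $\theta_i=0$.

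\textbf{Region $\theta_i>\theta_i^c$: the main obstacle.} Here $\alpha\in[\tfrac12,1)$, and $(h_i\theta_i)'=h_i'\theta_i+h_i$ with $h_i'\le 0$ by the increasing hazard rate. I will handle the interior first-order condition as follows.
\begin{itemize}
\item Write $\Psi^{sp}=A(\alpha)\theta_i^2-B(\alpha)h_i\theta_i$, with $A=\frac{2\alpha D-\alpha^4}{2D^2}$ and $B=\frac{(1-\alpha)\alpha}{D}$.
\item The first-order condition $A'(\alpha)\theta_i=B'(\alpha)h_i$ holds with $B'<0$ for $\alpha>\tfrac12$. So $A'<0$ at the optimum, and $h_i/\theta_i=A'/B'$.
\item Substituting, the derivative is at least $2A\theta_i-Bh_i-Bh_i'\theta_i\ge \theta_i\big(2A-B\,A'/B'\big)$. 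The term $-Bh_i'\theta_i$ is non-negative and is dropped.
\item It remains to verify the one-variable inequality $2AB'-BA'<0$, i.e.\ $2A|B'|>B|A'|$, on $\alpha\in[\tfrac12,1)$. This is a polynomial inequality after clearing the powers of $D$, and I expect it to be the main computational step.
\end{itemize}
If a direct polynomial check gets messy, I have a fallback. Since $\alpha\ge\tfrac12$ implies $B\le \tfrac{1}{3}$ and $A$ is bounded below, I can compare $\Psi^{sp}(\alpha^{sp-\max}(\theta'),\theta')$ with $\Psi^{sp}(\alpha^{sp-\max}(\theta),\theta)$ for $\theta'>\theta$. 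Fixing $\alpha=\alpha^{sp-\max}(\theta)$ and using $\phi(\theta')\ge\Psi(\alpha,\theta')$, it suffices that $A(\alpha)\theta^2-B(\alpha)h\theta$ is increasing in $\theta$ for that fixed $\alpha$. That in turn reduces to $2A\theta\ge B(h\theta)'$, which uses the FOC bound $h/\theta\le A'/B'$ and the same inequality.

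**Conclusion.** Continuity of $\phi_i^{sp}$ at $\theta_i^c$, which holds because $\alpha^{sp-\max}$ is continuous there by uniqueness and Berge's theorem, glues the two regions together and yields that $\phi_i^{sp}$ is increasing on the whole support.
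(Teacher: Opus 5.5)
Your proposal is correct and follows the paper's proof. Non-negativity comes from $\Psi_i^{sp}(1,\theta_i)=\tfrac12\theta_i^2$. Monotonicity comes from the envelope theorem, substituting the interior first-order condition for $h_i/\theta_i$ and using that $-B h_i'\theta_i\ge 0$; your corner-region and gluing remarks fill in what the paper leaves implicit. The inequality you left open does hold, and cleanly: $2AB'-BA'=\alpha(\alpha^2-3\alpha+1)/D(\alpha)^3<0$ for all $\alpha\in\bigl(\tfrac{3-\sqrt5}{2},1\bigr]\supset[\tfrac12,1)$, whereas the paper signs the same coefficient by bounding its numerator below by the FOC-implied quantity $\alpha^3+\alpha-1\ge 0$.
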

\begin{proof}
Please refer to Appendix \ref{increasing-of-surplus-SP-pf}.
\end{proof}

We characterize the optimal mechanism using the upper bound in Lemma \ref{S-lead-LM}, following Theorem \ref{ThmCompl1-type2}.

\begin{theorem}[Seller-pivotal collaboration]\label{ThmCompl1-typeSL-seller}
A feasible direct linear mechanism $(\bm{q}^{sp^*}, \bm{c}^{sp^*},$ $\bm{\pi}^{sp^*})$ is optimal if:
\begin{enumerate}
\item[(i)] The allocation rule is given by:
\begin{equation*}
q_i^{sp^*}(\bm{\theta}) = \begin{cases}
1, &\text{if } \phi^{sp}_i(\theta_i)> \max_{j\neq i}\phi^{sp}_j(\theta_j)\\
0, & \text{otherwise}
\end{cases}
\end{equation*}
for all $i \in \mathcal{N}$, where $\phi_i^{sp}(\cdot)$ is each bidder's virtual surplus as defined in \eqref{phiSL}. Ties are broken uniformly at random.

\item[(ii)] The payment rule $\bm{c}^{sp^*}$ induces a deterministic value-sharing rule for the winner:
\begin{equation*}
\alpha_i^{sp^*}(\theta_i) = \alpha_i^{sp-\max}(\theta_i),
\end{equation*}
where $\alpha_i^{sp-\max}(\theta_i)$ is defined in \eqref{alphastarSL}. Moreover, the cash payments upon winning and losing are determined by $\nu_i^{sp^*-w}(\cdot | \bm{\theta})$ and $\nu_i^{sp^*-l}(\cdot | \bm{\theta})$ that satisfy:\footnote{The marginal probability measure $\nu_i^{sp^*-w}(\cdot | \bm{\theta})$ is also derived from $\kappa_i(\cdot, \cdot | \bm{\theta}) \in \Delta([0,1] \times \mathcal{T}_i^w)$ (see \eqref{parts_in_contract}), as in the winner-pivotal collaboration scenario.}
\begin{equation}\label{sl-lead-op-cash}
\begin{split}
& \int_{\bm{\Theta}_{-i}} \left\{
\begin{gathered}
q_i^{sp^*}(\theta_i,\bm{\theta}_{-i}) \cdot \int_{\mathcal{T}_i^w} t_i^w\nu_i^{sp^*-w}(d t_i^w|\theta_i,\bm{\theta}_{-i})  \\
+ \left(1 - q_i^{sp^*}(\theta_i,\bm{\theta}_{-i})\right)\int_{\mathcal{T}_i^l} t_i^l\nu_i^{sp^*-l}(d t_i^l|\theta_i,\bm{\theta}_{-i})
\end{gathered}
\right\} \bm{f}_{-i}(\bm{\theta}_{-i}) d\bm{\theta}_{-i} \\
= & \int_{\bm{\Theta}_{-i}} \left\{
\begin{gathered}
q_i^{sp^*}(\theta_i,\bm{\theta}_{-i}) \cdot \left(
\begin{gathered}
\frac{(1-\alpha_i^{sp-\max}(\theta_i)) \cdot \alpha_i^{sp-\max}(\theta_i)}{1 - (1-\alpha_i^{sp-\max}(\theta_i)) \cdot \alpha_i^{sp-\max}(\theta_i)} \\
+ \frac{1}{2}\cdot \biggl(\frac{(1-\alpha_i^{sp-\max}(\theta_i)) \cdot \alpha_i^{sp-\max}(\theta_i)\cdot \theta_i}{1 - (1-\alpha_i^{sp-\max}(\theta_i)) \cdot \alpha_i^{sp-\max}(\theta_i)}\biggr)^{\!2}
\end{gathered}
\right)\cdot \theta_i^2\\
- \int_{\underline{\theta}_i}^{\theta_i} q_i^{sp^*}(\tau,\bm{\theta}_{-i}) \cdot \frac{(1-\alpha_i^{sp-\max}(\tau)) \cdot\alpha_i^{sp-\max}(\tau)}{1 - (1-\alpha_i^{sp-\max}(\tau)) \cdot \alpha_i^{sp-\max}(\tau)} \cdot \tau d\tau
\end{gathered}
\right\} \bm{f}_{-i}(\bm{\theta}_{-i}) d\bm{\theta}_{-i}.
\end{split}
\end{equation}
This expression parallels \eqref{OptmechType2-cash}, where the cash payments can be constructed in multiple ways.

\item[(iii)] The signal realization rule $\pi_i^{sp^*}(\cdot|i,\bm{c},\bm{\theta}) \in \Delta(S_i)$ is a Dirac measure:
\begin{equation*}
\pi_i^{sp^*}(\cdot|i,\bm{c},\bm{\theta}) = \delta_{\{\theta_i\}},
\end{equation*}
assigning probability mass one to the winner's truthfully reported type $\theta_i$.
\end{enumerate}
\end{theorem}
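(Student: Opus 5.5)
The plan follows Step 2 of the \hyperref[Method]{Methodology}. I will show that $(\bm{q}^{sp^*}, \bm{c}^{sp^*}, \bm{\pi}^{sp^*})$ is feasible in the sense of Definition \ref{d1}, and that its expected revenue equals the upper bound $\overline{\text{Rev}}^{sp}$ of Lemma \ref{S-lead-LM}. Optimality then follows immediately.

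\emph{Step 1: payoffs under full revelation.} Because $\pi_i^{sp^*}=\delta_{\{\theta_i\}}$, a winner who reports $\hat\theta_i$ induces the posterior mean $\mathbb{E}[\widetilde\theta_i\mid\mu]=\hat\theta_i$. Write
\begin{equation*}
D(\alpha)=1-(1-\alpha)\alpha, \qquad k(\alpha)=\frac{(1-\alpha)\alpha}{D(\alpha)}.
\end{equation*}
Substituting the equilibrium efforts \eqref{Optefforttype2-2} into \eqref{Uireporting-2}, the winner's payoff from reporting $\hat\theta_i$ when his true type is $\theta_i$ becomes
\begin{equation*}
k(\hat\alpha)\,\hat\theta_i\,\theta_i+k(\hat\alpha)^2\hat\theta_i^2-\tfrac12 k(\hat\alpha)^2\hat\theta_i^2-t_i^w, \qquad \hat\alpha=\alpha_i^{sp-\max}(\hat\theta_i).
\end{equation*}
This expression is affine in the true type $\theta_i$. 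Let $Q_i(\hat\theta_i)$ denote the interim winning probability. The slope of the interim payoff $U_i^{sp}(\hat\theta_i,\cdot,\cdot,\theta_i)$ in $\theta_i$ is then
\begin{equation*}
g_i(\hat\theta_i)=Q_i(\hat\theta_i)\,k\bigl(\alpha_i^{sp-\max}(\hat\theta_i)\bigr)\,\hat\theta_i .
\end{equation*}
By the standard single-crossing argument,
\begin{equation*}
U_i^{sp}(\theta_i)-U_i^{sp}(\hat\theta_i,\cdot,\cdot,\theta_i)=\int_{\hat\theta_i}^{\theta_i}\bigl(g_i(\tau)-g_i(\hat\theta_i)\bigr)\,d\tau .
\end{equation*}
Here the left-hand side already uses the envelope representation $U_i^{sp}(\theta_i)=U_i^{sp}(\underline\theta_i)+\int_{\underline\theta_i}^{\theta_i}g_i(\tau)\,d\tau$, which is exactly what the cash condition \eqref{sl-lead-op-cash} imposes, with $U_i^{sp}(\underline\theta_i)=0$. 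Its right-hand side is truth-telling gross surplus minus the information-rent integral.

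\emph{Step 2: monotonicity, hence IC and IR.} Incentive compatibility reduces to showing that $g_i$ is nondecreasing. I will establish this factor by factor.
\begin{itemize}
\item $Q_i$ is nondecreasing, because the allocation picks the highest $\phi_j^{sp}$ and $\phi_i^{sp}$ is increasing by Lemma \ref{increasing-of-surplus-SP}.
\item On $[1/2,1]$, the function $k(\alpha)$ is nonincreasing in $\alpha$, since $\alpha(1-\alpha)$ decreases there and $x\mapsto x/(1-x)$ is increasing. Lemma \ref{lm-sellerlead-alphamax} gives $\alpha_i^{sp-\max}\ge 1/2$ and $\alpha_i^{sp-\max}$ non-increasing in $\theta_i$. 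Together these make $k(\alpha_i^{sp-\max}(\theta_i))$ nondecreasing.
\item The factor $\hat\theta_i\ge 0$ is increasing.
\end{itemize}
A product of nonnegative nondecreasing functions is nondecreasing, so the integrand in Step 1 has the right sign in both cases $\hat\theta_i\lessgtr\theta_i$. This gives IC. IR follows from $U_i^{sp}(\theta_i)=\int_{\underline\theta_i}^{\theta_i}g_i\ge 0$.

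The step I expect to be the main obstacle is that $\mathcal{T}^w_i,\mathcal{T}^l_i\subseteq\mathbb{R}_+$ requires nonnegative cash transfers, whereas \eqref{sl-lead-op-cash} only pins down the interim expectation. I will first try setting losers' payments to zero and charging the winner the ex-post analogue of the right-hand side, evaluated against the threshold type $\inf\{\tau:\phi_i^{sp}(\tau)\ge\max_{j\neq i}\phi_j^{sp}(\theta_j)\}$. Nonnegativity then needs to be checked using $k\theta^2+\frac12k^2\theta^2\ge\int_{\underline\theta}^{\theta}k\tau\,d\tau$, which holds because $k(\alpha_i^{sp-\max}(\cdot))\tau$ is nondecreasing.

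\emph{Step 3: revenue attains the bound.} With full revelation, the posterior mean equals $\theta_i$ inside the seller-pivotal revenue expression (the analogue of \eqref{REV-000}). The per-winner integrand then collapses to $\Psi_i^{sp}(\alpha_i,\theta_i)$, exactly as in the derivation behind Lemma \ref{S-lead-LM}, and $U_i^{sp}(\underline\theta_i)=0$. Choosing $\alpha_i=\alpha_i^{sp-\max}(\theta_i)$ makes this integrand equal to $\phi_i^{sp}(\theta_i)$. The allocation rule awards the asset to $\arg\max_j\phi_j^{sp}(\theta_j)$, and by Lemma \ref{increasing-of-surplus-SP} all $\phi_j^{sp}\ge 0$, so selling always is optimal. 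Revenue therefore equals $\int_{\bm\Theta}\max_j\phi_j^{sp}(\theta_j)\bm f(\bm\theta)\,d\bm\theta=\overline{\text{Rev}}^{sp}$. Since no feasible mechanism can exceed this bound, the mechanism is optimal.
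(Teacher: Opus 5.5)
Your proposal is correct and follows the paper's proof essentially step for step. Feasibility comes from the envelope representation with $U_i^{sp}(\underline{\theta}_i)=0$, together with monotonicity of $q_i^{sp^*}(\tau,\bm{\theta}_{-i})\,k(\alpha_i^{sp-\max}(\tau))\,\tau$, which uses $\alpha_i^{sp-\max}\geq\frac{1}{2}$ and non-increasing; your composition argument is just the chain-rule form of the paper's derivative computation. Optimality then follows because full revelation collapses the revenue integrand to $\phi_i^{sp}(\theta_i)$, so the bound of Lemma \ref{S-lead-LM} is attained.

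Your extra check that cash transfers can be chosen nonnegative goes slightly beyond what the paper writes. Your winner gross surplus $(k+\frac{1}{2}k^2)\theta_i^2$ is the right one: the displayed condition \eqref{sl-lead-op-cash} carries a stray $\theta_i$ inside the squared term, as the paper's own implementation formula for $\widetilde{t}_i^{sp-w}$ confirms.
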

\begin{proof}
Please refer to Appendix \ref{THMoptSL}.
\end{proof}

The optimal mechanism $(\bm{q}^{sp^*}, \bm{c}^{sp^*}, \bm{\pi}^{sp^*})$ exhibits similar properties to the winner-pivotal case (Theorem \ref{ThmCompl1-type2}). The asset can always be sold since $\phi_i^{sp}(\theta_i) \geq 0$ for all $\theta_i\in \Theta_i$ (Lemma \ref{lm-sellerlead-alphamax}). The mechanism selects the bidder with highest virtual surplus and generates deterministic value-sharing rules and  multi-constructed cash payments in both scenarios. The optimal mechanism achieves full revelation by disclosing the winner's truthfully reported type. Under $(\bm{q}^{sp^*}, \bm{c}^{sp^*}, \bm{\pi}^{sp^*})$, the seller and winner choose post-auction efforts (from \eqref{Optefforttype2-2}):
\begin{equation}\label{Optefforttype2-2-2}
\begin{split}
e^{sp-s}_i(\theta_i) = & \frac{\alpha_i^{sp^*}(\theta_i) \cdot \theta_i}{1 - \left(1-\alpha_i^{sp^*}(\theta_i)\right) \cdot \alpha_i^{sp^*}(\theta_i)},\\
e^{sp-b}_i(\theta_i) = & \frac{\left(1-\alpha_i^{sp^*}(\theta_i)\right) \cdot \alpha_i^{sp^*}(\theta_i) \cdot \theta_i}{1 - \left(1-\alpha_i^{sp^*}(\theta_i)\right) \cdot \alpha_i^{sp^*}(\theta_i)},
\end{split}
\end{equation}
both of which are non-negative.

\subsection{Impact of pivotal role divergence \label{Seccompare}}

This subsection compares optimal linear mechanisms under winner-pivotal and seller-pivotal collaboration. We examine how the pivotal role affects value sharing, effort exertion, and the seller's revenue.

\subsubsection{Pivotal role on value sharing\label{FULLsec}}

Does the pivotal party benefit from value sharing under the optimal mechanism? Under seller-pivotal collaboration, the optimal value-sharing rule satisfies $\alpha_i^{sp^*}(\theta_i) \geq \frac{1}{2}$ by Lemma \ref{lm-sellerlead-alphamax}. The seller consistently secures a majority share of the eventual value when she drives the aftermarket value creation. This structural feature aligns with the inherent power dynamics in seller-pivotal cooperation.

In contrast, under winner-pivotal collaboration, the pivotal role does not guarantee the winner a majority share. Specifically, $1-\alpha_i^{wp^*}(\theta_i) \leq \frac{1}{2}$ may occur for winners with relatively low types. Figure \ref{Fig_alphaCom} illustrates this asymmetry across four different distributions.

\begin{figure}[htp!]
\centering
\includegraphics[width=0.75\textwidth]{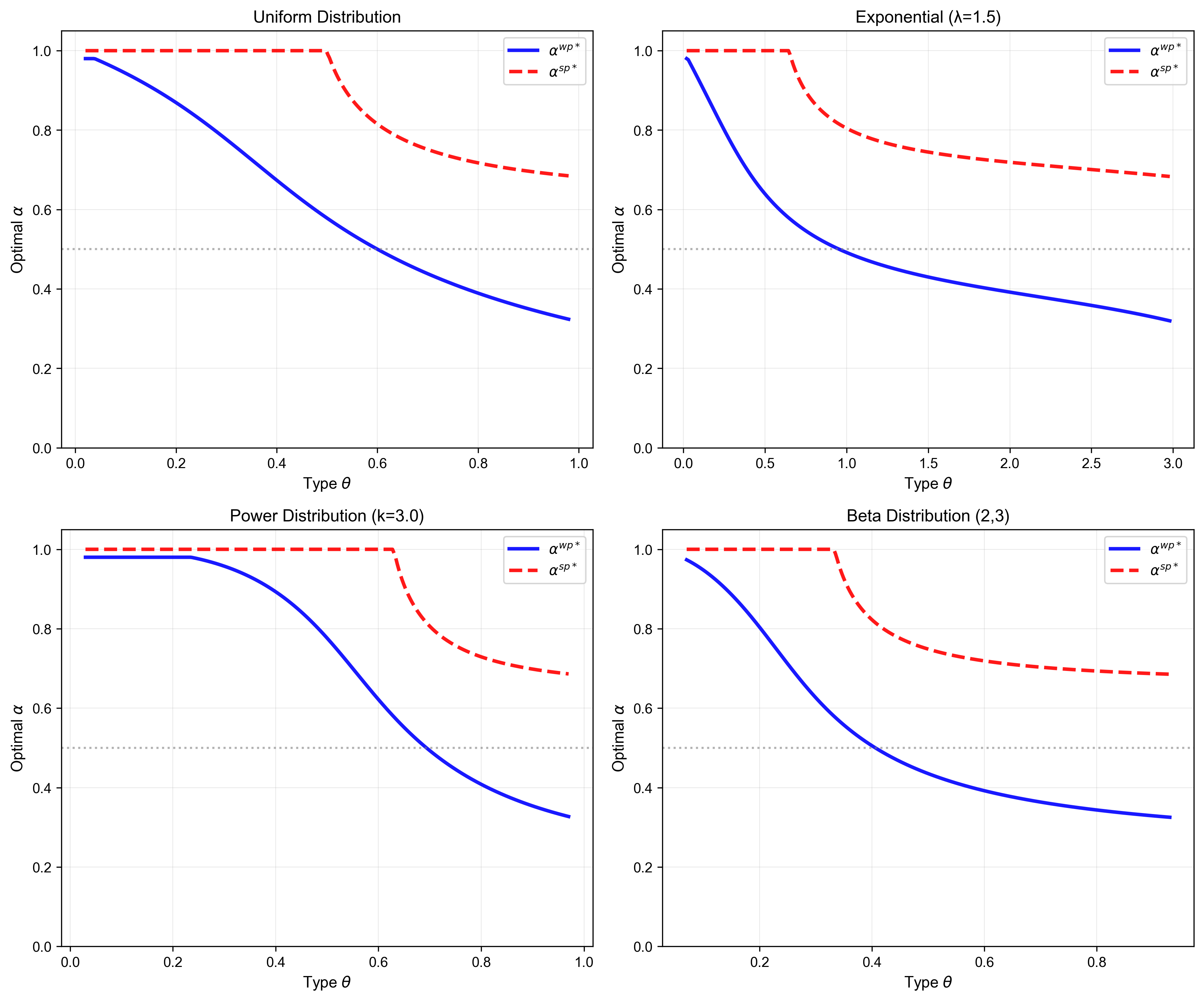}
\caption{Optimal value-sharing comparison across different distributions}
\label{Fig_alphaCom}
\end{figure}

More precisely, we establish that:

\begin{proposition}\label{CompareAlpha}
Fix the same distribution, $F_i(\cdot)$, we have $\alpha_i^{sp^*}(\theta_i) \geq \alpha_i^{wp^*}(\theta_i)$ for all $\theta_i \in [\underline{\theta}_i, \overline{\theta}_i]$.
\end{proposition}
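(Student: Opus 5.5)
The plan is a revealed-preference comparison of the two maximization problems in \eqref{Lmm1-Psi} and \eqref{Lmm1-Psi-1}, which share the same distributional term. Fix $\theta_i$ and write
\begin{equation*}
D(\alpha)=1-(1-\alpha)\alpha, \qquad h=\frac{1-F_i(\theta_i)}{\theta_i f_i(\theta_i)}.
\end{equation*}
Dividing by $\theta_i^2>0$ (the case $\theta_i=0$ will be handled separately), both $\Psi_i^{wp}/\theta_i^2$ and $\Psi_i^{sp}/\theta_i^2$ depend only on $\alpha$ and the scalar $h\ge 0$.

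\textbf{Step 1 (easy case).} By Lemma \ref{lm-sellerlead-alphamax}, $\alpha_i^{sp^*}(\theta_i)\ge \frac12$. So whenever $\alpha_i^{wp^*}(\theta_i)\le\frac12$, the claim follows immediately. Since $\alpha_i^{sp-\max}=1$ on $[\underline\theta_i,\theta_i^c]$, the claim is also trivial there. It therefore remains to treat $\theta_i>\theta_i^c$ with $\alpha_i^{wp^*}(\theta_i)>\frac12$.

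\textbf{Step 2 (difference of objectives).} Compute $g(\alpha)=\bigl(\Psi_i^{sp}-\Psi_i^{wp}\bigr)/\theta_i^2$. In the numerator of the first term, the $\alpha^4$ terms cancel, and $2\alpha D+2\alpha^2(1-\alpha)=2\alpha$. The second-term coefficients combine as $\alpha(1-\alpha)-(1-\alpha)^2=(1-\alpha)(2\alpha-1)$. Hence
\begin{equation*}
g(\alpha)=(2\alpha-1)\left[\frac{1}{2D(\alpha)^2}-\frac{(1-\alpha)\,h}{D(\alpha)}\right].
\end{equation*}
Suppose, for contradiction, that $a_{w}\equiv\alpha_i^{wp^*}(\theta_i)>a_{s}\equiv\alpha_i^{sp^*}(\theta_i)$. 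Optimality of each maximizer, together with uniqueness of $a_s$ (Lemma \ref{lm-sellerlead-alphamax}), gives $\Psi^{wp}(a_w)\ge\Psi^{wp}(a_s)$ and $\Psi^{sp}(a_s)>\Psi^{sp}(a_w)$. Adding the two inequalities yields $g(a_s)>g(a_w)$ with $\frac12\le a_s<a_w$. The goal is therefore to show that $g$ is nondecreasing on $[\frac12,a_w]$, which gives the contradiction.

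\textbf{Step 3 (main obstacle).} The sign of the bracket in $g$ depends on $h$. I would pin $h$ down through the interior first-order condition for $a_w$ (interiority is Lemma \ref{Lm-characterize-alphamax}). Writing $\Psi^{wp}/\theta_i^2=A(\alpha)-B(\alpha)h$, the condition is $h=A'(a_w)/B'(a_w)$.

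Substituting this value of $h$, the bracket becomes a rational function of $a_w$ alone. The aim is to show that the bracket is nonnegative and nonincreasing-in-$\alpha$ effects are dominated, so that $g'(\alpha)\ge0$ for $\alpha\in[\frac12,a_w]$. This reduces to a polynomial inequality in $(\alpha,a_w)$ on $\frac12\le\alpha\le a_w<1$, which I would clear of denominators and check term by term.

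If this direct route proves messy, the fallback is to verify the single inequality $\partial_\alpha\Psi^{sp}(a_w,\theta_i)\ge0$ with $h=A'(a_w)/B'(a_w)$ substituted. Combined with single-peakedness of $\Psi^{sp}$ in $\alpha$, already used to prove uniqueness in Lemma \ref{lm-sellerlead-alphamax}, this gives $a_s\ge a_w$ directly. I expect this polynomial sign verification to be the hard part of the proof.
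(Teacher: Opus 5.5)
Your Steps 1--2 are correct. The formula for $g$ checks out, and adding the two optimality inequalities does reduce the remaining case ($\theta_i>\theta_i^c$, $a_w>\frac12$) to showing $g$ is nondecreasing on $[\frac12,a_w]$, or, in your fallback, to the sign of $\partial_\alpha\Psi_i^{sp}$ at $a_w$. The gap is that the decisive inequality is never proved, and your main route poses it in a form that is false. A direct computation gives
\[
D(\alpha)^3\,g'(\alpha)=3\alpha(1-\alpha)-h\,D(\alpha)\,(2-2\alpha-\alpha^2).
\]
Replace $h$ by $A'(a_w)/B'(a_w)=\frac{a_w^3-3a_w^2+4a_w-1}{D(a_w)(1-a_w^2)}$ and let $a_w$ range over $(\frac12,1)$. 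Then this expression is negative at $\alpha=\frac12$ as soon as $h>\frac43$. For instance, $a_w=0.9$ gives $h\approx5.2$ and $D^3g'(\frac12)=0.75-0.5625\,h<0$; the bracket in $g$ is also negative near $\frac12$. So the polynomial inequality on $\frac12\le\alpha\le a_w<1$ that you plan to check term by term does not hold. What you dropped is the information from Step 1: $\theta_i>\theta_i^c$ means $h<1$, which also confines $a_w$ below roughly $0.578$. Using $h\le1$ directly, no first-order condition is needed. Where $2-2\alpha-\alpha^2\ge0$, you get $D^3g'(\alpha)\ge\alpha^4+\alpha^3-6\alpha^2+7\alpha-2\ge\frac{3}{16}$ on $[\frac12,1]$. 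Elsewhere $g'\ge0$ trivially. So $g$ is increasing there and your contradiction goes through.

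Your fallback is posed correctly and closes quickly. Substitute the winner-pivotal first-order condition into $\partial_\alpha\Psi_i^{sp}(\alpha,\theta_i)/\theta_i^2=\frac{1-\alpha-\alpha^3}{D^3}+\frac{(2\alpha-1)h}{D^2}$ at $\alpha=a_w$; this yields
\[
\frac{a_w^5+(1-a_w)^2(2a_w^2-3a_w+2)}{D(a_w)^3\,(1-a_w^2)}>0 .
\]
Since $\Psi_i^{sp}$ is strictly concave on $[\frac12,1)$ when $h\le1$ (proof of Lemma~\ref{lm-sellerlead-alphamax}), it follows that $a_s>a_w$. The paper argues instead with a separating constant $\alpha_c$, the root of $\alpha^3+\alpha=1$. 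The seller-pivotal first-order condition forces $\alpha_i^{sp-\max}\ge\alpha_c$, because $(2\alpha-1)h\ge0$. The winner-pivotal first-order condition together with $h<1$ forces $\alpha_i^{wp-\max}<\alpha_c$ (Claim~\ref{claimalphawlsl}). Your completed fallback is a pointwise version of the same comparison: once $\frac12<a_w<\alpha_c$, both terms of $\partial_\alpha\Psi_i^{sp}(a_w,\theta_i)$ are already positive. The increasing-differences route, once repaired with $h\le1$, avoids the winner-pivotal first-order condition entirely.
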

\begin{proof}
Please refer to Appendix \ref{Pf-CompareAlpha}.
\end{proof}

Figure \ref{Fig_alphaCom} reveals a pooling region where $\alpha_i^{sp^*}(\theta_i) = 1$ for $\theta_i \in [\underline{\theta}_i, \theta_i^c]$ under seller-pivotal collaboration, while $0 < \alpha_i^{wp^*}(\theta_i) < 1$ for all types under winner-pivotal collaboration, which is consistent with the optimal deterministic value sharing derived in Theorem \ref{ThmCompl1-type2} and \ref{ThmCompl1-typeSL-seller}. This asymmetry reflects the different incentive structures. Under winner-pivotal collaboration, the winner's effort is essential for value creation. Positive value shares must be allocated to all types; otherwise, eventual value would be zero (Fact \ref{leadeffectDEF}). Under seller-pivotal collaboration, the seller's effort alone sustains positive value. For sufficiently low types, the seller extracts the entire value, leaving the winner indifferent between participating and not. Following the standard convention that ties are broken in favor of the principal, we focus on equilibria where bidders participate when indifferent.

Example \ref{eG3} illustrates this asymmetry in a spectrum licensing context. In rural markets, the regulator (seller) is pivotal: the optimal mechanism assigns her a majority share ($\alpha_i^{sp^*} \geq \frac{1}{2}$), and for operators with limited rural capacity ($\theta_i \leq \theta_i^c$), the regulator extracts the entire realized value ($\alpha_i^{sp^*} = 1$). This is feasible because the regulator's own infrastructure investment sustains service viability regardless of operator contribution---such operators add little marginal value and can be left with zero share without destroying it. In urban markets, where the operator's deployment is pivotal, withdrawing the operator's revenue share entirely would collapse service quality to zero, so every winning operator must retain a positive share.

\subsubsection{Pivotal role on effort exertion}

We now examine how the pivotal role affects post-auction effort exertion. Based on the optimal effort choices given by \eqref{Optefforttype2-opt} (winner-pivotal) and \eqref{Optefforttype2-2-2} (seller-pivotal), we establish that:

\begin{proposition}\label{Prop-Effort}
For any distribution $F_i(\cdot)$ and winner's type $\theta_i$, we have:
\begin{enumerate}
\item[(1)] Under winner-pivotal collaboration, the winner exerts more post-auction effort than the seller at the optimum, i.e., $e^{wp-b}_i(\theta_i) > e^{wp-s}_i(\theta_i)$; under seller-pivotal collaboration, the seller exerts more post-auction effort than the winner at the optimum, i.e., $e^{sp-s}_i(\theta_i) > e^{sp-b}_i(\theta_i)$.
\item[(2)] The winner's post-auction effort is higher under winner-pivotal collaboration than under seller-pivotal collaboration, i.e., $e^{wp-b}_i(\theta_i) > e^{sp-b}_i(\theta_i)$; the seller's post-auction effort is higher under seller-pivotal collaboration than under winner-pivotal collaboration, i.e., $e^{sp-s}_i(\theta_i) > e^{wp-s}_i(\theta_i)$.
\end{enumerate}
\end{proposition}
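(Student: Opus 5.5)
The plan is to reduce everything to comparisons of explicit one-variable functions of the sharing rate. We use the closed forms \eqref{Optefforttype2-opt} and \eqref{Optefforttype2-2-2}, together with Lemma~\ref{Lm-characterize-alphamax}, Lemma~\ref{lm-sellerlead-alphamax} and Proposition~\ref{CompareAlpha}. Write $D(x)=1-(1-x)x\in[\tfrac34,1]$, which is strictly positive. Set $a=\alpha_i^{wp^*}(\theta_i)$ and $b=\alpha_i^{sp^*}(\theta_i)$. Then all four efforts are $\theta_i$ times one of four functions:
\begin{itemize}
\item $e^{wp-s}_i=g(a)\theta_i$ and $e^{sp-b}_i=g(b)\theta_i$, where $g(x)=\tfrac{x(1-x)}{D(x)}$;
\item $e^{wp-b}_i=h(a)\theta_i$, where $h(x)=\tfrac{1-x}{D(x)}$;
\item $e^{sp-s}_i=k(b)\theta_i$, where $k(x)=\tfrac{x}{D(x)}$.
\end{itemize}
Every strict inequality below therefore needs $\theta_i>0$. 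At $\theta_i=\underline\theta_i=0$ all efforts vanish, so I would state that case as equality, i.e.\ the claims hold for $\theta_i>0$.

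\textbf{Part (1).} Under winner-pivotal collaboration, $e^{wp-s}_i/e^{wp-b}_i=a$. Since $a\in(0,1)$ by Lemma~\ref{Lm-characterize-alphamax}, this gives $e^{wp-b}_i>e^{wp-s}_i$. Under seller-pivotal collaboration, there are two cases:
\begin{itemize}
\item if $b<1$, then $e^{sp-b}_i/e^{sp-s}_i=1-b<1$;
\item if $b=1$ (the pooling region of Lemma~\ref{lm-sellerlead-alphamax}), then $e^{sp-b}_i=0<\theta_i=e^{sp-s}_i$.
\end{itemize}
Either way $e^{sp-s}_i>e^{sp-b}_i$.

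\textbf{Part (2), the winner's effort.} We need $h(a)>g(b)$. First, $g(x)=y/(1-y)$ with $y=x(1-x)\le\tfrac14$. Hence $g$ is increasing in $y$, so $g\le\tfrac13$, and $g$ is decreasing on $[\tfrac12,1]$. Second, a direct derivative computation gives $\operatorname{sign}h'(x)=x(x-2)<0$, so $h$ is decreasing on $[0,1]$. By Lemma~\ref{lm-sellerlead-alphamax} and Proposition~\ref{CompareAlpha}, $b\ge\max\{a,\tfrac12\}$. We split on $a$:
\begin{itemize}
\item If $a\ge\tfrac12$: because $g$ is decreasing on $[\tfrac12,1]$ and $b\ge a$, we get $g(b)\le g(a)=a\,h(a)<h(a)$.
\item If $a<\tfrac12$: $h(a)>h(\tfrac12)=\tfrac23>\tfrac13\ge g(b)$.
\end{itemize}

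\textbf{Part (2), the seller's effort.} We need $k(b)>g(a)$. Compute $\operatorname{sign}k'(x)=1-x^2\ge0$, so $k$ is nondecreasing. Since $b\ge\tfrac12$, this gives $k(b)\ge k(\tfrac12)=\tfrac23>\tfrac13\ge g(a)$.

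The only genuinely delicate step is the winner-effort comparison in part (2). There the two scenarios use different sharing rates, so the argument needs both $b\ge a$ from Proposition~\ref{CompareAlpha} and the majority-share bound $b\ge\tfrac12$ from Lemma~\ref{lm-sellerlead-alphamax}, combined through the case split on $a$. Everything else is sign-checking of elementary derivatives.
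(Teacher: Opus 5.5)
Your proof is correct. Part (1) is the same ratio argument as the paper's; Part (2) takes a somewhat different route.

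\textbf{How the paper does Part (2).} Writing $a,b$ as you do, the paper handles both cross-scenario comparisons by monotonicity plus the ordering $b\ge a$ from Proposition~\ref{CompareAlpha}. It uses the factorizations $e^{wp-s}_i=(1-a)\,k(a)\,\theta_i$ and $e^{sp-b}_i=b\,h(b)\,\theta_i$. From these it gets $k(b)>k(a)>(1-a)\,k(a)$ and $b\,h(b)<h(b)<h(a)$, with the pooling region $\theta_i\le\theta_i^c$ treated separately.

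\textbf{Seller's effort.} Your argument uses absolute bounds instead: $k(b)\ge k(\tfrac12)=\tfrac23>\tfrac13\ge g(a)$. It therefore needs only the majority-share property $b\ge\tfrac12$ from Lemma~\ref{lm-sellerlead-alphamax}. It does not rely on Proposition~\ref{CompareAlpha} at all, which is a genuine simplification.

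\textbf{Winner's effort.} Here the paper's route is shorter than yours. Your $g$ satisfies $g(x)=x\,h(x)$, so $g(b)=b\,h(b)\le h(b)\le h(a)$ follows from $b\ge a$ and $h$ decreasing alone. Strictness comes from $b<1$, or, when $b=1$, from $g(1)=0<h(a)$. So your closing remark is overstated: this step does not need both $b\ge a$ and $b\ge\tfrac12$ combined through a case split on $a$.

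\textbf{Edge cases.} Your treatment is more careful than the paper's in two places. The paper's ratio $1/(1-\alpha_i^{sp^*})$ in Part (1) is undefined on the pooling region, where $e^{sp-b}_i=0$; you handle that case separately. You also correctly note that all four efforts vanish at $\theta_i=0$, so the strict inequalities hold only for $\theta_i>0$. This matters when $\underline\theta_i=0$, a case the paper's proof implicitly excludes.
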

\begin{proof}
Please refer to Appendix \ref{pf-Prop-Effort}.
\end{proof}

The results can be summarized in the following Figure \ref{Figfig-effort}. Under the optimal linear mechanism, the pivotal party exerts higher effort than the non-pivotal party (row comparison), and each party exerts higher effort when pivotal than when non-pivotal (column comparison). In the figure, each shaded area dominates the corresponding unshaded areas.

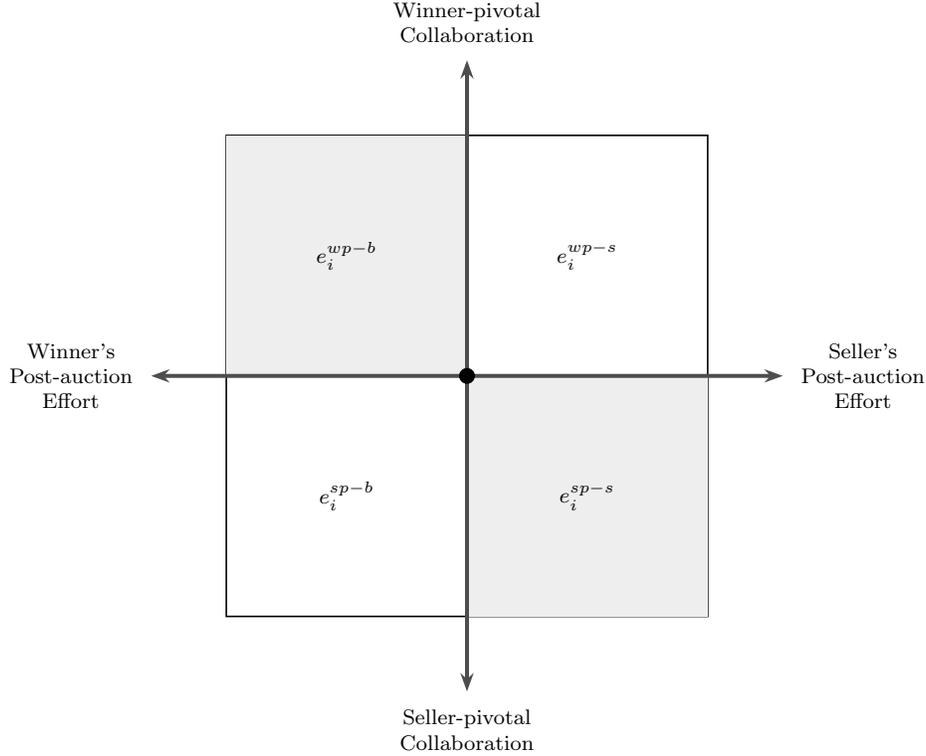
\begin{figure}[htp!]
\centering
\begin{tikzpicture}[font=\scriptsize]
 
\definecolor{shadegray}{RGB}{220,220,220}
 
\def\S{3.2}
\draw[black, line width=0.6pt] (-\S, -\S) rectangle (\S, \S);
 
\fill[shadegray!50] (-\S, 0) rectangle (0, \S);
\fill[shadegray!50] (0, -\S) rectangle (\S, 0);
 
\def\A{4.2}
\draw[-{Stealth[length=7pt,width=5pt]}, line width=1.5pt, black!70]
  (0,0) -- (\A, 0);
\draw[-{Stealth[length=7pt,width=5pt]}, line width=1.5pt, black!70]
  (0,0) -- (-\A, 0);
\draw[-{Stealth[length=7pt,width=5pt]}, line width=1.5pt, black!70]
  (0,0) -- (0, \A);
\draw[-{Stealth[length=7pt,width=5pt]}, line width=1.5pt, black!70]
  (0,0) -- (0, -\A);
 
\fill[black] (0,0) circle (3pt);
 
\node[align=center, above] at (0, \A+0.1) {Winner-pivotal\\Collaboration};
\node[align=center, below] at (0, -\A-0.1) {Seller-pivotal\\Collaboration};
\node[align=center, left]  at (-\A-0.1, 0) {Winner's\\Post-auction\\Effort};
\node[align=center, right] at (\A+0.1, 0)  {Seller's\\Post-auction\\Effort};
 
\node at (-1.6, 1.6) {$\displaystyle e_i^{wp-b}$};
\node at (1.6, 1.6)  {$\displaystyle e_i^{wp-s}$};
\node at (-1.6, -1.6) {$\displaystyle e_i^{sp-b}$};
\node at (1.6, -1.6)  {$\displaystyle e_i^{sp-s}$};
 
\end{tikzpicture}
\caption{Effort combinations under different collaboration structures}
\label{Figfig-effort}
\end{figure}

\subsubsection{Pivotal role on seller's revenue}

The previous two subsections establish that the seller secures a larger share of the eventual value under seller-pivotal collaboration (Proposition \ref{CompareAlpha}), while bearing higher post-auction effort costs. These opposing effects raise a natural question: does the pivotal party's role in aftermarket collaboration affect the seller's optimal revenue? We establish that:

\begin{proposition}\label{ppproprev}
For any distribution $F_i(\cdot)$, seller-pivotal collaboration yields higher revenue than winner-pivotal collaboration, that is:
\begin{equation*}
\text{Rev}^{sp}(\bm{q}^{sp^*}, \bm{c}^{sp^*}, \bm{\pi}^{sp^*}) > \text{Rev}^{wp}(\bm{q}^{wp^*}, \bm{c}^{wp^*}, \bm{\pi}^{wp^*}).
\end{equation*}
\end{proposition}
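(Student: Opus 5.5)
By Theorems \ref{ThmCompl1-type2} and \ref{ThmCompl1-typeSL-seller}, each optimal mechanism attains the upper bound of Lemma \ref{uppbdd} and Lemma \ref{S-lead-LM}, respectively. So the two optimal revenues are
\begin{equation*}
\int_{\bm{\Theta}} \max_{i}\phi_i^{wp}(\theta_i)\,\bm{f}(\bm{\theta})d\bm{\theta}
\quad\text{and}\quad
\int_{\bm{\Theta}} \max_{i}\phi_i^{sp}(\theta_i)\,\bm{f}(\bm{\theta})d\bm{\theta}.
\end{equation*}
It therefore suffices to prove the pointwise inequality $\phi_i^{sp}(\theta_i)\ge \phi_i^{wp}(\theta_i)$ for every $i$ and every $\theta_i$, with strict inequality on a set of positive measure. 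Since the max operator is monotone, this gives the ranking of the integrals. Strictness of the integrals then follows because all densities are positive and, for every profile, the maximal index $i$ has $\phi_i^{sp}\ge\phi_i^{sp}\!\ge\phi_i^{wp}$, with strict inequality wherever it holds strictly for that bidder.

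To compare the virtual surpluses, I will use the fact that $\phi^{sp}_i(\theta_i)=\max_\alpha\Psi^{sp}_i(\alpha,\theta_i)\ge \Psi^{sp}_i(\beta,\theta_i)$ for any $\beta$. So it is enough to find, for each $\alpha$, some $\beta$ with $\Psi^{sp}_i(\beta,\theta_i)\ge\Psi^{wp}_i(\alpha,\theta_i)$, and then apply this at $\alpha=\alpha_i^{wp-\max}(\theta_i)$. The natural candidate is the mirror choice $\beta=1-\alpha$, because $D(\alpha)\equiv 1-(1-\alpha)\alpha$ is symmetric under $\alpha\mapsto1-\alpha$.

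Writing $h=\frac{1-F_i}{f_i}$ and $D=D(\alpha)$, the difference is
\begin{equation*}
\Psi^{sp}_i(1-\alpha,\theta)-\Psi^{wp}_i(\alpha,\theta)
=\frac{\theta^2}{2D^2}\Big[2(1-\alpha)D-(1-\alpha)^4-1+2\alpha^2(1-\alpha)+\alpha^4\Big]+\frac{\theta h}{D}\big[(1-\alpha)^2-\alpha(1-\alpha)\big].
\end{equation*}
I will simplify the polynomial bracket and then check its sign together with the rent term, which equals $(1-\alpha)(1-2\alpha)\theta h/D$. The sign of the rent term depends on whether $\alpha\le \tfrac12$. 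The figure and the discussion after Proposition \ref{CompareAlpha} suggest that $\alpha^{wp-\max}$ can exceed $\tfrac12$ for low types, so the mirror map may fail there.

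This is the main obstacle. If $\beta=1-\alpha$ does not dominate for all $\alpha$, my fallback is to use a two-candidate bound: $\phi^{sp}\ge\max\{\Psi^{sp}(1-\alpha),\Psi^{sp}(\alpha)\}$, and also $\Psi^{sp}(1,\theta)=\theta^2/2$. I would try to show that $\Psi^{wp}(\alpha,\theta)\le\max\{\theta^2/2,\ \Psi^{sp}(1-\alpha,\theta),\ \Psi^{sp}(\alpha,\theta)\}$ by splitting into the cases $\alpha\le\tfrac12$ and $\alpha>\tfrac12$. For $\alpha>\tfrac12$, I would compare $\Psi^{sp}(\alpha,\theta)-\Psi^{wp}(\alpha,\theta)$ directly. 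Its rent part is $\frac{(1-\alpha)(1-2\alpha)}{D}\theta h>0$ when $\alpha>\tfrac12$, so only the $\theta^2$ coefficient difference $\frac{2\alpha D-1+2\alpha^2(1-\alpha)}{2D^2}$ needs to be signed. At $\alpha=1$ this equals $\tfrac12>0$, and I expect it to be positive on $(\tfrac12,1]$ after factoring.

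Finally, strictness comes from Lemma \ref{Lm-characterize-alphamax}: $\alpha^{wp-\max}\in(0,1)$ is interior and unique. Equality of the virtual surpluses would require the relevant inequality to bind, which I expect only at isolated points. Hence the inequality is strict almost everywhere, and the revenue comparison is strict.
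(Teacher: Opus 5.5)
\textbf{Gap in the case $\hat\alpha>\tfrac12$.} Your structure matches the paper's: you reduce to pointwise dominance of virtual surpluses, use the mirror share $1-\hat\alpha$ when $\hat\alpha=\alpha_i^{wp-\max}(\theta_i)\le\tfrac12$, and use the same share when $\hat\alpha>\tfrac12$. Your mirror computation is correct: the $\theta^2$ terms cancel because $A^{wp}(\alpha)=A^{sp}(1-\alpha)$, which leaves $(1-\alpha)(1-2\alpha)\theta h/D$.

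The problem is the sign claim in the $\hat\alpha>\tfrac12$ case. With $B^{sp}(\alpha)=\alpha(1-\alpha)/D$ and $B^{wp}(\alpha)=(1-\alpha)^2/D$, the rent part of $\Psi^{sp}_i(\alpha,\theta)-\Psi^{wp}_i(\alpha,\theta)$ is the same expression $(1-\alpha)(1-2\alpha)\theta h/D$. This is \emph{negative} for $\alpha\in(\tfrac12,1)$, not positive: at a common share above one half, the seller-pivotal mechanism pays more information rent. Signing the $\theta^2$-coefficient difference, which equals $\frac{2\alpha-1}{2D^2}$, is therefore not enough. The difference factors as
\[
\Psi^{sp}_i(\alpha,\theta)-\Psi^{wp}_i(\alpha,\theta)=\frac{(2\alpha-1)\theta}{2D^2}\bigl(\theta-2D(1-\alpha)h\bigr),
\]
which is negative whenever $h/\theta>1/(2D(1-\alpha))$. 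So the same-share comparison cannot hold for arbitrary $\alpha>\tfrac12$ and all types. It must use the fact that $\hat\alpha$ is the winner-pivotal maximizer, or be supplemented by another candidate.

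\textbf{Two ways to close it.} The paper's route is to substitute the stationarity condition \eqref{CL-decrease-OptAlpha-00},
\[
\frac{h}{\theta}=\frac{\hat\alpha^3-3\hat\alpha^2+4\hat\alpha-1}{D(\hat\alpha)(1-\hat\alpha^2)},
\]
into $\theta>2D(1-\hat\alpha)h$. This reduces to $2\hat\alpha^3-6\hat\alpha^2+7\hat\alpha-3<0$. That holds on $(\tfrac12,1)$ because the cubic has derivative $6(\hat\alpha-1)^2+1>0$ and vanishes at $1$. (The paper's Case 3 writes $\Delta A=\frac{2\alpha-1}{D}$ and checks the weaker $\theta>(1-\hat\alpha)h$. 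The correct factor is $\frac{2\alpha-1}{2D^2}$, but the stronger condition above still holds.)

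Alternatively, your own fallback candidate $\Psi^{sp}_i(1,\theta)=\theta^2/2$ closes the gap without the first-order condition. In the region $h/\theta>1/(2D(1-\alpha))$,
\[
\Psi^{wp}_i(\alpha,\theta)<\theta^2\Bigl(A^{wp}(\alpha)-\frac{1-\alpha}{2D^2}\Bigr)\le\frac{\theta^2}{2}.
\]
The last inequality reduces to $-2\alpha^4+4\alpha^3-5\alpha^2+3\alpha-1\le 0$, whose maximum is $-\tfrac38$, attained at $\alpha=\tfrac12$. Either way, this step has to be proved rather than assumed.

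\textbf{Strictness.} The relevant facts are:
\begin{itemize}
\item in the mirror case, $\theta h>0$ for $0<\theta<\overline\theta_i$;
\item when $\hat\alpha>\tfrac12$, $2\hat\alpha-1>0$ together with the strict inequality above;
\item when $\hat\alpha=\tfrac12$, $\alpha_i^{sp-\max}\neq\tfrac12$.
\end{itemize}
Uniqueness of $\alpha_i^{wp-\max}$ is not the ingredient you need. Equality does occur at $\theta_i=\overline\theta_i$: there $h=0$ and $\alpha_i^{sp-\max}=1-\alpha_i^{wp-\max}$. So aiming for strictness on a set of positive measure, as you do, is the right target.
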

\begin{proof}
Please refer to Appendix \ref{ppproprev-pf}.
\end{proof}

Proposition \ref{ppproprev} shows that seller-pivotal collaboration yields the seller a higher expected payoff. This follows because each bidder's virtual surplus under seller-pivotal collaboration pointwise dominates that under winner-pivotal collaboration. The virtual surplus decomposes as $\phi_i(\theta_i) = A(\alpha) \cdot \theta_i^2 - B(\alpha) \cdot \frac{1-F_i(\theta_i)}{f_i(\theta_i)}\cdot \theta_i$, where the first term represents the social surplus and the second captures information rents. The parameters $A(\alpha)$ and $B(\alpha)$ differ across different pivotal role scenarios. The pointwise dominance of $\phi_i^{sp}(\cdot)$ arises from a higher social surplus and lower information rents under seller-pivotal collaboration.

Example \ref{eG3} illustrates the revenue channel. In rural spectrum auctions (seller-pivotal), the regulator's infrastructure investment generates substantial value even for low-capability operators, compressing the value advantage that high-type operators would otherwise command. With less to gain from misrepresenting type, operators leave smaller information rents, and the regulator captures greater expected auction revenue. In urban auctions (winner-pivotal), value creation depends more directly on operator type, amplifying information rents and reducing the regulator's expected revenue---despite urban spectrum typically commanding higher per-unit commercial value.



\section{Implementation \label{Secimp}}

We implement the optimal direct linear mechanism under both scenarios through ascending auctions.

\paragraph{Winner-pivotal collaboration.} For winner-pivotal collaboration, we employ an ascending auction with payments $\widetilde{\bm{c}}^{wp} = (\widetilde{\alpha}_i^{wp}, \widetilde{t}_i^{wp-w}; \widetilde{t}_i^{wp-l})$ for the winner and losing bidders, respectively, to implement the optimal direct linear mechanism in Theorem \ref{ThmCompl1-type2}.

The ascending clock displays virtual surplus values (defined in \eqref{phiphiphi-type2}), increasing continuously from $\underline{\phi}^{wp}$ until the last active bidder exits. For simplicity, we assume $\phi_i^{wp}(\underline{\theta}_i) \triangleq \underline{\phi}^{wp}$ for all $i \in \mathcal{N}$. Each bidder can freely drop out by releasing the button; quitting the auction is irrevocable. The last remaining active bidder wins and can continue to increase his bid to exit the auction at a higher price $p_n$, even after the penultimate bidder has exited at $p_{n-1}$.

The payment scheme $\widetilde{\bm{c}}^{wp}$ determines winning bidder $i$'s value shares endogenously from the final price $p_n$:
\begin{equation}\label{oryoryroy}
\widetilde{\alpha}_i^{wp}(p_n) \triangleq \alpha_i^{wp-\max}(\mathcal{G}(p_n)),
\end{equation}
where $\alpha_i^{wp-\max}(\cdot)$ is defined in \eqref{alphamax111-type2}, and
\begin{equation}\label{mathcalG}
\mathcal{G}(\cdot) \triangleq (\phi_i^{wp})^{-1}(\cdot)
\end{equation}
denotes the inverse of the bidder's virtual surplus function in \eqref{phiphiphi-type2}. Winning bidder $i$'s cash payment depends on $(p_{n-1}, p_n)$:
\begin{equation}\label{tttttt}
\begin{split}
\widetilde{t}_i^{wp-w}(p_n, p_{n-1}) = & \frac{\left(1-\widetilde{\alpha}_i^{wp}(p_n)\right)^2}{2}\cdot \left(\mathcal{G}(p_n) + \frac{(1-\widetilde{\alpha}_i^{wp}(p_n)) \cdot \widetilde{\alpha}_i^{wp}(p_n)\cdot \mathcal{G}(p_n)}{1 - (1-\widetilde{\alpha}_i^{wp}(p_n)) \cdot \widetilde{\alpha}_i^{wp}(p_n)}\right)^2\\
& - \int_{\mathcal{G}(p_{n-1})}^{\mathcal{G}(p_n)} \left(1- \alpha_i^{wp-\max}(\tau)\right)^2 \cdot \left(\tau + \frac{(1-\alpha_i^{wp-\max}(\tau)) \cdot \alpha_i^{wp-\max}(\tau)\cdot \tau}{1 - (1-\alpha_i^{wp-\max}(\tau)) \cdot \alpha_i^{wp-\max}(\tau)}\right) d\tau,
\end{split}
\end{equation}
while losing bidders pay nothing, i.e., $\widetilde{t}_i^{wp-l} = 0$.

Each bidder chooses his exit price based on the number $k$ of exited bidders and their exit prices $p_1 \leq \cdots \leq p_k$. Assuming each bidder follows strategy $\widetilde{\beta}_i^{wp}(\theta_i | p_1, \dots, p_k)$, we characterize the following equilibrium.

\begin{proposition}\label{ppproooo222}
The ascending auction with payments $\widetilde{\bm{c}}^{wp}$ admits a quasi-dominant strategy equilibrium, denoted $(\widetilde{\bm{\beta}}^{wp}, \widetilde{\mu}^{wp}, \widetilde{\bm{e}}^{wp})$, where:
\begin{enumerate}
\item[(i)] Each bidder's drop-out strategy is given by:
\begin{equation}\label{dfdffd11111}
\widetilde{\beta}_i^{wp}(\theta_i|p_1, \cdots, p_k) = \phi_i^{wp}(\theta_i),
\end{equation}
for each $i\in\mathcal{N}$.

\item[(ii)] The seller holds a Dirac-form posterior belief over bidder $i$'s type upon his winning:
\begin{equation}\label{xcxc}
\widetilde{\mu}^{wp}(\cdot|i,p_n,p_{n-1}) = \delta_{\left\{\mathcal{G}(p_n)\right\}},
\end{equation}
for $p_{n-1} \leq p_n \leq \phi_i^{wp}(\overline{\theta}_i)$, and $\widetilde{\mu}^{wp}(\cdot | i, p_n, p_{n-1}) = \overline{\theta}_i$ for $p_n > \phi_i^{wp}(\overline{\theta}_i)$. In case of a tie, the seller holds a Dirac-form belief over the randomly selected winner's type per \eqref{xcxc}.

\item[(iii)] The post-auction effort choices of the seller and winning bidder $i$ are given by:
\begin{equation}\label{opopop1111}
\widetilde{\bm{e}}^{wp}(p_n,p_{n-1}) =
\begin{cases}
\widetilde{e}_i^{wp-s}(p_n,p_{n-1}) = \frac{(1-\widetilde{\alpha}_i^{wp}(p_n)) \cdot \widetilde{\alpha}_i^{wp}(p_n) \cdot \mathcal{G}(p_n)}{1 - (1-\widetilde{\alpha}_i^{wp}(p_n)) \cdot \widetilde{\alpha}_i^{wp}(p_n)},\\
\widetilde{e}_i^{wp-b}(p_n,p_{n-1}) = (1-\widetilde{\alpha}_i^{wp}(p_n))\cdot \left(\theta_i + \frac{(1-\widetilde{\alpha}_i^{wp}(p_n)) \cdot \widetilde{\alpha}_i^{wp}(p_n)\cdot \mathcal{G}(p_n)}{1 - (1-\widetilde{\alpha}_i^{wp}(p_n)) \cdot \widetilde{\alpha}_i^{wp}(p_n)}\right).
\end{cases}
\end{equation}
\end{enumerate}
This ascending auction implements the optimal direct linear mechanism under winner-pivotal collaboration.
\end{proposition}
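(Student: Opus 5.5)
The plan is to verify the three components in reverse order (beliefs and efforts, then bidding), and then show that the induced outcome coincides with the direct mechanism of Theorem \ref{ThmCompl1-type2}.

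\textbf{Beliefs and efforts.} I first fix the candidate bidding profile \eqref{dfdffd11111}. Lemma \ref{increasing-of-surplus} makes $\phi_i^{wp}$ strictly increasing, so $\mathcal{G}$ in \eqref{mathcalG} is well defined. On the equilibrium path the winner's final exit price is $p_n=\phi_i^{wp}(\theta_i)$, so Bayes' rule gives the Dirac belief \eqref{xcxc}. Prices above $\phi_i^{wp}(\overline{\theta}_i)$ are off path, and the stated belief there is an admissible choice. Given a Dirac belief at $\mathcal{G}(p_n)$ and the share $\widetilde{\alpha}_i^{wp}(p_n)$, I substitute $\mathbb{E}[\widetilde{\theta}_i|\mu]=\mathcal{G}(p_n)$ into the aftermarket best responses \eqref{Optefforttype2}. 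This yields \eqref{opopop1111}. Note that the winner's effort uses his true $\theta_i$, while the seller's effort uses the inferred type.

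\textbf{Optimality of the drop-out strategy.} Consider bidder $i$ of true type $\theta_i$, and let $p^{*}=\max_{j\neq i}p_j$ be the highest exit price among the others, whatever strategies produced it. Suppose $i$ plans to drop out at $\phi_i^{wp}(\hat\theta)$. If $\phi_i^{wp}(\hat\theta)<p^*$, he loses and gets $0$. Otherwise he wins and his price is $p_n=\phi_i^{wp}(\hat\theta)$. His payoff is then
\[
u(\hat\theta,\theta_i)-\widetilde{t}_i^{wp-w}\bigl(\phi_i^{wp}(\hat\theta),p^*\bigr),
\]
where $u(\hat\theta,\theta_i)$ is the value-sharing payoff in the integrand of \eqref{Uireporting}, evaluated at $\alpha=\alpha_i^{wp-\max}(\hat\theta)$ and posterior mean $\hat\theta$.

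I then write $w(\tau)=(1-\alpha(\tau))^2\bigl(\tau+x(\tau)\bigr)$, with $\alpha(\tau)=\alpha_i^{wp-\max}(\tau)$ and $x(\tau)=\frac{(1-\alpha)\alpha\tau}{1-(1-\alpha)\alpha}$. By the envelope argument already used for Lemma \ref{uppbdd}, $\partial u/\partial\theta_i=(1-\alpha)^2(\theta_i+x)$, which equals $w(\theta_i)$ at truth. The payment \eqref{tttttt} equals $u(\hat\theta,\hat\theta)-\int_{\mathcal{G}(p^*)}^{\hat\theta}w$. Hence the winning payoff is
\[
u(\hat\theta,\theta_i)-u(\hat\theta,\hat\theta)+\int_{\mathcal{G}(p^*)}^{\hat\theta}w(\tau)\,d\tau .
\]
Since $u(\hat\theta,\cdot)$ is quadratic in its second argument with coefficient $\tfrac{1-(1-\alpha)^2}{2}\ge 0$, i.e. convex, it follows that
\[
u(\hat\theta,\theta_i)-u(\hat\theta,\hat\theta)\le (1-\alpha(\hat\theta))^2\bigl(\hat\theta+x(\hat\theta)\bigr)(\theta_i-\hat\theta)+\tfrac{1-(1-\alpha)^2}{2}(\theta_i-\hat\theta)^2 .
\]

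To handle the quadratic term, I would rather compare directly: the payoff from choosing $\hat\theta$ equals the truthful payoff minus $\int_{\hat\theta}^{\theta_i}\bigl[w(\tau)-\partial_{\theta}u(\hat\theta,\tau)\bigr]d\tau$. This difference is nonnegative provided $\partial_\theta u(\hat\theta,\tau)$ is monotone in $\hat\theta$ in the single-crossing direction. That condition follows from $\alpha_i^{wp-\max}$ decreasing (Lemma \ref{Lm-characterize-alphamax}) together with $x(\tau)$ increasing; the latter I would check from the sign of the derivative of $(1-\alpha)^2\bigl(1+\tfrac{(1-\alpha)\alpha}{1-(1-\alpha)\alpha}\bigr)$, which is the same monotonicity already used in the feasibility part of Theorem \ref{ThmCompl1-type2}.

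It remains to compare winning against losing. At truth the winning payoff is $u(\theta_i,\theta_i)-\widetilde t=\int_{\mathcal{G}(p^*)}^{\theta_i}w\ge 0$ exactly when $\phi_i^{wp}(\theta_i)\ge p^*$. So dropping out at $\phi_i^{wp}(\theta_i)$ is optimal for every $p^*$, independently of the others' strategies; this is the quasi-dominance.

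\textbf{Equivalence and main obstacle.} Under this profile the winner is the bidder with the highest $\phi_i^{wp}$, matching \eqref{tm11-11}. The share is $\alpha_i^{wp-\max}(\theta_i)$, and the signal reveals $\theta_i$. Taking expectations of \eqref{tttttt} over $p^*$ reproduces \eqref{OptmechType2-cash} with $t^l=0$, since $q_i^{wp^*}(\tau,\bm\theta_{-i})=1$ exactly for $\tau\ge\mathcal{G}(p^*)$. Theorem \ref{ThmCompl1-type2} then delivers optimality. I expect the main obstacle to be the single-crossing step. A deviation by $i$ moves the share, the seller's belief, and therefore the seller's effort simultaneously, so the cross-partial must be signed using the monotonicity of $\alpha_i^{wp-\max}$; if the direct sign check fails, I would instead invoke the IC verification from the proof of Theorem \ref{ThmCompl1-type2} pointwise in $p^*$.
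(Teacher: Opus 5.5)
Your overall route is the paper's. You fix beliefs and efforts, write the winner's ex-post payoff via the envelope formula $\int_{\mathcal{G}(p^*)}^{\theta_i} w$, and compare against any exit price $\phi_i^{wp}(\hat\theta)$, which is \eqref{GIC-2-type2} with $q_i^{wp^*}=1$. Your explicit win-versus-lose comparison is in fact more complete than the appendix, which only treats deviations that still win. You also correctly reduce incentive compatibility to single crossing of $\partial_\theta u(\hat\theta,\tau)$ in $\hat\theta$.

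The justification you give for that step is wrong. You say it follows from $\alpha_i^{wp-\max}$ decreasing together with $x(\tau)$ increasing. But $x(\tau)=e^{wp-s}_i(\tau)$ need not be increasing:
\[
\frac{d\ln x}{d\tau}=\frac{1}{\tau}+(1-2\alpha)\Bigl[\frac{1}{\alpha(1-\alpha)}+\frac{1}{1-\alpha+\alpha^2}\Bigr]\frac{d\alpha}{d\tau}.
\]
Wherever $\alpha<\tfrac12$ (high types), the second term is negative. It can dominate $1/\tau$ when the hazard rate rises steeply, e.g.\ for a sharply concentrated truncated normal, which satisfies all the paper's assumptions. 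The expression you propose to differentiate, $(1-\alpha)^2\bigl(1+\tfrac{(1-\alpha)\alpha}{1-(1-\alpha)\alpha}\bigr)$, equals $w(\tau)/\tau$. Its monotonicity only shows that $w$ increases along the diagonal, which is not single crossing. It is also not the monotonicity used in the feasibility proof of Theorem \ref{ThmCompl1-type2}.

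What you need is the decomposition
\[
w(\tau)-\partial_\theta u(\hat\theta,\tau)=\bigl[(1-\alpha(\tau))^2-(1-\alpha(\hat\theta))^2\bigr]\tau+\bigl[(1-\alpha(\tau))^2x(\tau)-(1-\alpha(\hat\theta))^2x(\hat\theta)\bigr].
\]
The first bracket has the sign of $\tau-\hat\theta$ by Lemma \ref{Lm-characterize-alphamax}. The second has the same sign by Claim \ref{CL-derivative}, which is monotonicity of $(1-\alpha)^2x$, not of $x$. This is exactly the ingredient in the paper's proof, i.e.\ your stated fallback. With it, your argument goes through for both $\hat\theta<\theta_i$ and $\hat\theta>\theta_i$, including deviations that flip winning and losing.

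Separately, the abandoned convexity step has the wrong coefficient. Since $u(\hat\theta,\theta)=\tfrac12(1-\alpha(\hat\theta))^2(\theta+x(\hat\theta))^2$, the quadratic coefficient is $\tfrac{(1-\alpha)^2}{2}$, and the expansion is an equality rather than an inequality.
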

\begin{proof}
Please refer to Appendix \ref{winnerlead-IM}.
\end{proof}

We call this equilibrium ``quasi-dominant strategy'' because, given payment rule $\widetilde{\bm{c}}^{wp}$ and the seller's rational posterior beliefs, each bidder's payoff is maximized by the drop-out strategy in \eqref{dfdffd11111}, regardless of other bidders' strategies. The winner's fixed payment \eqref{tttttt} increases with his type, while the value-sharing payment \eqref{oryoryroy} decreases with his type.

\paragraph{Seller-pivotal collaboration.} For seller-pivotal collaboration, zero value-shares for low-type winners (Section \ref{FULLsec}) allow the seller to extract the entire surplus through full value-sharing below a threshold. Consequently, perfect implementation of the optimal mechanism cannot be achieved.

We employ an ascending auction with payments $\widetilde{\bm{c}}^{sp} = (\widetilde{\alpha}_i^{sp}, \widetilde{t}_i^{sp-w}; \widetilde{t}_i^{sp-l})$ for winning and losing bidders, respectively. The auction rules mirror those for winner-pivotal collaboration, except for the winner's value shares and cash payments. Conditional on the final winner's and penultimate bidder's drop-out prices $(p_n, p_{n-1})$, we define:
\begin{equation*}
\widetilde{\alpha}_i^{sp}(p_n) \triangleq \alpha_i^{sp-\max}(\mathcal{H}(p_n)) - \epsilon,
\end{equation*}
for sufficiently small $\epsilon > 0$, where $\alpha_i^{sp-\max}(\cdot)$ is defined in \eqref{alphastarSL}. Analogous to \eqref{mathcalG}, we denote the inverse function of $\phi_i^{sp}(\cdot)$ (defined in \eqref{phiSL}) as:
\begin{equation*}
\mathcal{H}(\cdot) \triangleq (\phi_i^{sp})^{-1}(\cdot).
\end{equation*}

The cash payments for winning and losing bidders are given by:
\begin{equation*}
\begin{split}
\widetilde{t}_i^{sp-w}(p_n, p_{n-1}) = & \frac{(1-\widetilde{\alpha}_i^{sp}(p_n)) \cdot \widetilde{\alpha}_i^{sp}(p_n)\cdot \mathcal{H}(p_n)}{1 - (1-\widetilde{\alpha}_i^{sp}(p_n)) \cdot \widetilde{\alpha}_i^{sp}(p_n)} \cdot \left(\mathcal{H}(p_n) + \frac{1}{2}\cdot\frac{(1-\widetilde{\alpha}_i^{sp}(p_n)) \cdot \widetilde{\alpha}_i^{sp}(p_n)\cdot \mathcal{H}(p_n)}{1 - (1-\widetilde{\alpha}_i^{sp}(p_n)) \cdot \widetilde{\alpha}_i^{sp}(p_n)}\right)\\
& - \int_{\mathcal{H}(p_{n-1})}^{\mathcal{H}(p_n)} \frac{(1-\alpha_i^{sp-\max}(\tau) + \epsilon) \cdot (\alpha_i^{sp-\max}(\tau) - \epsilon)}{1 - (1-\alpha_i^{sp-\max}(\tau) + \epsilon) \cdot (\alpha_i^{sp-\max}(\tau) - \epsilon)}\cdot \tau d\tau,
\end{split}
\end{equation*}
while losing bidders pay nothing, i.e., $\widetilde{t}_i^{sp-l} = 0$. This ascending auction approximately implements the optimal mechanism under seller-pivotal collaboration.

\begin{proposition}\label{sllead-IM}
The ascending auction with payments $\widetilde{\bm{c}}^{sp}$ admits a quasi-dominant strategy equilibrium, denoted $(\widetilde{\bm{\beta}}^{sp}, \widetilde{\mu}^{sp}, \widetilde{\bm{e}}^{sp})$, where:
\begin{enumerate}
\item[(i)] Each bidder's drop-out strategy is given by:
\begin{equation*}
\widetilde{\beta}_i^{sp}(\theta_i|p_1, \cdots, p_k) = \phi_i^{sp}(\theta_i),
\end{equation*}
for each $i\in\mathcal{N}$.

\item[(ii)] The seller holds a Dirac-form posterior belief over bidder $i$'s type upon his winning:
\begin{equation}\label{xcxc1}
\widetilde{\mu}^{sp}(\cdot|i,p_n,p_{n-1}) = \delta_{\left\{\mathcal{H}(p_n)\right\}},
\end{equation}
for $p_{n-1} \leq p_n \leq \phi_i^{sp}(\overline{\theta}_i)$, and $\widetilde{\mu}^{sp}(\cdot | i, p_n, p_{n-1}) = \overline{\theta}_i$ for $p_n > \phi_i^{sp}(\overline{\theta}_i)$. In case of a tie, the seller holds a Dirac-form belief over the randomly selected winner's type per \eqref{xcxc1}.

\item[(iii)] The post-auction effort choices of the seller and winning bidder $i$ are given by:
\begin{equation*}
\widetilde{\bm{e}}^{sp}(p_n,p_{n-1}) =
\begin{cases}
\widetilde{e}_i^{sp-s}(p_n,p_{n-1}) = \frac{\widetilde{\alpha}_i^{sp}(p_n) \cdot \mathcal{H}(p_n)}{1 - (1-\widetilde{\alpha}_i^{sp}(p_n)) \cdot \widetilde{\alpha}_i^{sp}(p_n)},\\
\widetilde{e}_i^{sp-b}(p_n,p_{n-1}) = \frac{(1-\widetilde{\alpha}_i^{sp}(p_n)) \cdot \widetilde{\alpha}_i^{sp}(p_n)\cdot \mathcal{H}(p_n)}{1 - (1-\widetilde{\alpha}_i^{sp}(p_n)) \cdot \widetilde{\alpha}_i^{sp}(p_n)}.
\end{cases}
\end{equation*}
\end{enumerate}
This ascending auction approximately implements the optimal direct linear mechanism under seller-pivotal collaboration.
\end{proposition}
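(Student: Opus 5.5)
Proving Proposition \ref{sllead-IM} amounts to checking that the proposed profile is a perfect Bayesian equilibrium. We proceed backward, mirroring the argument for Proposition \ref{ppproooo222}.

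\textbf{Aftermarket stage.} Given the Dirac belief $\widetilde{\mu}^{sp}=\delta_{\{\mathcal{H}(p_n)\}}$ and the share $\widetilde{\alpha}_i^{sp}(p_n)\in(0,1)$, we substitute $\mathbb{E}[\widetilde{\theta}_i|\mu]=\mathcal{H}(p_n)$ into \eqref{Optefforttype2-2}. This immediately gives the efforts in (iii). Because the winner's optimal effort in the seller-pivotal case does not depend on his true type, the efforts depend only on $p_n$.

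\textbf{Beliefs.} Lemma \ref{increasing-of-surplus-SP} shows that $\phi_i^{sp}$ is strictly increasing, so $\mathcal{H}$ is well defined. Under the candidate strategies $\widetilde{\beta}_i^{sp}=\phi_i^{sp}(\theta_i)$, the winner's exit price reveals his type. Bayes' rule therefore yields (ii) on path. Off path ($p_n>\phi_i^{sp}(\overline{\theta}_i)$) we impose $\overline{\theta}_i$, which makes overbidding beyond the range unattractive.

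\textbf{Bidding stage.} Fix any behavior of the opponents and let $p_{n-1}$ be the highest competing exit price. Let a type-$\theta_i$ bidder who wins choose a final exit $p_n=\phi_i^{sp}(\hat\theta)$. Using the efforts above, his payoff is
\[
u(\hat\theta,\theta_i)=(1-\tilde\alpha)\,(\theta_i+\tilde e^{b})\,\tilde e^{s}-\tfrac12(\tilde e^{b})^2-\widetilde{t}_i^{sp-w}(p_n,p_{n-1}),
\]
with $\tilde\alpha=\alpha_i^{sp-\max}(\hat\theta)-\epsilon$. This payoff is linear in $\theta_i$ with slope $g(\hat\theta)=(1-\tilde\alpha)\tilde e^{s}=\frac{(1-\tilde\alpha)\tilde\alpha\hat\theta}{1-(1-\tilde\alpha)\tilde\alpha}$. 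The first term of $\widetilde{t}_i^{sp-w}$ equals exactly the $\theta_i$-independent part of this payoff evaluated at $\theta_i=\hat\theta$. Hence
\[
u(\hat\theta,\theta_i)=g(\hat\theta)(\theta_i-\hat\theta)+\int_{\mathcal{H}(p_{n-1})}^{\hat\theta} g(\tau)\,d\tau .
\]
This is the standard envelope/integral-monotonicity form, so truthful $\hat\theta=\theta_i$ is optimal provided $g$ is nondecreasing in $\hat\theta$. Winning (staying in past $p_{n-1}$) is profitable exactly when $\theta_i\ge \mathcal{H}(p_{n-1})$, i.e.\ when $\phi_i^{sp}(\theta_i)\ge p_{n-1}$. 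Losing gives zero. Therefore exiting at $\phi_i^{sp}(\theta_i)$ is optimal regardless of the others' strategies, which is the quasi-dominant property. In this equilibrium the allocation, the shares (up to $\epsilon$), and full revelation coincide with Theorem \ref{ThmCompl1-typeSL-seller}. The expected payments satisfy \eqref{sl-lead-op-cash} with $\alpha$ perturbed by $\epsilon$, so revenue converges to the optimum as $\epsilon\to0$ by continuity of $\Psi_i^{sp}$ in $\alpha$. This is the sense of ``approximately implements.''

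\textbf{Main obstacle.} The delicate step is the monotonicity of $g(\hat\theta)=h(\tilde\alpha(\hat\theta))\hat\theta$, where $h(a)=\frac{(1-a)a}{1-(1-a)a}$. Lemma \ref{lm-sellerlead-alphamax} gives $\alpha_i^{sp-\max}\ge\frac12$ and non-increasing. Since $h$ is decreasing on $[\frac12,1]$, $h(\tilde\alpha(\hat\theta))$ is nondecreasing, and $g$ is increasing as a product of nonnegative nondecreasing terms. The $\epsilon$-shift keeps $\tilde\alpha\in(\frac12-\epsilon,1)$. Near $\frac12$ this needs the small-$\epsilon$ caveat; if necessary we would cap $\tilde\alpha$ below at $\frac12$. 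The pooling region $\theta_i\le\theta_i^c$ is exactly why $\epsilon>0$ is needed. There $\alpha^{sp-\max}=1$, which would give $g\equiv0$, so the winner would be indifferent across exit prices and could not be induced to separate. With $\tilde\alpha=1-\epsilon$ the slope $g(\hat\theta)=h(1-\epsilon)\hat\theta$ is strictly increasing, which restores strict incentives and makes the Dirac beliefs consistent.
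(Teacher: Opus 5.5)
Your proposal is correct and takes essentially the paper's route: aftermarket efforts from \eqref{Optefforttype2-2} under the Dirac belief, the integral (envelope) form $u(\hat\theta,\theta_i)=g(\hat\theta)(\theta_i-\hat\theta)+\int_{\mathcal{H}(p_{n-1})}^{\hat\theta}g(\tau)\,d\tau$, and no profitable deviation because $g(\hat\theta)=\frac{(1-\widetilde{\alpha})\widetilde{\alpha}}{1-(1-\widetilde{\alpha})\widetilde{\alpha}}\cdot\hat\theta$ is monotone by Lemma \ref{lm-sellerlead-alphamax}. These are exactly the details the paper omits when it points to Appendix \ref{winnerlead-IM} and the feasibility step in Appendix \ref{THMoptSL}, and you supply them. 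Your small-$\epsilon$ caveat needs no capping: by \eqref{larger23-lm4pf}, $\alpha_i^{sp-\max}(\theta_i)>\frac{2}{3}$ at every interior maximizer and equals $1$ on the pooling region, so any $\epsilon\le\frac{1}{6}$ keeps $\widetilde{\alpha}_i^{sp}>\frac{1}{2}$, where $h$ is decreasing.
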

\begin{proof}
Please refer to Appendix \ref{selllead-IM}.
\end{proof}

The approximation arises in the winner's value shares: $\widetilde{\alpha}_i^{sp} \rightarrow \alpha_i^{sp^*}$ as $\epsilon\rightarrow 0$. The seller must leave a small value share $\epsilon > 0$ for low-type winners; otherwise, the highest-type bidder may not be selected, preventing optimal mechanism implementation.

\section{Extensions \label{Diss}}

\subsection{Aftermarket collaboration with effort substitution \label{Diss01}}

In the preceding analysis, regardless of whether winner-pivotal or seller-pivotal aftermarket collaboration is employed, both parties' post-auction efforts exhibit a complementary form, i.e., $e_i^b \cdot e_i^s$ appearing in multiplicative form. However, seller and winner efforts may exhibit \textit{substitutability}, where one party's input reduces the other's need. In \textit{Infrastructure concessions}, project quality relies on either government oversight or contractor controls. In \textit{Franchising}, national advertising may replace local promotion. If one party exerts high effort, the other may reduce their own investment. 

We consider the extreme case where aftermarket efforts are perfect substitutes:
\begin{equation}\label{Vstructure-Esub}
V_i(\theta_i,e_i^s, e_i^b) = \theta_i\cdot(e_i^s + e_i^b).
\end{equation}
Following the framework in Sections \ref{DirrrrrMech}-\ref{SLcaseSec}, we apply the same \hyperref[Method]{Methodology} to derive the optimal mechanism. Denoting the direct linear mechanism as $(\hat{\bm{q}}, \hat{\bm{c}}, \hat{\bm{\pi}})$, we characterize the seller's payoff upper bound analogously to Lemmas \ref{uppbdd} and \ref{S-lead-LM}:

\begin{lemma}
For effort substitution (ES) with value creation \eqref{Vstructure-Esub}, the seller's expected equilibrium payoff in any feasible direct linear mechanism $(\hat{\bm{q}}, \hat{\bm{c}}, \hat{\bm{\pi}})$ is bounded above by $\overline{\text{Rev}}^{ES}$:
\begin{equation*}
\overline{\text{Rev}}^{ES}(\hat{\bm{q}}, \hat{\bm{c}}, \hat{\bm{\pi}}) = \int_{\bm{\Theta}} \max\left\{\phi_1^{ES}(\theta_1), \ldots, \phi_i^{ES}(\theta_i),\ldots,\phi_n^{ES}(\theta_n)\right\} \bm{f}(\bm{\theta}) d\bm{\theta}.
\end{equation*}
Here, $\phi_i^{ES}(\theta_i)$ represents the \textit{virtual surplus} for bidder $i$ with type $\theta_i$, defined as:
\begin{equation}\label{Virtul-Esub}
\phi_i^{ES}(\theta_i) = \begin{cases}
\frac{3}{4}\cdot \theta_i^2 + \frac{1}{4}\left(\frac{1-F_i(\theta_i)}{f_i(\theta_i)}\right)^2 - \frac{1}{2}\cdot\frac{1-F_i(\theta_i)}{f_i(\theta_i)}\cdot \theta_i, & \text{for } \hat{\theta}_i^c\leq \theta_i \leq \overline{\theta}_i\\
\frac{\theta_i^2}{2}, & \text{for } \underline{\theta}_i \leq \theta_i \leq \hat{\theta}_i^c
\end{cases},
\end{equation}
where the threshold $\hat{\theta}_i^c$ is defined as:
\begin{equation}\label{Lm-SUB-thres}
\hat{\theta}_i^c \triangleq \inf\left\{\theta_i\in[\underline{\theta}_i,\overline{\theta}_i] : \frac{1-F_i(\theta_i)}{f_i(\theta_i)}\cdot \frac{1}{\theta_i} \leq 1\right\}.
\end{equation}
\end{lemma}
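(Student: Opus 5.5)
The plan is to mirror the proofs of Lemmas \ref{uppbdd} and \ref{S-lead-LM}. I would first derive the aftermarket equilibrium, then write the seller's revenue as expected ``surplus minus information rent'', then show that the signal-dependent terms are bounded by their full-revelation counterparts, and finally maximize pointwise over $\alpha_i$.

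\emph{Step 1 (aftermarket efforts and rents).} Under \eqref{Vstructure-Esub} the two problems in \eqref{effortsderive} decouple. The seller's first-order condition is $e_i^{s}=\alpha_i\, m$, with $m\equiv\mathbb{E}[\widetilde\theta_i|\mu(\cdot|i,\bm c,s)]$, and the winner's is $e_i^{b}=(1-\alpha_i)\theta_i$. A true type $\theta_i$ whose report induces $(\alpha_i,t_i^w,m)$ therefore gets $(1-\alpha_i)\alpha_i\theta_i m+\tfrac12(1-\alpha_i)^2\theta_i^2-t_i^w$. Incentive compatibility and the envelope theorem then give
\[
U_i'(\theta_i)=\mathbb{E}\bigl[q_i\,\bigl((1-\alpha_i)\alpha_i m+(1-\alpha_i)^2\theta_i\bigr)\bigr],
\]
where the expectation is over $\bm\theta_{-i}$, $\kappa_i$ and $\pi_i$. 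The seller's payoff is the total surplus $\theta_i(\alpha_i m+(1-\alpha_i)\theta_i)-\tfrac12\alpha_i^2m^2-\tfrac12(1-\alpha_i)^2\theta_i^2$ minus bidders' payoffs. Integrating by parts exactly as in Appendix \ref{9pf}, revenue equals the expectation of $q_i$ times
\[
\theta_i\alpha_i m+(1-\alpha_i)\theta_i^2-\tfrac12\alpha_i^2m^2-\tfrac12(1-\alpha_i)^2\theta_i^2-h_i(\theta_i)\bigl((1-\alpha_i)\alpha_i m+(1-\alpha_i)^2\theta_i\bigr),
\]
minus $\sum_i U_i(\underline\theta_i)$, where $h_i\equiv(1-F_i)/f_i$.

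\emph{Step 2 (removing the signal).} This is the step I expect to be the main obstacle, because $m$ is a posterior mean rather than $\theta_i$. I would use the same device as in the proof of Lemma \ref{uppbdd}. The terms that pair $m$ with the true $\theta_i$ are handled by iterated expectations: conditioning on $(i,\bm c,s)$ turns $\mathbb{E}[\theta_i m]$ into $\mathbb{E}[m^2]$. The remaining concave terms in $m$ are then compared with the full-revelation value $m=\theta_i$, using Jensen's inequality ($\mathbb{E}[m^2]\le\mathbb{E}[\theta_i^2]$). The rent term $h_i(\theta_i)(1-\alpha_i)\alpha_i m$ needs separate care. I would first try to sign it by using that $h_i$ is decreasing (increasing hazard rate): across types pooled under a signal, $\mathbb{E}[h_i(\theta_i)m]\ge\mathbb{E}[h_i(\theta_i)\theta_i]$ fails in general, so I would instead bound the whole $m$-dependent expression pointwise by its maximum over $m$ for given $\theta_i$. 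If that maximum is not attained at $m=\theta_i$, I would use $\mathbb{E}[m]=\mathbb{E}[\theta_i]$ within the signal cell to show the bound is weakly below the value at $m=\theta_i$. The outcome of this step is that, together with $U_i(\underline\theta_i)\ge0$, revenue is at most $\mathbb{E}\bigl[\sum_i q_i\Psi_i^{ES}(\alpha_i,\theta_i)\bigr]$ with
\[
\Psi_i^{ES}(\alpha,\theta)=\Bigl(\tfrac12+\alpha(1-\alpha)\Bigr)\theta^2-(1-\alpha)\,h_i(\theta)\,\theta .
\]

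\emph{Step 3 (pointwise maximization).} $\Psi_i^{ES}$ is strictly concave in $\alpha$, and its first-order condition $\theta^2(1-2\alpha)+h_i\theta=0$ gives $\alpha=\tfrac12\bigl(1+r_i\bigr)$ with $r_i\equiv h_i(\theta)/\theta$. When $r_i\le1$ this is interior, and substituting $\alpha(1-\alpha)=(1-r_i^2)/4$ gives $\tfrac34\theta^2+\tfrac14h_i^2-\tfrac12h_i\theta$. When $r_i>1$ the constraint $\alpha\le1$ binds, and the value is $\theta^2/2$. Since $h_i$ is decreasing by the increasing-hazard-rate assumption, $r_i$ is decreasing in $\theta$, so $\{r_i\le1\}$ is an upper interval whose left endpoint is $\hat\theta_i^c$ in \eqref{Lm-SUB-thres}. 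This yields \eqref{Virtul-Esub}. Both branches are nonnegative, so bounding $\sum_i q_i\phi_i^{ES}\le\max_i\phi_i^{ES}$ completes the upper bound.
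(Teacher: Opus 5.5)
Your Steps 1 and 3 are right and match the paper's approach, which follows Appendices \ref{lm3pf} and \ref{PF-S-lead-LM}: decoupled efforts $e_i^s=\alpha_i m$ and $e_i^b=(1-\alpha_i)\theta_i$, the envelope formula, the full-revelation integrand $\Psi_i^{ES}(\alpha,\theta)=(\tfrac12+\alpha(1-\alpha))\theta^2-(1-\alpha)h_i\theta$, the maximizer $\alpha=\tfrac12(1+h_i/\theta)$, and both branches of \eqref{Virtul-Esub}. The gap is in Step 2, at the rent term.

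Your claim that $\mathbb{E}[h_i(\widetilde\theta_i)m]\ge\mathbb{E}[h_i(\widetilde\theta_i)\widetilde\theta_i]$ ``fails in general'' is false. Within a cell $(i,\bm c,s)$, both $\alpha_i$ and $m$ are constants. Because the seller's posterior is Bayes-consistent, $m=\mathbb{E}[\widetilde\theta_i\mid i,\bm c,s]$. Hence
\[
\mathbb{E}[h_i(\widetilde\theta_i)m\mid \cdot]=\mathbb{E}[\widetilde\theta_i\mid\cdot]\,\mathbb{E}[h_i(\widetilde\theta_i)\mid\cdot]\ge\mathbb{E}[\widetilde\theta_i h_i(\widetilde\theta_i)\mid\cdot].
\]
This is the covariance (Chebyshev) inequality for an increasing and a decreasing function of the same random variable. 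It holds for every conditional distribution, however types are pooled. With the nonnegative coefficient $(1-\alpha_i)\alpha_i$, it is exactly the inequality the paper uses in \eqref{winnerlead-inequa}. Combined with $\mathbb{E}[\widetilde\theta_i m\mid\cdot]=m^2\le\mathbb{E}[\widetilde\theta_i^2\mid\cdot]$, whose coefficient $\alpha_i-\tfrac12\alpha_i^2$ is also nonnegative, it delivers your Step 2 conclusion directly.

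The fallback you propose instead does not work. For fixed $\alpha_i\in(0,1)$ and $\theta_i$, the $m$-dependent part is $g(m)=\alpha_i\theta_i m-\tfrac12\alpha_i^2m^2-(1-\alpha_i)\alpha_i h_i m$. It is maximized at $m^*=(\theta_i-(1-\alpha_i)h_i)/\alpha_i$, and $g(m^*)-g(\theta_i)=\tfrac12(1-\alpha_i)^2(\theta_i-h_i)^2$. This is strictly positive unless $h_i=\theta_i$.

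Bounding by this pointwise maximum therefore gives the integrand $\Psi_i^{ES}+\tfrac12(1-\alpha_i)^2(\theta_i-h_i)^2$. On the interior branch, where $h_i<\theta_i$ and the optimal $\alpha_i<1$, its maximum over $\alpha_i$ strictly exceeds $\phi_i^{ES}$. Since this bound no longer depends on $m$, no later appeal to $\mathbb{E}[m]=\mathbb{E}[\widetilde\theta_i]$ can bring it back down to the value at $m=\theta_i$. As written, Step 2 proves only a weaker, non-tight bound rather than the lemma. Replacing it with the cell-by-cell Jensen-plus-covariance argument closes the gap.
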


The derivation follows Appendices \ref{lm3pf} and \ref{PF-S-lead-LM} (omitted; available upon request). Applying Step 2 of the \hyperref[Method]{Methodology}, we construct a feasible mechanism achieving this bound:

\begin{proposition}[Effort substitution]
A feasible direct linear mechanism $(\hat{\bm{q}}^*, \hat{\bm{c}}^*, \hat{\bm{\pi}}^*)$ is optimal if:
\begin{enumerate}
\item[(i)] The allocation rule is given by:
\begin{equation*}
\hat{q}_i^*(\bm{\theta}) = \begin{cases}
1, &\text{if } \phi^{ES}_i(\theta_i)> \max_{j\neq i}\phi^{ES}_j(\theta_j)\\
0, & \text{otherwise}
\end{cases}
\end{equation*}
for all $i \in \mathcal{N}$, where $\phi_i^{ES}(\cdot)$ is each bidder's virtual surplus as defined in \eqref{Virtul-Esub}. Ties are broken uniformly at random.

\item[(ii)] The payment rule $\hat{\bm{c}}^*$ induces a deterministic value-sharing rule for the winner:
\begin{equation}\label{Sub-alpha}
\hat{\alpha}_i^*(\theta_i) = \begin{cases}
\frac{1}{2}\cdot\left(1+ \frac{1-F_i(\theta_i)}{f_i(\theta_i)}\cdot \frac{1}{\theta_i}\right), & \text{for } \hat{\theta}_i^c\leq \theta_i \leq \overline{\theta}_i\\
1, & \text{for }\theta_i < \hat{\theta}_i^c
\end{cases}
\end{equation}
for the threshold $\theta_i^c$ in \eqref{Lm-SUB-thres}. Cash payments are determined by measures $\hat{\nu}_i^{w^*}(\cdot | \bm{\theta})$ and $\hat{\nu}_i^{l^*}(\cdot | \bm{\theta})$, analogous to \eqref{OptmechType2-cash} and \eqref{sl-lead-op-cash}, satisfying:
\begin{equation*}
\begin{split}
& \int_{\bm{\Theta}_{-i}} \left\{
\begin{gathered}
\hat{q}_i^*(\theta_i,\bm{\theta}_{-i}) \cdot \int_{\mathcal{T}_i^w} t_i^w\hat{\nu}_i^{w^*}(d t_i^w|\theta_i,\bm{\theta}_{-i})  \\
+ \left(1 - \hat{q}_i^*(\theta_i,\bm{\theta}_{-i})\right)\int_{\mathcal{T}_i^l} t_i^l\hat{\nu}_i^{l^*}(d t_i^l|\theta_i,\bm{\theta}_{-i})
\end{gathered}
\right\} \bm{f}_{-i}(\bm{\theta}_{-i}) d\bm{\theta}_{-i} \\
= & \int_{\bm{\Theta}_{-i}} \left\{
\begin{gathered}
\frac{1- \left(\hat{\alpha}_i^*(\theta_i)\right)^2}{2}\cdot \theta_i^2\cdot \hat{q}_i^*(\theta_i,\bm{\theta}_{-i})\\
- \int_{\underline{\theta}_{i}}^{\theta_i} \left(1-\hat{\alpha}_i^*(\tau)\right)\cdot \tau \cdot \hat{q}_i^*(\tau,\bm{\theta}_{-i})d\tau
\end{gathered}
\right\}\bm{f}_{-i}(\bm{\theta}_{-i}) d\bm{\theta}_{-i}.
\end{split}
\end{equation*}

\item[(iii)] The signal realization rule $\hat{\pi}_i^*(\cdot|i,\hat{\bm{c}},\bm{\theta}) \in \Delta(S_i)$, conditional on the profile of bidders' reported types, winner's identity, and bidders' payments, is a Dirac measure:
\begin{equation*}
\hat{\pi}_i^*(\cdot|i,\hat{\bm{c}},\bm{\theta}) = \delta_{\{\theta_i\}},
\end{equation*}
which assigns probability mass one to the winner's truthfully reported type $\theta_i$.
\end{enumerate}
\end{proposition}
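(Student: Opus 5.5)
The plan follows the paper's two-step \hyperref[Method]{Methodology}: take the upper bound $\overline{\text{Rev}}^{ES}$ from the preceding lemma as given, then show that $(\hat{\bm{q}}^*, \hat{\bm{c}}^*, \hat{\bm{\pi}}^*)$ is feasible and attains it.

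\textbf{Aftermarket efforts and the pointwise problem.} Under \eqref{Vstructure-Esub} the first-order conditions from \eqref{effortsderive} decouple:
\begin{equation*}
e_i^s=\alpha_i\,\mathbb{E}[\widetilde{\theta}_i\mid\mu], \qquad e_i^b=(1-\alpha_i)\,\theta_i .
\end{equation*}
Unlike in Sections \ref{WLcaseSec}--\ref{SLcaseSec}, the winner's effort does not depend on the seller's belief. A winner of true type $\theta_i$ whose reported type induces belief $\mu$ therefore earns, net of cash,
\begin{equation*}
(1-\alpha_i)\alpha_i\,\theta_i\,\mathbb{E}[\widetilde{\theta}_i\mid\mu]+\tfrac12(1-\alpha_i)^2\theta_i^2 .
\end{equation*}
Under full revelation the realized social surplus is $\theta_i^2\bigl(1-\tfrac{\alpha_i^2}{2}-\tfrac{(1-\alpha_i)^2}{2}\bigr)$. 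The envelope derivative of the winner's payoff in his true type is $(1-\alpha_i)\bigl(\alpha_i\theta_i+(1-\alpha_i)\theta_i\bigr)=(1-\alpha_i)\theta_i$. Writing $h_i=\frac{1-F_i(\theta_i)}{f_i(\theta_i)}$, the pointwise objective becomes
\begin{equation*}
\hat\Psi_i(\alpha_i,\theta_i)=\theta_i^2\Bigl(1-\tfrac{\alpha_i^2}{2}-\tfrac{(1-\alpha_i)^2}{2}\Bigr)-(1-\alpha_i)\,\theta_i\,h_i .
\end{equation*}
This function is strictly concave in $\alpha_i$. Its first-order condition $\theta_i^2(1-2\alpha_i)+\theta_i h_i=0$ gives $\alpha_i=\frac12\bigl(1+\frac{h_i}{\theta_i}\bigr)$. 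The constraint $\alpha_i\le 1$ binds exactly when $h_i/\theta_i\ge1$, which is the region below $\hat\theta_i^c$ in \eqref{Lm-SUB-thres}; increasing hazard rate makes $h_i/\theta_i$ decreasing, so this region is an interval. Substituting back yields \eqref{Sub-alpha} and $\hat\Psi_i(\hat\alpha_i^*(\theta_i),\theta_i)=\phi_i^{ES}(\theta_i)$, consistent with the lemma's bound.

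\textbf{Attaining the bound.} The bound is met if three things hold: the mechanism allocates to the highest $\phi_i^{ES}$, uses $\hat\alpha_i^*$ deterministically, and binds $U_i(\underline\theta_i)=0$. The cash condition in (ii) is precisely the envelope representation with $U_i(\underline\theta_i)=0$. It remains to check that $\phi^{ES}_i\ge0$ and that $\phi^{ES}_i$ is increasing, so that selling always is optimal and $\hat q_i^*$ is monotone.
\begin{itemize}
\item Below $\hat\theta_i^c$ we have $\phi^{ES}_i=\theta_i^2/2$, so both properties are immediate.
\item Above $\hat\theta_i^c$ we have $\phi^{ES}_i=\frac14(\theta_i-h_i)^2+\frac12\theta_i^2\ge0$. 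Its derivative is $\frac12(\theta_i-h_i)(1-h_i')+\theta_i>0$, because $\theta_i\ge h_i$ on this region and $h_i'\le0$ under increasing hazard rate.
\item The two pieces coincide at $\hat\theta_i^c$, where $h_i=\theta_i$ gives $\phi^{ES}_i=\theta_i^2/2$, so $\phi^{ES}_i$ is continuous.
\end{itemize}
Hence $\phi^{ES}_i$ is nonnegative and increasing on $\Theta_i$.

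\textbf{Incentive compatibility (main obstacle).} A misreport $\hat\theta_i$ changes the allocation, the share $\hat\alpha_i^*(\hat\theta_i)$, and, through the Dirac signal, the seller's belief $\mathbb{E}[\widetilde\theta_i\mid\mu]=\hat\theta_i$. Let $Q_i(\hat\theta_i)$ be the interim winning probability and $T_i(\hat\theta_i)$ the interim expected cash payment. The deviation payoff is
\begin{equation*}
U_i(\hat\theta_i,\theta_i)=Q_i(\hat\theta_i)\Bigl[(1-\hat\alpha_i^*(\hat\theta_i))\,\hat\alpha_i^*(\hat\theta_i)\,\theta_i\hat\theta_i+\tfrac12\bigl(1-\hat\alpha_i^*(\hat\theta_i)\bigr)^2\theta_i^2\Bigr]-T_i(\hat\theta_i).
\end{equation*}
I would verify the single-crossing condition $\partial^2U_i/\partial\theta_i\partial\hat\theta_i\ge0$ and then apply the standard Myerson integral argument: $U_i(\theta_i)-U_i(\hat\theta_i,\theta_i)=\int_{\hat\theta_i}^{\theta_i}\bigl[\partial_2U_i(\tau,\tau)-\partial_2U_i(\hat\theta_i,\tau)\bigr]d\tau\ge0$. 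The cross-partial equals
\begin{equation*}
\frac{d}{d\hat\theta_i}\Bigl\{Q_i\bigl[(1-\hat\alpha_i^*)\hat\alpha_i^*\hat\theta_i+(1-\hat\alpha_i^*)^2\theta_i\bigr]\Bigr\}.
\end{equation*}
Here $Q_i$ is nondecreasing, $\hat\alpha_i^*$ is nonincreasing in the report, and $\hat\alpha_i^*\ge\frac12$. Hence $(1-\hat\alpha_i^*)^2$ is nondecreasing, and $(1-\hat\alpha_i^*)\hat\alpha_i^*$ is nondecreasing because $\alpha(1-\alpha)$ decreases on $[\frac12,1]$. Every factor inside the braces is therefore nonnegative and nondecreasing in $\hat\theta_i$, which gives single crossing and hence IC. Individual rationality follows from $U_i(\underline\theta_i)=0$ and the nonnegative envelope derivative $(1-\hat\alpha_i^*)\theta_i Q_i\ge0$. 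The only delicate point I expect is at $\hat\theta_i^c$: there $\hat\alpha_i^*$ has a kink, so the integral argument must be run with absolutely continuous $U_i$ rather than relying on pointwise differentiability.
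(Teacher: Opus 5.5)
Your proposal is correct and follows essentially the paper's route (the paper defers to Appendices E and H): you take the upper bound as given, show that the mechanism is feasible because $\hat q_i^*$ is monotone and the envelope integrand's components $(1-\hat\alpha_i^*)\hat\alpha_i^*\hat\theta_i$ and $(1-\hat\alpha_i^*)^2$ are monotone in the report (using $\hat\alpha_i^*\ge\frac12$, as in the seller-pivotal case), and then show that revenue equals $\int\max_i\phi_i^{ES}$. Your explicit derivation of $\hat\alpha_i^*$ and of the monotonicity of $\phi_i^{ES}$ fills in steps the paper omits. Your single-crossing formulation also covers upward and downward misreports symmetrically, whereas the appendix treats only $\theta_i>\hat\theta_i$ ``without loss of generality.''
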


The proof follows Appendices \ref{APPEEE} and \ref{THMoptSL} (omitted; available upon request). The optimal mechanism shares key structural features with effort-complementarity cases: virtual surplus allocation, deterministic value-sharing, and full revelation. A notable difference is that value-sharing is explicitly given by \eqref{Sub-alpha}, whereas it is implicitly determined by maximization problems \eqref{alphamax111-type2} and \eqref{alphastarSL} in winner-pivotal and seller-pivotal cases.

Similar to seller-pivotal collaboration, optimal value-sharing decreases with the winner's type, and full surplus extraction occurs for types below threshold $\hat{\theta}_i^c$ (equation \eqref{Sub-alpha}). The reason mirrors the seller-pivotal case: when $e_i^b=0$, neither $V_i(\theta_i,e_i^s, e_i^b) = (\theta_i + e_i^b) \cdot e_i^s$ nor $V_i(\theta_i,e_i^s, e_i^b) = \theta_i\cdot( e_i^b+ e_i^s)$ yields zero value, making low-type winners unacceptable.

\subsection{Discussions on value function}

Optimal mechanism characterization relies on aftermarket equilibrium existence. However, general value creation functions may preclude such equilibrium.

For example, consider a value creation function of the form:
\begin{equation}\label{nnnnnnnnn}
V_i(\theta_i,e_i^s, e_i^b) = \theta_i \cdot e_i^s \cdot e_i^b.
\end{equation}

The seller's and winner's aftermarket equilibrium strategies satisfy $e^b_i = (1-\alpha_i)\cdot\theta_i \cdot e^s_i(\mu)$ and $e^s_i =  \alpha_i\cdot\mathbb{E}_{\widetilde{\theta}_i}[\widetilde{\theta}_i \cdot e^b_i(\widetilde{\theta}_i)|\mu]$. Solving this system yields $e^b_i(\theta_i) = (1-\alpha_i) \cdot \alpha_i\cdot\theta_i \cdot \mathbb{E}_{\widetilde{\theta}_i}[\widetilde{\theta}_i \cdot e^b_i(\widetilde{\theta}_i)|\mu]$, which implies:
\begin{equation*}
\mathbb{E}_{\widetilde{\theta}_i}\left[\left.\widetilde{\theta}_i \cdot e^b_i(\widetilde{\theta}_i)\right|\mu\right] = \mathbb{E}_{\widetilde{\theta}_i}\left[\left.(1-\alpha_i) \cdot \alpha_i\cdot \widetilde{\theta}_i^2\right|\mu\right] \cdot \mathbb{E}_{\widetilde{\theta}_i}\left[\left.\widetilde{\theta}_i \cdot e^b_i(\widetilde{\theta}_i)\right|\mu\right],
\end{equation*}
holds for any posterior $\mu$. For equilibrium existence, the following must hold:
\begin{equation}\label{bnfnjfeajfhna}
\mathbb{E}_{\widetilde{\theta}_i}\left[\left.\widetilde{\theta}_i^2\right|\mu\right] = \frac{1}{(1-\alpha_i) \cdot \alpha_i}
\end{equation}
when $\mathbb{E}_{\widetilde{\theta}_i}[\widetilde{\theta}_i \cdot e^b_i(\widetilde{\theta}_i)|\mu] \neq 0$. Since \eqref{bnfnjfeajfhna} fails generally, \eqref{nnnnnnnnn} may preclude equilibrium existence.

General functions may also prevent the seller's payoff upper bound (Step 1 of \hyperref[Method]{Methodology}) from being attained, or yield non-monotonic virtual surplus. We therefore focus on two representative effort-complement forms balancing tractability with practical relevance in the main body of the article. Below, we consider a generalized nesting structure that encompasses both cases.

\subsection{Generalized nesting structure and seller's preference}

We consider a generalized value creation function nesting both cases via an interdependence term:
\begin{equation}\label{Vgeneral}
V_i(\theta_i, e_i^s, e^b_i) = \left(\zeta \cdot \theta_i + e_i^s\right) \cdot \left((1-\zeta) \cdot \theta_i + e_i^b\right),
\end{equation}
for $\zeta \in [0,1]$. When $\zeta\in[0,\frac{1}{2}]$, the seller's effort is more critical ($\zeta=0$ recovers the seller-pivotal case); when $\zeta\in(\frac{1}{2},1]$, the winner's effort is more critical ($\zeta=1$ recovers the winner-pivotal case).

\paragraph{Optimal mechanism.} Similar to the winner- and seller-pivotal scenarios, we use the superscript ``$opt$" to denote this generalized nesting structure. Applying the \hyperref[Method]{Methodology}, for any $\zeta \in [0,1]$, we first establish the seller's payoff upper bound:
\begin{equation}\label{uppergen}
\text{Rev}^{opt}(\bm{q}, \bm{c}, \bm{\pi}) \leq \overline{\text{Rev}}^{opt}(\bm{q}, \bm{c}, \bm{\pi}) \triangleq \int_{\bm{\Theta}} \max\left\{\phi_1^{opt}(\theta_1, \zeta), \cdots, \phi_n^{opt}(\theta_n, \zeta)\right\} \bm{f}(\bm{\theta}) d\bm{\theta},
\end{equation}
where
\begin{equation}\label{Nest-phi}
\begin{split}
\phi_i^{opt}(\theta_i,\zeta) \triangleq & \max_{\sigma_i(d\alpha_i|\bm{\theta})}\int_{0}^1 \left\{
\begin{gathered}
\underbrace{\left(\begin{gathered}
\zeta - \frac{\zeta^2 \cdot (1+\alpha_i^2)}{2} + \frac{\alpha_i \cdot (1-\alpha_i \cdot \zeta)\cdot(1-\zeta\cdot\alpha_i^2)}{1 - (1-\alpha_i) \cdot \alpha_i}\\
 - \frac{\alpha_i^4\cdot(1-\alpha_i\cdot\zeta)^2}{2(1 - (1-\alpha_i) \cdot \alpha_i)^2}
\end{gathered}
\right)}_{\triangleq A(\alpha_i,\zeta)} \cdot \theta_i^2\\
- \underbrace{(1-\alpha_i)\cdot\left(\begin{gathered}
\zeta\cdot(2 - \zeta\cdot(1+\alpha_i))\\
 + \frac{\alpha_i\cdot(1-\alpha_i\cdot\zeta)^2}{1 - (1-\alpha_i) \cdot \alpha_i}
\end{gathered}
\right)}_{\triangleq B(\alpha_i,\zeta)} \cdot \frac{1-F_i(\theta_i)}{f_i(\theta_i)}\cdot \theta_i
\end{gathered}
\right\}\sigma_i(d\alpha_i|\bm{\theta})\\
= & \int_{0}^1 \underbrace{\left\{A(\alpha_i,\zeta) \cdot \theta_i^2 - B(\alpha_i,\zeta) \cdot \frac{1-F_i(\theta_i)}{f_i(\theta_i)}\cdot \theta_i\right\}}_{\text{defined as }\Psi_i^{opt}(\alpha_i,\theta)}\underbrace{\sigma_i^{opt-\max}(d\alpha_i|\bm{\theta})}_{\text{Dirac measure maximizer}},
\end{split}
\end{equation}
which denotes bidder $i$'s virtual surplus (non-negative and increasing in $\theta_i$). The maximizer $\sigma_i^{opt-\max}(\cdot|\bm{\theta}) \in \Delta([0,1])$ is a Dirac measure concentrated at $\alpha_i^{opt-\max} = \alpha_i^{opt-\max}(\theta_i,\zeta)$, so $\phi_i^{opt}(\theta_i,\zeta) = \Psi_i^{opt}(\alpha_i^{opt-\max},\theta_i,\zeta)$. This function nests:
\begin{equation*}
\begin{cases}
\left.\phi_i^{opt}(\theta_i, \zeta)\right|_{\zeta = 1} = \phi_i^{wp}(\theta_i), & \text{winner-pivotal collaboration, cf. \eqref{Lmm1-Psi}},\\
\left.\phi_i^{opt}(\theta_i, \zeta)\right|_{\zeta = 0} = \phi_i^{sp}(\theta_i), & \text{seller-pivotal collaboration, cf. \eqref{Lmm1-Psi-1}}.
\end{cases}
\end{equation*}
The detailed derivation is provided in Appendix \ref{opt-mech-nesting-pf}.

Given the upper bound in \eqref{uppergen}, following Step 2 of the \hyperref[Method]{Methodology}, one can construct a feasible mechanism achieving $\overline{\text{Rev}}(\bm{q}, \bm{c}, \bm{\pi})$, thereby identifying the optimal mechanism, analogously to Theorems \ref{ThmCompl1-type2} and \ref{ThmCompl1-typeSL-seller}. Under the general nesting structure, however, additional conditions are required to guarantee feasibility and attainability. Specifically,
\begin{proposition}\label{ppppp07}
For any $\zeta \in [0,1]$, if the maximizing value-sharing scheme, $\alpha_i^{opt-\max}(\theta_i,\zeta)$, satisfies $\frac{\partial A(\alpha_i, \zeta)}{\partial \alpha_i}|_{\alpha_i = \alpha_i^{opt-\max}} < 0$ and $\frac{\partial H(\alpha_i, \zeta)}{\partial \alpha_i}|_{\alpha_i = \alpha_i^{opt-\max}} < 0$, where $H(\alpha_i, \zeta) \triangleq (1-\alpha_i)^2 \cdot (\zeta + \frac{\alpha_i\cdot(1-\alpha_i \cdot\zeta)}{1-(1-\alpha_i)\cdot\alpha_i})$. Then, an optimal linear mechanism exists that is feasible and achieves the seller's payoff upper bound in \eqref{uppergen}.
\end{proposition}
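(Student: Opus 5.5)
The argument mirrors Theorems \ref{ThmCompl1-type2} and \ref{ThmCompl1-typeSL-seller}. The idea is to build the candidate mechanism directly from the upper bound in \eqref{uppergen} and then show that the two derivative conditions are exactly what makes it incentive compatible.

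\textbf{Candidate mechanism.} The allocation is $q_i(\bm{\theta})=1$ if and only if $\phi_i^{opt}(\theta_i,\zeta)>\max_{j\neq i}\phi_j^{opt}(\theta_j,\zeta)$, with ties broken at random. The value share is deterministic, $\alpha_i(\theta_i)=\alpha_i^{opt-\max}(\theta_i,\zeta)$. The signal rule is full revelation, $\pi_i=\delta_{\{\theta_i\}}$. Cash payments are chosen so that the lowest type gets $U_i(\underline{\theta}_i)=0$ and the envelope formula holds, as in \eqref{OptmechType2-cash}.

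\textbf{Step 1: the bilateral effort game and the winner's value.} For the nested value function \eqref{Vgeneral}, I would first solve the aftermarket game with a Dirac posterior at the report $\hat\theta_i$. The seller's first-order condition gives $e^s=\alpha_i((1-\zeta)\hat\theta_i+e^b)$. The winner's gives $e^b=(1-\alpha_i)(\zeta\theta_i+e^s)$. I then write the true type's gross aftermarket value $W(\alpha,\hat\theta,\theta)$. The key observation is that, holding the report fixed, $\partial W/\partial\theta$ comes only from the winner's own effort response. Evaluated on the truthful path, it should be proportional to $H(\alpha_i,\zeta)\,\theta_i$ up to a positive constant; this is the term inside the integral of the cash formula. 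Substituting the envelope representation of $U_i$ into the seller's payoff then reproduces $\Psi_i^{opt}$ pointwise. So the upper bound is attained, provided IC holds.

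\textbf{Step 2: incentive compatibility.} Misreporting changes three things: the interim winning probability $Q_i(\hat\theta)$, the share $\alpha_i(\hat\theta)$, and the seller's belief, hence her effort. Using the envelope form, the gain from truth-telling over reporting $\hat\theta$ can be written as
\[
\int_{\hat\theta}^{\theta}\Big[Q_i(\tau)\,h(\tau)-Q_i(\hat\theta)\,\partial_\theta W(\alpha_i(\hat\theta),\hat\theta,\tau)\Big]d\tau,
\]
where $h(\tau)=\partial_\theta W(\alpha_i(\tau),\tau,\tau)$. It therefore suffices to prove two monotonicity facts.

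First, $Q_i$ is nondecreasing. This follows once $\phi_i^{opt}$ is increasing. By the envelope theorem, $\frac{d\phi_i^{opt}}{d\theta}=\partial_\theta\Psi_i^{opt}$, and under the increasing hazard rate this is positive whenever $A>0$ and $B\ge0$. Nonnegativity of $\phi_i^{opt}$ follows from $\phi_i^{opt}\ge\Psi_i^{opt}(\alpha=1,\cdot)$ and the fact that $B(1,\zeta)=0$.

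Second, $\partial_\theta W(\alpha_i(\hat\theta),\hat\theta,\tau)$ must be nondecreasing in the report $\hat\theta$, a single-crossing property. Here the hypotheses enter. Differentiating $\Psi_i^{opt}$ implicitly in its first-order condition gives the sign of $d\alpha_i^{opt-\max}/d\theta$, which is governed by $\partial_\alpha A<0$ together with the increasing hazard rate; this yields $\alpha^{opt-\max}$ decreasing. Then $\partial_\alpha H<0$ turns "lower share" into a larger marginal return to type. The direct channel, a higher reported type raising the seller's effort, is nonnegative by complementarity.

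Combining the two facts makes the integrand sign-correct for both $\hat\theta<\theta$ and $\hat\theta>\theta$, which gives IC. IR follows from $U_i(\underline{\theta}_i)=0$ and $U_i'\ge0$.

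\textbf{Main obstacle.} I expect the hard step to be the single-crossing argument: establishing $d\alpha_i^{opt-\max}/d\theta\le0$ from $\partial_\alpha A<0$, and handling corner solutions at $\alpha=1$. Interior points would be handled by the implicit function theorem. At the corner $\alpha=1$ the share is locally constant and the argument reduces to the $\zeta=0$ pooling case of Lemma \ref{lm-sellerlead-alphamax}.
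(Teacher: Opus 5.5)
Your route is the paper's: the same candidate mechanism, envelope-based revenue equivalence, monotone $Q_i$ from an increasing $\phi_i^{opt}$, and $\alpha_i^{opt-\max}$ decreasing by implicitly differentiating the first-order condition using $\partial_\alpha A<0$. After that, $\partial_\alpha H<0$ is meant to handle the marginal-return term. Your single-crossing-in-the-report formulation of IC is a cleaner way to organize the paper's chain of inequalities, and it covers over- and under-reporting at once.

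The gap is in Step 1, where the envelope term is misidentified. Take $X=\zeta\theta+e^s(\hat\theta)$ and $e^b=(1-\alpha)X$. The winner's gross payoff is then $(1-\alpha)(1-\zeta)\theta X+\tfrac12(1-\alpha)^2X^2$, so $\partial_\theta W=(1-\alpha)\bigl[\zeta(1-\zeta)\theta+(1-\alpha\zeta)X\bigr]$. On the truthful path this equals $B(\alpha,\zeta)\,\theta$. That is precisely the rent coefficient in $\Psi_i^{opt}$, which is why your substitution reproduces the bound. It is not a constant multiple of $H(\alpha,\zeta)\,\theta$ unless $\zeta=1$; at $\zeta=0$, $B/H=1/(1-\alpha)$. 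So the two claims in your Step 1 contradict each other. The paper's appendix writes $U_i^{opt}$ with the same $\tau H$ integrand, which at $\zeta=0$ does not reduce to the integrand of Theorem \ref{ThmCompl1-typeSL-seller}. You inherit this from the paper, but it remains a gap.

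As a result, the hypothesis $\partial_\alpha H<0$ is not what controls your single-crossing function. That function is $\partial_\theta W(\alpha(r),r,\tau)=a(\alpha(r))\,\tau+b(\alpha(r))\,r$, with $a=(1-\alpha)\zeta(2-\zeta(1+\alpha))$, $b=\frac{(1-\alpha)\alpha(1-\alpha\zeta)^2}{1-(1-\alpha)\alpha}$, and $a+b=B$.

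For under-reports ($r\le\tau$), note that $\partial_\alpha a=-2\zeta(1-\alpha\zeta)\le0$. This gives $\partial_r(a\tau+br)\ge b+\partial_\alpha B\cdot\alpha'(r)\,r\ge0$ from $\partial_\alpha A<0$ alone, because the first-order condition forces $\partial_\alpha B<0$.

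For over-reports ($r\ge\tau$), the $a$-term can only be dropped. You then need something like $r\mapsto b(\alpha(r),\zeta)\,r$ nondecreasing. This is the analogue of Claim \ref{CL-derivative} at $\zeta=1$ and of the $\alpha\ge\frac12$ step at $\zeta=0$. The hypothesis $\partial_\alpha H<0$ does not imply it, so IC is not established for $\zeta<1$.

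A smaller point concerns the monotonicity of $\phi_i^{opt}$. With $h=\frac{1-F_i}{f_i}$, $\partial_\theta\Psi_i^{opt}=2A\theta_i-B(h+\theta_ih')$ is not positive merely because $A>0$ and $B\ge0$. Evaluate at the maximizer and use $\phi_i^{opt}\ge0$, i.e.\ $A\theta_i\ge Bh$. This gives $\partial_\theta\Psi_i^{opt}\ge B(h-\theta_ih')\ge0$, as the paper does.
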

\begin{proof}
Please refer to Appendix \ref{ppppp07-pff}.
\end{proof}

These conditions are automatically satisfied in the winner-pivotal and seller-pivotal cases. The optimal mechanism selects the bidder with the highest virtual surplus, assigns deterministic value-sharing $\alpha_i^{\max}(\theta_i, \zeta)$ from \eqref{Nest-phi}, and reveals the winner's type. 

\paragraph{Seller's preference.} If the seller can choose the interdependence term $\zeta$ to determine aftermarket dominance, we claim that:

\begin{proposition}\label{pppp07}
The interdependence term that maximizes the seller's payoff upper bound $\overline{\text{Rev}}(\bm{q}, \bm{c}, \bm{\pi})$ satisfies $\zeta^* \leq \frac{1}{2}$. Additionally, if such $\zeta^*$ satisfies the conditions in Proposition \ref{ppppp07}, then the seller will choose this $\zeta^*$ such that:
\begin{equation*}
\text{Rev}^*(\zeta^*) = \overline{\text{Rev}}(\bm{q}, \bm{c}, \bm{\pi}).
\end{equation*}
Otherwise, the optimal $\zeta^*$ is intractable.
\end{proposition}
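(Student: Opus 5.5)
The plan is to show that for every $\zeta\in(\frac12,1]$ there is a $\zeta'\le\frac12$ whose virtual surplus dominates pointwise:
\[
\phi_i^{opt}(\theta_i,\zeta')\ge\phi_i^{opt}(\theta_i,\zeta)\quad\text{for all }i\text{ and }\theta_i.
\]
Since $\overline{\text{Rev}}$ in \eqref{uppergen} is the expectation of $\max_i\phi_i^{opt}(\theta_i,\zeta)$, and the max and the expectation are both monotone, this dominance carries over to the upper bound. Hence the maximizer $\zeta^*$ of $\overline{\text{Rev}}$ over $[0,1]$ can be taken in $[0,\frac12]$.

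The natural candidate is the mirror point $\zeta'=1-\zeta$. Under this map the value function \eqref{Vgeneral} becomes
\[
\big((1-\zeta)\theta_i+e_i^s\big)\big(\zeta\theta_i+e_i^b\big),
\]
so the roles of seller and winner in the bilinear form are swapped. If needed, I will fall back on the cruder comparison $\zeta'=0$, using Proposition \ref{ppproprev} (which gives $\phi^{sp}\ge\phi^{wp}$ pointwise) as a template.

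The key step is comparing the coefficients in \eqref{Nest-phi}. Write $\rho=\frac{1-F_i(\theta_i)}{\theta_i f_i(\theta_i)}\ge 0$, so that $\Psi_i^{opt}=\theta_i^2\big(A(\alpha,\zeta)-B(\alpha,\zeta)\rho\big)$. It then suffices to show that for every $\alpha$ there is an $\alpha'$ with
\[
A(\alpha',1-\zeta)\ge A(\alpha,\zeta)\quad\text{and}\quad B(\alpha',1-\zeta)\le B(\alpha,\zeta).
\]
Because $\rho\ge0$, this makes $\Psi_i^{opt}(\alpha',\theta_i,1-\zeta)\ge\Psi_i^{opt}(\alpha,\theta_i,\zeta)$ for every $\theta_i$ simultaneously, and taking $\alpha=\alpha_i^{opt-\max}(\theta_i,\zeta)$ gives the dominance.

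The main obstacle is finding $\alpha'$. The social-surplus term $A$ depends on the effort pair $(e^s,e^b)$ generated by $\alpha$, and swapping roles corresponds roughly to $\alpha'=1-\alpha$. I expect $A$ to be nearly symmetric under $(\alpha,\zeta)\mapsto(1-\alpha,1-\zeta)$, because total surplus is role-symmetric. The rent term $B$, however, is the derivative of the winner's payoff in $\theta_i$. It is smaller when the type enters through the seller's effort, since in the seller-pivotal case the winner's effort does not depend on his own type. So I would verify $B(1-\alpha,1-\zeta)\le B(\alpha,\zeta)$ for $\zeta>\frac12$ by direct algebra on the explicit expressions, treating the difference as a polynomial in $(\alpha,\zeta)$ after clearing the positive denominator $(1-(1-\alpha)\alpha)^2$. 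If exact symmetry of $A$ fails, I would instead choose $\alpha'$ to match $A$ exactly, using continuity and the intermediate value theorem in $\alpha'$, and then check the $B$ inequality at that point.

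The second statement is then immediate. If $\zeta^*$ satisfies the conditions of Proposition \ref{ppppp07}, that proposition yields a feasible mechanism attaining $\overline{\text{Rev}}$ at $\zeta^*$, so $\text{Rev}^*(\zeta^*)=\overline{\text{Rev}}$. Since $\overline{\text{Rev}}$ bounds the revenue of every $\zeta$, the choice $\zeta^*$ is optimal for the seller. Otherwise attainability is not guaranteed, and the claim makes no assertion.
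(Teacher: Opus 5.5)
Your route is the paper's: pointwise dominance of the virtual surplus under the mirror map $(\alpha,\zeta)\mapsto(1-\alpha,1-\zeta)$. The $A$ part does work, since $A(\alpha,\zeta)=A(1-\alpha,1-\zeta)$ holds exactly.

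\textbf{The gap is in the $B$ step.} With $D(\alpha)=1-\alpha+\alpha^2$, direct algebra gives
\[
D(\alpha)\bigl[B(\alpha,\zeta)-B(1-\alpha,1-\zeta)\bigr]=(1-2\alpha)\bigl[(1-\alpha)\zeta(2-\zeta)+\alpha(1-\zeta^2)\bigr].
\]
The sign is therefore governed by $\alpha$, not by $\zeta$: the mirror lowers the rent coefficient if and only if $\alpha<\frac12$, for every $\zeta$. Your planned inequality ``$B(1-\alpha,1-\zeta)\le B(\alpha,\zeta)$ for $\zeta>\frac12$'' is false on $\alpha>\frac12$. The argument thus covers only types with $\hat\alpha=\alpha_i^{opt-\max}(\theta_i,\zeta)\le\frac12$.

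The paper tries to close this with Step~4 of Appendix~\ref{pppp07-pf}. There it infers $\hat\alpha<\frac12$ for $\zeta>\frac12$ from $\partial_\alpha A|_{\alpha=1/2}=\frac89(1-2\zeta)<0$. Do not import that step: it ignores the $-\rho B$ term, which pushes the maximizer up. At $\zeta=1$, the left side of \eqref{CL-decrease-OptAlpha-00} equals $\frac23$ at $\alpha=\frac12$ and is increasing in $\alpha$. Hence $\alpha_i^{wp-\max}(\theta_i)>\frac12$ whenever $\rho>\frac23$. This includes all types near $\underline\theta_i$, because $\underline\theta_i f_i(\underline\theta_i)<1$; the paper itself notes this in Section~\ref{FULLsec}. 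Interior $\zeta$ behaves the same way: at $\zeta=\frac34$ and $\rho=1$, the maximizer of $A-\rho B$ is about $0.7$.

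\textbf{The missing idea is a case split.} For $\hat\alpha\le\frac12$, use the mirror as you planned. For $\hat\alpha>\frac12$, compare at the \emph{same} $\alpha$:
\[
\Psi_i^{opt}(\hat\alpha,\theta_i,1-\zeta)-\Psi_i^{opt}(\hat\alpha,\theta_i,\zeta)=\theta_i^2\,\frac{(2\zeta-1)(2\hat\alpha-1)}{2D(\hat\alpha)^2}\bigl[1-2\rho(1-\hat\alpha)D(\hat\alpha)\bigr].
\]
You then need to show the bracket is positive at the maximizer, by substituting $\rho$ from the first-order condition. This is in the spirit of Case~3 in the proof of Proposition~\ref{ppproprev}. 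At $\zeta=1$ the bracket reduces to $\frac{(1-\hat\alpha)(2\hat\alpha^2-4\hat\alpha+3)}{1+\hat\alpha}>0$. For general $\zeta\in(\frac12,1)$ the analogous first-order-condition computation still has to be carried out.

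\textbf{Neither of your fallbacks does this job.}
\begin{itemize}
\item Matching $A$ via the intermediate value theorem gives no handle on $B$.
\item Taking $\zeta'=0$ cannot dominate pointwise. At $\theta_i=\overline\theta_i$ (where $\rho=0$) the virtual surplus is $\overline\theta_i^{\,2}\max_\alpha A(\alpha,\zeta)$. Since $A(\frac12,\frac12)=\frac34$ exceeds $\max_\alpha A(\alpha,0)\approx 0.69$, top types strictly prefer $\zeta$ just above $\frac12$ to $\zeta=0$.
\end{itemize}

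Your treatment of the attainability sentence, via Proposition~\ref{ppppp07}, matches the paper and is fine.
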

\begin{proof}
Please refer to Appendix \ref{pppp07-pf}.
\end{proof}

The seller's preference for $\zeta^* \leq \frac{1}{2}$ means she would choose her own effort to be more critical in value creation, consistent with the revenue advantage of seller-pivotal collaboration (Proposition \ref{ppproprev}).

\section{Conclusion \label{S6}}

This paper characterizes optimal auction mechanisms in which the asset's value is created endogenously through post-auction collaboration between the seller and the winning bidder. The role structure---winner-pivotal or seller-pivotal---fundamentally shapes the optimal mechanism. 

The optimal mechanisms in both settings share three structural properties: allocation to the bidder with the highest virtual surplus, a deterministic value-sharing rule, and full information revelation through the signal realization rule. The role structure nevertheless generates systematic differences: the seller captures a strictly larger value share under seller-pivotal collaboration---with full extraction from sufficiently low-type winners---whereas under winner-pivotal collaboration every winner must retain a positive share to sustain his critical effort. The pivotal party exerts higher post-auction effort in every setting, and seller-pivotal collaboration yields strictly higher seller revenue for any type distribution.

We implement the optimal mechanisms through ascending auctions with endogenously determined linear contracts. While perfect implementation under seller-pivotal collaboration is challenging due to surplus extraction incentives, close approximation is achievable. These results provide practical guidance for franchising, patent licensing, infrastructure concessions, and platform operating rights, where value is co-created through post-auction interaction.








\newpage
\appendix
\renewcommand{\thesection}{\Alph{section}.\arabic{section}}
\setcounter{section}{0}

\begin{appendices}\label{APPPPPPPPPPPP}

\section{Derivation of seller's expected payoff under winner-pivotal collaboration \label{9pf}}

Under the winner-pivotal scenario, in a feasible direct linear mechanism, the incentive compatibility condition for each bidder requires that $U_i^{wp}(\theta_i) = \max_{\hat{\theta}_i \in \Theta_i} U^{wp}_i(\hat{\theta}_i, e^{wp-b}_i, e^{wp-s}_i, \theta_i)$, where $U^{wp}_i(\hat{\theta}_i, e^{wp-b}_i, e^{wp-s}_i, \theta_i)$ is specified by \eqref{Uireporting}. Applying the general envelope theorem (see \cite{Milgrom2002}), any incentive-compatible direct mechanism must satisfy:
\begin{equation}\label{Spayoff}
U_i^{wp}(\theta_i) = U_i^{wp}(\underline{\theta}_i) + \int_{\underline{\theta}_i}^{\theta_i} W_i(\tau) d\tau,
\end{equation}
where $W_i(\tau)$ is defined as follows:
\begin{equation*}
\begin{split}
W_i(\tau) \triangleq & \int_{\bm{\Theta}_{-i}} q_i(\tau,\bm{\theta}_{-i}) \int_{\bm{\mathcal{T}}_{-i}^l}\int_{[0,1]\times \mathcal{T}_i^w} \\
& \cdot \int_{S_i} \left\{ (1-\alpha_i)^2\cdot \left(\tau + \frac{(1-\alpha_i) \cdot \alpha_i\cdot \mathbb{E}\left[\left.\widetilde{\theta}_i\right|\mu(\cdot |i, \bm{c}, s)\right]}{1 - (1-\alpha_i) \cdot \alpha_i}\right)
\right\} \pi_i(d s|i,\bm{c},\tau,\bm{\theta}_{-i}) \\
&\cdot \kappa_i(d\alpha_i,dt_i^w|\tau,\bm{\theta}_{-i})\bm{\nu}_{-i}^l(d\bm{t}_{-i}^l|\tau,\bm{\theta}_{-i})\bm{f}_{-i}(\bm{\theta}_{-i}) d\bm{\theta}_{-i}.
\end{split}
\end{equation*}
In addition, according to \eqref{Uireporting}, bidder $i$'s expected payoff from truthful reporting is:
\begin{equation}\label{nnbnbnbn}
\begin{split}
U_i^{wp}(\theta_i) 
= & \int_{\bm{\Theta}_{-i}} q_i(\theta_i,\bm{\theta}_{-i}) \int_{\bm{\mathcal{T}}_{-i}^l}\int_{[0,1]\times \mathcal{T}_i^w} \\
& \cdot \int_{S_i} \left\{\frac{1}{2}\cdot (1-\alpha_i)^2\cdot \left(\theta_i + \frac{(1-\alpha_i) \cdot \alpha_i\cdot \mathbb{E}\left[\left.\widetilde{\theta}_i\right|\mu(\cdot |i, \bm{c}, s)\right]}{1 - (1-\alpha_i) \cdot \alpha_i}\right)^2\right\} \pi_i(d s|i,\bm{c},\theta_i,\bm{\theta}_{-i}) \\
&\cdot \kappa_i(d\alpha_i,dt_i^w|\theta_i,\bm{\theta}_{-i}) \bm{\nu}_{-i}^l(d\bm{t}_{-i}^l|\theta_i,\bm{\theta}_{-i}) \bm{f}_{-i}(\bm{\theta}_{-i}) d\bm{\theta}_{-i}\\
& - \underbrace{\int_{\bm{\Theta}_{-i}} \left\{
\begin{gathered}
q_i(\theta_i,\bm{\theta}_{-i}) \int_{\mathcal{T}_i^w} t_i^w\nu_i^{w}(d t_i^w|\theta_i,\bm{\theta}_{-i})\\
+ \left(1 - q_i(\theta_i,\bm{\theta}_{-i})\right)\int_{\mathcal{T}_i^l} t_i^l\nu_i^{l}(d t_i^l|\theta_i,\bm{\theta}_{-i})
\end{gathered}
\right\}\bm{f}_{-i}(\bm{\theta}_{-i}) d\bm{\theta}_{-i}}_{\triangleq P_i(\theta_i)},
\end{split}
\end{equation}
where
\begin{equation}\label{parts_in_contract}
\nu_i^{w}(\cdot|\bm{\theta}) \triangleq \int_0^1 \kappa_i(d\alpha_i,\cdot|\bm{\theta})
\end{equation}
denotes the marginal probability measure of $\kappa_i(\cdot, \cdot|\bm{\theta})$, representing the winner's cash payment rule. Similarly, the winner's value-sharing rule is derived from the linear payment rule as:
\begin{equation}\label{parts_in_contract1}
\sigma_i(\cdot|\bm{\theta}) \triangleq \int_{\mathcal{T}_i^w} \kappa_i(\cdot,dt_i^w|\bm{\theta}).
\end{equation}

Combining \eqref{Spayoff} and \eqref{nnbnbnbn}, we obtain an expression for $P_i(\theta_i)$:
\begin{equation*}
\begin{split}
& \sum_{i\in\mathcal{N}}\int_{\Theta_i} P_i(\theta_i)f_i(\theta_i)d\theta_i \\
= & \sum_{i\in\mathcal{N}} \int_{\bm{\Theta}} q_i(\bm{\theta}) \cdot \int_{\bm{\mathcal{T}}_{-i}^l} \int_{[0,1]\times \mathcal{T}_i^w} \\
&\cdot \int_{S_i} \left\{
\begin{gathered}
\frac{1}{2}\cdot (1-\alpha_i)^2\cdot \left(\theta_i + \frac{(1-\alpha_i) \cdot \alpha_i\cdot  \mathbb{E}\left[\left.\widetilde{\theta}_i\right|\mu(\cdot |i, \bm{c}, s)\right]}{1 - (1-\alpha_i) \cdot \alpha_i}\right)^2\\
- (1-\alpha_i)^2 \cdot \left(\theta_i + \frac{(1-\alpha_i) \cdot \alpha_i\cdot  \mathbb{E}\left[\left.\widetilde{\theta}_i\right|\mu(\cdot |i, \bm{c}, s)\right]}{1 - (1-\alpha_i) \cdot \alpha_i}\right) \cdot \frac{1 - F_i(\theta_i)}{f_i(\theta_i)}
\end{gathered}
\right\}  \pi_i(d s|i,\bm{c},\bm{\theta}) \\
&\cdot \kappa_i(d\alpha_i,dt_i^w|\bm{\theta})\bm{\nu}_{-i}^l(d\bm{t}_{-i}^l|\bm{\theta}) \bm{f}(\bm{\theta}) d \bm{\theta} - \sum_{i\in\mathcal{N}}U_i^{wp}(\underline{\theta}_i).
\end{split}
\end{equation*}

Therefore, the desired result in \eqref{REV-000} follows from:
\begin{equation*}
\begin{split}
\text{Rev}^{wp}(\bm{q}, \bm{c}, \bm{\pi}) = &\sum_{i\in\mathcal{N}}\int_{\Theta_i} P_i(\theta_i)f_i(\theta_i)d\theta_i + \sum_{i\in\mathcal{N}} \int_{\bm{\Theta}} q_i(\bm{\theta}) \cdot \int_{\bm{\mathcal{T}}_{-i}^l} \int_{[0,1]\times \mathcal{T}_i^w} \\
& \cdot\int_{S_i} \left\{
\begin{gathered}
\alpha_i \cdot \left(\theta_i +  \frac{(1-\alpha_i) \cdot \alpha_i\cdot  \mathbb{E}\left[\left.\widetilde{\theta}_i\right|\mu(\cdot |i, \bm{c}, s)\right]}{1 - (1-\alpha_i) \cdot \alpha_i}\right) \\
(1-\alpha_i)\cdot \left(\theta_i + \frac{(1-\alpha_i) \cdot \alpha_i\cdot \mathbb{E}\left[\left.\widetilde{\theta}_i\right|\mu(\cdot |i, \bm{c}, s)\right]}{1 - (1-\alpha_i) \cdot \alpha_i}\right)\\
- \frac{1}{2}\cdot\left(\frac{(1-\alpha_i) \cdot \alpha_i\cdot  \mathbb{E}\left[\left.\widetilde{\theta}_i\right|\mu(\cdot |i, \bm{c}, s)\right]}{1 - (1-\alpha_i) \cdot \alpha_i}\right)^2
\end{gathered}
\right\} \pi_i(d s|i,\bm{c},\bm{\theta}) \\
&\cdot \kappa_i(d\alpha_i,dt_i^w|\bm{\theta})\bm{\nu}_{-i}^l(d\bm{t}_{-i}^l|\bm{\theta}) \bm{f}(\bm{\theta}) d \bm{\theta}.
\end{split}
\end{equation*}

\section{Proof of Lemma \ref{uppbdd} \label{lm3pf}}

Let $\xi(i, d\bm{c}, ds)$ denote the probability measure associated with bidder $i$ winning, where the direct linear mechanism generates payment profile $\bm{c} = (c_i, \bm{t}^l) = (\alpha_i, t_i^w, (t_j^l)_{j\neq i})$ and public signal $s\in S_i$. Conditional on this, from the seller's perspective, bidder types are drawn according to the probability measure $\varrho(\cdot|i, \bm{c}, s)\in \Delta (\bm{\Theta} )$, which implies $\int_{\bm{\Theta}_{-i}}\varrho(d\bm{\theta}|i, \bm{c}, s) = \mu(d\theta_{i}|i, \bm{c}, s)$. The seller's expected equilibrium payoff in \eqref{REV-000} can then be expressed as:
\begin{equation}\label{change-meas}
\begin{split}
& \text{Rev}^{wp}(\bm{q}, \bm{c}, \bm{\pi})\\
= & \sum_{i\in\mathcal{N}} \int_{[0,1]\times \mathcal{T}_i^w \times \bm{\mathcal{T}}_{-i}^l \times S_i} \left\{
\begin{gathered}
\frac{1-\alpha_i^2}{2} \cdot \mathbb{E}\left[\left.\widetilde{\theta}_i^2\right|\mu(\cdot |i, \bm{c}, s)\right] \\
+  \frac{(1-\alpha_i^2) \cdot(1-\alpha_i) \cdot \alpha_i}{1 - (1-\alpha_i) \cdot \alpha_i}\cdot  \mathbb{E}^2\left[\left.\widetilde{\theta}_i\right|\mu(\cdot |i, \bm{c}, s)\right]\\
- \frac{\alpha_i^2}{2}\cdot\left(\frac{(1-\alpha_i) \cdot \alpha_i}{1 - (1-\alpha_i) \cdot \alpha_i}\right)^2 \cdot  \mathbb{E}^2\left[\left.\widetilde{\theta}_i\right|\mu(\cdot |i, \bm{c}, s)\right] \\
- (1-\alpha_i)^2 \\
\cdot \mathbb{E}\left[\left.\widetilde{\theta}_i\right|\mu(\cdot |i, \bm{c}, s)\right] \cdot \mathbb{E}\left[\left.\frac{1 - F_i(\widetilde{\theta}_i)}{f_i(\widetilde{\theta}_i)}\right|\mu(\cdot |i, \bm{c}, s)\right] \\
- \frac{(1-\alpha_i)^2 \cdot (1-\alpha_i) \cdot \alpha_i}{1 - (1-\alpha_i) \cdot \alpha_i} \\
\cdot  \mathbb{E}\left[\left.\widetilde{\theta}_i\right|\mu(\cdot |i, \bm{c}, s)\right] \cdot \mathbb{E}\left[\left.\frac{1 - F_i(\widetilde{\theta}_i)}{f_i(\widetilde{\theta}_i)}\right|\mu(\cdot |i, \bm{c}, s)\right] 
\end{gathered}
\right\} \xi(i, d\bm{c}, ds)\\
& - \sum_{i\in\mathcal{N}}U_i^{wp}(\underline{\theta}_i).
\end{split}
\end{equation}

Given that $\mathbb{E}^2[\widetilde{\theta}_i|\mu(\widetilde{\theta}_i|i, \bm{c}, s)] \leq \mathbb{E}[\widetilde{\theta}_i^2|\mu(\widetilde{\theta}_i|i, \bm{c}, s)]$ by Jensen's inequality, and
\begin{equation*}
\mathbb{E}\left[\left.\frac{F_i(\widetilde{\theta}_i)-1}{f_i(\widetilde{\theta}_i)}\right|\mu(\widetilde{\theta}_i|i, \bm{c}, s)\right] \cdot \mathbb{E}\left[\left.\widetilde{\theta}_i\right|\mu(\widetilde{\theta}_i|i, \bm{c}, s)\right] \leq \mathbb{E}\left[\left.\frac{F_i(\widetilde{\theta}_i)-1}{f_i(\widetilde{\theta}_i)} \cdot \widetilde{\theta}_i\right|\mu(\widetilde{\theta}_i|i, \bm{c}, s)\right],
\end{equation*}
which follows from the covariance inequality for monotonic functions \citep[Theorem 2.2]{SCHMIDT201491}, given that the expectations $\mathbb{E}[\widetilde{\theta}_i|\mu(\widetilde{\theta}_i|i, \bm{c}, s)]$ and $\mathbb{E}[\frac{1-F_i(\widetilde{\theta}_i)}{f_i(\widetilde{\theta}_i)}|\mu(\widetilde{\theta}_i|i, \bm{c}, s)]$ are finite, and that $\frac{F_i(\widetilde{\theta}_i)-1}{f_i(\widetilde{\theta}_i)}$ is increasing due to the log-concavity of $f_i(\cdot)$. Moreover, individual rationality implies $\sum_{i \in \mathcal{N}} U_i^{wp}(\underline{\theta}_i) \geq 0$ since $U_i^{wp}(\theta_i) \geq 0$ for all $\theta_i \in \Theta_i$. We therefore obtain the following inequality:
\begin{equation}\label{winnerlead-inequa}
\begin{split}
& \text{Rev}^{wp}(\bm{q}, \bm{c}, \bm{\pi}) \\
\leq & \sum_{i\in\mathcal{N}} \int_{[0,1]\times \mathcal{T}_i^w \times \bm{\mathcal{T}}_{-i}^l \times S_i}\\
& \left\{
\begin{gathered}
\underbrace{\left(\frac{1-\alpha_i^2}{2} + \frac{(1-\alpha_i^2) \cdot (1-\alpha_i) \cdot \alpha_i}{1 - (1-\alpha_i) \cdot \alpha_i} - \frac{\alpha_i^2}{2} \cdot \frac{(1-\alpha_i)^2 \cdot \alpha_i^2}{(1 - (1-\alpha_i) \cdot \alpha_i)^2}\right)}_{\text{non-negative}} \\
\cdot \mathbb{E}\left[\left.\widetilde{\theta}_i^2\right|\mu(\cdot|i, \bm{c}, s)\right] \\
- \underbrace{\left((1-\alpha_i)^2 + \frac{(1-\alpha_i)^2 \cdot (1-\alpha_i) \cdot \alpha_i}{1 - (1-\alpha_i) \cdot \alpha_i} \right)}_{\text{non-negative}} \\
\cdot \mathbb{E}\left[\left.\frac{1 - F_i(\widetilde{\theta}_i)}{f_i(\widetilde{\theta}_i)}\cdot \widetilde{\theta}_i \right|\mu(\cdot|i, \bm{c}, s)\right]
\end{gathered}
\right\} \xi(i, d\bm{c}, ds)\\
= & \sum_{i\in\mathcal{N}} \int_{[0,1]\times \mathcal{T}_i^w \times \bm{\mathcal{T}}_{-i}^l \times S_i} \int_{\bm{\Theta}} \left\{
\begin{gathered}
\frac{1 - 2\alpha_i \cdot(1 - \alpha_i)\cdot \alpha_i - \alpha_i^4}{2(1 - (1-\alpha_i) \cdot \alpha_i)^2} \cdot \theta_i^2 \\
- \frac{(1-\alpha_i)^2}{1 - (1-\alpha_i) \cdot \alpha_i} \cdot \frac{1-F_i(\theta_i)}{f_i(\theta_i)}\cdot \theta_i
\end{gathered}
\right\} \varrho(d\bm{\theta}|i, \bm{c}, s)\xi(i, d\bm{c}, ds)\\
= & \sum_{i\in\mathcal{N}}\int_{\bm{\Theta}} q_i(\bm{\theta}) \cdot \int_{\bm{\mathcal{T}}_{-i}^l} \int_{[0,1]\times \mathcal{T}_i^w} \\
& \cdot\int_{S_i} \left\{
\begin{gathered}
\frac{1 - 2\alpha_i \cdot(1 - \alpha_i)\cdot \alpha_i - \alpha_i^4}{2(1 - (1-\alpha_i) \cdot \alpha_i)^2} \cdot \theta_i^2 \\
- \frac{(1-\alpha_i)^2}{1 - (1-\alpha_i) \cdot \alpha_i} \cdot \frac{1-F_i(\theta_i)}{f_i(\theta_i)}\cdot \theta_i
\end{gathered}
\right\} \pi_i(d s|i,\bm{c},\bm{\theta}) \\
&\cdot \kappa_i(d\alpha_i,dt_i^w|\bm{\theta}) \bm{\nu}_{-i}^l(d\bm{t}_{-i}^l|\bm{\theta}) \bm{f}(\bm{\theta}) d \bm{\theta}\\
= & \sum_{i\in\mathcal{N}}\int_{\bm{\Theta}} q_i(\bm{\theta}) \cdot \int_{0}^1 \left\{
\begin{gathered}
\underbrace{\frac{1 - 2\alpha_i \cdot(1 - \alpha_i)\cdot \alpha_i - \alpha_i^4}{2(1 - (1-\alpha_i) \cdot \alpha_i)^2}}_{\triangleq A^{wp}(\alpha_i)} \cdot \theta_i^2 \\
- \underbrace{\frac{(1-\alpha_i)^2}{1 - (1-\alpha_i) \cdot \alpha_i}}_{\triangleq B^{wp}(\alpha_i)} \cdot \frac{1-F_i(\theta_i)}{f_i(\theta_i)}\cdot \theta_i
\end{gathered}
\right\}\sigma_i(d\alpha_i|\bm{\theta}) \bm{f}(\bm{\theta}) d\bm{\theta},
\end{split}
\end{equation}
where the equalities follow from a change of measure and the marginal probability measure $\sigma_i(\cdot|\bm{\theta})$ defined in \eqref{parts_in_contract1}. 

We focus on the maximization problem with respect to $\sigma_i(\cdot|\bm{\theta})$. For any type profile $\bm{\theta}\in \bm{\Theta}$, we solve:
\begin{equation}\label{firstABdef}
\max_{\sigma_i(d\alpha_i|\bm{\theta})}\int_{0}^1 \left\{A^{wp}(\alpha_i) \cdot \theta_i^2 - B^{wp}(\alpha_i) \cdot \frac{1-F_i(\theta_i)}{f_i(\theta_i)}\cdot \theta_i\right\} \sigma_i(d\alpha_i|\bm{\theta}).
\end{equation}

The maximizer, denoted $\sigma_i^{wp-\max}(\cdot|\bm{\theta}) \in \Delta([0,1])$, is a Dirac measure that places unit mass on $\alpha_i^{wp-\max} = \alpha_i^{wp-\max}(\theta_i)$, where
\begin{equation*}
\alpha_i^{wp-\max}(\theta_i) \in \argmax_{\alpha_i\in[0,1]} \Psi_i^{wp}(\alpha_i,\theta_i) \triangleq \argmax_{\alpha_i\in[0,1]} \left\{A^{wp}(\alpha_i) \cdot \theta_i^2 - B^{wp}(\alpha_i) \cdot \frac{1-F_i(\theta_i)}{f_i(\theta_i)}\cdot \theta_i\right\}.
\end{equation*}

We define the corresponding maximal value as bidder $i$'s virtual surplus:
\begin{equation}\label{phiphiphi-type2-pf}
\phi_i^{wp}(\theta_i) \triangleq \int_{0}^1 \Psi_i^{wp}(\alpha_i,\theta_i)\sigma_i^{wp-\max}(d\alpha_i|\bm{\theta}) = \Psi_i^{wp}(\alpha_i^{wp-\max},\theta_i),
\end{equation}
which matches the expression in \eqref{Lmm1-Psi}. For any feasible direct linear mechanism $(\bm{q}, \bm{c}, \bm{\pi})$, the seller's expected payoff under winner-pivotal collaboration is therefore bounded above by $\overline{\text{Rev}}^{wp}(\bm{q}, \bm{c}, \bm{\pi})$, characterized by:
\begin{equation*}
\begin{split}
\overline{\text{Rev}}^{wp}(\bm{q}, \bm{c}, \bm{\pi}) \triangleq & \max_{\substack{0\leq q_i(\bm{\theta}) \leq 1;\\ \sum_{i\in\mathcal{N}} q_i(\bm{\theta}) \leq 1}} \sum_{i\in\mathcal{N}}\int_{\bm{\Theta}} q_i(\bm{\theta}) \cdot \phi^{wp}(\theta_i) \bm{f}(\bm{\theta}) d\bm{\theta}\\
= & \int_{\bm{\Theta}} \max\left\{\phi_1^{wp}(\theta_1), \cdots, \phi_i^{wp}(\theta_i),\cdots,\phi_n^{wp}(\theta_n)\right\} \bm{f}(\bm{\theta}) d\bm{\theta}.
\end{split}
\end{equation*}

\section{Proof of Lemma \ref{Lm-characterize-alphamax} \label{Lm-characterize-alphamax-PF}}

\paragraph{Interior maximizer.} The maximizer $\alpha_i^{wp-\max} = \alpha_i^{wp-\max}(\theta_i)$ in \eqref{alphamax111-type2} is the solution to the following constrained optimization problem:
\begin{equation*}
\max_{\alpha_i} \left\{A^{wp}(\alpha_i) \cdot \theta_i^2 - B^{wp}(\alpha_i) \cdot \frac{1-F_i(\theta_i)}{f_i(\theta_i)}\cdot \theta_i\right\}, \;\; \text{s.t. } \;\; 1 - \alpha_i \geq 0\;\;\text{ and } \;\; \alpha_i \geq 0.
\end{equation*}

The Kuhn-Tucker conditions give:
\begin{equation}\label{KKT}
\begin{cases}
\frac{1 - 4\alpha_i^{wp-\max} + 3(\alpha_i^{wp-\max})^2 - (\alpha_i^{wp-\max})^3}{(1 - \alpha_i^{wp-\max} + (\alpha_i^{wp-\max})^2)^3} \cdot \theta_i^2 + \frac{1 - (\alpha_i^{wp-\max})^2}{(1 - \alpha_i^{wp-\max} + (\alpha_i^{wp-\max})^2)^2} \cdot \frac{1-F_i(\theta_i)}{f_i(\theta_i)}\cdot \theta_i - \lambda_1 + \lambda_2 = 0;\\
\lambda_1 \cdot (1 - \alpha_i^{wp-\max}) = 0,\;\; \lambda_2 \cdot \alpha_i^{wp-\max} = 0;\\
1 - \alpha_i^{wp-\max} \geq 0, \;\; \alpha_i^{wp-\max} \geq 0;\\
\lambda_1 \geq 0, \;\; \lambda_2 \geq 0.
\end{cases}
\end{equation}

Applying the complementary slackness conditions: (1) if $\lambda_1 > 0$ and $\lambda_2 = 0$, then $\alpha_i^{wp-\max} = 1$, which yields $-\theta_i^2 - \lambda_1 = 0$, contradicting $\lambda_1 > 0$; (2) if $\lambda_1 = 0$ and $\lambda_2 > 0$, then $\alpha_i^{wp-\max} = 0$, which yields $\theta_i^2 + \frac{1-F_i(\theta_i)}{f_i(\theta_i)} \cdot \theta_i + \lambda_2 = 0$, contradicting $\lambda_2 > 0$. Hence, $\alpha_i^{wp-\max}$ cannot be a corner solution.

\paragraph{Decreasing.} The interior solution $\alpha_i^{wp-\max} \in (0,1)$ implies that $\lambda_1 = 0$ and $\lambda_2 = 0$ in \eqref{KKT}, yielding:
\begin{equation*}
\frac{1 - 4\alpha_i^{wp-\max} + 3(\alpha_i^{wp-\max})^2 - (\alpha_i^{wp-\max})^3}{(1 - \alpha_i^{wp-\max} + (\alpha_i^{wp-\max})^2)^3} \cdot \theta_i^2 + \frac{1 - (\alpha_i^{wp-\max})^2}{(1 - \alpha_i^{wp-\max} + (\alpha_i^{wp-\max})^2)^2} \cdot \frac{1-F_i(\theta_i)}{f_i(\theta_i)}\cdot \theta_i = 0,
\end{equation*}
which gives the following stationary condition for the interior maximizer:
\begin{equation}\label{CL-decrease-OptAlpha-00}
\frac{(\alpha_i^{wp-\max})^3 - 3(\alpha_i^{wp-\max})^2 + 4\alpha_i^{wp-\max} - 1}{(1 - \alpha_i^{wp-\max} + (\alpha_i^{wp-\max})^2)\cdot (1 - (\alpha_i^{wp-\max})^2)} = \frac{1-F_i(\theta_i)}{\theta_i \cdot f_i(\theta_i)}.
\end{equation}

Since $\frac{1-F_i(\theta_i)}{\theta_i \cdot f_i(\theta_i)} \geq 0$, we have:
\begin{equation}\label{CL-decrease-OptAlpha-01}
(\alpha_i^{wp-\max})^3 - 3(\alpha_i^{wp-\max})^2 + 4\alpha_i^{wp-\max} - 1 \geq 0,
\end{equation}
where the equality in \eqref{CL-decrease-OptAlpha-01} holds only if $\theta_i = \overline{\theta}_i$ since $\frac{1-F_i(\overline{\theta}_i)}{\overline{\theta}_i \cdot f_i(\overline{\theta}_i)} = 0$. For all $\theta_i < \overline{\theta}_i$, taking the logarithm of \eqref{CL-decrease-OptAlpha-00} yields:
\begin{equation}\label{take-log-interiormax}
\left(
\begin{gathered}
\ln\left((\alpha_i^{wp-\max})^3 - 3(\alpha_i^{wp-\max})^2 + 4\alpha_i^{wp-\max} - 1\right) \\
- \ln\left(1 - \alpha_i^{wp-\max} + (\alpha_i^{wp-\max})^2\right) \\
- \ln\left(1 - \alpha_i^{wp-\max}\right) - \ln\left(1 +  \alpha_i^{wp-\max}\right)
\end{gathered}
\right) = \ln\left(\frac{1-F_i(\theta_i)}{f_i(\theta_i)}\right) - \ln\left(\theta_i\right).
\end{equation}

Differentiating both sides with respect to $\theta_i$ yields:
\begin{equation}\label{CL-decrease-OptAlpha-01-01}
\begin{split}
& \left(
\begin{gathered}
\underbrace{\frac{3(\alpha_i^{wp-\max})^2 - 6 \alpha_i^{wp-\max} + 4}{(\alpha_i^{wp-\max})^3 - 3(\alpha_i^{wp-\max})^2 + 4\alpha_i^{wp-\max} - 1} - \frac{2\alpha_i^{wp-\max} - 1}{1 - \alpha_i^{wp-\max} + (\alpha_i^{wp-\max})^2}}_{\text{term 1}} \\
+ \underbrace{\frac{1}{1 - \alpha_i^{wp-\max}} - \frac{1}{1 + \alpha_i^{wp-\max}}}_{\text{term 2}}
\end{gathered}
\right)\cdot \frac{d \alpha_i^{wp-\max}(\theta_i)}{d\theta_i} \\
= & \frac{f_i(\theta_i)}{1-F_i(\theta_i)} \cdot \frac{d}{d \theta_i}\left(\frac{1-F_i(\theta_i)}{f_i(\theta_i)}\right) - \frac{1}{\theta_i} < 0,
\end{split}
\end{equation}
where the right-hand side of \eqref{CL-decrease-OptAlpha-01-01} is negative because $\frac{1-F_i(\theta_i)}{f_i(\theta_i)}$ decreases in $\theta_i$ under the log-concavity of $f_i(\cdot)$. For the left-hand side, term 2 is positive and
\begin{equation*}
\text{term 1} = \frac{(1 - \alpha_i^{wp-\max})\cdot(3 - \alpha_i^{wp-\max} + (1 - \alpha_i^{wp-\max})\cdot (\alpha_i^{wp-\max})^2)}{((\alpha_i^{wp-\max})^3 - 3(\alpha_i^{wp-\max})^2 + 4\alpha_i^{wp-\max} - 1) \cdot (1 - \alpha_i^{wp-\max} + (\alpha_i^{wp-\max})^2)} > 0,
\end{equation*}
by \eqref{CL-decrease-OptAlpha-01}. Therefore, $\frac{d \alpha_i^{wp-\max}(\theta_i)}{d\theta_i} < 0$.

\paragraph{Uniqueness.} The second-order partial derivative with respect to $\alpha_i$ in \eqref{Lmm1-Psi} is:
\begin{equation}\label{SOpartial}
\frac{\partial^2 \Psi_i^{wp}(\alpha_i, \theta_i)}{\partial \alpha_i^2} = \frac{\theta_i^2}{(1 - \alpha_i + \alpha_i^2)^4} \cdot \underbrace{\left(
\begin{gathered}
(2 - 8\alpha_i + 8\alpha_i^2 - 4\alpha_i^3 - 2\alpha_i^4 + 2\alpha_i^5) \cdot \frac{1 - F_i(\theta_i)}{\theta_i \cdot f_i(\theta_i)} \\
+ (-1 - 8\alpha_i + 20\alpha_i^2 - 12\alpha_i^3 + 3\alpha_i^4)
\end{gathered}
\right)}_{\triangleq \Gamma_i(\alpha_i, \theta_i)}.
\end{equation}

Applying the stationary condition \eqref{CL-decrease-OptAlpha-00}, we evaluate \eqref{SOpartial} at $\alpha_i = \alpha_i^{wp-\max}$ as:
\begin{equation}\label{alphamaxinconcave}
\begin{split}
\left. \frac{\partial^2 \Psi_i^{wp}(\alpha_i, \theta_i)}{\partial \alpha_i^2} \right|_{\alpha_i = \alpha_i^{wp-\max}} = & -\frac{(\alpha_i^{wp-\max})^3 - 3(\alpha_i^{wp-\max})^2 + 4\alpha_i^{wp-\max} - 1}{(1 - \alpha_i^{wp-\max} + (\alpha_i^{wp-\max})^2)^3 \cdot (1 - (\alpha_i^{wp-\max})^2)}\\ & \cdot \left(
\begin{gathered}
\frac{3(\alpha_i^{wp-\max})^2 - 6\alpha_i^{wp-\max} + 4}{(\alpha_i^{wp-\max})^3 - 3(\alpha_i^{wp-\max})^2 + 4\alpha_i^{wp-\max} - 1} \\
- \frac{2\alpha_i^{wp-\max} - 1}{1 - \alpha_i^{wp-\max} + (\alpha_i^{wp-\max})^2} \\
+ \frac{2\alpha_i^{wp-\max}}{1 - (\alpha_i^{wp-\max})^2}
\end{gathered}
\right) \cdot \theta_i^2 \leq 0,
\end{split}
\end{equation}
where the inequality holds by \eqref{CL-decrease-OptAlpha-01} and the positivity of the bracketed terms in \eqref{CL-decrease-OptAlpha-01-01}. Thus, \eqref{alphamaxinconcave} establishes the local concavity at the interior maximizer $\alpha_i^{wp-\max}$.

For global uniqueness, from \eqref{SOpartial} we compute:
\begin{equation*}
\begin{split}
\frac{\partial^2 \Gamma_i(\alpha_i, \theta_i)}{\partial \alpha_i^2} = & (16 - 24\alpha_i - 24\alpha_i^2 + 40\alpha_i^3) \cdot \frac{1 - F_i(\theta_i)}{\theta_i \cdot f_i(\theta_i)} + (40 - 72\alpha_i + 36\alpha_i^2)\\
= & 8\underbrace{(2 - 3\alpha_i - 3\alpha_i^2 + 5\alpha_i^3)}_{\triangleq \Lambda(\alpha_i)} \cdot \frac{1 - F_i(\theta_i)}{\theta_i \cdot f_i(\theta_i)} + 36(1-\alpha_i)^2 + 4.
\end{split}
\end{equation*}

Since $\Lambda(\alpha_i) \geq \Lambda(\frac{1+\sqrt{6}}{5}) > 0$ for all $\alpha_i \in [0,1]$, $\Gamma_i(\alpha_i, \theta_i)$ is strictly convex in $\alpha_i$. Thus,
\begin{equation*}
\frac{\partial \Gamma_i(\alpha_i, \theta_i)}{\partial \alpha_i} = (-8 + 16\alpha_i - 12\alpha_i^2 - 8\alpha_i^3 + 10\alpha_i^4) \cdot \frac{1 - F_i(\theta_i)}{\theta_i \cdot f_i(\theta_i)} + (-8 + 40\alpha_i - 36\alpha_i^2 + 12\alpha_i^3)
\end{equation*}
increases in $\alpha_i$.

(1) Case 1. For $0 \leq \frac{1 - F_i(\theta_i)}{\theta_i \cdot f_i(\theta_i)} \leq \frac{1}{2}$, we have:
\begin{equation}\label{cornerrrrrr}
\left. \frac{\partial \Gamma_i(\alpha_i, \theta_i)}{\partial \alpha_i} \right|_{\alpha_i = 0} = -8 \cdot \frac{1 - F_i(\theta_i)}{\theta_i \cdot f_i(\theta_i)} - 8 < 0; \; \left. \frac{\partial \Gamma_i(\alpha_i, \theta_i)}{\partial \alpha_i} \right|_{\alpha_i = 1} = -2 \cdot \frac{1 - F_i(\theta_i)}{\theta_i \cdot f_i(\theta_i)} + 8 > 0,
\end{equation}
thus, $\frac{\partial \Gamma_i(\alpha_i, \theta_i)}{\partial \alpha_i}$ has a unique real root; that is, there exists a unique $\alpha_i^0 \in (0,1)$ such that $\frac{\partial \Gamma_i(\alpha_i, \theta_i)}{\partial \alpha_i}|_{\alpha_i = \alpha_i^0} = 0$. Hence, $\Gamma_i(\alpha_i, \theta_i)$ is decreasing on $[0,\alpha_i^0]$, which implies $\Gamma_i(\alpha_i, \theta_i) < 0$ for any $\alpha_i \in (0,\alpha_i^0)$ since $\Gamma_i(\alpha_i^0, \theta_i) < \Gamma_i(0, \theta_i) = 2 \cdot \frac{1 - F_i(\theta_i)}{\theta_i \cdot f_i(\theta_i)} - 1 < 0$ by the definition of $\Gamma_i(\alpha_i, \theta_i)$ in \eqref{SOpartial}. Meanwhile, $\Gamma_i(\alpha_i, \theta_i)$ is increasing on $[\alpha_i^0,1]$. Combined with $\Gamma_i(1, \theta_i) = - 2 \cdot \frac{1 - F_i(\theta_i)}{\theta_i \cdot f_i(\theta_i)} + 2 > 0$, this implies that $\Gamma_i(\alpha_i, \theta_i)$ has a unique real root on $[\alpha_i^0,1]$, denoted $\hat{\alpha}_i^0$. Therefore, $\Gamma_i(\alpha_i, \theta_i) < 0$ for any $\alpha_i \in (\alpha_i^0,\hat{\alpha}_i^0)$ and $\Gamma_i(\alpha_i, \theta_i) > 0$ for any $\alpha_i \in (\hat{\alpha}_i^0,1)$. From the second-order partial derivative in \eqref{SOpartial}, we have:
\begin{equation*}
\frac{\partial^2 \Psi_i^{wp}(\alpha_i, \theta_i)}{\partial \alpha_i^2} = \frac{\theta_i^2}{(1 - \alpha_i + \alpha_i^2)^4} \cdot \Gamma_i(\alpha_i, \theta_i) \begin{cases}
\leq 0, & \text{for any } \alpha_i \in [0, \hat{\alpha}_i^0]\\
\geq 0, & \text{for any } \alpha_i \in [\hat{\alpha}_i^0, 1]
\end{cases}.
\end{equation*}

Note that the maximum of $\Psi_i^{wp}(\alpha_i, \theta_i)$ on $[\hat{\alpha}_i^0, 1]$ is attained at the corner since $\Psi_i^{wp}(\alpha_i, \theta_i)$ is convex on $[\hat{\alpha}_i^0, 1]$. Given the concavity of $\Psi_i^{wp}(\alpha_i, \theta_i)$ on $[0, \hat{\alpha}_i^0]$ and $\alpha_i^{wp-\max} \in [0, \hat{\alpha}_i^0]$ by \eqref{alphamaxinconcave}, to verify that $\alpha_i^{wp-\max} \in [0, 1]$ is the global unique maximizer, we need only compare the corresponding objective value to that at the corner $\alpha_i = 1$. Specifically, we have $\Psi_i^{wp}(\alpha_i^{wp-\max}, \theta_i) \geq \Psi_i^{wp}(0, \theta_i) = (\frac{1}{2} - \frac{1 - F_i(\theta_i)}{\theta_i \cdot f_i(\theta_i)}) \cdot \theta_i^2 \geq 0 = \Psi_i^{wp}(1, \theta_i)$, where the first inequality follows from the concavity of $\Psi_i^{wp}(\alpha_i, \theta_i)$ on $[0, \hat{\alpha}_i^0]$.

(2) Case 2. For $\frac{1}{2} \leq \frac{1 - F_i(\theta_i)}{\theta_i \cdot f_i(\theta_i)} \leq 1$, the relations in \eqref{cornerrrrrr} still hold. Thus, we obtain the same results as in Case 1: $\frac{\partial \Gamma_i(\alpha_i, \theta_i)}{\partial \alpha_i}$ has a unique real root $\alpha_i^0 \in (0, 1)$, and $\Gamma_i(\alpha_i, \theta_i)$ is decreasing on $[0, \alpha_i^0]$ and increasing on $[\alpha_i^0, 1]$. From \eqref{SOpartial} and \eqref{alphamaxinconcave}, we have:
\begin{equation*}
\left. \frac{\partial^2 \Psi_i^{wp}(\alpha_i, \theta_i)}{\partial \alpha_i^2} \right|_{\alpha_i = \alpha_i^{wp-\max}} = \frac{\theta_i^2}{(1 - \alpha_i^{wp-\max} + (\alpha_i^{wp-\max})^2)^4} \cdot \Gamma_i(\alpha_i^{wp-\max}, \theta_i) \leq 0,
\end{equation*}
which implies $\Gamma_i(\alpha_i^{wp-\max}, \theta_i) \leq 0$. The minimum value $\Gamma_i(\alpha_i^0, \theta_i)$ satisfies:
\begin{equation}\label{alpha0negative}
\Gamma_i(\alpha_i^0, \theta_i) \leq \Gamma_i(\alpha_i^{wp-\max}, \theta_i) \leq 0.
\end{equation}

Given that $\Gamma_i(0, \theta_i) = 2 \cdot \frac{1 - F_i(\theta_i)}{\theta_i \cdot f_i(\theta_i)} - 1 > 0$ and $\Gamma_i(1, \theta_i) = - 2 \cdot \frac{1 - F_i(\theta_i)}{\theta_i \cdot f_i(\theta_i)} + 2 > 0$, there exist two real roots, $\bar{\alpha}_i^0 \in (0,\alpha_i^0)$ and $\hat{\alpha}_i^0 \in (\alpha_i^0,1)$, such that:
\begin{equation*}
\begin{split}
& \Gamma_i(\alpha_i, \theta_i) \begin{cases}
> 0, & \text{for any } \alpha_i \in [0, \bar{\alpha}_i^0) \cup (\hat{\alpha}_i^0,1]\\
< 0, & \text{for any } \alpha_i \in (\bar{\alpha}_i^0, \hat{\alpha}_i^0)
\end{cases}\\
\Longrightarrow \; & \frac{\partial^2 \Psi_i^{wp}(\alpha_i, \theta_i)}{\partial \alpha_i^2} = \frac{\theta_i^2}{(1 - \alpha_i + \alpha_i^2)^4} \cdot \Gamma_i(\alpha_i, \theta_i) \begin{cases}
\geq 0, & \text{for any } \alpha_i \in [0, \bar{\alpha}_i^0] \cup [\hat{\alpha}_i^0,1]\\
\leq 0, & \text{for any } \alpha_i \in [\bar{\alpha}_i^0, \hat{\alpha}_i^0]
\end{cases}.
\end{split}
\end{equation*}

To establish the global unique maximizer of $\alpha_i^{wp-\max} \in [0, 1]$, we compare the corresponding objective values: $\Psi_i^{wp}(\alpha_i^{wp-\max}, \theta_i)$, $\Psi_i^{wp}(0, \theta_i)$, and $\Psi_i^{wp}(1, \theta_i)$. Given the objective function in \eqref{Lmm1-Psi} and the stationary condition \eqref{CL-decrease-OptAlpha-00}, we show that $\Psi_i^{wp}(\alpha_i^{wp-\max}, \theta_i) > 0$:
\begin{equation}\label{Psimaxpositive}
\begin{split}
& \Psi_i^{wp}(\alpha_i^{wp-\max}, \theta_i)\\
= & \frac{(1 - \alpha_i^{wp-\max}) \cdot \left((3 + (\alpha_i^{wp-\max})^2) \cdot (1 - \alpha_i^{wp-\max})^2 + 2(\alpha_i^{wp-\max})^2 \right)}{2 \left( 1 - \alpha_i^{wp-\max} + (\alpha_i^{wp-\max})^2 \right)^2 \cdot (1 + \alpha_i^{wp-\max})} \cdot \theta_i^2 > 0.
\end{split}
\end{equation}

Therefore, we establish that $\Psi_i^{wp}(\alpha_i^{wp-\max}, \theta_i)>\Psi_i^{wp}(0, \theta_i),\Psi_i^{wp}(1, \theta_i)$, since $\Psi_i^{wp}(0, \theta_i) = (\frac{1}{2} - \frac{1 - F_i(\theta_i)}{\theta_i \cdot f_i(\theta_i)}) \cdot \theta_i^2 < 0$ and $\Psi_i^{wp}(1, \theta_i) = 0$.

(3) Case 3. For $\frac{1 - F_i(\theta_i)}{\theta_i \cdot f_i(\theta_i)} > 1$, following the analysis in the previous two cases, there exists a unique real root $\alpha_i^0 \in (0, 1)$ such that $\Gamma_i(\alpha_i, \theta_i)$ decreases on $[0, \alpha_i^0]$ and increases on $[\alpha_i^0, 1]$. Given inequalities \eqref{alpha0negative}, and that $\Gamma_i(0, \theta_i) = 2 \cdot \frac{1 - F_i(\theta_i)}{\theta_i \cdot f_i(\theta_i)} - 1 > 0$ and $\Gamma_i(1, \theta_i) = - 2 \cdot \frac{1 - F_i(\theta_i)}{\theta_i \cdot f_i(\theta_i)} + 2 < 0$, there exists a unique real root $\hat{\alpha}_i^0 \in (0, \alpha_i^0)$ such that:
\begin{equation*}
\begin{split}
& \Gamma_i(\alpha_i, \theta_i) \begin{cases}
> 0, & \text{for any } \alpha_i \in [0, \hat{\alpha}_i^0)\\
< 0, & \text{for any } \alpha_i \in (\hat{\alpha}_i^0,1]
\end{cases}\\
\Longrightarrow \; & \frac{\partial^2 \Psi_i^{wp}(\alpha_i, \theta_i)}{\partial \alpha_i^2} = \frac{\theta_i^2}{(1 - \alpha_i + \alpha_i^2)^4} \cdot \Gamma_i(\alpha_i, \theta_i) \begin{cases}
\geq 0, & \text{for any } \alpha_i \in [0, \hat{\alpha}_i^0]\\
\leq 0, & \text{for any } \alpha_i \in [\hat{\alpha}_i^0,1]
\end{cases}.
\end{split}
\end{equation*}

Hence, to establish the global unique maximizer $\alpha_i^{wp-\max} \in [0, 1]$, we need only compare the corresponding objective value to that at the corner. Specifically, we have $\Psi_i^{wp}(\alpha_i^{wp-\max}, \theta_i) > \Psi_i^{wp}(0, \theta_i)$ since $\Psi_i^{wp}(\alpha_i^{wp-\max}, \theta_i) > 0$ by \eqref{Psimaxpositive} and $\Psi_i^{wp}(0, \theta_i) = (\frac{1}{2} - \frac{1 - F_i(\theta_i)}{\theta_i \cdot f_i(\theta_i)}) \cdot \theta_i^2 < 0$.

\section{Proof of Lemma \ref{increasing-of-surplus} \label{PFF-increasing-of-surplus}}

Applying the general envelope theorem, we differentiate \eqref{phiphiphi-type2-pf} with respect to $\theta_i$ to obtain:
\begin{equation*}
\begin{split}
\frac{d \phi_i^{wp}(\theta_i)}{d \theta_i} = & A^{wp}(\alpha_i^{wp-\max}) \cdot  2 \theta_i - B^{wp}(\alpha_i^{wp-\max}) \cdot \left(\frac{1-F_i(\theta_i)}{f_i(\theta_i)} + \frac{d}{d \theta_i} \left(\frac{1-F_i(\theta_i)}{f_i(\theta_i)}\right) \cdot \theta_i\right).
\end{split}
\end{equation*}
The term $B^{wp}(\alpha_i^{wp-\max})\cdot \frac{d}{d \theta_i} (\frac{1-F_i(\theta_i)}{f_i(\theta_i)}) \cdot \theta_i$ is negative because $B^{wp}(\alpha_i^{wp-\max}) > 0$ (since $\alpha_i^{wp-\max}$ is an interior maximizer) and $\frac{1-F_i(\theta_i)}{f_i(\theta_i)}$ is decreasing in $\theta_i$. For the remaining terms, using the stationary condition for the interior maximizer \eqref{CL-decrease-OptAlpha-00}, we obtain that $A^{wp}(\alpha_i^{wp-\max}) \cdot 2\theta_i - B^{wp}(\alpha_i^{wp-\max})\cdot \frac{1-F_i(\theta_i)}{f_i(\theta_i)}$ equals:
\begin{equation*}
\frac{\left(1-\alpha_i^{wp-\max}\right)\cdot \left(2 - 2\alpha_i^{wp-\max} + 3(\alpha_i^{wp-\max})^2 -(\alpha_i^{wp-\max})^3 + (\alpha_i^{wp-\max})^4\right) \cdot \theta_i}{(1-\alpha_i^{wp-\max}+(\alpha_i^{wp-\max})^2)^2 \cdot (1+\alpha_i^{wp-\max})} > 0.
\end{equation*}
This establishes that $\frac{d \phi_i^{wp}(\theta_i)}{d \theta_i} > 0$. It remains to show that $\phi_i^{wp}(\theta_i) \geq 0$ for each $i\in\mathcal{N}$ and any $\theta_i\in\Theta_i$. By the definition of virtual surplus in \eqref{phiphiphi-type2-pf}, the result follows from:
\begin{equation}\label{non-negateive-phiWL}
\phi_i^{wp}(\theta_i) = \max_{\alpha_i \in [0,1]} \Psi_i^{wp}(\alpha_i,\theta_i) \geq \Psi_i^{wp}(1,\theta_i) = 0,
\end{equation}
which holds for any $\theta_i\in\Theta_i$.

\section{Proof of Theorem \ref{ThmCompl1-type2} \label{APPEEE}}

\paragraph{Feasibility.} Under the mechanism $(\bm{q}^{wp^*}, \bm{c}^{wp^*}, \bm{\pi}^{wp^*})$, using the bidder's expected payoff in \eqref{Uireporting}, we derive bidder $i$'s expected payoff from truth-telling as follows:
\begin{equation}\label{sdssdsdsssssssss}
U_i^{wp}(\theta_i) = \int_{\bm{\Theta}_{-i}}\int_{\underline{\theta}_i}^{\theta_i} \left\{
\begin{gathered}
q_i^{wp^*}(\tau,\bm{\theta}_{-i}) \cdot (1-\alpha_i^{wp-\max}(\tau))^2 \\
\cdot \left(\tau + \frac{(1-\alpha_i^{wp-\max}(\tau)) \cdot \alpha_i^{wp-\max}(\tau)\cdot \tau}{1 - (1-\alpha_i^{wp-\max}(\tau)) \cdot \alpha_i^{wp-\max}(\tau)}\right)
\end{gathered}
\right\}d\tau\bm{f}_{-i}(\bm{\theta}_{-i}) d\bm{\theta}_{-i}.
\end{equation}

Note that $U_i^{wp}(\theta_i)$ is increasing in $\theta_i$ and satisfies individual rationality: $U_i^{wp}(\theta_i) \geq U_i^{wp}(\underline{\theta}_i) = 0$. To establish incentive compatibility, according to Definition \ref{d1}, it suffices to establish the following inequality based on \eqref{sdssdsdsssssssss}:
\begin{equation}\label{GIC-2-type2}
\begin{split}
& U_i^{wp}(\theta_i) - U_{i}^{wp}(\hat{\theta}_{i},\theta_{i})\\
= & U_i^{wp}(\theta_i) - U_i^{wp}(\hat{\theta}_i) + U_i^{wp}(\hat{\theta}_i) - U_{i}^{wp}(\hat{\theta}_{i},\theta_{i})\\
= & \int_{\bm{\Theta}_{-i}} \underbrace{\left\{
\begin{gathered}
\int_{\hat{\theta}_i}^{\theta_i} \left\{
\begin{gathered}
q_i^{wp^*}(\tau,\bm{\theta}_{-i}) \cdot (1-\alpha_i^{wp-\max}(\tau))^2\\
\cdot \left(\tau + \frac{(1-\alpha_i^{wp-\max}(\tau)) \cdot \alpha_i^{wp-\max}(\tau)\cdot \tau}{1 - (1-\alpha_i^{wp-\max}(\tau)) \cdot \alpha_i^{wp-\max}(\tau)}\right)
\end{gathered}
\right\} d\tau \\
- q_i^{wp^*}(\hat{\theta}_i,\bm{\theta}_{-i}) \cdot \frac{(1-\alpha_i^{wp-\max}(\hat{\theta}_i))^2}{2} \\
\cdot \left(\theta_i + \hat{\theta}_i + 2\cdot\frac{(1-\alpha_i^{wp-\max}(\hat{\theta}_i)) \cdot \alpha_i^{wp-\max}(\hat{\theta}_i)\cdot \hat{\theta}_i}{1 - (1-\alpha_i^{wp-\max}(\hat{\theta}_i)) \cdot \alpha_i^{wp-\max}(\hat{\theta}_i)}\right) \cdot \left(\theta_i - \hat{\theta}_i\right)
\end{gathered}
\right\}}_{\triangleq \mathcal{I}(\theta_i)} \bm{f}_{-i}(\bm{\theta}_{-i}) d\bm{\theta}_{-i} \geq  0,
\end{split}
\end{equation}
for any $\theta_i, \hat{\theta}_i\in\Theta_i$ (without loss of generality, assume $\theta_i > \hat{\theta}_i$). We first claim that:

\begin{claim}\label{CL-derivative}
According to Lemma \ref{Lm-characterize-alphamax}, $\alpha_i^{wp-\max}(\theta_i)$ is such that:
\begin{equation*}
\frac{d}{d \theta_i}\left((1-\alpha_i^{wp-\max}(\theta_i))^2 \cdot \frac{(1-\alpha_i^{wp-\max}(\theta_i)) \cdot \alpha_i^{wp-\max}(\theta_i)}{1 - (1-\alpha_i^{wp-\max}(\theta_i)) \cdot \alpha_i^{wp-\max}(\theta_i)}\cdot \theta_i\right) \geq 0.
\end{equation*}
\end{claim}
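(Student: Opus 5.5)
The plan is to write the quantity in the claim as $g(\alpha_i^{wp-\max}(\theta_i))\cdot\theta_i$, where
\begin{equation*}
g(\alpha) \triangleq \frac{(1-\alpha)^3\,\alpha}{1-(1-\alpha)\alpha},
\end{equation*}
and to differentiate by the chain rule:
\begin{equation*}
\frac{d}{d\theta_i}\Big(g(\alpha_i^{wp-\max}(\theta_i))\,\theta_i\Big) = g(\alpha_i^{wp-\max}) + \theta_i\, g'(\alpha_i^{wp-\max})\,\frac{d\alpha_i^{wp-\max}(\theta_i)}{d\theta_i}.
\end{equation*}
By Lemma \ref{Lm-characterize-alphamax}, $\alpha_i^{wp-\max}\in(0,1)$, so $g(\alpha_i^{wp-\max})>0$. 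The same lemma (in fact, the strict inequality from \eqref{CL-decrease-OptAlpha-01-01}) gives $\frac{d\alpha_i^{wp-\max}}{d\theta_i}<0$. Since $\theta_i\ge 0$, the claim therefore follows once I show $g'(\alpha)\le 0$ at every value that $\alpha_i^{wp-\max}(\theta_i)$ can take. I will not try to bound $|\theta_i\,\frac{d\alpha_i^{wp-\max}}{d\theta_i}|$ in the case $g'>0$: that magnitude depends on the derivative of the inverse hazard rate and has no uniform bound. The proof will rest entirely on the sign of $g'$.

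The first step is to locate the range of $\alpha_i^{wp-\max}$. The stationary condition \eqref{CL-decrease-OptAlpha-00} has a nonnegative right-hand side, which yields \eqref{CL-decrease-OptAlpha-01}: $\alpha^3-3\alpha^2+4\alpha-1\ge 0$ at $\alpha=\alpha_i^{wp-\max}$. The cubic $p(\alpha)=\alpha^3-3\alpha^2+4\alpha-1$ has $p'(\alpha)=3(\alpha-1)^2+1>0$, so it is strictly increasing, with $p(0)=-1$ and $p(1)=1$. It therefore has a unique root $\underline{\alpha}\in(0,1)$, numerically $\underline{\alpha}\approx 0.317$. Hence $\alpha_i^{wp-\max}(\theta_i)\in[\underline{\alpha},1)$ for all $\theta_i$, with the lower endpoint attained only at $\theta_i=\overline{\theta}_i$.

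The second step, and the main computational obstacle, is to show $g'(\alpha)<0$ on $[\underline{\alpha},1)$. Log-differentiating gives
\begin{equation*}
\frac{g'(\alpha)}{g(\alpha)} = -\frac{3}{1-\alpha}+\frac{1}{\alpha}-\frac{2\alpha-1}{1-\alpha+\alpha^2}.
\end{equation*}
Clearing the positive denominators $\alpha(1-\alpha)(1-\alpha+\alpha^2)$ reduces the sign question to that of a polynomial $Q(\alpha)$ of degree at most three. I would compute $Q$ explicitly and show it is negative on $[\underline{\alpha},1)$. My first attempt is a monotonicity argument: $-3/(1-\alpha)$ and $1/\alpha$ are each decreasing, and the third term is bounded. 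Concretely, at $\alpha=\underline{\alpha}$ the three terms are about $-4.39+3.15+0.47<0$. For $\alpha\ge\underline{\alpha}$ the term $-3/(1-\alpha)$ only becomes more negative and $1/\alpha$ only shrinks. The third term, $-(2\alpha-1)/(1-\alpha+\alpha^2)$, is at most about $0.47$ on this interval, since it decreases from $1$ at $\alpha=0$ and is nonpositive for $\alpha\ge 1/2$. So the sum stays negative.

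If the numerical argument feels loose at the endpoint, I will use the fact that $p(\underline{\alpha})=0$ to substitute $\underline{\alpha}^3=3\underline{\alpha}^2-4\underline{\alpha}+1$. This reduces $Q(\underline{\alpha})$ to a quadratic expression whose sign can be verified by hand. I would then combine that endpoint sign with the sign of $Q'$ on the interval to conclude $Q<0$ throughout $[\underline{\alpha},1)$. With $g'\le 0$ established, both terms in the derivative are nonnegative, and the claim follows.
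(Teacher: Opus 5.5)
Your proposal is correct and follows essentially the paper's route: the paper log-differentiates the same product, and the coefficient on $\frac{d\alpha_i^{wp-\max}}{d\theta_i}$ is exactly your $g'/g$, with numerator $1-4\alpha+2\alpha^2-2\alpha^3$, whose sign it controls on the region $p(\alpha)\ge 0$ given by \eqref{CL-decrease-OptAlpha-01}. The only difference is that the paper rewrites this numerator as $-2p(\alpha)-(1-2\alpha)^2$, so nonpositivity follows at once from $p(\alpha)\ge 0$, with no need to locate $\underline{\alpha}$ numerically or to use monotonicity; your substitution fallback recovers exactly this identity at the endpoint.
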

\begin{proof}[Proof of Claim \ref{CL-derivative}]
Taking the logarithm and differentiating the objective with respect to $\theta_i$:
\begin{equation}\label{CL-derivative-00}
\begin{split}
& \left(
\begin{gathered}
-\frac{3}{1 - \alpha_i^{wp-\max}(\theta_i)} \\
+ \frac{1}{\alpha_i^{wp-\max}(\theta_i)}\\
- \frac{-1 + 2\alpha_i^{wp-\max}(\theta_i)}{1 - \alpha_i^{wp-\max}(\theta_i) + (\alpha_i^{wp-\max}(\theta_i))^2}
\end{gathered}
\right) \cdot \frac{d \alpha_i^{wp-\max}(\theta_i)}{d\theta_i} + \frac{1}{\theta_i} \\
= & \frac{1}{\theta_i} + \frac{d \alpha_i^{wp-\max}(\theta_i)}{d\theta_i}\\
& \cdot \frac{1 - 4\alpha_i^{wp-\max}(\theta_i) + 2(\alpha_i^{wp-\max}(\theta_i))^2 - 2(\alpha_i^{wp-\max}(\theta_i))^3}{(1 - \alpha_i^{wp-\max}(\theta_i))\cdot \alpha_i^{wp-\max}(\theta_i) \cdot (1 - \alpha_i^{wp-\max}(\theta_i) + (\alpha_i^{wp-\max}(\theta_i))^2)},
\end{split}
\end{equation}
where the numerator equals
\begin{equation*}
\begin{split}
& 1 - 4\alpha_i^{wp-\max}(\theta_i) + 2(\alpha_i^{wp-\max}(\theta_i))^2 - 2(\alpha_i^{wp-\max}(\theta_i))^3 \\
= & 2(1 - \alpha_i^{wp-\max}(\theta_i))^3 - 1 + 2\alpha_i^{wp-\max}(\theta_i) - 4(\alpha_i^{wp-\max}(\theta_i))^2\\
= & 2 \left(1 - 4\alpha_i^{wp-\max}(\theta_i) + 3(\alpha_i^{wp-\max}(\theta_i))^2 - (\alpha_i^{wp-\max}(\theta_i))^3 + \alpha_i^{wp-\max}(\theta_i)\right)\\
& - 1 + 2\alpha_i^{wp-\max}(\theta_i) - 4(\alpha_i^{wp-\max}(\theta_i))^2\\
\leq & 2\alpha_i^{wp-\max}(\theta_i) - 1 + 2\alpha_i^{wp-\max}(\theta_i) - 4(\alpha_i^{wp-\max}(\theta_i))^2\\
= & -(1-2\alpha_i^{wp-\max}(\theta_i))^2\\
\leq & 0,
\end{split}
\end{equation*}
where the first inequality follows from \eqref{CL-decrease-OptAlpha-01}.

Therefore, combined with Lemma \ref{Lm-characterize-alphamax}, the derivative in \eqref{CL-derivative-00} is non-negative.
\end{proof}

To establish that inequality \eqref{GIC-2-type2} holds for any $\theta_i, \hat{\theta}_i \in \Theta_i$, we show that $\mathcal{I}(\theta_i) \geq 0$:
\begin{equation*}
\begin{split}
\mathcal{I}(\theta_i) \geq & q_i^{wp^*}(\hat{\theta}_i,\bm{\theta}_{-i}) \cdot \left(
\begin{gathered}
(1-\alpha_i^{wp-\max}(\hat{\theta}_i))^2 \cdot \int_{\hat{\theta}_i}^{\theta_i} \tau d\tau \\
+ \int_{\hat{\theta}_i}^{\theta_i} \left\{
\begin{gathered}
(1-\alpha_i^{wp-\max}(\tau))^2 \\
\cdot \frac{(1-\alpha_i^{wp-\max}(\tau)) \cdot \alpha_i^{wp-\max}(\tau)}{1 - (1-\alpha_i^{wp-\max}(\tau)) \cdot \alpha_i^{wp-\max}(\tau)}\cdot \tau
\end{gathered}\right\} d\tau
\end{gathered}\right)\\
& - q_i^{wp^*}(\hat{\theta}_i,\bm{\theta}_{-i}) \cdot \left(
\begin{gathered}
(1-\alpha_i^{wp-\max}(\hat{\theta}_i))^2 \cdot \frac{\theta_i^2 - \hat{\theta}_i^2}{2}\\
+ \left(
\begin{gathered}
(1-\alpha_i^{wp-\max}(\hat{\theta}_i))^2 \\
\cdot \frac{(1-\alpha_i^{wp-\max}(\hat{\theta}_i)) \cdot \alpha_i^{wp-\max}(\hat{\theta}_i)\cdot \hat{\theta}_i}{1 - (1-\alpha_i^{wp-\max}(\hat{\theta}_i)) \cdot \alpha_i^{wp-\max}(\hat{\theta}_i)}\cdot \left(\theta_i - \hat{\theta}_i\right)
\end{gathered}\right)
\end{gathered}\right)\\
> & q_i^{wp^*}(\hat{\theta}_i,\bm{\theta}_{-i}) \cdot \left(
\begin{gathered}
(1-\alpha_i^{wp-\max}(\hat{\theta}_i))^2 \cdot \frac{\theta_i^2 - \hat{\theta}_i^2}{2}\\
+ \left(
\begin{gathered}
(1-\alpha_i^{wp-\max}(\hat{\theta}_i))^2 \\
\cdot \frac{(1-\alpha_i^{wp-\max}(\hat{\theta}_i)) \cdot \alpha_i^{wp-\max}(\hat{\theta}_i)}{1 - (1-\alpha_i^{wp-\max}(\hat{\theta}_i)) \cdot \alpha_i^{wp-\max}(\hat{\theta}_i)}\cdot \hat{\theta}_i \cdot \int_{\hat{\theta}_i}^{\theta_i}  d\tau
\end{gathered}\right)
\end{gathered}\right)\\
& - q_i^{wp^*}(\hat{\theta}_i,\bm{\theta}_{-i}) \cdot \left(
\begin{gathered}
(1-\alpha_i^{wp-\max}(\hat{\theta}_i))^2 \cdot \frac{\theta_i^2 - \hat{\theta}_i^2}{2}\\
+ \left(
\begin{gathered}
(1-\alpha_i^{wp-\max}(\hat{\theta}_i))^2 \\
\cdot \frac{(1-\alpha_i^{wp-\max}(\hat{\theta}_i)) \cdot \alpha_i^{wp-\max}(\hat{\theta}_i)\cdot \hat{\theta}_i}{1 - (1-\alpha_i^{wp-\max}(\hat{\theta}_i)) \cdot \alpha_i^{wp-\max}(\hat{\theta}_i)}\cdot \left(\theta_i - \hat{\theta}_i\right)
\end{gathered}\right)
\end{gathered}\right)\\
= & 0,
\end{split}
\end{equation*}
where the first inequality follows from: (1) $q_i^{wp^*}(\theta_i, \bm{\theta}_{-i})$ is increasing in $\theta_i$ since the virtual surplus is increasing in $\theta_i$ by Lemma \ref{increasing-of-surplus}; and (2) $(1-\alpha_i^{wp-\max}(\theta_i))^2$ is increasing in $\theta_i$ since $\alpha_i^{wp-\max}(\cdot)$ is an interior maximizer and $\alpha_i^{wp-\max}(\cdot)$ is decreasing by Lemma \ref{Lm-characterize-alphamax}. The second inequality follows from Claim \ref{CL-derivative}.

\paragraph{Optimality.} We establish optimality by showing that this mechanism attains the upper bound $\overline{\text{Rev}}^{wp}(\bm{q}, \bm{c}, \bm{\pi})$ in Lemma \ref{uppbdd}. Under the feasible linear mechanism $(\bm{q}^{wp^*}, \bm{c}^{wp^*}, \bm{\pi}^{wp^*})$, according to \eqref{REV-000} and the expression for $\phi_i^{wp}(\cdot)$ in \eqref{phiphiphi-type2}, we obtain:
\begin{equation*}
\begin{split}
& \text{Rev}^{wp}(\bm{q}^{wp^*}, \bm{c}^{wp^*}, \bm{\pi}^{wp^*}) \\
= & \sum_{i\in\mathcal{N}} \int_{\bm{\Theta}} q_i^{wp^*}(\bm{\theta}) \cdot \left\{
\begin{gathered}
\frac{1}{2}\cdot (1-(\alpha^{wp-\max}_i(\theta_i))^2)\cdot \theta_i^2\\
+ (1-(\alpha^{wp-\max}_i(\theta_i))^2) \cdot \theta_i \cdot \frac{(1-\alpha^{wp-\max}_i(\theta_i)) \cdot \alpha^{wp-\max}_i(\theta_i)\cdot \theta_i}{1 - (1-\alpha^{wp-\max}_i(\theta_i)) \cdot \alpha^{wp-\max}_i(\theta_i)}\\
- \frac{(\alpha^{wp-\max}_i(\theta_i))^2}{2}\cdot\left(\frac{(1-\alpha^{wp-\max}_i(\theta_i)) \cdot \alpha^{wp-\max}_i(\theta_i)\cdot \theta_i}{1 - (1-\alpha^{wp-\max}_i(\theta_i)) \cdot \alpha^{wp-\max}_i(\theta_i)}\right)^2 \\
- (1-\alpha^{wp-\max}_i(\theta_i))^2 \cdot \theta_i \cdot \frac{1 - F_i(\theta_i)}{f_i(\theta_i)} \\
- (1-\alpha^{wp-\max}_i(\theta_i))^2 \cdot \frac{(1-\alpha^{wp-\max}_i(\theta_i)) \cdot \alpha^{wp-\max}_i(\theta_i)\cdot \theta_i}{1 - (1-\alpha^{wp-\max}_i(\theta_i)) \cdot \alpha^{wp-\max}_i(\theta_i)} \\
\cdot \frac{1 - F_i(\theta_i)}{f_i(\theta_i)}
\end{gathered}
\right\} \bm{f}(\bm{\theta}) d\bm{\theta}\\
= & \sum_{i\in\mathcal{N}} \int_{\bm{\Theta}} q_i^{wp^*}(\bm{\theta}) \cdot \phi_i^{wp}(\theta_i) \bm{f}(\bm{\theta}) d\bm{\theta}\\
= & \int_{\bm{\Theta}} \max\left\{0, \phi_1^{wp}(\theta_1), \cdots, \phi_i^{wp}(\theta_i),\cdots,\phi_n^{wp}(\theta_n)\right\} \bm{f}(\bm{\theta}) d\bm{\theta}\\
= & \overline{\text{Rev}}^{wp}(\bm{q}, \bm{c}, \bm{\pi}).
\end{split}
\end{equation*}

\section{Proof of Lemma \ref{S-lead-LM} \label{PF-S-lead-LM}}

Following the derivation of the seller's expected payoff under winner-pivotal collaboration (see Appendix \ref{9pf}, which uses the general envelope theorem), we apply the same change-of-measure approach as in \eqref{change-meas} to derive:
\begin{equation}\label{Sllll-REV0}
\begin{split}
& \text{Rev}^{sp}(\bm{q}, \bm{c}, \bm{\pi})\\
= & \sum_{i\in\mathcal{N}} \int_{[0,1]\times \mathcal{T}_i^w \times \bm{\mathcal{T}}_{-i}^l \times S_i} \left\{ 
\begin{gathered}
\frac{2\alpha_i \cdot (1 - (1 - \alpha_i)\cdot \alpha_i) - \alpha_i^4}{2(1 - (1-\alpha_i) \cdot \alpha_i)^2}\\ \cdot  \mathbb{E}^2\left[\left.\widetilde{\theta}_i\right|\mu(\cdot |i, \bm{c}, s)\right] \\
- \frac{(1-\alpha_i) \cdot\alpha_i}{1 - (1-\alpha_i) \cdot \alpha_i} \\
\cdot  \mathbb{E}\left[\left.\widetilde{\theta}_i\right|\mu(\cdot |i, \bm{c}, s)\right]\cdot \mathbb{E}\left[\left.\frac{1 - F_i(\widetilde{\theta}_i)}{f_i(\widetilde{\theta}_i)}\right|\mu(\cdot |i, \bm{c}, s)\right]
\end{gathered}
\right\} \xi(i, d\bm{c}, ds) \\
& - \sum_{i\in\mathcal{N}}U_i^{sp}(\underline{\theta}_i).
\end{split} 
\end{equation}

Following the same approach as in \eqref{winnerlead-inequa}, we obtain:
\begin{equation*}
\text{Rev}^{sp}(\bm{q}, \bm{c}, \bm{\pi})\leq \sum_{i\in\mathcal{N}}\int_{\bm{\Theta}} q_i(\bm{\theta}) \int_{0}^1 \left\{
\begin{gathered}
\underbrace{\frac{2\alpha_i \cdot (1 - (1 - \alpha_i)\cdot \alpha_i) - \alpha_i^4}{2(1 - (1-\alpha_i) \cdot \alpha_i)^2}}_{\triangleq A^{sp}(\alpha_i)} \cdot \theta_i^2 \\
- \underbrace{\frac{(1-\alpha_i) \cdot\alpha_i}{1 - (1-\alpha_i) \cdot \alpha_i}}_{\triangleq B^{sp}(\alpha_i)} \cdot \frac{1-F_i(\theta_i)}{f_i(\theta_i)}\cdot \theta_i
\end{gathered}
\right\} \sigma_i(d\alpha_i|\bm{\theta}) \bm{f}(\bm{\theta}) d\bm{\theta}.
\end{equation*}

The maximizer is:
\begin{equation}\label{ABspsp}
\sigma_i^{sp-\max}(\cdot|\bm{\theta}) \in \argmax_{\sigma_i(d\alpha_i|\bm{\theta})\in\Delta([0,1])} \int_{0}^1 \underbrace{\left\{ A^{sp}(\alpha_i) \cdot \theta_i^2  - B^{sp}(\alpha_i) \cdot \frac{1-F_i(\theta_i)}{f_i(\theta_i)}\cdot \theta_i \right\}}_{\triangleq \Psi_i^{sp}(\alpha_i,\theta_i)}\sigma_i(d\alpha_i|\bm{\theta}),
\end{equation}
which is a Dirac measure that places unit mass on $\alpha_i^{sp-\max}(\theta_i) \in \argmax_{\alpha_i\in[0,1]} \Psi_i^{sp}(\alpha_i,\theta_i)$.

Accordingly, we define the corresponding maximal value as bidder $i$'s virtual surplus:
\begin{equation}\label{phiphiphi-sp-pf}
\phi_i^{sp}(\theta_i) \triangleq \int_{0}^1 \Psi_i^{sp}(\alpha_i,\theta_i)\sigma_i^{sp-\max}(d\alpha_i|\bm{\theta}) = \Psi_i^{sp}(\alpha_i^{sp-\max},\theta_i),
\end{equation}
which matches the expression in \eqref{Lmm1-Psi-1}. Therefore, for any feasible direct linear mechanism $(\bm{q}, \bm{c}, \bm{\pi})$, the seller's expected payoff under seller-pivotal collaboration is bounded above by:
\begin{equation*}
\begin{split}
\overline{\text{Rev}}^{sp}(\bm{q}, \bm{c}, \bm{\pi}) \triangleq & \max_{\substack{0\leq q_i(\bm{\theta}) \leq 1;\\ \sum_{i\in\mathcal{N}} q_i(\bm{\theta}) \leq 1}} \sum_{i\in\mathcal{N}}\int_{\bm{\Theta}} q_i(\bm{\theta}) \cdot \phi^{sp}(\theta_i) \bm{f}(\bm{\theta}) d\bm{\theta}\\
= & \int_{\bm{\Theta}} \max\left\{\phi_1^{sp}(\theta_1), \cdots, \phi_i^{sp}(\theta_i),\cdots,\phi_n^{sp}(\theta_n)\right\} \bm{f}(\bm{\theta}) d\bm{\theta}.
\end{split}
\end{equation*}

\section{Proof of Lemma \ref{lm-sellerlead-alphamax} \label{lm-sellerlead-alphamax-PF}}

\paragraph{Equivalence.} We first show the equivalence between the following two constrained optimization problems:
\begin{equation*}
\max_{\alpha_i\in[\frac{1}{2},1]} \Psi_i^{sp}(\alpha_i,\theta_i) = \max_{\alpha_i\in[0,1]} \Psi_i^{sp}(\alpha_i,\theta_i),
\end{equation*}
which differ only in their constraint sets. The function $\Psi_i^{sp}(\alpha_i, \theta_i)$ is defined by \eqref{ABspsp}; it suffices to show $\alpha_i^{sp-\max}(\theta_i) \notin [0, \frac{1}{2})$. We proceed by contradiction. Suppose $\alpha_i^{sp-\max} \in [0, \frac{1}{2})$. Then there exists $\alpha_i' \in (\frac{1}{2}, 1]$ such that $\alpha_i^{sp-\max} + \alpha_i' = 1$. Thus, $(1-\alpha_i') \cdot \alpha_i' = (1-\alpha_i^{sp-\max}) \cdot \alpha_i^{sp-\max}$ and the second term of the objective function $\Psi_i(\cdot,\theta_i)$ satisfies:
\begin{equation*}
\frac{(1-\alpha_i') \cdot\alpha_i'}{1 - (1-\alpha_i') \cdot \alpha_i'} \cdot \frac{1-F_i(\theta_i)}{f_i(\theta_i)}\cdot \theta_i = \frac{(1-\alpha_i^{sp-\max}) \cdot\alpha_i^{sp-\max}}{1 - (1-\alpha_i^{sp-\max}) \cdot \alpha_i^{sp-\max}} \cdot \frac{1-F_i(\theta_i)}{f_i(\theta_i)}\cdot \theta_i.
\end{equation*}

For the first term, we denote $(1-\alpha_i') \cdot \alpha_i' = (1-\alpha_i^{sp-\max}) \cdot \alpha_i^{sp-\max} \triangleq K \in [0, \frac{1}{4})$ and show that the difference between the numerators satisfies:
\begin{equation*}
\begin{split}
& 2\alpha_i' \cdot (1 - K) - (\alpha_i')^4 - \left(2\alpha_i^{sp-\max} \cdot (1 - K) - (\alpha_i^{sp-\max})^4\right) \\
= & 2 (1-K) \cdot \left(\alpha_i' - \alpha_i^{sp-\max}\right) - \left((\alpha_i')^2 - (\alpha_i^{sp-\max})^2\right) \cdot \left((\alpha_i')^2 + (\alpha_i^{sp-\max})^2\right) \\
> & 2 (1-K) \cdot \left(\alpha_i' - \alpha_i^{sp-\max}\right) - \left((\alpha_i')^2 - (\alpha_i^{sp-\max})^2\right) \cdot \left(\alpha_i' + \alpha_i^{sp-\max}\right)^2 \\
= & \left(\alpha_i' - \alpha_i^{sp-\max}\right) \cdot \left(1-2K\right) > 0.
\end{split}
\end{equation*}

Therefore, $\alpha_i' \in (\frac{1}{2}, 1]$ yields a strictly higher objective value, contradicting the assumption that $\alpha_i^{sp-\max} \in [0, \frac{1}{2})$. Hence, we establish $\alpha_i^{sp-\max}(\theta_i) \geq \frac{1}{2}$ for all $\theta_i \in [\underline{\theta}_i, \overline{\theta}_i]$.

\paragraph{Non-increasing.} We analyze the monotonicity of $\alpha_i^{sp-\max}(\theta_i)$ separately for the cases of corner and interior solutions.

(1) Corner solution. We consider the equivalent optimization problem $\max_{\alpha_i \in [1/2,1]} \Psi_i^{sp}(\alpha_i,\theta_i)$:
\begin{equation*}
\max_{\alpha_i} \left\{ A^{sp}(\alpha_i) \cdot \theta_i^2  - B^{sp}(\alpha_i) \cdot \frac{1-F_i(\theta_i)}{f_i(\theta_i)}\cdot \theta_i \right\} \;\; \text{s.t. } \;\;  1 - \alpha_i \geq 0\;\;\text{ and } \;\; \alpha_i \geq \frac{1}{2}.
\end{equation*}

The Kuhn-Tucker conditions for the maximizer $\alpha_i^{sp-\max}(\theta_i)$ are:
\begin{equation}\label{KKT-sellerlead}
\begin{cases}
- \frac{(\alpha_i^{sp-\max})^3 + \alpha_i^{sp-\max} - 1}{(1 - \alpha_i^{sp-\max} + (\alpha_i^{sp-\max})^2)^3} \cdot \theta_i^2 + \frac{2\alpha_i^{sp-\max} - 1}{(1 - \alpha_i^{sp-\max} + (\alpha_i^{sp-\max})^2)^2} \cdot \frac{1 - F_i(\theta_i)}{f_i(\theta_i)} \cdot \theta_i - \lambda_1 + \lambda_2 = 0;\\
\lambda_1 \cdot (1 - \alpha_i^{sp-\max}) = 0,\;\; \lambda_2 \cdot (\alpha_i^{sp-\max} - \frac{1}{2}) = 0;\\
1 - \alpha_i^{sp-\max} \geq 0, \;\; \alpha_i^{sp-\max}- \frac{1}{2} \geq 0;\\
\lambda_1 \geq 0, \;\; \lambda_2 \geq 0.
\end{cases}
\end{equation}

Based on the slackness conditions: (1) if $\lambda_1 > 0$ and $\lambda_2 = 0$, then $\alpha_i^{sp-\max} = 1$, which implies $-\theta_i^2 + \frac{1 - F_i(\theta_i)}{f_i(\theta_i)} \cdot \theta_i - \lambda_1 = 0$, yielding $\frac{1 - F_i(\theta_i)}{\theta_i\cdot f_i(\theta_i)} \geq 1$; (2) if $\lambda_1 = 0$ and $\lambda_2 > 0$, then $\alpha_i^{sp-\max} = \frac{1}{2}$, which implies $\frac{8}{9}\theta_i^2 + \lambda_2 = 0$. This contradicts $\lambda_2 > 0$.

Therefore, for any $\theta_i \in [\underline{\theta}_i, \overline{\theta}_i]$ satisfying $\frac{1 - F_i(\theta_i)}{\theta_i \cdot f_i(\theta_i)} \geq 1$, the maximizer $\alpha_i^{sp-\max}(\theta_i) = 1$ constitutes a corner solution. Moreover, given that
\begin{equation*}
\frac{d}{d\theta_i} \left(\frac{1 - F_i(\theta_i)}{\theta_i \cdot f_i(\theta_i)}\right) = \frac{d}{d\theta_i} \left( \frac{1 - F_i(\theta_i)}{f_i(\theta_i)} \right) \cdot \frac{1}{\theta_i} - \left( \frac{1 - F_i(\theta_i)}{f_i(\theta_i)} \right) \cdot \frac{1}{\theta_i^2} < 0,
\end{equation*}
where the inequality follows from the fact that $\frac{1-F_i(\theta_i)}{f_i(\theta_i)}$ decreases in $\theta_i$ under log-concavity of $f_i(\cdot)$, we establish that $\frac{1 - F_i(\theta_i)}{\theta_i \cdot f_i(\theta_i)}$ is decreasing in $\theta_i$. Consequently, there exists a unique threshold $\theta_i^c$ such that:
\begin{equation}\label{ttthre}
\frac{1 - F_i(\theta_i^c)}{\theta_i^c \cdot f_i(\theta_i^c)} = 1.
\end{equation}
Thus, $\alpha_i^{sp-\max}(\theta_i) = 1$ for all $\theta_i \leq \theta_i^c$.

(2) Interior maximizer and decreasing nature. When the maximizer is an interior solution, i.e., $\alpha_i^{sp-\max} \in (\frac{1}{2},1)$, the stationary condition in \eqref{KKT-sellerlead} with $\lambda_1 = \lambda_2 = 0$ gives:
\begin{equation}\label{type2intFOC}
\frac{(\alpha_i^{sp-\max})^3 + \alpha_i^{sp-\max} - 1}{(1 - \alpha_i^{sp-\max} + (\alpha_i^{sp-\max})^2)\cdot (2\alpha_i^{sp-\max} - 1)} =  \frac{1 - F_i(\theta_i)}{\theta_i \cdot f_i(\theta_i)}.
\end{equation}

First, since $2\alpha_i^{sp-\max} - 1 > 0$ and $\frac{1 - F_i(\theta_i)}{\theta_i \cdot f_i(\theta_i)} \geq 0$, we have $(\alpha_i^{sp-\max})^3 + \alpha_i^{sp-\max} - 1 \geq 0 > (\frac{2}{3})^3 + \frac{2}{3} -1$, which implies:
\begin{equation}\label{larger23-lm4pf}
\alpha_i^{sp-\max} > \frac{2}{3}.
\end{equation}

Next, following the approach in \eqref{take-log-interiormax}, taking the logarithm and differentiating both sides with respect to $\theta_i$ yields:
\begin{equation*}
\left(
\begin{gathered}
\frac{3(\alpha_i^{sp-\max})^2 + 1}{(\alpha_i^{sp-\max})^3 + \alpha_i^{sp-\max} - 1} \\
- \frac{2 \alpha_i^{sp-\max} - 1}{(\alpha_i^{sp-\max})^2 - \alpha_i^{sp-\max} + 1}\\
- \frac{2}{2 \alpha_i^{sp-\max} - 1}
\end{gathered}
\right) \cdot \frac{d \alpha_i^{sp-\max}(\theta_i)}{d\theta_i} = \frac{d}{d\theta_i} \left( \frac{1 - F_i(\theta_i)}{f_i(\theta_i)} \right) \cdot \frac{f_i(\theta_i)}{1 - F_i(\theta_i)} - \frac{1}{\theta_i} < 0.
\end{equation*}
The negativity of the right-hand side follows from the same reasoning as in \eqref{CL-decrease-OptAlpha-01-01}. The terms in the bracket on the left-hand side derive as follows:
\begin{equation*}
\left(
\begin{gathered}
\underbrace{\frac{3(\alpha_i^{sp-\max})^2 - 2\alpha_i^{sp-\max}}{(\alpha_i^{sp-\max})^3 + \alpha_i^{sp-\max} - 1}}_{\text{term 1}}\\
+ \underbrace{\frac{2\alpha_i^{sp-\max} - 1}{(\alpha_i^{sp-\max})^3 + \alpha_i^{sp-\max} - 1} - \frac{2\alpha_i^{sp-\max} - 1}{(\alpha_i^{sp-\max})^2 - \alpha_i^{sp-\max} + 1}}_{\text{term 2}}\\
+ \underbrace{\frac{2}{(\alpha_i^{sp-\max})^3 + \alpha_i^{sp-\max} - 1} - \frac{2}{2\alpha_i^{sp-\max} - 1}}_{\text{term 3}}
\end{gathered}
\right) > 0,
\end{equation*}
where term 1 is positive by \eqref{larger23-lm4pf}. Term 2 is non-negative since the denominators satisfy $((\alpha_i^{sp-\max})^3 + \alpha_i^{sp-\max} - 1) - ((\alpha_i^{sp-\max})^2 - \alpha_i^{sp-\max} + 1) = (\alpha_i^{sp-\max} - 1)\cdot ((\alpha_i^{sp-\max})^2 + 2) \leq 0$. Term 3 is non-negative since $(\alpha_i^{sp-\max})^3 + \alpha_i^{sp-\max} - 1 \leq 2\alpha_i^{sp-\max} - 1$, as $(\alpha_i^{sp-\max})^3 \leq \alpha_i^{sp-\max} \leq 1$. Therefore, $\frac{d \alpha_i^{sp-\max}(\theta_i)}{d\theta_i } < 0$ when the maximizer is an interior solution.

\paragraph{Uniqueness.} First, for any $\theta_i \leq \theta_i^c$, the maximizer of \eqref{alphastarSL} is a corner solution, $\alpha_i^{sp-\max}(\theta_i) = 1$, where the threshold type is uniquely determined by \eqref{ttthre}. Second, for any $\theta_i \geq \theta_i^c$, we show that the interior maximizer is unique. From the objective function in \eqref{ABspsp}, we derive the second-order partial derivative with respect to $\alpha_i$:
\begin{equation}\label{vnvbvbvb}
\frac{\partial^2 \Psi_i^{sp}(\alpha_i, \theta_i)}{\partial \alpha_i^2} = \frac{\theta_i^2}{(1 - \alpha_i + \alpha_i^2)^4} \cdot
\underbrace{\left(
\begin{gathered}
(3\alpha_i^4 + 2\alpha_i^2 - 8\alpha_i + 2)\\
+ 6\alpha_i\cdot(1 - \alpha_i)\cdot(1 - \alpha_i + \alpha_i^2) \cdot \frac{1 - F_i(\theta_i)}{\theta_i \cdot f_i(\theta_i)}
\end{gathered}
\right)}_{\triangleq \Omega_i(\alpha_i,\theta_i)}.
\end{equation}

Since $\frac{1 - F_i(\theta_i)}{\theta_i \cdot f_i(\theta_i)} \leq 1$ for $\theta_i \geq \theta_i^c$, we obtain:
\begin{equation*}
\begin{split}
\frac{\partial^2 \Omega_i(\alpha_i,\theta_i)}{\partial \alpha_i^2} = &(36\alpha_i^2 + 4) - (72\alpha_i^2 - 72\alpha_i + 24) \cdot \frac{1 - F_i(\theta_i)}{\theta_i \cdot f_i(\theta_i)}\\
> & (36\alpha_i^2 + 4) - (72\alpha_i^2 - 72\alpha_i + 24) = 16 - 36(1-\alpha_i)^2 > 0.
\end{split}
\end{equation*}

Thus, $\Omega_i(\alpha_i,\theta_i)$ is convex in $\alpha_i$ on $[\frac{1}{2}, 1)$. Moreover, note that $\Omega_i(\frac{1}{2},\theta_i) = -\frac{21}{16} + \frac{9}{8}\frac{1 - F_i(\theta_i)}{\theta_i \cdot f_i(\theta_i)} < 0$ and $\Omega_i(1,\theta_i) = -1 <0$. Therefore, $\Omega_i(\alpha_i,\theta_i) < 0$ for all $\alpha_i \in [\frac{1}{2}, 1)$, which implies strict concavity of the objective function, i.e., $\frac{\partial^2 \Psi_i^{sp}(\alpha_i, \theta_i)}{\partial \alpha_i^2} < 0$ by \eqref{vnvbvbvb}. Consequently, for any $\theta_i \geq \theta_i^c$, the interior maximizer $\alpha_i^{sp-\max} \in [\frac{1}{2}, 1)$ is unique.

\section{Proof of Lemma \ref{increasing-of-surplus-SP} \label{increasing-of-surplus-SP-pf}}

When $\alpha_i^{sp-\max}(\theta_i)$ is an interior maximizer, according to \eqref{type2intFOC} and \eqref{larger23-lm4pf}, applying the envelope theorem and differentiating \eqref{phiphiphi-sp-pf} with respect to $\theta_i$ yields:
\begin{equation}\label{H4}
\begin{split}
\frac{d \phi_i^{sp}(\theta_i)}{d \theta_i} = & \frac{(\alpha_i^{sp-\max})^3 - \alpha_i^{sp-\max}\cdot(1-\alpha_i^{sp-\max})^4}{(1 - \alpha_i^{sp-\max} + (\alpha_i^{sp-\max})^2)\cdot (2\alpha_i^{sp-\max} - 1)} \cdot \theta_i\\
& - \frac{(1-\alpha_i^{sp-\max}) \cdot\alpha_i^{sp-\max}}{1 - (1-\alpha_i^{sp-\max}) \cdot \alpha_i^{sp-\max}} \cdot \frac{d}{d \theta_i} \left(\frac{1-F_i(\theta_i)}{f_i(\theta_i)}\right) \cdot \theta_i \\
> & \frac{(\alpha_i^{sp-\max})^3 + \alpha_i^{sp-\max} - 1}{(1 - \alpha_i^{sp-\max} + (\alpha_i^{sp-\max})^2)\cdot (2\alpha_i^{sp-\max} - 1)} \cdot \theta_i\\
& - \frac{(1-\alpha_i^{sp-\max}) \cdot\alpha_i^{sp-\max}}{1 - (1-\alpha_i^{sp-\max}) \cdot \alpha_i^{sp-\max}} \cdot \frac{d}{d \theta_i} \left(\frac{1-F_i(\theta_i)}{f_i(\theta_i)}\right) \cdot \theta_i \geq 0.
\end{split}
\end{equation}
This establishes that each bidder's virtual surplus is increasing in $\theta_i$. It remains to show that $\phi_i^{sp}(\theta_i) \geq 0$ for each $i\in\mathcal{N}$ and any $\theta_i\in\Theta_i$. Similar to \eqref{non-negateive-phiWL}, by the definition of virtual surplus in \eqref{phiphiphi-sp-pf}, the result follows from:
\begin{equation*}
\phi_i^{sp}(\theta_i) = \max_{\alpha_i \in [0,1]} \Psi_i^{sp}(\alpha_i,\theta_i) \geq \Psi_i^{sp}(1,\theta_i) = \frac{1}{2}\cdot \theta_i^2 \geq \frac{1}{2}\cdot \underline{\theta}_i^2 \geq 0,
\end{equation*}
which holds for any $\theta_i\in\Theta_i$, where the final inequality follows from $\underline{\theta}_i \geq 0$.

\section{Proof of Theorem \ref{ThmCompl1-typeSL-seller} \label{THMoptSL}}

Following the proof of Theorem \ref{ThmCompl1-type2} (see Appendix \ref{APPEEE}), we first verify the feasibility of the mechanism $(\bm{q}^{sp^*}, \bm{c}^{sp^*}, \bm{\pi}^{sp^*})$. From the expression for a bidder's expected payoff in \eqref{Uireporting-2}, we derive bidder $i$'s expected payoff under truthful reporting as follows:
\begin{equation*}
U_i^{sp}(\theta_i) = \int_{\bm{\Theta}_{-i}}\int_{\underline{\theta}_i}^{\theta_i} q_i^{sp^*}(\tau,\bm{\theta}_{-i}) \cdot \frac{(1-\alpha_i^{sp-\max}(\tau)) \cdot \alpha_i^{sp-\max}(\tau)}{1 - (1-\alpha_i^{sp-\max}(\tau)) \cdot \alpha_i^{sp-\max}(\tau)}\cdot \tau  d\tau\bm{f}_{-i}(\bm{\theta}_{-i}) d\bm{\theta}_{-i}.
\end{equation*}

To establish the bidder's truth-telling incentive, we show that, analogous to \eqref{GIC-2-type2}:
\begin{equation*}
\begin{split}
& U_i^{sp}(\theta_i) -  U_{i}^{sp}(\hat{\theta}_{i},\theta_{i})\\
= & \int_{\bm{\Theta}_{-i}}\int_{\hat{\theta}_i}^{\theta_i} q_i^{sp^*}(\tau,\bm{\theta}_{-i}) \cdot \frac{(1-\alpha_i^{sp-\max}(\tau)) \cdot \alpha_i^{sp-\max}(\tau)}{1 - (1-\alpha_i^{sp-\max}(\tau)) \cdot \alpha_i^{sp-\max}(\tau)}\cdot \tau d\tau\bm{f}_{-i}(\bm{\theta}_{-i}) d\bm{\theta}_{-i}\\
& - \int_{\bm{\Theta}_{-i}} q_i^{sp^*}(\hat{\theta}_i,\bm{\theta}_{-i}) \cdot \frac{(1-\alpha_i^{sp-\max}(\hat{\theta}_i)) \cdot \alpha_i^{sp-\max}(\hat{\theta}_i)}{1 - (1-\alpha_i^{sp-\max}(\hat{\theta}_i)) \cdot \alpha_i^{sp-\max}(\hat{\theta}_i)}\cdot \hat{\theta}_i \cdot \left(\theta_i - \hat{\theta}_i \right) \bm{f}_{-i}(\bm{\theta}_{-i}) d\bm{\theta}_{-i} \geq 0
\end{split}
\end{equation*}
holds for any $\theta_i, \hat{\theta}_i \in \Theta_i$ (without loss of generality, assume $\theta_i > \hat{\theta}_i$). This follows because $q_i^{sp^*}(\theta_i, \bm{\theta}_{-i})$ is increasing in $\theta_i$, since the virtual surplus is increasing in $\theta_i$ (see \eqref{H4}). Additionally,
\begin{equation*}
\begin{split}
& \frac{d}{d \theta_i}\left(\frac{(1-\alpha_i^{sp-\max}(\theta_i)) \cdot \alpha_i^{sp-\max}(\theta_i)}{1 - (1-\alpha_i^{sp-\max}(\theta_i)) \cdot \alpha_i^{sp-\max}(\theta_i)}\cdot\theta_i\right) \\
= & \frac{(\alpha_i^{sp-\max}(\theta_i))' \cdot (1-2\alpha_i^{sp-\max}(\theta_i))}{(1-(1-\alpha_i^{sp-\max}(\theta_i)) \cdot \alpha_i^{sp-\max}(\theta_i))^2}\cdot \theta_i + \frac{(1-\alpha_i^{sp-\max}(\theta_i)) \cdot \alpha_i^{sp-\max}(\theta_i)}{1-(1-\alpha_i^{sp-\max}(\theta_i)) \cdot \alpha_i^{sp-\max}(\theta_i)} \geq 0,
\end{split}
\end{equation*}
due to $\alpha_i^{sp-\max} \geq \frac{1}{2}$ and the fact that $\alpha_i^{sp-\max}(\cdot)$ is non-increasing (see Lemma \ref{lm-sellerlead-alphamax}).

Optimality follows from showing that $\text{Rev}^{sp}(\bm{q}^{sp^*}, \bm{c}^{sp^*}, \bm{\pi}^{sp^*}) = \overline{\text{Rev}}^{sp}(\bm{q}, \bm{c}, \bm{\pi})$, using the same approach as in Appendix \ref{APPEEE}.

\section{Proof of Proposition \ref{CompareAlpha} \label{Pf-CompareAlpha}}

First, according to Lemmas \ref{Lm-characterize-alphamax} and \ref{lm-sellerlead-alphamax}, both optimal deterministic value-sharing rules $\alpha_i^{wp^*}(\theta_i) = \alpha_i^{wp-\max}(\theta_i)$ and $\alpha_i^{sp^*}(\theta_i) =\allowbreak \alpha_i^{sp-\max}(\theta_i)$ are uniquely determined, and $\alpha_i^{sp-\max}(\theta_i) > \alpha_i^{wp-\max}(\theta_i)$ for any $\theta_i \leq \theta_i^c$ since $\alpha_i^{sp-\max}(\theta_i) = 1$ while $\alpha_i^{wp-\max}(\theta_i) < 1$ within this interval. Next, for $\theta_i > \theta_i^c$, we define:
\begin{equation}\label{def-alphac}
\alpha_c \triangleq \inf\{\alpha \in [0,1]: \alpha^3 > 1 - \alpha\},
\end{equation}
which satisfies $\alpha_c > \frac{1}{2}$. The stationary condition \eqref{type2intFOC} and inequality \eqref{larger23-lm4pf} imply that $(\alpha_i^{sp-\max}(\theta_i))^3 + \alpha_i^{sp-\max}(\theta_i) - 1 \geq 0$ for any $\theta_i > \theta_i^c$, thus:
\begin{equation}\label{mmmmmmmmmmmmmmmm0}
\alpha_i^{sp-\max}(\theta_i) \geq \alpha_c > \frac{1}{2},
\end{equation}
for any $\theta_i > \theta_i^c$. We further claim that:

\begin{claim}\label{claimalphawlsl}
$\alpha_i^{wp-\max}(\theta_i)< \alpha_c$ for any $\theta_i > \theta_i^c$.
\end{claim}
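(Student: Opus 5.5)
The plan is to compare the stationary condition that pins down $\alpha_i^{wp-\max}(\theta_i)$ with the value of the same expression at $\alpha_c$. By Lemma \ref{Lm-characterize-alphamax}, $\alpha_i^{wp-\max}(\theta_i)\in(0,1)$ is the unique interior maximizer. It satisfies \eqref{CL-decrease-OptAlpha-00}, i.e. $g(\alpha_i^{wp-\max}(\theta_i)) = r_i(\theta_i)$, where
\begin{equation*}
g(\alpha) \triangleq \frac{\alpha^3 - 3\alpha^2 + 4\alpha - 1}{(1-\alpha+\alpha^2)(1-\alpha^2)}, \qquad r_i(\theta_i)\triangleq \frac{1-F_i(\theta_i)}{\theta_i f_i(\theta_i)}.
\end{equation*}
The appendix for Lemma \ref{lm-sellerlead-alphamax} shows that $r_i$ is strictly decreasing and that $r_i(\theta_i^c)=1$ by \eqref{ttthre}. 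Hence $r_i(\theta_i)<1$ for every $\theta_i>\theta_i^c$.

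First I check that $g$ is strictly increasing on the set where its numerator is nonnegative. This is the region that matters, because \eqref{CL-decrease-OptAlpha-01} places $\alpha_i^{wp-\max}(\theta_i)$ there. The needed computation is already in the proof of Lemma \ref{Lm-characterize-alphamax}. The bracket multiplying $d\alpha_i^{wp-\max}/d\theta_i$ in \eqref{CL-decrease-OptAlpha-01-01}, namely term 1 plus term 2, is exactly $\frac{d}{d\alpha}\ln g(\alpha)$. Term 2 is $\frac{1}{1-\alpha}-\frac{1}{1+\alpha}>0$, and term 1 is positive whenever the numerator of $g$ is positive. So $g$ is strictly increasing from $0$ at the root $\alpha^0$ of $\alpha^3-3\alpha^2+4\alpha-1$ up to $+\infty$ as $\alpha\to 1$. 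I also need $\alpha_c>\alpha^0$, which holds because $g(\alpha_c)>0$ by the computation below.

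The key step is to evaluate $g(\alpha_c)$ using the defining identity $\alpha_c^3 = 1-\alpha_c$ from \eqref{def-alphac}. The numerator becomes
\begin{equation*}
(1-\alpha_c) - 3\alpha_c^2 + 4\alpha_c - 1 = 3\alpha_c(1-\alpha_c).
\end{equation*}
The denominator factors as $(1-\alpha_c+\alpha_c^2)(1+\alpha_c)(1-\alpha_c)$. Cancelling $1-\alpha_c$ and using $(1+\alpha)(1-\alpha+\alpha^2)=1+\alpha^3$ gives
\begin{equation*}
g(\alpha_c) = \frac{3\alpha_c}{1+\alpha_c^3} = \frac{3\alpha_c}{2-\alpha_c}.
\end{equation*}
This exceeds $1$ exactly when $\alpha_c>\frac{1}{2}$, which holds as noted after \eqref{def-alphac}.

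Combining the two steps, for $\theta_i>\theta_i^c$ we have
\begin{equation*}
g(\alpha_i^{wp-\max}(\theta_i)) = r_i(\theta_i) < 1 < g(\alpha_c).
\end{equation*}
Strict monotonicity of $g$ on the relevant region then gives $\alpha_i^{wp-\max}(\theta_i)<\alpha_c$. The only delicate point I expect is confirming that both $\alpha_i^{wp-\max}(\theta_i)$ and $\alpha_c$ lie in the region where $g$ is increasing. For $\alpha_c$ this follows from $g(\alpha_c)>0$. For $\alpha_i^{wp-\max}(\theta_i)$ it follows from \eqref{CL-decrease-OptAlpha-01}. At $\theta_i=\overline{\theta}_i$ we get $g=0<g(\alpha_c)$, so the boundary case needs no separate treatment.
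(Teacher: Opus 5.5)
Your proof is correct, but it is organized differently from the paper's. You first show that $g$ is strictly increasing on $[\alpha^0,1)$, reusing the log-derivative computation from the proof of Lemma~\ref{Lm-characterize-alphamax}. After that you need only the single evaluation $g(\alpha_c)=3\alpha_c/(1+\alpha_c^3)>1$.

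The paper argues by contradiction and never uses monotonicity of $g$. It assumes $\alpha\equiv\alpha_i^{wp-\max}(\theta_i^0)\ge\alpha_c$. It rewrites $\alpha^3\ge 1-\alpha$ as $\alpha-(1-\alpha)^3\ge 3\alpha(1-\alpha)$, and the stationary condition with $r_i(\theta_i^0)<1$ as $\alpha-(1-\alpha)^3<(1+\alpha^3)(1-\alpha)$; these are its two displayed inequalities multiplied through by $1-\alpha$. Together they force $3\alpha<1+\alpha^3$, which is false on $[\alpha_c,1)$. In effect the paper proves the pointwise bound $g(\alpha)\ge 3\alpha/(1+\alpha^3)>1$ for every $\alpha\in[\alpha_c,1)$, so both routes hinge on the same quantity.

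The trade-off is this. The paper's version is self-contained polynomial algebra with no derivative argument. Yours is more transparent, and because $g$ maps $[\alpha^0,1)$ bijectively onto $[0,\infty)$, it also gives the sharper bound $\alpha_i^{wp-\max}(\theta_i)<g^{-1}(1)<\alpha_c$ for $\theta_i>\theta_i^c$. As an aside, the numerator of $g$ equals $\alpha-(1-\alpha)^3$, so your root $\alpha^0$ is exactly $1-\alpha_c$.
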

\begin{proof}[Proof of Claim \ref{claimalphawlsl}]
We proceed by contradiction. Suppose there exists $\theta_i^0 > \theta_i^c$ such that $\alpha_i^{wp-\max}(\theta_i^0) \geq \alpha_c$. Then $(\alpha_i^{wp-\max}(\theta_i^0))^3 \geq 1 - \alpha_i^{wp-\max}(\theta_i^0)$ by \eqref{def-alphac}. Rearranging yields:
\begin{equation}\label{mmmmmmmmmmmmm2}
\frac{\alpha_i^{wp-\max}(\theta_i^0)}{1 - \alpha_i^{wp-\max}(\theta_i^0)} \geq \left(1 - \alpha_i^{wp-\max}(\theta_i^0)\right)^2 + 3\alpha_i^{wp-\max}(\theta_i^0).
\end{equation}

According to the stationary condition in the winner-pivotal scenario in \eqref{CL-decrease-OptAlpha-00}, we have:
\begin{equation}\label{kkkkkkkkkkkks}
\frac{\left(\alpha_i^{wp-\max}(\theta_i^0)\right)^3 - 3\left(\alpha_i^{wp-\max}(\theta_i^0)\right)^2 + 4\alpha_i^{wp-\max}(\theta_i^0) - 1}{\left(1 - \alpha_i^{wp-\max}(\theta_i^0) + \left(\alpha_i^{wp-\max}(\theta_i^0)\right)^2\right)\cdot \left(1 - \left(\alpha_i^{wp-\max}(\theta_i^0)\right)^2\right)} = \frac{1-F_i(\theta_i^0)}{\theta_i^0 \cdot f_i(\theta_i^0)} < 1,
\end{equation}
where the inequality follows from $\frac{d}{d\theta_i} (\frac{1 - F_i(\theta_i)}{\theta_i \cdot f_i(\theta_i)} ) < 0$ and $\frac{1 - F_i(\theta_i^c)}{\theta_i^c \cdot f_i(\theta_i^c)} = 1$, given that $\theta_i^0 > \theta_i^c$. Inequality \eqref{kkkkkkkkkkkks} can be rewritten as:
\begin{equation}\label{mmmmmmmmmmmmm1}
\frac{\alpha_i^{wp-\max}(\theta_i^0)}{1 - \alpha_i^{wp-\max}(\theta_i^0)} < \left(1 - \alpha_i^{wp-\max}(\theta_i^0)\right)^2 + 1 + \left(\alpha_i^{wp-\max}(\theta_i^0)\right)^3.
\end{equation}
Combining \eqref{mmmmmmmmmmmmm2} and \eqref{mmmmmmmmmmmmm1} gives:
\begin{equation*}
\begin{split}
\left(1 - \alpha_i^{wp-\max}(\theta_i^0)\right)^2 + 3\alpha_i^{wp-\max}(\theta_i^0) \leq \frac{\alpha_i^{wp-\max}(\theta_i^0)}{1 - \alpha_i^{wp-\max}(\theta_i^0)}  < \left(1 - \alpha_i^{wp-\max}(\theta_i^0)\right)^2 + 1 + \left(\alpha_i^{wp-\max}(\theta_i^0)\right)^3,
\end{split}
\end{equation*}
implying $3\alpha_i^{wp-\max}(\theta_i^0) < 1 + (\alpha_i^{wp-\max}(\theta_i^0))^3$. This contradicts $3\alpha_i^{wp-\max}(\theta_i^0) >\allowbreak 1 + (\alpha_i^{wp-\max}(\theta_i^0))^3$, which holds since $\alpha_i^{wp-\max}(\theta_i^0) \geq \alpha_c$ and $\alpha_c > \frac{1}{2}$ by \eqref{mmmmmmmmmmmmmmmm0}, given that $\theta_i^0 > \theta_i^c$.
\end{proof}

Hence, combining Claim \ref{claimalphawlsl} and inequality \eqref{mmmmmmmmmmmmmmmm0}, we obtain:
\begin{equation*}
\alpha_i^{sp-\max}(\theta_i) \geq \alpha_c  > \alpha_i^{wp-\max}(\theta_i),
\end{equation*}
for any $\theta_i > \theta_i^c$. Combined with the result that $\alpha_i^{sp-\max}(\theta_i) > \alpha_i^{wp-\max}(\theta_i)$ for any $\theta_i \leq \theta_i^c$, this establishes the desired result.

\section{Proof of Proposition \ref{Prop-Effort} \label{pf-Prop-Effort}}

\paragraph{Case (1):} From the efforts chosen by both parties at optimum under winner-pivotal collaboration given in \eqref{Optefforttype2-opt}, we derive:
\begin{equation*}
\frac{e^{wp-b}_i(\theta_i)}{e^{wp-s}_i(\theta_i)} = \frac{\frac{\left(1-\alpha_i^{wp^*}(\theta_i)\right) \cdot \theta_i}{1 - \left(1-\alpha_i^{wp^*}(\theta_i)\right) \cdot \alpha_i^{wp^*}(\theta_i)}}{\frac{\left(1-\alpha_i^{wp^*}(\theta_i)\right) \cdot \alpha_i^{wp^*}(\theta_i)\cdot \theta_i}{1 - \left(1-\alpha_i^{wp^*}(\theta_i)\right) \cdot \alpha_i^{wp^*}(\theta_i)}} = \frac{1}{\alpha_i^{wp^*}(\theta_i)} > 1,
\end{equation*}
where the inequality follows from the non-negativity of both parties' efforts and $0 < \alpha_i^{wp^*}(\theta_i) <1$ (cf. Lemma \ref{Lm-characterize-alphamax}). Hence, $e^{wp-b}_i(\theta_i) > e^{wp-s}_i(\theta_i)$ for all $\theta_i \in [\underline{\theta}_i, \overline{\theta}_i]$. For the seller-pivotal scenario, using \eqref{Optefforttype2-2-2} and an analogous argument, we obtain:
\begin{equation*}
\frac{e^{sp-s}_i(\theta_i)}{e^{sp-b}_i(\theta_i)} = \frac{\frac{\alpha_i^{sp^*}(\theta_i) \cdot \theta_i}{1 - \left(1-\alpha_i^{sp^*}(\theta_i)\right) \cdot \alpha_i^{sp^*}(\theta_i)}}{\frac{\left(1-\alpha_i^{sp^*}(\theta_i)\right) \cdot \alpha_i^{sp^*}(\theta_i) \cdot \theta_i}{1 - \left(1-\alpha_i^{sp^*}(\theta_i)\right) \cdot \alpha_i^{sp^*}(\theta_i)}} = \frac{1}{1-\alpha_i^{sp^*}(\theta_i)} > 1,
\end{equation*}
by Lemma \ref{lm-sellerlead-alphamax}, which implies $e^{sp-s}_i(\theta_i) > e^{sp-b}_i(\theta_i)$ for all $\theta_i \in [\underline{\theta}_i, \overline{\theta}_i]$.

\paragraph{Case (2):} Comparing the seller's effort across scenarios, $e^{wp-s}_i(\theta_i)$ and $e^{sp-s}_i(\theta_i)$. For $\theta_i\in [\underline{\theta}_i, \theta_i^c]$, we have $\alpha_i^{sp^*}(\theta_i) = 1$, which implies:
\begin{equation*}
e^{sp-s}_i(\theta_i) = \theta_i > \frac{\left(1-\alpha_i^{wp^*}(\theta_i)\right) \cdot \alpha_i^{wp^*}(\theta_i)}{1 - \left(1-\alpha_i^{wp^*}(\theta_i)\right) \cdot \alpha_i^{wp^*}(\theta_i)}\cdot \theta_i = e^{wp-s}_i(\theta_i),
\end{equation*}
where the inequality follows from the properties of $\alpha_i^{wp^*}(\theta_i)$ established in Lemma \ref{Lm-characterize-alphamax}. For $\theta_i\in [\theta_i^c, \overline{\theta}_i]$, both $\alpha_i^{wp^*}(\theta_i),\alpha_i^{sp^*}(\theta_i) \in (0,1)$. We define:
\begin{equation*}
g(\alpha) \triangleq \frac{\alpha \cdot \theta_i}{1 - \left(1-\alpha\right) \cdot \alpha},
\end{equation*}
which is non-negative and strictly increasing since $g'(\alpha) = \frac{1-\alpha^2 }{\left(1 - \alpha + \alpha^2\right)^2}\cdot \theta_i > 0$. Therefore, from \eqref{Optefforttype2-opt} and \eqref{Optefforttype2-2-2}, we obtain:
\begin{equation*}
e^{sp-s}_i(\theta_i) = g(\alpha_i^{sp^*}) > g(\alpha_i^{wp^*}) > (1- \alpha_i^{wp^*}) \cdot g(\alpha_i^{wp^*}) = e^{wp-s}_i(\theta_i),
\end{equation*}
where the first inequality follows from Proposition \ref{CompareAlpha}.

Next, comparing the winner's effort across scenarios, $e^{wp-b}_i(\theta_i)$ and $e^{sp-b}_i(\theta_i)$. For $\theta_i\in [\underline{\theta}_i, \theta_i^c]$, we have $\alpha_i^{sp^*}(\theta_i) = 1$, which implies:
\begin{equation*}
e^{sp-b}_i(\theta_i) = 0 < \frac{\left(1-\alpha_i^{wp^*}(\theta_i)\right) \cdot \theta_i}{1 - \left(1-\alpha_i^{wp^*}(\theta_i)\right) \cdot \alpha_i^{wp^*}(\theta_i)} = e^{wp-b}_i(\theta_i).
\end{equation*}

For $\theta_i\in [\theta_i^c, \overline{\theta}_i]$, both $\alpha_i^{wp^*}(\theta_i),\alpha_i^{sp^*}(\theta_i) \in (0,1)$. Similarly, we define:
\begin{equation*}
h(\alpha) \triangleq \frac{(1-\alpha) \cdot \theta_i}{1 - \left(1-\alpha\right) \cdot \alpha},
\end{equation*}
which is non-negative and strictly decreasing since $h'(\alpha) = \frac{\alpha\cdot(\alpha-2)}{\left(1 - \alpha + \alpha^2\right)^2}\cdot \theta_i < 0$. Therefore, from \eqref{Optefforttype2-opt} and \eqref{Optefforttype2-2-2}, we obtain:
\begin{equation*}
e^{sp-b}_i(\theta_i) = \alpha_i^{sp^*} \cdot h(\alpha_i^{sp^*}) < h(\alpha_i^{sp^*}) < h(\alpha_i^{wp^*}) = e^{wp-b}_i(\theta_i),
\end{equation*}
where the second inequality follows from Proposition \ref{CompareAlpha}.

\section{Proof of Proposition \ref{ppproprev} \label{ppproprev-pf}}

According to Theorems \ref{ThmCompl1-type2} and \ref{ThmCompl1-typeSL-seller}, the seller's expected equilibrium payoffs under optimal linear mechanisms satisfy:
\begin{equation*}
\begin{split}
\text{Rev}^{wp}(\bm{q}^{wp^*}, \bm{c}^{wp^*}, \bm{\pi}^{wp^*}) =  & \int_{\bm{\Theta}} \max\{\phi_1^{wp}(\theta_1), \cdots, \phi_n^{wp}(\theta_n)\} \bm{f}(\bm{\theta}) d\bm{\theta},\\
\text{Rev}^{sp}(\bm{q}^{sp^*}, \bm{c}^{sp^*}, \bm{\pi}^{sp^*}) = & \int_{\bm{\Theta}} \max\{\phi_1^{sp}(\theta_1), \cdots, \phi_n^{sp}(\theta_n)\} \bm{f}(\bm{\theta}) d\bm{\theta},
\end{split}
\end{equation*}
that attain the upper bound values, respectively, as presented in Lemmas \ref{uppbdd} and \ref{S-lead-LM}.

To prove that seller-pivotal collaboration yields higher revenue than winner-pivotal collaboration, it suffices to establish pointwise dominance of the virtual surplus functions. Specifically, we show that for any bidder $i$ and type $\theta_i \in \Theta_i$:
\begin{equation*}
\phi_i^{sp}(\theta_i) > \phi_i^{wp}(\theta_i),
\end{equation*}
where the virtual surpluses are defined in \eqref{phiphiphi-type2-pf} and \eqref{phiphiphi-sp-pf}.


Algebraic computation shows that:
\begin{equation}\label{rev-sym-A}
A^{wp}(\alpha) = A^{sp}(1-\alpha),
\end{equation}
for any $\alpha\in [0,1]$. Next, we compute the difference:
\begin{equation*}\label{rev-diff-B}
B^{wp}(\alpha) - B^{sp}(1-\alpha) = \frac{(1-\alpha)^2 - \alpha\cdot(1-\alpha)}{D(\alpha)} = \frac{(1-\alpha)\cdot(1 - 2\alpha)}{D(\alpha)},
\end{equation*}
where $D(\alpha) \triangleq 1 - \alpha + \alpha^2$. Thus:
\begin{equation}\label{largeBBB}
B^{wp}(\alpha) \geq B^{sp}(1-\alpha) \iff \alpha \leq \frac{1}{2}.
\end{equation}

We examine the difference function $\Delta \Psi(\alpha) \triangleq \Psi_i^{sp}(\alpha, \theta_i) - \Psi_i^{wp}(\alpha, \theta_i)$, expressed as:
\begin{equation}\label{rev-diff-eq}
\Delta \Psi(\alpha) = \Delta A(\alpha) \cdot \theta_i \cdot \left( \theta_i - (1-\alpha)\cdot\frac{1-F_i(\theta_i)}{f_i(\theta_i)} \right),
\end{equation}
for any $\alpha\in[0,1]$, where $\Delta A(\alpha) = \frac{2\alpha-1}{1-\alpha+\alpha^2}$.

We then analyze the sign of $\Delta \Psi(\alpha)$ at the optimum. Specifically, let $\hat{\alpha} \triangleq \alpha_i^{wp-\max}(\theta_i)$; by definition, 
\begin{equation}\label{pfDelta-phi}
\phi_i^{wp}(\theta_i) = \Psi_i^{wp}(\hat{\alpha}, \theta_i) = A^{wp}(\hat{\alpha}) \cdot \theta_i^2 - B^{wp}(\hat{\alpha}) \cdot \frac{1-F_i(\theta_i)}{f_i(\theta_i)}\cdot \theta_i.
\end{equation}

\textbf{Case 1:} For $\hat{\alpha} < \frac{1}{2}$. Given that 
$\alpha_i^{sp-\max} \ge 1/2$ by Lemma \ref{lm-sellerlead-alphamax}, 
consider $\tilde{\alpha} = 1-\hat{\alpha} > 1/2$. Thus, 
$A^{sp}(\tilde{\alpha}) = A^{sp}(1-\hat{\alpha}) = A^{wp}(\hat{\alpha})$ 
by \eqref{rev-sym-A}. Since $\hat{\alpha} < \frac{1}{2}$, we have 
$B^{sp}(\tilde{\alpha}) < B^{wp}(\hat{\alpha})$ strictly by \eqref{largeBBB}. 
Since $\frac{1-F_i(\theta_i)}{f_i(\theta_i)}\cdot\theta_i > 0$ for 
$\theta_i > \underline{\theta}_i$, this gives:
\begin{equation*}
\Psi_i^{sp}(\tilde{\alpha}, \theta_i) = A^{wp}(\hat{\alpha})\cdot\theta_i^2 
- B^{sp}(\tilde{\alpha})\cdot\frac{1-F_i(\theta_i)}{f_i(\theta_i)}\cdot
\theta_i > \phi_i^{wp}(\theta_i),
\end{equation*}
from \eqref{pfDelta-phi}. Combined with $\phi_i^{sp}(\theta_i) \ge 
\Psi_i^{sp}(\tilde{\alpha}, \theta_i)$, we have 
$\phi_i^{sp}(\theta_i) > \phi_i^{wp}(\theta_i)$.

\textbf{Case 2:} For $\hat{\alpha} = \frac{1}{2}$. Then 
$\Delta A(\hat{\alpha}) = 0$, so $\Delta\Psi(\hat{\alpha}) = 0$ by 
\eqref{rev-diff-eq}. However, since $\alpha_i^{sp-\max}(\theta_i) \geq 
\alpha_c > \frac{1}{2}$ by \eqref{mmmmmmmmmmmmmmmm0} and Lemma 
\ref{lm-sellerlead-alphamax}, we have $\alpha_i^{sp-\max}(\theta_i) \neq 
\hat{\alpha}$, and:
\begin{equation*}
\phi_i^{sp}(\theta_i) = \Psi_i^{sp}(\alpha_i^{sp-\max}(\theta_i), \theta_i) 
> \Psi_i^{sp}(\hat{\alpha}, \theta_i) = \Psi_i^{wp}(\hat{\alpha}, \theta_i) 
= \phi_i^{wp}(\theta_i),
\end{equation*}
where the strict inequality follows from $\alpha_i^{sp-\max}$ being the 
unique maximizer of $\Psi_i^{sp}$.

\textbf{Case 3:} For $\hat{\alpha} > \frac{1}{2}$. Since $\Delta A(\hat{\alpha}) = \frac{2\hat{\alpha}-1}{1-\hat{\alpha}+\hat{\alpha}^2} > 0$, by \eqref{rev-diff-eq}, proving $\Psi_i^{sp}(\hat{\alpha},\theta_i) > \Psi_i^{wp}(\hat{\alpha},\theta_i)$ reduces to showing:
\begin{equation}\label{cond-pos}
\theta_i - (1-\hat{\alpha})\cdot\frac{1-F_i(\theta_i)}{f_i(\theta_i)} > 0.
\end{equation}

Since $\hat{\alpha} = \alpha_i^{wp\text{-max}}(\theta_i)$ is an interior maximizer of $\Psi_i^{wp}(\alpha,\theta_i)$, the FOC gives:
\begin{equation}\label{wp-foc-sub}
\frac{1-F_i(\theta_i)}{f_i(\theta_i)} = \frac{\left(A^{wp}(\hat{\alpha})\right)'}{\left(B^{wp}(\hat{\alpha})\right)'} \cdot \theta_i.
\end{equation}

Substituting \eqref{wp-foc-sub} into \eqref{cond-pos}, and using $\left(B^{wp}(\alpha)\right)' = -\frac{(1-\alpha)(1+\alpha)}{D(\alpha)^2} < 0$, condition \eqref{cond-pos} is equivalent to:
\begin{equation}\label{deriv-ineq}
\left|\left(A^{wp}(\hat{\alpha})\right)'\right| < \frac{(1+\hat{\alpha})\cdot D(\hat{\alpha})^2}{(1-\hat{\alpha})\cdot D(\hat{\alpha})^2} = \frac{1+\hat{\alpha}}{1-\hat{\alpha}}.
\end{equation}

Direct computation gives $\left|\left(A^{wp}(\alpha)\right)'\right| = \frac{\alpha^3 - 3\alpha^2 + 4\alpha - 1}{D(\alpha)^3}$, where the numerator is non-negative for $\hat{\alpha} > \frac{1}{2}$ by the stationary condition \eqref{CL-decrease-OptAlpha-00}. Using $(1+\alpha)\cdot D(\alpha) = 1 + \alpha^3$, condition \eqref{deriv-ineq} becomes:
\begin{equation*}
\frac{\hat{\alpha}^3 - 3\hat{\alpha}^2 + 4\hat{\alpha} - 1}{D(\hat{\alpha})^3} < \frac{1+\hat{\alpha}}{D(\hat{\alpha})^2}
\iff \hat{\alpha}^3 - 3\hat{\alpha}^2 + 4\hat{\alpha} - 1 < (1+\hat{\alpha})\cdot D(\hat{\alpha}) = 1 + \hat{\alpha}^3
\iff -3\hat{\alpha}^2 + 4\hat{\alpha} - 2 < 0,
\end{equation*}
which holds since $-3\hat{\alpha}^2 + 4\hat{\alpha} - 2 = -3\!\left(\hat{\alpha} - \tfrac{2}{3}\right)^2 - \tfrac{2}{3} < 0$ for all $\hat{\alpha}$.

Thus \eqref{cond-pos} is satisfied, and consequently:
\begin{equation*}
\phi_i^{sp}(\theta_i) \geq \Psi_i^{sp}(\hat{\alpha}, \theta_i) > \Psi_i^{wp}(\hat{\alpha}, \theta_i) = \phi_i^{wp}(\theta_i).
\end{equation*}

Combining both cases, we establish strict pointwise dominance $\phi_i^{sp}(\theta_i) > \phi_i^{wp}(\theta_i)$ for all $\theta_i$. Since the seller's expected revenue is the expectation of the maximum virtual surplus of the allocated bidder:
\begin{equation*}
\overline{\text{Rev}}^{sp} = \mathbb{E}_{\bm{\theta}} \left[ \max_i \phi_i^{sp}(\theta_i) \right] > \mathbb{E}_{\bm{\theta}} \left[ \max_i \phi_i^{wp}(\theta_i) \right] = \overline{\text{Rev}}^{wp}.
\end{equation*}
This completes the proof.

\section{Proof of Proposition \ref{ppproooo222} \label{winnerlead-IM}}

\paragraph{Verify equilibrium.} We first establish that each bidder's drop-out strategy,\linebreak $\widetilde{\beta}_i^{wp}(\theta_i| p_1, \cdots, p_k)$, is increasing in $\theta_i$ for any $k$, since $\phi_i^{wp}(\cdot)$ is an increasing function by Lemma \ref{increasing-of-surplus}.

Given the history of exit prices $p_1 \leq \cdots \leq p_k$, an effective deviation depends on the winner's identity. Using the winner's value-sharing payment $\widetilde{\alpha}_i^{wp}(\cdot)$ from \eqref{oryoryroy}, which is independent of other bidders' exit strategies, we compute bidder $i$'s ex-post payoff upon winning after exiting at price $p_n$. Based on the winner-pivotal structure from equation \eqref{Vstructure-1} and expression \eqref{expostpayoff-winner}, this payoff is given as follows:
\begin{equation*}
\begin{split}
\hat{U}_i^{wp}(p_n, \theta_i) = & \left(1 - \widetilde{\alpha}_i^{wp}(p_n)\right) \cdot \left(
\begin{gathered}
\left(\theta_i + \frac{(1-\widetilde{\alpha}_i^{wp}(p_n)) \cdot \widetilde{\alpha}_i^{wp}(p_n) \cdot \mathcal{G}(p_n)}{1 - (1-\widetilde{\alpha}_i^{wp}(p_n)) \cdot \widetilde{\alpha}_i^{wp}(p_n)}\right) \\
\cdot (1-\widetilde{\alpha}_i^{wp}(p_n))\cdot \left(\theta_i + \frac{(1-\widetilde{\alpha}_i^{wp}(p_n)) \cdot \widetilde{\alpha}_i^{wp}(p_n)\cdot \mathcal{G}(p_n)}{1 - (1-\widetilde{\alpha}_i^{wp}(p_n)) \cdot \widetilde{\alpha}_i^{wp}(p_n)}\right)
\end{gathered}
\right)\\
& - \frac{1}{2}\cdot \left((1-\widetilde{\alpha}_i^{wp}(p_n))\cdot \left(\theta_i + \frac{(1-\widetilde{\alpha}_i^{wp}(p_n)) \cdot \widetilde{\alpha}_i^{wp}(p_n)\cdot \mathcal{G}(p_n)}{1 - (1-\widetilde{\alpha}_i^{wp}(p_n)) \cdot \widetilde{\alpha}_i^{wp}(p_n)}\right)\right)^2\\
& - \left(
\begin{gathered}
\frac{\left(1-\widetilde{\alpha}_i^{wp}(p_n)\right)^2}{2}\cdot \left(\mathcal{G}(p_n) + \frac{(1-\widetilde{\alpha}_i^{wp}(p_n)) \cdot \widetilde{\alpha}_i^{wp}(p_n)\cdot \mathcal{G}(p_n)}{1 - (1-\widetilde{\alpha}_i^{wp}(p_n)) \cdot \widetilde{\alpha}_i^{wp}(p_n)}\right)^2\\
- \int_{\mathcal{G}(p_{n-1})}^{\mathcal{G}(p_n)} \left\{
\begin{gathered}
\left(1- \alpha_i^{wp-\max}(\tau)\right)^2 \\
\cdot \left(\tau + \frac{(1-\alpha_i^{wp-\max}(\tau)) \cdot \alpha_i^{wp-\max}(\tau)\cdot \tau}{1 - (1-\alpha_i^{wp-\max}(\tau)) \cdot \alpha_i^{wp-\max}(\tau)}\right)
\end{gathered}\right\}d\tau
\end{gathered}
\right),
\end{split}
\end{equation*}
given that the seller and winner choose post-auction efforts as specified in \eqref{opopop1111}, and the seller holds the posterior belief outlined in \eqref{xcxc}. Applying the envelope theorem yields winning bidder $i$'s ex-post equilibrium payoff as:
\begin{equation*}
\hat{U}_i^{wp}(\theta_i) = \int_{\mathcal{G}(p_{n-1})}^{\theta_i} \left(1- \alpha_i^{wp-\max}(\tau)\right)^2 \cdot \left(\tau + \frac{(1-\alpha_i^{wp-\max}(\tau)) \cdot \alpha_i^{wp-\max}(\tau)\cdot \tau}{1 - (1-\alpha_i^{wp-\max}(\tau)) \cdot \alpha_i^{wp-\max}(\tau)}\right) d\tau.
\end{equation*}

Then, for any deviating strategy, that is, exiting at $p_n = \phi_i^{wp}(\hat{\theta}_i) \neq \phi_i^{wp}(\theta_i)$, without loss of generality assume $\theta_i > \hat{\theta}_i$. We demonstrate that $\hat{U}_i^{wp}(\theta_i) - \hat{U}_i^{wp}(p_n, \theta_i) \geq 0$ using an approach analogous to that in \eqref{GIC-2-type2}, by setting $q_i^{wp^*}=1$ upon winning and applying Claim \ref{CL-derivative}. Hence, each bidder adopting the drop-out strategy $\widetilde{\beta}_i^{wp}(\theta_i|p_1, \cdots, p_k) = \phi_i^{wp}(\theta_i)$ constitutes a quasi-dominant strategy equilibrium.

\paragraph{Implementation.} In this equilibrium, the last active bidder wins, meaning the winning probability is $\mathds{1}_{\{i: p_i \geq \max \bm{p}_{-i}\}}$. This aligns with the optimal allocation rule $q^{wp^*}_i(\bm{\theta})$ specified in \eqref{tm11-11}, given that each bidder exits when the price reaches $\phi_i^{wp}(\theta_i)$, thereby selecting the bidder with the highest virtual surplus.

We now implement the optimal payment rule specified in \eqref{oppalphatype2} and \eqref{OptmechType2-cash}. The winner's value-sharing rule remains unchanged since $\widetilde{\alpha}_i^{wp} = \alpha_i^{wp-\max}$ according to \eqref{oryoryroy}. For the cash payments $(\nu_i^{wp^*-w}, \nu_i^{wp^*-l})$, based on the winner and losing bidders' cash payments in this ascending auction as specified in \eqref{tttttt}, we define each bidder $i$'s minimum type to guarantee winning as $\hat{\theta}_i(\bm{\theta}_{-i}) \triangleq (\phi_i^{wp})^{-1}(\max_{j \in \mathcal{N}\backslash \{i\}} \phi_j^{wp}(\theta_j)) = \mathcal{G}(p_{n-1})$ for any profile of opponent types $\bm{\theta}_{-i}$. We then derive each bidder $i$'s expected cash payment at equilibrium as follows:
\begin{equation}\label{pf-wl-im-cash}
\begin{split}
& \int_{\bm{\Theta}_{-i}} \left\{
\begin{gathered}
\frac{\left(1-\alpha_i^{wp-\max}(\theta_i)\right)^2}{2}\cdot \left(\theta_i + \frac{(1-\alpha_i^{wp-\max}(\theta_i)) \cdot \alpha_i^{wp-\max}(\theta_i)\cdot \theta_i}{1 - (1-\alpha_i^{wp-\max}(\theta_i)) \cdot \alpha_i^{wp-\max}(\theta_i)}\right)^2\\
- \int_{\hat{\theta}_i(\bm{\theta}_{-i})}^{\theta_i} \left(1- \alpha_i^{wp-\max}(\tau)\right)^2 \cdot \left(\tau + \frac{(1-\alpha_i^{wp-\max}(\tau)) \cdot \alpha_i^{wp-\max}(\tau)\cdot \tau}{1 - (1-\alpha_i^{wp-\max}(\tau)) \cdot \alpha_i^{wp-\max}(\tau)}\right) d\tau
\end{gathered}
\right\}\\
& \cdot \mathds{1}_{\{\theta_i: \theta_i\geq \hat{\theta}_i(\bm{\theta}_{-i})\}}\bm{f}_{-i}(\bm{\theta}_{-i}) d\bm{\theta}_{-i}\\
= & \int_{\bm{\Theta}_{-i}} \left\{
\begin{gathered}
\frac{\left(1-\alpha_i^{wp-\max}(\theta_i)\right)^2}{2}\cdot \left(\theta_i + \frac{(1-\alpha_i^{wp-\max}(\theta_i)) \cdot \alpha_i^{wp-\max}(\theta_i)\cdot \theta_i}{1 - (1-\alpha_i^{wp-\max}(\theta_i)) \cdot \alpha_i^{wp-\max}(\theta_i)}\right)^2 \\
\cdot\, q^{wp^*}_i(\theta_i,\bm{\theta}_{-i})\\
- \int_{\underline{\theta}_i}^{\theta_i} \left(1- \alpha_i^{wp-\max}(\tau)\right)^2 \cdot \left(\tau + \frac{(1-\alpha_i^{wp-\max}(\tau)) \cdot \alpha_i^{wp-\max}(\tau)\cdot \tau}{1 - (1-\alpha_i^{wp-\max}(\tau)) \cdot \alpha_i^{wp-\max}(\tau)}\right) \\
\cdot\, q^{wp^*}_i(\tau,\bm{\theta}_{-i})\, d\tau
\end{gathered}
\right\}\\
& \cdot \bm{f}_{-i}(\bm{\theta}_{-i}) d\bm{\theta}_{-i},
\end{split}
\end{equation}
which coincides with the optimal cash payment rule specified in \eqref{OptmechType2-cash}. Finally, the seller's Dirac-form posterior belief and the monotonicity of $\mathcal{G}(\cdot)$ ensured by Lemma \ref{increasing-of-surplus} align with the Dirac-measure optimal signal realization rule. Moreover, the post-auction effort choices of the seller and winner $\widetilde{\bm{e}}^{wp}(p_n,p_{n-1})$ correspond to the bilateral post-auction efforts under $(\bm{q}^{wp^*},\allowbreak \bm{c}^{wp^*}, \bm{\pi}^{wp^*})$ as specified in \eqref{Optefforttype2-opt}.

We have thus established that the equilibrium of this ascending auction implements the optimal direct linear mechanism.

\section{Proof of Proposition \ref{sllead-IM} \label{selllead-IM}}

Similar to winner-pivotal collaboration, consider bidder $i$'s ex-post payoff upon winning, $\hat{U}_i^{sp}(p_n, \theta_i)$. Following Appendix \ref{winnerlead-IM}, we show that $\hat{U}_i^{sp}(\theta_i) \geq \hat{U}_i^{sp}(p_n, \theta_i)$ for any deviation $p_n = \phi_i^{sp}(\hat{\theta}_i) \neq \phi_i^{sp}(\theta_i)$, establishing that no profitable deviation exists. Details are omitted.

The approximate implementation of the optimal mechanism follows the same approach as in the winner-pivotal case. Specifically, exiting at the bidder's virtual surplus implements the optimal allocation rule. The winner's value-sharing rule approximates $\alpha_i^{sp^*}(\cdot | \bm{\theta})$ as $\epsilon \rightarrow 0$. Additionally, the cash payments $(\widetilde{t}_i^{sp-w}, \widetilde{t}_i^{sp-l})$ implement the optimal cash payment rule in expected form as given in \eqref{sl-lead-op-cash} as $\epsilon \to 0$, analogous to \eqref{pf-wl-im-cash}.

\section{Seller's expected payoff upper bound \label{opt-mech-nesting-pf}}

\paragraph{Seller's expected payoff.}

We derive each party's equilibrium aftermarket strategies, $\varepsilon^s_i \in \Delta( E^s_i)$ and $\varepsilon^b_i \in \Delta( E^b_i)$, which also degenerates to a Dirac measure. For any signal realization $s \in S_i$ and seller's posterior belief $\mu$, we derive $(\varepsilon^{opt-s}_i, \varepsilon^{opt-b}_i) = (\delta_{\{e^{opt-s}_i\}}, \delta_{\{e^{opt-b}_i\}})$ from:
\begin{equation*}
\begin{split}
e^{opt-s}_i \in & \argmax_{e^s_i \in E^s_i}\left\{
\begin{gathered}
\alpha_i\cdot \mathbb{E}_{\widetilde{\theta}_i}\left[\left.\underbrace{\mathbb{E}_{\varepsilon^{opt-b}_i(\widetilde{\theta}_i)}\left[\left(\zeta \cdot \widetilde{\theta}_i + e^s_i\right)\cdot \left((1-\zeta) \cdot \widetilde{\theta}_i + e^b_i\right)\right]}_{= \left(\zeta \cdot \widetilde{\theta}_i + e^s_i\right)\cdot \left((1-\zeta) \cdot \widetilde{\theta}_i + e^{opt-b}_i(\widetilde{\theta}_i)\right)}\right|\mu\right] - \frac{1}{2}\cdot \left(e^s_i\right)^2
\end{gathered}
\right\} \\
& \xlongequal{F.O.C.} \alpha_i\cdot \mathbb{E}_{\widetilde{\theta}_i}\left[\left.\left((1-\zeta) \cdot \widetilde{\theta}_i + e^{opt-b}_i(\widetilde{\theta}_i)\right) \right|\mu\right],\\
e^{opt-b}_i \in & \argmax_{e^b_i \in E^b_i}\left\{
\begin{gathered}
\left(1-\alpha_i\right)\cdot \underbrace{\mathbb{E}_{\varepsilon^{opt-s}_i(\mu)}\left[\left(\zeta \cdot \theta_i + e^s_i\right)\cdot \left((1-\zeta) \cdot \theta_i + e^b_i\right)\right]}_{= \left(\zeta \cdot \theta_i + e^{opt-s}_i(\mu)\right)\cdot \left((1-\zeta) \cdot \theta_i + e^b_i\right)} - \frac{1}{2}\cdot \left(e^b_i\right)^2
\end{gathered}
\right\} \\
& \xlongequal{F.O.C.} \left(1-\alpha_i\right) \cdot \left(\zeta \cdot \theta_i + e^{opt-s}_i(\mu)\right).
\end{split}
\end{equation*}

Solving the above system of equations yields:
\begin{equation*}
\begin{split}
e^{opt-s}_i = & e^{opt-s}_i(\mu) = \frac{\alpha_i\cdot (1-\alpha_i\cdot \zeta) \cdot  \mathbb{E}\left[\left.\widetilde{\theta}_i\right|\mu\right]}{1 - (1-\alpha_i) \cdot \alpha_i},\\
e^{opt-b}_i = & e^{opt-b}_i(\theta_i,\mu) = (1-\alpha_i)\cdot \left(\zeta \cdot \theta_i + \frac{\alpha_i\cdot (1-\alpha_i\cdot \zeta) \cdot  \mathbb{E}\left[\left.\widetilde{\theta}_i\right|\mu\right]}{1 - (1-\alpha_i) \cdot \alpha_i}\right),
\end{split}
\end{equation*}
where $\mathbb{E}[\widetilde{\theta}_i | \mu] = \mathbb{E}[\widetilde{\theta}_i | \mu(\cdot |i, \bm{c}, s)]$ denotes the posterior mean of the winner's type. Therefore, for any direct linear mechanism $(\bm{q}, \bm{c}, \bm{\pi})$, bidder $i$'s expected payoff is given by:
\begin{equation*}
\begin{split}
& U_i^{opt}(\hat{\theta}_i, e^{opt-b}_i, e^{opt-s}_i, \theta_i)\\
= & \int_{\bm{\Theta}_{-i}} q_i(\hat{\theta}_i,\bm{\theta}_{-i}) \int_{\bm{\mathcal{T}}_{-i}^l}\int_{[0,1]\times \mathcal{T}_i^w} \\
& \cdot \int_{S_i} \left\{
\begin{gathered}
(1-\alpha_i) \cdot \left(\zeta \cdot \theta_i + \frac{\alpha_i\cdot (1-\alpha_i\cdot \zeta) \cdot \mathbb{E}\left[\left.\widetilde{\theta}_i\right|\mu(\cdot |i, \bm{c}, s)\right]}{1 - (1-\alpha_i) \cdot \alpha_i}\right) \\
\cdot \left((1-\zeta) \cdot \theta_i + (1-\alpha_i)\cdot \left(\zeta \cdot \theta_i + \frac{\alpha_i\cdot (1-\alpha_i\cdot \zeta) \cdot \mathbb{E}\left[\left.\widetilde{\theta}_i\right|\mu(\cdot |i, \bm{c}, s)\right]}{1 - (1-\alpha_i) \cdot \alpha_i}\right)\right)\\
- \frac{1}{2}\cdot\left((1-\alpha_i)\cdot \left(\zeta \cdot \theta_i + \frac{\alpha_i\cdot (1-\alpha_i\cdot \zeta) \cdot  \mathbb{E}\left[\left.\widetilde{\theta}_i\right|\mu(\cdot |i, \bm{c}, s)\right]}{1 - (1-\alpha_i) \cdot \alpha_i}\right)\right)^2 - t_i^w
\end{gathered}
\right\} \pi_i(d s|i,\bm{c},\hat{\theta}_i,\bm{\theta}_{-i}) \\
&\cdot \kappa_i(d\alpha_i,dt_i^w|\hat{\theta}_i,\bm{\theta}_{-i})\bm{\nu}_{-i}^l(d\bm{t}_{-i}^l|\hat{\theta}_i,\bm{\theta}_{-i})\bm{f}_{-i}(\bm{\theta}_{-i}) d\bm{\theta}_{-i}\\
& + \int_{\bm{\Theta}_{-i}} \left(1 - q_i(\hat{\theta}_i,\bm{\theta}_{-i})\right) \int_{\mathcal{T}_i^l} \left(-t_i^l\right)\nu_i^{l}(d t_i^l|\hat{\theta}_i,\bm{\theta}_{-i})\bm{f}_{-i}(\bm{\theta}_{-i}) d\bm{\theta}_{-i}.
\end{split}
\end{equation*}
where $\bm{\mathcal{T}}_{-i}^l \equiv \bigtimes_{j \in \mathcal{N}, j \neq i} \mathcal{T}_j^l$ and $\bm{\nu}_{-i}^l(d\bm{t}_{-i}^l|\hat{\theta}_i,\bm{\theta}_{-i}) \equiv \prod_{j \in \mathcal{N}, j \neq i} \nu_j^l(d t_j^l|\hat{\theta}_i,\bm{\theta}_{-i})$.

In a feasible direct linear mechanism, the incentive compatibility condition for each bidder requires that $U_i^{opt}(\theta_i) = \max_{\hat{\theta}_i \in \Theta_i} U_i^{opt}(\hat{\theta}_i, e^{opt-b}_i, e^{opt-s}_i, \theta_i)$. By applying the general envelope theorem (see \cite{Milgrom2002}), we find that an incentive-compatible direct mechanism necessarily satisfies:
\begin{equation}\label{Spayoff-opt}
U_i^{opt}(\theta_i) = U_i^{opt}(\underline{\theta}_i) + \int_{\underline{\theta}_i}^{\theta_i} W_i(\tau) d\tau,
\end{equation}
where $W_i(\tau)$ is defined as follows:
\begin{equation*}
\begin{split}
W_i(\tau) \triangleq & \int_{\bm{\Theta}_{-i}} q_i(\tau,\bm{\theta}_{-i}) \int_{\bm{\mathcal{T}}_{-i}^l}\int_{[0,1]\times \mathcal{T}_i^w} \\
& \cdot \int_{S_i} \left\{
\begin{gathered}
(1-\alpha_i) \cdot \left(\zeta \cdot \tau + \frac{\alpha_i\cdot (1-\alpha_i\cdot \zeta) \cdot \mathbb{E}\left[\left.\widetilde{\theta}_i\right|\mu(\cdot |i, \bm{c}, s)\right]}{1 - (1-\alpha_i) \cdot \alpha_i}\right)\\
\cdot \left((1-\zeta) + (1-\alpha_i) \cdot \zeta\right)
\end{gathered}
\right\} \pi_i(d s|i,\bm{c},\tau,\bm{\theta}_{-i}) \\
&\cdot \kappa_i(d\alpha_i,dt_i^w|\tau,\bm{\theta}_{-i})\bm{\nu}_{-i}^l(d\bm{t}_{-i}^l|\tau,\bm{\theta}_{-i})\bm{f}_{-i}(\bm{\theta}_{-i}) d\bm{\theta}_{-i}.
\end{split}
\end{equation*}

Meanwhile, according to the expression above, bidder $i$'s expected payoff from truthfully reporting is:
\begin{equation}\label{nnbnbnbn-opt}
\begin{split}
& U_i^{opt}(\theta_i) = \int_{\bm{\Theta}_{-i}} q_i(\theta_i,\bm{\theta}_{-i}) \int_{\bm{\mathcal{T}}_{-i}^l}\int_{[0,1]\times \mathcal{T}_i^w} \\
& \cdot \int_{S_i} \left\{
\begin{gathered}
(1-\alpha_i) \cdot \left(\zeta \cdot \theta_i + \frac{\alpha_i\cdot (1-\alpha_i\cdot \zeta) \cdot \mathbb{E}\left[\left.\widetilde{\theta}_i\right|\mu(\cdot |i, \bm{c}, s)\right]}{1 - (1-\alpha_i) \cdot \alpha_i}\right) \\
\cdot \left((1-\zeta) \cdot \theta_i + (1-\alpha_i)\cdot \left(\zeta \cdot \theta_i + \frac{\alpha_i\cdot (1-\alpha_i\cdot \zeta) \cdot \mathbb{E}\left[\left.\widetilde{\theta}_i\right|\mu(\cdot |i, \bm{c}, s)\right]}{1 - (1-\alpha_i) \cdot \alpha_i}\right)\right)\\
- \frac{1}{2}\cdot\left((1-\alpha_i)\cdot \left(\zeta \cdot \theta_i + \frac{\alpha_i\cdot (1-\alpha_i\cdot \zeta) \cdot  \mathbb{E}\left[\left.\widetilde{\theta}_i\right|\mu(\cdot |i, \bm{c}, s)\right]}{1 - (1-\alpha_i) \cdot \alpha_i}\right)\right)^2
\end{gathered}
\right\} \pi_i(d s|i,\bm{c},\theta_i,\bm{\theta}_{-i}) \\
&\cdot \kappa_i(d\alpha_i,dt_i^w|\theta_i,\bm{\theta}_{-i}) \bm{\nu}_{-i}^l(d\bm{t}_{-i}^l|\theta_i,\bm{\theta}_{-i}) \bm{f}_{-i}(\bm{\theta}_{-i}) d\bm{\theta}_{-i}\\
& - \underbrace{\int_{\bm{\Theta}_{-i}} \left\{
\begin{gathered}
q_i(\theta_i,\bm{\theta}_{-i}) \int_{\mathcal{T}_i^w} t_i^w\nu_i^{w}(d t_i^w|\theta_i,\bm{\theta}_{-i})\\
+ \left(1 - q_i(\theta_i,\bm{\theta}_{-i})\right)\int_{\mathcal{T}_i^l} t_i^l\nu_i^{l}(d t_i^l|\theta_i,\bm{\theta}_{-i})
\end{gathered}
\right\}\bm{f}_{-i}(\bm{\theta}_{-i}) d\bm{\theta}_{-i}}_{\triangleq P_i(\theta_i)},
\end{split}
\end{equation}
where the marginal probabilities, $\nu_i^{w}(\cdot|\bm{\theta}) \triangleq \int_0^1 \kappa_i(d\alpha_i,\cdot|\bm{\theta})$ and $\sigma_i(\cdot|\bm{\theta}) \triangleq \int_{\mathcal{T}_i^w} \kappa_i(\cdot,dt_i^w|\bm{\theta})$ are defined in the same way as previous two scenarios.

Combining \eqref{Spayoff-opt} and \eqref{nnbnbnbn-opt}, we derive the expression for $P_i(\theta_i)$ and obtain:
\begin{equation*}
\begin{split}
& \sum_{i\in\mathcal{N}}\int_{\Theta_i} P_i(\theta_i)f_i(\theta_i)d\theta_i =  \sum_{i\in\mathcal{N}} \int_{\bm{\Theta}} q_i(\bm{\theta}) \cdot \int_{\bm{\mathcal{T}}_{-i}^l} \int_{[0,1]\times \mathcal{T}_i^w} \\
&\cdot \int_{S_i} \left\{
\begin{gathered}
\frac{1}{2}\cdot (1-\alpha_i)^2 \cdot \left(\zeta \cdot \theta_i + \frac{\alpha_i\cdot (1-\alpha_i\cdot \zeta) \cdot \mathbb{E}\left[\left.\widetilde{\theta}_i\right|\mu(\cdot |i, \bm{c}, s)\right]}{1 - (1-\alpha_i) \cdot \alpha_i}\right)^2\\
\cdot \left((1-\zeta) + (1-\alpha_i) \cdot \zeta\right)^2\\
- (1-\alpha_i) \cdot \left(\zeta \cdot \theta_i + \frac{\alpha_i\cdot (1-\alpha_i\cdot \zeta) \cdot \mathbb{E}\left[\left.\widetilde{\theta}_i\right|\mu(\cdot |i, \bm{c}, s)\right]}{1 - (1-\alpha_i) \cdot \alpha_i}\right)\\
\cdot \left((1-\zeta) + (1-\alpha_i) \cdot \zeta\right) \cdot \frac{1 - F_i(\theta_i)}{f_i(\theta_i)}
\end{gathered}
\right\}  \pi_i(d s|i,\bm{c},\bm{\theta}) \\
&\cdot \kappa_i(d\alpha_i,dt_i^w|\bm{\theta})\bm{\nu}_{-i}^l(d\bm{t}_{-i}^l|\bm{\theta}) \bm{f}(\bm{\theta}) d \bm{\theta} - \sum_{i\in\mathcal{N}}U_i^{opt}(\underline{\theta}_i).
\end{split}
\end{equation*}

Then, the seller's expected payoff comprises the sum of bidders' cash payments and value shares, net of effort costs, expressed as:
\begin{equation}\label{REV-opt}
\begin{split}
& \text{Rev}^{opt}(\bm{q}, \bm{c}, \bm{\pi}) \\
= &\sum_{i\in\mathcal{N}}\int_{\Theta_i} P_i(\theta_i)f_i(\theta_i)d\theta_i + \sum_{i\in\mathcal{N}} \int_{\bm{\Theta}} q_i(\bm{\theta}) \cdot \int_{\bm{\mathcal{T}}_{-i}^l} \int_{[0,1]\times \mathcal{T}_i^w} \\
& \cdot\int_{S_i} \left\{
\begin{gathered}
\alpha_i \cdot \left(\zeta \cdot \theta_i +  \frac{\alpha_i\cdot (1-\alpha_i\cdot \zeta) \cdot \mathbb{E}\left[\left.\widetilde{\theta}_i\right|\mu(\cdot |i, \bm{c}, s)\right]}{1 - (1-\alpha_i) \cdot \alpha_i}\right) \\
\cdot \left((1-\zeta) \cdot \theta_i + (1-\alpha_i)\cdot \left(\zeta \cdot \theta_i + \frac{\alpha_i\cdot (1-\alpha_i\cdot \zeta) \cdot \mathbb{E}\left[\left.\widetilde{\theta}_i\right|\mu(\cdot |i, \bm{c}, s)\right]}{1 - (1-\alpha_i) \cdot \alpha_i}\right)\right)\\
- \frac{1}{2}\cdot\left(\frac{\alpha_i\cdot (1-\alpha_i\cdot \zeta) \cdot  \mathbb{E}\left[\left.\widetilde{\theta}_i\right|\mu(\cdot |i, \bm{c}, s)\right]}{1 - (1-\alpha_i) \cdot \alpha_i}\right)^2
\end{gathered}
\right\} \pi_i(d s|i,\bm{c},\bm{\theta}) \\
&\cdot \kappa_i(d\alpha_i,dt_i^w|\bm{\theta})\bm{\nu}_{-i}^l(d\bm{t}_{-i}^l|\bm{\theta}) \bm{f}(\bm{\theta}) d \bm{\theta}\\
= &\sum_{i\in\mathcal{N}} \int_{\bm{\Theta}} q_i(\bm{\theta}) \cdot \int_{\bm{\mathcal{T}}_{-i}^l}\int_{[0,1]\times \mathcal{T}_i^w} \\
& \cdot\int_{S_i} \left\{
\begin{gathered}
\left(\zeta - \frac{\zeta^2 \cdot  (1+\alpha_i^2)}{2}\right)\cdot \theta_i^2 \\
+ (1-\zeta \cdot \alpha_i^2) \cdot \theta_i \cdot \frac{\alpha_i\cdot (1-\alpha_i\cdot \zeta) \cdot  \mathbb{E}\left[\left.\widetilde{\theta}_i\right|\mu(\cdot |i, \bm{c}, s)\right]}{1 - (1-\alpha_i) \cdot \alpha_i}\\
- \frac{\alpha_i^2}{2}\cdot\left(\frac{\alpha_i\cdot (1-\alpha_i\cdot \zeta) \cdot  \mathbb{E}\left[\left.\widetilde{\theta}_i\right|\mu(\cdot |i, \bm{c}, s)\right]}{1 - (1-\alpha_i) \cdot \alpha_i}\right)^2 \\
- (1-\alpha_i) \cdot \zeta \cdot (2 - \zeta \cdot (1+\alpha_i)) \cdot \theta_i \cdot \frac{1 - F_i(\theta_i)}{f_i(\theta_i)} \\
- (1-\alpha_i) \cdot (1-\alpha_i \cdot \zeta) \cdot \frac{\alpha_i\cdot (1-\alpha_i\cdot \zeta) \cdot  \mathbb{E}\left[\left.\widetilde{\theta}_i\right|\mu(\cdot |i, \bm{c}, s)\right]}{1 - (1-\alpha_i) \cdot \alpha_i} \cdot \frac{1 - F_i(\theta_i)}{f_i(\theta_i)}
\end{gathered}
\right\} \pi_i(d s|i,\bm{c},\bm{\theta}) \\
&\cdot \kappa_i(d\alpha_i,dt_i^w|\bm{\theta}) \bm{\nu}_{-i}^l(d\bm{t}_{-i}^l|\bm{\theta}) \bm{f}(\bm{\theta}) d \bm{\theta} - \sum_{i\in\mathcal{N}}U_i^{opt}(\underline{\theta}_i).
\end{split}
\end{equation}

\paragraph{Determine the feasible upper bound.} SImilar to the winner- and seller-pivotal cases, we denote $\xi(i, d\bm{c}, ds)$ as the probability measure associated with bidder $i$ winning, where the direct linear mechanism generates payment profile $\bm{c} = (c_i, \bm{t}^l) = (\alpha_i, t_i^w, (t_j^l)_{j\neq i})$ and public signal $s\in S_i$. Conditional on this, from the seller's perspective, bidders' types are drawn according to probability measure $\varrho(\cdot|i, \bm{c}, s)\in \Delta (\bm{\Theta} )$, which implies $\int_{\bm{\Theta}_{-i}}\varrho(d\bm{\theta}|i, \bm{c}, s) = \mu(d\theta_{i}|i, \bm{c}, s)$. The seller's expected equilibrium payoff in \eqref{REV-opt} can then be expressed as:
\begin{equation*}
\begin{split}
& \text{Rev}^{opt}(\bm{q}, \bm{c}, \bm{\pi})\\
= & \sum_{i\in\mathcal{N}} \int_{[0,1]\times \mathcal{T}_i^w \times \bm{\mathcal{T}}_{-i}^l \times S_i} \left\{
\begin{gathered}
\left(\zeta - \frac{\zeta^2 \cdot  (1+\alpha_i^2)}{2}\right)\cdot \mathbb{E}\left[\left.\widetilde{\theta}_i^2\right|\mu(\cdot |i, \bm{c}, s)\right] \\
+ \frac{(1-\zeta \cdot \alpha_i^2) \cdot \alpha_i\cdot (1-\alpha_i\cdot \zeta) \cdot  \mathbb{E}^2\left[\left.\widetilde{\theta}_i\right|\mu(\cdot |i, \bm{c}, s)\right]}{1 - (1-\alpha_i) \cdot \alpha_i}\\
- \frac{\alpha_i^2}{2}\cdot\left(\frac{\alpha_i\cdot (1-\alpha_i\cdot \zeta) \cdot  \mathbb{E}\left[\left.\widetilde{\theta}_i\right|\mu(\cdot |i, \bm{c}, s)\right]}{1 - (1-\alpha_i) \cdot \alpha_i}\right)^2 \\
- (1-\alpha_i) \cdot \zeta \cdot (2 - \zeta \cdot (1+\alpha_i)) \\
\cdot \mathbb{E}\left[\left.\widetilde{\theta}_i\right|\mu(\cdot |i, \bm{c}, s)\right] \cdot \mathbb{E}\left[\left.\frac{1 - F_i(\widetilde{\theta}_i)}{f_i(\widetilde{\theta}_i)}\right|\mu(\cdot |i, \bm{c}, s)\right] \\
- \frac{(1-\alpha_i) \cdot \alpha_i\cdot (1-\alpha_i\cdot \zeta)^2}{1 - (1-\alpha_i) \cdot \alpha_i} \\
\cdot  \mathbb{E}\left[\left.\widetilde{\theta}_i\right|\mu(\cdot |i, \bm{c}, s)\right] \cdot \mathbb{E}\left[\left.\frac{1 - F_i(\widetilde{\theta}_i)}{f_i(\widetilde{\theta}_i)}\right|\mu(\cdot |i, \bm{c}, s)\right]
\end{gathered}
\right\} \xi(i, d\bm{c}, ds)\\
& - \sum_{i\in\mathcal{N}}U_i^{opt}(\underline{\theta}_i).
\end{split}
\end{equation*}

Using the following two inequalities: $\mathbb{E}^2[\widetilde{\theta}_i|\mu(\widetilde{\theta}_i|i, \bm{c}, s)] \leq \mathbb{E}[\widetilde{\theta}_i^2|\mu(\widetilde{\theta}_i|i, \bm{c}, s)]$ by Jensen's inequality, and
\begin{equation*}
\mathbb{E}\left[\left.\frac{F_i(\widetilde{\theta}_i)-1}{f_i(\widetilde{\theta}_i)}\right|\mu(\widetilde{\theta}_i|i, \bm{c}, s)\right] \cdot \mathbb{E}\left[\left.\widetilde{\theta}_i\right|\mu(\widetilde{\theta}_i|i, \bm{c}, s)\right] \leq \mathbb{E}\left[\left.\frac{F_i(\widetilde{\theta}_i)-1}{f_i(\widetilde{\theta}_i)} \cdot \widetilde{\theta}_i\right|\mu(\widetilde{\theta}_i|i, \bm{c}, s)\right].
\end{equation*}
Moreover, individual rationality implies $\sum_{i \in \mathcal{N}} U_i^{opt}(\underline{\theta}_i) \geq 0$ since $U_i^{opt}(\theta_i) \geq 0$ for all $\theta_i \in \Theta_i$. We therefore obtain the following inequality:
\begin{equation*}
\begin{split}
& \text{Rev}^{opt}(\bm{q}, \bm{c}, \bm{\pi}) \\
\leq & \sum_{i\in\mathcal{N}} \int_{[0,1]\times \mathcal{T}_i^w \times \bm{\mathcal{T}}_{-i}^l \times S_i}\left\{
\begin{gathered}
\underbrace{\left(\begin{gathered}\zeta - \frac{\zeta^2 \cdot  (1+\alpha_i^2)}{2} + \frac{\alpha_i \cdot (1-\alpha_i \cdot \zeta)\cdot(1-\zeta\cdot\alpha_i^2)}{1 - (1-\alpha_i) \cdot \alpha_i}\\
- \frac{\alpha_i^4\cdot(1-\alpha_i\cdot\zeta)^2}{2(1 - (1-\alpha_i) \cdot \alpha_i)^2}\end{gathered}\right)}_{\text{non-negative}} \\
\cdot \mathbb{E}\left[\left.\widetilde{\theta}_i^2\right|\mu(\cdot|i, \bm{c}, s)\right] \\
- \underbrace{(1-\alpha_i)\cdot\left(
\zeta\cdot(2 - \zeta\cdot(1+\alpha_i)) + \frac{\alpha_i\cdot(1-\alpha_i\cdot\zeta)^2}{1 - (1-\alpha_i) \cdot \alpha_i}
\right)}_{\text{non-negative}} \\
\cdot \mathbb{E}\left[\left.\frac{1 - F_i(\widetilde{\theta}_i)}{f_i(\widetilde{\theta}_i)}\cdot \widetilde{\theta}_i \right|\mu(\cdot|i, \bm{c}, s)\right]
\end{gathered}
\right\} \xi(i, d\bm{c}, ds)\\
= & \sum_{i\in\mathcal{N}} \cdot \int_{[0,1]\times \mathcal{T}_i^w \times \bm{\mathcal{T}}_{-i}^l \times S_i} \\
& \cdot \int_{\bm{\Theta}} \left\{
\begin{gathered}
\left(\begin{gathered}
\zeta - \frac{\zeta^2 \cdot (1+\alpha_i^2)}{2} + \frac{\alpha_i \cdot (1-\alpha_i \cdot \zeta)\cdot(1-\zeta\cdot\alpha_i^2)}{1 - (1-\alpha_i) \cdot \alpha_i} \\
- \frac{\alpha_i^4\cdot(1-\alpha_i\cdot\zeta)^2}{2(1 - (1-\alpha_i) \cdot \alpha_i)^2}
\end{gathered}\right) \cdot \theta_i^2 \\
- (1-\alpha_i)\cdot\left(
\zeta\cdot(2 - \zeta\cdot(1+\alpha_i))+ \frac{\alpha_i\cdot(1-\alpha_i\cdot\zeta)^2}{1 - (1-\alpha_i) \cdot \alpha_i}
\right) \cdot \frac{1-F_i(\theta_i)}{f_i(\theta_i)}\cdot \theta_i
\end{gathered}
\right\} \varrho(d\bm{\theta}|i, \bm{c}, s)\xi(i, d\bm{c}, ds)\\
= & \sum_{i\in\mathcal{N}}\int_{\bm{\Theta}} q_i(\bm{\theta}) \cdot \int_{\bm{\mathcal{T}}_{-i}^l} \int_{[0,1]\times \mathcal{T}_i^w} \\
& \cdot\int_{S_i} \left\{
\begin{gathered}
\left(\begin{gathered}
\zeta - \frac{\zeta^2 \cdot (1+\alpha_i^2)}{2} + \frac{\alpha_i \cdot (1-\alpha_i \cdot \zeta)\cdot(1-\zeta\cdot\alpha_i^2)}{1 - (1-\alpha_i) \cdot \alpha_i} \\
- \frac{\alpha_i^4\cdot(1-\alpha_i\cdot\zeta)^2}{2(1 - (1-\alpha_i) \cdot \alpha_i)^2}
\end{gathered}\right) \cdot \theta_i^2 \\
- (1-\alpha_i)\cdot\left(
\zeta\cdot(2 - \zeta\cdot(1+\alpha_i)) + \frac{\alpha_i\cdot(1-\alpha_i\cdot\zeta)^2}{1 - (1-\alpha_i) \cdot \alpha_i}
\right) \cdot \frac{1-F_i(\theta_i)}{f_i(\theta_i)}\cdot \theta_i
\end{gathered}
\right\} \pi_i(d s|i,\bm{c},\bm{\theta}) \\
&\cdot \kappa_i(d\alpha_i,dt_i^w|\bm{\theta}) \bm{\nu}_{-i}^l(d\bm{t}_{-i}^l|\bm{\theta}) \bm{f}(\bm{\theta}) d \bm{\theta}\\
= & \sum_{i\in\mathcal{N}}\int_{\bm{\Theta}} q_i(\bm{\theta}) \cdot \int_{0}^1 \left\{
\begin{gathered}
\left(
\zeta - \frac{\zeta^2 \cdot (1+\alpha_i^2)}{2} + \frac{\alpha_i \cdot (1-\alpha_i \cdot \zeta)\cdot(1-\zeta\cdot\alpha_i^2)}{1 - (1-\alpha_i) \cdot \alpha_i}
- \frac{\alpha_i^4\cdot(1-\alpha_i\cdot\zeta)^2}{2(1 - (1-\alpha_i) \cdot \alpha_i)^2}
\right) \cdot \theta_i^2 \\
- (1-\alpha_i)\cdot\left(
\zeta\cdot(2 - \zeta\cdot(1+\alpha_i)) + \frac{\alpha_i\cdot(1-\alpha_i\cdot\zeta)^2}{1 - (1-\alpha_i) \cdot \alpha_i}
\right) \cdot \frac{1-F_i(\theta_i)}{f_i(\theta_i)}\cdot \theta_i
\end{gathered}
\right\}\\
& \cdot \sigma_i(d\alpha_i|\bm{\theta}) \bm{f}(\bm{\theta}) d\bm{\theta},
\end{split}
\end{equation*}
where the equalities follow from a change-of-measure and the marginal probability measure $\sigma_i(\cdot|\bm{\theta})$. We next maximize over $\sigma_i(\cdot|\bm{\theta})$. For any profile of bidders' types $\bm{\theta}\in \bm{\Theta}$, we solve:
\begin{equation*}
\max_{\sigma_i(d\alpha_i|\bm{\theta})}\int_{0}^1 \underbrace{\left\{
\begin{gathered}
\left(
\zeta - \frac{\zeta^2 \cdot (1+\alpha_i^2)}{2} + \frac{\alpha_i \cdot (1-\alpha_i \cdot \zeta)\cdot(1-\zeta\cdot\alpha_i^2)}{1 - (1-\alpha_i) \cdot \alpha_i} - \frac{\alpha_i^4\cdot(1-\alpha_i\cdot\zeta)^2}{2(1 - (1-\alpha_i) \cdot \alpha_i)^2}
\right) \cdot \theta_i^2\\
- (1-\alpha_i)\cdot\left(
\zeta\cdot(2 - \zeta\cdot(1+\alpha_i)) + \frac{\alpha_i\cdot(1-\alpha_i\cdot\zeta)^2}{1 - (1-\alpha_i) \cdot \alpha_i}
\right) \cdot \frac{1-F_i(\theta_i)}{f_i(\theta_i)}\cdot \theta_i
\end{gathered}
\right\}}_{\text{defined as }\Psi_i^{opt}(\alpha_i,\theta_i,\zeta)} \sigma_i(d\alpha_i|\bm{\theta}).
\end{equation*}

The maximizer, denoted $\sigma_i^{opt-\max}(\cdot|\bm{\theta}) \in \Delta([0,1])$, is a Dirac measure that places unit mass on $\alpha_i^{opt-\max} = \alpha_i^{opt-\max}(\theta_i,\zeta)$, where $\alpha_i^{opt-\max}(\theta_i,\zeta) \in \argmax_{\alpha_i\in[0,1]} \Psi_i^{opt}(\alpha_i,\theta_i,\zeta)$. We define the corresponding maximal value as bidder $i$'s virtual surplus:
\begin{equation}\label{mAxxx_opt}
\phi_i^{opt}(\theta_i,\zeta) \triangleq \int_{0}^1 \Psi_i^{opt}(\alpha_i,\theta_i,\zeta)\sigma_i^{opt-\max}(d\alpha_i|\bm{\theta}) = \Psi_i^{opt}(\alpha_i^{opt-\max},\theta_i,\zeta).
\end{equation}

Hence, for any feasible direct linear mechanism $(\bm{q}, \bm{c}, \bm{\pi})$, the seller's expected payoff is therefore bounded above by $\overline{\text{Rev}}^{opt}(\bm{q}, \bm{c}, \bm{\pi})$, characterized by:
\begin{equation}\label{uuuuuperbddOPT}
\begin{split}
\overline{\text{Rev}}^{opt}(\bm{q}, \bm{c}, \bm{\pi}) \triangleq & \max_{\substack{0\leq q_i(\bm{\theta}) \leq 1;\\ \sum_{i\in\mathcal{N}} q_i(\bm{\theta}) \leq 1}} \sum_{i\in\mathcal{N}}\int_{\bm{\Theta}} q_i(\bm{\theta}) \cdot \phi_i^{opt}(\theta_i,\zeta) \bm{f}(\bm{\theta}) d\bm{\theta}\\
= & \int_{\bm{\Theta}} \max\left\{\phi_1^{opt}(\theta_1,\zeta), \cdots, \phi_i^{opt}(\theta_i,\zeta),\cdots,\phi_n^{opt}(\theta_n,\zeta)\right\} \bm{f}(\bm{\theta}) d\bm{\theta}.
\end{split}
\end{equation}

\section{Proof of Proposition \ref{ppppp07} \label{ppppp07-pff}}




We show the existence of an optimal mechanism by providing a mechanism which is feasible and attaining the upper bound payoff. Specifically,
\begin{proposition}\label{ThmCompl1-opt}
Under assumptions in Proposition \ref{ppppp07}, a feasible optimal mechanism exists, characterized as:
\begin{enumerate}
\item[(i)] The allocation rule is given by:
\begin{equation}\label{tm11-11-opt}
q_i^{opt^*}(\bm{\theta}) = \begin{cases}
1, &\text{if } \phi^{opt}_i(\theta_i,\zeta)> \max_{j\neq i}\phi^{opt}_j(\theta_j,\zeta)\\
0, & \text{otherwise}
\end{cases}
\end{equation}
for all $i \in \mathcal{N}$, where $\phi_i^{opt}(\theta_i,\zeta)$ is each bidder's virtual surplus. Ties are broken uniformly at random.

\item[(ii)] The payment rule $\bm{c}^{opt^*}$ induces a deterministic value-sharing rule for the winner:
\begin{equation*}
\alpha_i^{opt^*}(\theta_i,\zeta) = \alpha_i^{opt-\max}(\theta_i,\zeta).
\end{equation*}
The cash payments upon winning and losing are determined by $\nu_i^{opt^*-w}(\cdot | \bm{\theta})$ and $\nu_i^{opt^*-l}(\cdot | \bm{\theta})$ that satisfy:
\begin{equation*}
\begin{split}
& \int_{\bm{\Theta}_{-i}} \left\{
\begin{gathered}
q_i^{opt^*}(\theta_i,\bm{\theta}_{-i}) \cdot \int_{\mathcal{T}_i^w} t_i^w\nu_i^{opt^*-w}(d t_i^w|\theta_i,\bm{\theta}_{-i})  \\
+ \left(1 - q_i^{opt^*}(\theta_i,\bm{\theta}_{-i})\right)\int_{\mathcal{T}_i^l} t_i^l\nu_i^{opt^*-l}(d t_i^l|\theta_i,\bm{\theta}_{-i})
\end{gathered}
\right\} \bm{f}_{-i}(\bm{\theta}_{-i}) d\bm{\theta}_{-i} \\
= & \int_{\bm{\Theta}_{-i}} \left\{
\begin{gathered}
q_i^{opt^*}(\theta_i,\bm{\theta}_{-i}) \cdot \frac{(1-\alpha_i^{opt-\max}(\theta_i,\zeta))^2}{2} \\
\cdot \left(\zeta \cdot \theta_i + \frac{\alpha_i^{opt-\max}(\theta_i,\zeta) \cdot (1-\alpha_i^{opt-\max}(\theta_i,\zeta)\cdot \zeta) \cdot \theta_i}{1 - (1-\alpha_i^{opt-\max}(\theta_i,\zeta)) \cdot \alpha_i^{opt-\max}(\theta_i,\zeta)}\right)^2 \\
-\int_{\underline{\theta}_i}^{\theta_i} q_i^{opt^*}(\tau,\bm{\theta}_{-i}) \cdot (1-\alpha_i^{opt-\max}(\tau,\zeta))^2\\
\cdot \left(\zeta \cdot \tau + \frac{\alpha_i^{opt-\max}(\tau,\zeta) \cdot (1-\alpha_i^{opt-\max}(\tau,\zeta)\cdot \zeta) \cdot \tau}{1 - (1-\alpha_i^{opt-\max}(\tau,\zeta)) \cdot \alpha_i^{opt-\max}(\tau,\zeta)}\right) d\tau
\end{gathered}
\right\} \bm{f}_{-i}(\bm{\theta}_{-i}) d\bm{\theta}_{-i}.
\end{split}
\end{equation*}
This expression (integration form) shows that the cash payments for both the winner and the losers can be constructed in multiple ways.

\item[(iii)] The signal realization rule $\pi_i^{opt^*}(\cdot|i,\bm{c},\bm{\theta}) \in \Delta(S_i)$, conditional on the profile of bidders' reported types, winner's identity, and bidders' payments, is a Dirac measure:
\begin{equation*}
\pi_i^{opt^*}(\cdot|i,\bm{c},\bm{\theta}) = \delta_{\{\theta_i\}},
\end{equation*}
which assigns probability mass one to the winner's truthfully reported type $\theta_i$.
\end{enumerate}
\end{proposition}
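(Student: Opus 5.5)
The plan follows Appendix \ref{APPEEE} step by step, with the general-$\zeta$ effort formulas of Appendix \ref{opt-mech-nesting-pf} in place of \eqref{Optefforttype2}. Throughout, write $h_i(\theta_i)\triangleq\frac{1-F_i(\theta_i)}{f_i(\theta_i)}$ and $D(\alpha)=1-(1-\alpha)\alpha$.

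\textbf{Step 1: payoffs under full revelation.} Under $\pi_i^{opt^*}=\delta_{\{\theta_i\}}$, a bidder who reports $\hat\theta_i$ and wins induces the posterior mean $\hat\theta_i$. Hence the seller's effort is $\hat\alpha(1-\hat\alpha\zeta)\hat\theta_i/D(\hat\alpha)$ with $\hat\alpha=\alpha_i^{opt-\max}(\hat\theta_i,\zeta)$. The winner's effort is $(1-\hat\alpha)\bigl(\zeta\theta_i+\hat\alpha(1-\hat\alpha\zeta)\hat\theta_i/D(\hat\alpha)\bigr)$. Substituting these and the cash rule in (ii) into $U_i^{opt}(\hat\theta_i,\cdot,\theta_i)$ gives two facts.
\begin{itemize}
\item[(a)] The truthful payoff is $U_i^{opt}(\theta_i)=\int_{\bm\Theta_{-i}}\int_{\underline\theta_i}^{\theta_i}q_i^{opt^*}(\tau,\bm\theta_{-i})\,m(\tau)\,d\tau\,\bm f_{-i}\,d\bm\theta_{-i}$. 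Here $m(\tau)$ is the envelope derivative $W_i$ in \eqref{Spayoff-opt} evaluated at a Dirac posterior; up to the common factor it equals $H(\alpha_i^{opt-\max}(\tau,\zeta),\zeta)\,\tau$. This gives individual rationality with $U_i^{opt}(\underline\theta_i)=0$.
\item[(b)] For $\theta_i>\hat\theta_i$, the gain from truth-telling has the form
\[
\int\Bigl[\int_{\hat\theta_i}^{\theta_i}q_i^{opt^*}(\tau,\cdot)\,m(\tau)\,d\tau-q_i^{opt^*}(\hat\theta_i,\cdot)\,\ell(\hat\theta_i,\theta_i)\Bigr],
\]
where $\ell$ is the quadratic-in-$\theta_i$ cross term. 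This is the exact analogue of $\mathcal I(\theta_i)$ in \eqref{GIC-2-type2}.
\end{itemize}

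\textbf{Step 2: monotonicity (the hard step).} The same argument as in \eqref{GIC-2-type2} and Claim \ref{CL-derivative} makes $\mathcal I\ge 0$, provided three things hold.
\begin{itemize}
\item[(1)] $q_i^{opt^*}$ is nondecreasing in $\theta_i$, i.e.\ $\phi_i^{opt}(\cdot,\zeta)$ is increasing.
\item[(2)] The coefficient $(1-\alpha)^2$ (more precisely $(1-\alpha)(1-\alpha\zeta)$) multiplying the linear part is nondecreasing in $\theta_i$.
\item[(3)] The term $H(\alpha_i^{opt-\max}(\theta_i,\zeta),\zeta)\,\theta_i$ is nondecreasing in $\theta_i$.
\end{itemize}

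\emph{Monotonicity of the optimal share.} At an interior maximizer the FOC reads $\partial_\alpha A\cdot\theta_i=\partial_\alpha B\cdot h_i(\theta_i)$. Since $h_i(\theta_i)/\theta_i$ is strictly decreasing (log-concavity), a monotone comparative-statics argument applies: check that $\partial_\alpha\Psi_i^{opt}/\theta_i^2$ has decreasing differences in $(\alpha,\,h_i/\theta_i)$. With $\partial_\alpha A<0$ at the optimum, this shows $\alpha_i^{opt-\max}(\cdot,\zeta)$ is non-increasing; corner solutions are handled as in Appendix \ref{lm-sellerlead-alphamax-PF}.

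\emph{Consequences for (2) and (3).} Because $\alpha_i^{opt-\max}$ is non-increasing, the hypothesis $\partial_\alpha H<0$ gives that $H(\alpha_i^{opt-\max}(\theta_i,\zeta),\zeta)$ is nondecreasing in $\theta_i$. Multiplying by $\theta_i$ yields (3), which replaces Claim \ref{CL-derivative}. Condition (2) follows directly from the monotonicity of $\alpha$.

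\emph{Monotonicity of $\phi_i^{opt}$, i.e.\ (1).} By the envelope theorem,
\[
\frac{d\phi_i^{opt}}{d\theta_i}=2A\theta_i-B\bigl(h_i+h_i'\theta_i\bigr).
\]
The term $-Bh_i'\theta_i$ is nonnegative because $B\ge 0$ and $h_i'\le 0$. For the remaining part $2A\theta_i-Bh_i$, substitute the FOC $h_i=\theta_i\,\partial_\alpha A/\partial_\alpha B$ and use the sign restriction $\partial_\alpha A<0$; I expect this to reduce to a polynomial inequality in $(\alpha,\zeta)$. This is where I anticipate the real difficulty, and I may need the $H$-condition as well.

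\textbf{Step 3: optimality.} Nonnegativity of the virtual surplus comes from $\phi_i^{opt}\ge\Psi_i^{opt}(1,\theta_i,\zeta)=A(1,\zeta)\theta_i^2\ge 0$, since $B(1,\zeta)=0$. So the allocation always sells to the bidder with the highest $\phi_i^{opt}$. Under Dirac posteriors, both the Jensen inequality and the covariance inequality used in Appendix \ref{opt-mech-nesting-pf} hold with equality. Also $U_i^{opt}(\underline\theta_i)=0$, and the share equals the pointwise argmax. Substituting into \eqref{REV-opt} therefore gives $\sum_i\int q_i^{opt^*}\phi_i^{opt}\,\bm f=\overline{\text{Rev}}^{opt}$ in \eqref{uuuuuperbddOPT}, exactly as in the optimality part of Appendix \ref{APPEEE}.
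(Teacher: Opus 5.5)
Your outline matches the paper's proof: full revelation, an IC inequality modelled on \eqref{GIC-opt-target}, a non-increasing share from the first- and second-order conditions, monotonicity of $\theta_iH(\alpha_i^{opt-\max},\zeta)$ from the hypothesis, and optimality by attaining \eqref{uuuuuperbddOPT}.

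The step you leave open, monotonicity of $\phi_i^{opt}$, needs neither the first-order condition nor a polynomial inequality; the ingredient is already in your Step 3. Since $\phi_i^{opt}(\theta_i,\zeta)\ge\Psi_i^{opt}(1,\theta_i,\zeta)=\frac{1-\zeta^2}{2}\theta_i^2\ge 0$, at the maximizer $A\theta_i\ge Bh_i$. Hence, with $A,B$ evaluated at $\alpha_i^{opt-\max}$,
\[
\frac{d\phi_i^{opt}}{d\theta_i}=2A\theta_i-B\,(h_i+h_i'\theta_i)\;\ge\;2Bh_i-B\,(h_i+h_i'\theta_i)=B\,(h_i-h_i'\theta_i)\;\ge\;0 .
\]
This is Claim \ref{claimoptincrease}. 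It uses only $B\ge 0$ and $h_i'\le 0$, and it covers corner maximizers.

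Your monotone-comparative-statics step for the share does not work as stated. Increasing differences of $A-rB$ in $(\alpha,r)$, with $r=h_i/\theta_i$, needs $B$ non-increasing on the whole feasible set. A sign condition at the optimum does not give that; at $\zeta=0$, $B=(1-\alpha)\alpha/D$ is increasing on $[0,\frac12]$. The paper instead differentiates the first-order condition. It then uses the strict second-order condition together with $\partial_\alpha B<0$ at the optimum, which follows from $\partial_\alpha A<0$ and the first-order condition.

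The more serious problem is Step 1(a), which you assert rather than compute and which is false for $\zeta<1$. The paper's feasibility paragraph asserts the same formula, so following it does not help. Write $s=\alpha(1-\alpha\zeta)/D(\alpha)$ and $X=(\zeta+s)\theta_i$.
\begin{itemize}
\item Under full revelation the winner's gross payoff is $(1-\alpha)V-\frac12(e^b)^2=(1-\alpha)(1-\zeta)\theta_iX+\frac12(1-\alpha)^2X^2$. The cash rule in (ii) removes only the second term.
\item The true envelope derivative is obtained by differentiating in the true type with the report, hence $\alpha$ and $e^s$, held fixed. At type $\tau$ it is $(1-\alpha)\bigl[\zeta(1-\zeta)\tau+(1-\alpha\zeta)(\zeta+s)\tau\bigr]=B(\alpha,\zeta)\tau$, the same $B$ as in the rent term of $\Psi_i^{opt}$. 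The displayed $W_i$ after \eqref{Spayoff-opt} drops the $(1-\alpha)\zeta(1-\zeta)\tau$ piece.
\item The ratio $B/H$ depends on $\alpha$, so ``up to the common factor'' hides a real discrepancy; the two agree only at $\zeta=1$. At $\zeta=0$, $B=(1-\alpha)\alpha/D$ but $H=(1-\alpha)^2\alpha/D$, and (ii) does not reduce to the cash rule of Theorem \ref{ThmCompl1-typeSL-seller}.
\end{itemize}
Consequently, for the rule exactly as stated, $U_i'\neq\mathbb{E}_{-i}[q_i]\,B\,\theta_i$ when $\zeta<1$, and the mechanism is in general not incentive compatible. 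For instance, with one bidder, $\underline\theta_i>0$ and $\zeta$ just below $1$, a small over-report is strictly profitable. So Steps 1--2 cannot establish feasibility of the literal mechanism.

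The repair keeps your architecture. Set the expected cash payment to
\[
\mathbb{E}_{-i}\Bigl[q_i(\theta_i,\cdot)\bigl((1-\alpha)(1-\zeta)\theta_iX+\tfrac12(1-\alpha)^2X^2\bigr)-\int_{\underline\theta_i}^{\theta_i}q_i(\tau,\cdot)\,B(\alpha(\tau),\zeta)\,\tau\,d\tau\Bigr].
\]
Then $U_i(\underline\theta_i)=0$ and revenue equals $\sum_i\mathbb{E}[q_i\phi_i^{opt}]$. IC, for both $\theta_i\gtrless\hat\theta_i$, then follows from three facts:
\begin{itemize}
\item $\mathbb{E}_{-i}q_i$ is non-decreasing in $\theta_i$;
\item $(1-\alpha)\zeta(2-\zeta(1+\alpha))$ is non-decreasing in $\theta_i$, which is automatic once the share is non-increasing;
\item $\frac{(1-\alpha)\alpha(1-\alpha\zeta)^2}{D(\alpha)}\theta_i$ is non-decreasing along $\alpha_i^{opt-\max}$.
\end{itemize}
The last quantity is the genuine analogue of Claim \ref{CL-derivative}, and a condition on it should replace the condition on $H$.
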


\begin{proof}
we show Proposition \ref{ThmCompl1-opt} through two steps: feasibility and optimality.

\paragraph{Feasibility.} Note that bidder $i$'s expected payoff from truthfully reporting $\theta_i$ under the mechanism $(\bm{q}^{opt^*}, \bm{c}^{opt^*}, \bm{\pi}^{opt^*})$:
\begin{equation*}
U_i^{opt}(\theta_i,\zeta) = \int_{\bm{\Theta}_{-i}}\int_{\underline{\theta}_i}^{\theta_i} \left\{
\begin{gathered}
q_i^{opt^*}(\tau,\bm{\theta}_{-i}) \cdot (1-\alpha_i^{opt-\max}(\tau,\zeta))^2 \\
\cdot \left(\zeta \cdot \tau + \frac{\alpha_i^{opt-\max}(\tau,\zeta) \cdot (1-\alpha_i^{opt-\max}(\tau,\zeta)\cdot \zeta) \cdot \tau}{1 - (1-\alpha_i^{opt-\max}(\tau,\zeta)) \cdot \alpha_i^{opt-\max}(\tau,\zeta)}\right)
\end{gathered}
\right\}d\tau \, \bm{f}_{-i}(\bm{\theta}_{-i}) d\bm{\theta}_{-i},
\end{equation*}
which is increasing in $\theta_i$ and satisfies individual rationality, thus, the demanding feasibility requires that:
\begin{equation}\label{GIC-opt-target}
\begin{split}
& U_i^{opt}(\theta_i,\zeta) - U_{i}^{opt}(\hat{\theta}_{i},\theta_{i},\zeta)\\
= & \int_{\bm{\Theta}_{-i}} \underbrace{\left\{
\begin{gathered}
\int_{\hat{\theta}_i}^{\theta_i} \left\{
\begin{gathered}
q_i^{opt^*}(\tau,\bm{\theta}_{-i}) \cdot (1-\alpha_i^{opt-\max}(\tau,\zeta))^2 \\
\cdot \left(\zeta \cdot \tau + \frac{\alpha_i^{opt-\max}(\tau,\zeta) \cdot (1-\alpha_i^{opt-\max}(\tau,\zeta)\cdot \zeta) \cdot \tau}{1 - (1-\alpha_i^{opt-\max}(\tau,\zeta)) \cdot \alpha_i^{opt-\max}(\tau,\zeta)}\right)
\end{gathered}
\right\} d\tau \\
- q_i^{opt^*}(\hat{\theta}_i,\bm{\theta}_{-i}) \cdot \frac{(1-\alpha_i^{opt-\max}(\hat{\theta}_i,\zeta))^2}{2} \\
\cdot \left(
\begin{gathered}
\theta_i + \hat{\theta}_i + 2\zeta\cdot(\theta_i + \hat{\theta}_i) \\
+ 2\cdot\frac{\alpha_i^{opt-\max}(\hat{\theta}_i,\zeta) \cdot (1-\alpha_i^{opt-\max}(\hat{\theta}_i,\zeta)\cdot \zeta) \cdot \hat{\theta}_i}{1 - (1-\alpha_i^{opt-\max}(\hat{\theta}_i,\zeta)) \cdot \alpha_i^{opt-\max}(\hat{\theta}_i,\zeta)}
\end{gathered}
\right) \cdot \left(\theta_i - \hat{\theta}_i\right)
\end{gathered}
\right\}}_{\triangleq \mathcal{I}^{opt}(\theta_i,\zeta)} \bm{f}_{-i}(\bm{\theta}_{-i}) d\bm{\theta}_{-i} \geq  0,
\end{split}
\end{equation}
for any $\theta_i, \hat{\theta}_i\in\Theta_i$ (without loss of generality, assume $\theta_i > \hat{\theta}_i$).

Similar to the proof of feasibility in winner- and seller-pivotal cases, the important intermediate step is:
\begin{equation}\label{vvvvvopt}
\frac{d}{d \theta_i} \left((1-\alpha_i^{opt-\max}(\theta_i,\zeta))^2 \cdot \left(\zeta \cdot \theta_i + \frac{\alpha_i^{opt-\max}(\theta_i,\zeta) \cdot (1-\alpha_i^{opt-\max}(\theta_i,\zeta)\cdot \zeta) \cdot \theta_i}{1 - (1-\alpha_i^{opt-\max}(\theta_i,\zeta)) \cdot \alpha_i^{opt-\max}(\theta_i,\zeta)}\right)\right) \geq 0,
\end{equation}
which critically rely on $\frac{d\alpha_i^{opt-\max}(\theta_i,\zeta)}{d\theta_i} \leq 0$.

\begin{claim}\label{optCLLLLL1}
$\alpha_i^{opt-\max}(\theta_i,\zeta)$ is non-increasing in $\theta_i$, i.e.,
$\frac{d\alpha_i^{opt-\max}(\theta_i,\zeta)}{d\theta_i} \leq 0$
\end{claim}
\begin{proof}
We first derive:
\begin{equation*}
\begin{split}
\left.\frac{\partial\Psi_i^{opt}}{\partial\alpha_i}\right|_{\alpha_i=0} = & A'(0,\zeta) \theta_i^2 - B'(0,\zeta) \frac{1-F_i(\theta_i)}{f_i(\theta_i)} \theta_i\\
= & \theta_i^2 + 2\zeta(1-\zeta) \frac{1-F_i(\theta_i)}{f_i(\theta_i)} \theta_i \geq \theta_i^2 \geq \underline{\theta}_i^2 > 0.
\end{split}
\end{equation*}

Given that $\frac{\partial\Psi_i^{opt}}{\partial\alpha_i}$ is continuous in $\alpha_i$ and strictly positive at $\alpha_i = 0$ with a uniform lower bound $\underline{\theta}_i^2$, there exists a uniform $\delta > 0$ (independent of $\theta_i, \zeta$) such that:
\begin{equation*}
\frac{\partial\Psi_i^{opt}}{\partial\alpha_i} > \frac{\underline{\theta}_i^2}{2} > 0 \quad \text{for all } \alpha_i \in [0, \delta],
\end{equation*}
which implies the maximizer $\alpha_i^{opt-\max}(\theta_i,\zeta) > \delta$ for all $(\theta_i,\zeta) \in [\underline{\theta}_i,\overline{\theta}_i] \times [0,1]$. Therefore the maximizer $\alpha_i^{opt-\max}$ can either be interior or at the corner $\alpha_i^{opt-\max} = 1$, which coincide with the two cases. When $\alpha_i^{opt-\max}$ is a corner maximizer, it is non-increasing in $\theta_i$; for an interior maximizer, $\alpha_i^* = \alpha_i^{opt-\max}(\theta_i,\zeta)$, by first-order condition (FOC):
\begin{equation}\label{FOC-interior-opt}
\frac{\partial \Psi_i^{opt}}{\partial \alpha_i}\bigg|_{\alpha_i = \alpha_i^*} = A'(\alpha_i^*,\zeta) \cdot \theta_i^2 - B'(\alpha_i^*,\zeta) \cdot H(\theta_i) \cdot \theta_i = 0,
\end{equation}
where primes denote derivatives with respect to $\alpha_i$ and $H(\theta_i) = \frac{1-F_i(\theta_i)}{f_i(\theta_i)}$. Rearranged as:
\begin{equation}\label{FOC-ratio}
\frac{A'(\alpha_i^*,\zeta)}{B'(\alpha_i^*,\zeta)} = \frac{H(\theta_i)}{\theta_i},
\end{equation}
which implies $A'(\alpha_i^*,\zeta)$ and $B'(\alpha_i^*,\zeta)$ have the same sign at the optimum since $\frac{H(\theta_i)}{\theta_i} > 0$. Meanwhile, the following second-order condition (SOC) holds for the interior maximizer:
\begin{equation}\label{SOC-main}
\frac{\partial^2 \Psi_i^{opt}}{\partial \alpha_i^2}\bigg|_{\alpha_i = \alpha_i^*} = A''(\alpha_i^*,\zeta) \cdot \theta_i^2 - B''(\alpha_i^*,\zeta) \cdot H(\theta_i) \cdot \theta_i < 0.
\end{equation}
Substituting \eqref{FOC-ratio} into the SOC \eqref{SOC-main}:
\begin{equation*}
\begin{split}
\text{SOC} &= A''(\alpha_i^*,\zeta) \cdot \theta_i \cdot \left[\frac{B'(\alpha_i^*,\zeta)}{A'(\alpha_i^*,\zeta)} \cdot H(\theta_i)\right] - B''(\alpha_i^*,\zeta) \cdot H(\theta_i) \cdot \theta_i\\
&= H(\theta_i) \cdot \theta_i \cdot \left[ \frac{A''(\alpha_i^*,\zeta) \cdot B'(\alpha_i^*,\zeta)}{A'(\alpha_i^*,\zeta)} - B''(\alpha_i^*,\zeta) \right] < 0,
\end{split}
\end{equation*}
which further implies:
\begin{equation}\label{SOC-simplified}
\frac{A''(\alpha_i^*,\zeta) \cdot B'(\alpha_i^*,\zeta)}{A'(\alpha_i^*,\zeta)} - B''(\alpha_i^*,\zeta) < 0,
\end{equation}
because $H(\theta_i) > 0$ and $\theta_i > 0$. By assumptions in Proposition \ref{ppppp07}, we have $A'(\alpha_i^*,\zeta) < 0$ and $B'(\alpha_i^*,\zeta) <0$. Thus, \eqref{SOC-simplified} yields:
\begin{equation}\label{Delta-inequality}
\frac{A''(\alpha_i^*,\zeta)}{A'(\alpha_i^*,\zeta)} - \frac{B''(\alpha_i^*,\zeta)}{B'(\alpha_i^*,\zeta)} > 0.
\end{equation}

Differentiating the FOC \eqref{FOC-interior-opt} with respect to $\theta_i$ using the implicit function theorem:
\begin{equation*}
\begin{split}
&\frac{d}{d\theta_i}\left(A'(\alpha_i^*,\zeta) \cdot \theta_i^2 - B'(\alpha_i^*,\zeta) \cdot H(\theta_i) \cdot \theta_i\right) = 0 \nonumber\\
\Rightarrow\quad & A''(\alpha_i^*,\zeta) \cdot \frac{d\alpha_i^*}{d\theta_i} \cdot \theta_i^2 + A'(\alpha_i^*,\zeta) \cdot 2\theta_i \nonumber\\
&\quad - B''(\alpha_i^*,\zeta) \cdot \frac{d\alpha_i^*}{d\theta_i} \cdot H(\theta_i) \cdot \theta_i - B'(\alpha_i^*,\zeta) \cdot \left(H'(\theta_i) \cdot \theta_i + H(\theta_i)\right) = 0\\
\Rightarrow\quad & \left(A''(\alpha_i^*,\zeta) \cdot \theta_i^2 - B''(\alpha_i^*,\zeta) \cdot H(\theta_i) \cdot \theta_i\right) \cdot \frac{d\alpha_i^*}{d\theta_i} = B'(\alpha_i^*,\zeta) \cdot \left(H'(\theta_i) \cdot \theta_i + H(\theta_i)\right) - A'(\alpha_i^*,\zeta) \cdot 2\theta_i.
\end{split}
\end{equation*}
Dividing both sides by $\theta_i^2$ and use the FOC \eqref{FOC-ratio}, we obtain:
\begin{equation*}
\begin{split}
& \frac{d\alpha_i^*}{d\theta_i} \left(A''(\alpha_i^*,\zeta) - B''(\alpha_i^*,\zeta) \cdot \frac{H(\theta_i)}{\theta_i}\right) = B'(\alpha_i^*,\zeta) \cdot \frac{H'(\theta_i) \cdot \theta_i + H(\theta_i)}{\theta_i^2} - A'(\alpha_i^*,\zeta) \cdot \frac{2}{\theta_i}\\
\Rightarrow &\quad \left(\frac{A''(\alpha_i^*,\zeta)}{A'(\alpha_i^*,\zeta)} - \frac{B''(\alpha_i^*,\zeta)}{B'(\alpha_i^*,\zeta)}\right) \cdot \frac{d\alpha_i^*}{d\theta_i} = \frac{B'(\alpha_i^*,\zeta)}{A'(\alpha_i^*,\zeta)} \cdot \frac{H'(\theta_i) \cdot \theta_i + H(\theta_i)}{\theta_i^2} - \frac{2}{\theta_i}\\
\Rightarrow &\quad  \underbrace{\left(\frac{A''(\alpha_i^*,\zeta)}{A'(\alpha_i^*,\zeta)} - \frac{B''(\alpha_i^*,\zeta)}{B'(\alpha_i^*,\zeta)}\right)}_{\text{$>0$ by \eqref{Delta-inequality}}} \cdot \frac{d\alpha_i^*}{d\theta_i} = \underbrace{\frac{H'(\theta_i)}{H(\theta_i)} - \frac{1}{\theta_i},}_{\text{$<0$ by log-concavity of $F$}}
\end{split}
\end{equation*}
which implies $\frac{d\alpha_i^*}{d\theta_i} < 0$. Thus, $\alpha_i^{opt-\max}(\theta_i,\zeta)$ is non-increasing in $\theta_i$.
\end{proof}

Therefore, we show the desired result in \eqref{vvvvvopt} through:
\begin{equation*}
\begin{split}
& \frac{d}{d \theta_i} \left((1-\alpha_i^{opt-\max}(\theta_i,\zeta))^2 \cdot \left(\zeta \cdot \theta_i + \frac{\alpha_i^{opt-\max}(\theta_i,\zeta) \cdot (1-\alpha_i^{opt-\max}(\theta_i,\zeta)\cdot \zeta) \cdot \theta_i}{1 - (1-\alpha_i^{opt-\max}(\theta_i,\zeta)) \cdot \alpha_i^{opt-\max}(\theta_i,\zeta)}\right)\right)\\
= & \underbrace{(1-\alpha_i^{opt-\max}(\theta_i,\zeta))^2 \cdot \left(\zeta  + \frac{\alpha_i^{opt-\max}(\theta_i,\zeta) \cdot (1-\alpha_i^{opt-\max}(\theta_i,\zeta)\cdot \zeta)}{1 - (1-\alpha_i^{opt-\max}(\theta_i,\zeta)) \cdot \alpha_i^{opt-\max}(\theta_i,\zeta)}\right)}_{\text{$= H(\alpha_i^{opt-\max},\zeta) \geq 0$}}\\
& + \theta_i \cdot \underbrace{\left((1-\alpha_i^{opt-\max}(\theta_i,\zeta))^2 \cdot \left(\zeta  + \frac{\alpha_i^{opt-\max}(\theta_i,\zeta) \cdot (1-\alpha_i^{opt-\max}(\theta_i,\zeta)\cdot \zeta)}{1 - (1-\alpha_i^{opt-\max}(\theta_i,\zeta)) \cdot \alpha_i^{opt-\max}(\theta_i,\zeta)}\right)\right)'}_{\text{$= H'(\alpha_i^{opt-\max},\zeta) < 0$ by assumption in Proposition \ref{ppppp07}}} \cdot \underbrace{\frac{d\alpha_i^{opt-\max}(\theta_i,\zeta)}{d\theta_i}}_{\text{$\leq 0$ by Claim \ref{optCLLLLL1}}}\\
\geq & 0.
\end{split}
\end{equation*}

Furthermore, we show that:

\begin{claim}\label{claimoptincrease}
The virtual surplus $\phi_i^{opt}(\theta_i, \zeta)$ defined in \eqref{mAxxx_opt} is non-negative and non-decreasing.
\end{claim}
\begin{proof}
We first show the non-negative virtual surplus according to:
\begin{equation}\label{nonnegaGen}
\phi_i^{opt}(\theta_i, \zeta) = \max_{\alpha_i \in [0,1]} \Psi_i^{opt}(\alpha_i, \theta_i, \zeta) \geq \Psi_i^{opt}(1, \theta_i, \zeta) = \frac{1 - \zeta^2}{2} \cdot \theta_i^2 \geq 0,
\end{equation}
for $\zeta \in [0,1]$ and $\theta_i \geq 0$.

For non-decreasing, using the Envelope Theorem, differentiating with respect to $\theta_i$ yields:
\begin{equation*}
\frac{\partial \phi_i^{opt}(\theta_i, \zeta)}{\partial \theta_i} = 2 A \cdot \theta_i - B \cdot \left( \frac{1-F_i(\theta_i)}{f_i(\theta_i)} + \theta_i \cdot \left(\frac{1-F_i(\theta_i)}{f_i(\theta_i)}\right)'\right).
\end{equation*}
According to the non-negativity in \eqref{nonnegaGen}:
\begin{equation*}
A \cdot \theta_i^2 \geq B \cdot \frac{1-F_i(\theta_i)}{f_i(\theta_i)} \cdot \theta_i \implies 2 A \cdot \theta_i \geq 2 B \cdot \frac{1-F_i(\theta_i)}{f_i(\theta_i)},
\end{equation*}
which implies:
\begin{equation*}
\begin{split}
\frac{\partial \phi_i^{opt}(\theta_i, \zeta)}{\partial \theta_i} \geq & 2 B \cdot \frac{1-F_i(\theta_i)}{f_i(\theta_i)} - B \cdot \left( \frac{1-F_i(\theta_i)}{f_i(\theta_i)} + \theta_i \cdot \left(\frac{1-F_i(\theta_i)}{f_i(\theta_i)}\right)'\right)\\
= & B \cdot \left( \frac{1-F_i(\theta_i)}{f_i(\theta_i)} - \theta_i \cdot \left(\frac{1-F_i(\theta_i)}{f_i(\theta_i)}\right)'\right) \geq 0,
\end{split}
\end{equation*}
since $\frac{1-F_i(\theta_i)}{f_i(\theta_i)}$ is decreasing in $\theta_i$ under the log-concavity of $f_i(\cdot)$. Thus, $\frac{\partial \phi_i^{opt}(\theta_i, \zeta)}{\partial \theta_i} \geq 0$.
\end{proof}

Back to the demanding inequality \eqref{GIC-opt-target}, we denote $\Upsilon_i(\theta_i,\zeta) = \theta_i \cdot H(\alpha_i^{opt-\max}(\theta_i,\zeta),\zeta)$, then for $\theta_i > \hat{\theta}_i$, we obtain:
\begin{equation*}
\begin{split}
U_i^{opt}(\theta_i,\zeta) - U_{i}^{opt}(\hat{\theta}_{i},\theta_{i},\zeta) \geq & q_i^{opt^*}(\hat{\theta}_i,\bm{\theta}_{-i}) \cdot \int_{\hat{\theta}_i}^{\theta_i} \Upsilon_i(\tau,\zeta) \;d\tau \\
& - q_i^{opt^*}(\hat{\theta}_i,\bm{\theta}_{-i}) \cdot (1-\alpha_i^{opt-\max}(\hat{\theta}_i,\zeta))^2 \cdot \left(\theta_i - \hat{\theta}_i\right)\\
& \quad \cdot \left(\zeta\cdot\theta_i + \frac{\alpha_i^{opt-\max}(\hat{\theta}_i,\zeta) \cdot (1-\alpha_i^{opt-\max}(\hat{\theta}_i,\zeta)\cdot \zeta) \cdot \hat{\theta}_i}{1 - (1-\alpha_i^{opt-\max}(\hat{\theta}_i,\zeta)) \cdot \alpha_i^{opt-\max}(\hat{\theta}_i,\zeta)}\right) \\
> & q_i^{opt^*}(\hat{\theta}_i,\bm{\theta}_{-i}) \cdot \left[ \int_{\hat{\theta}_i}^{\theta_i} \Upsilon_{i}(\hat{\theta}_i, \zeta) d\tau - \Upsilon_{i}(\hat{\theta}_i, \zeta) \cdot (\theta_i - \hat{\theta}_i) \right] = 0,
\end{split}
\end{equation*}
in which the first inequality follows because $q_i^{opt*}(\theta_i, \bm{\theta}_{-i})$ is increasing in $\theta_i$ (since virtual surplus $\phi_i^{opt}$ is increasing), and the second inequality follows from \eqref{vvvvvopt}, that implies $\Upsilon_{i}(\tau, \zeta) > \Upsilon_{i}(\hat{\theta}_i, \zeta)$ for $\tau > \hat{\theta}_i$.

\paragraph{Optimality.} We demonstrate that this mechanism yields an expected payoff for the seller that attains the upper bound $\overline{\text{Rev}}^{opt}(\bm{q}, \bm{c}, \bm{\pi})$ given in \eqref{uuuuuperbddOPT}. Under the feasible linear mechanism $(\bm{q}^{opt^*}, \bm{c}^{opt^*}, \bm{\pi}^{opt^*})$, using the definitions from \eqref{REV-opt} and substituting the optimal values $\alpha_i^{opt^*} = \alpha_i^{opt-\max}$, we have:
\begin{equation*}
\begin{split}
& \text{Rev}^{opt}(\bm{q}^{opt^*}, \bm{c}^{opt^*}, \bm{\pi}^{opt^*}) \\
= & \sum_{i\in\mathcal{N}} \int_{\bm{\Theta}} q_i^{opt^*}(\bm{\theta}) \cdot \left\{
\begin{gathered}
A(\alpha_i^{opt-\max}(\theta_i, \zeta), \zeta) \cdot \theta_i^2 \\
- B(\alpha_i^{opt-\max}(\theta_i, \zeta), \zeta) \cdot \frac{1-F_i(\theta_i)}{f_i(\theta_i)} \cdot \theta_i
\end{gathered}
\right\} \bm{f}(\bm{\theta}) d\bm{\theta} \\
= & \sum_{i\in\mathcal{N}} \int_{\bm{\Theta}} q_i^{opt^*}(\bm{\theta}) \cdot \Psi_i^{opt}(\alpha_i^{opt-\max}(\theta_i,\zeta),\theta_i,\zeta) \bm{f}(\bm{\theta}) d\bm{\theta}\\
= & \sum_{i\in\mathcal{N}} \int_{\bm{\Theta}} q_i^{opt^*}(\bm{\theta}) \cdot \phi_i^{opt}(\theta_i,\zeta) \bm{f}(\bm{\theta}) d\bm{\theta}.
\end{split}
\end{equation*}

Since the allocation rule $q_i^{opt^*}(\bm{\theta})$ is defined in \eqref{tm11-11-opt} to select the bidder with the highest virtual surplus $\phi_i^{opt}(\theta_i,\zeta)$ (provided it is non-negative and non-decreasing by Claim \ref{claimoptincrease}), the integral is maximized pointwise. Thus:
\begin{equation*}
\begin{split}
\text{Rev}^{opt}(\bm{q}^{opt^*}, \bm{c}^{opt^*}, \bm{\pi}^{opt^*}) & = \int_{\bm{\Theta}} \max\left\{0, \phi_1^{opt}(\theta_1,\zeta), \cdots, \phi_n^{opt}(\theta_n,\zeta)\right\} \bm{f}(\bm{\theta}) d\bm{\theta}\\
& = \overline{\text{Rev}}^{opt}(\bm{q}, \bm{c}, \bm{\pi}).
\end{split}
\end{equation*}
This confirms that the proposed mechanism achieves the upper bound of the seller's expected payoff and is therefore optimal.
\end{proof}

\section{Proof of Proposition \ref{pppp07} \label{pppp07-pf}}

We prove the result through the following four steps.

\paragraph{Step 1: Establish $A(\alpha_i,\zeta) = A(1-\alpha_i,1-\zeta)$.} We denote $D(\alpha_i) = 1 - \alpha_i + \alpha_i^2$ and we have $D(\alpha_i) = D(1-\alpha_i)$. Then, we derive:
\begin{equation}\label{alpand1-alp}
\begin{split}
2 D^2(\alpha_i) \cdot A(\alpha_i,\zeta)
= & \underbrace{2\alpha_i - 2\alpha_i^2 + 2\alpha_i^3 - \alpha_i^4}_{\triangleq C_0(\alpha_i)} + \underbrace{\left(2 -4\alpha_i + 4\alpha_i^2 - 4\alpha_i^3 + 2\alpha_i^4\right)}_{\triangleq C_1(\alpha_i)}\cdot \zeta\\
& + \underbrace{\left(-1 + 2\alpha_i -4\alpha_i^2 + 4\alpha_i^3 -2 \alpha_i^4\right)}_{\triangleq C_2(\alpha_i)}\cdot \zeta^2,
\end{split}
\end{equation}
where the coefficients satisfy:
\begin{equation}\label{alpand1-alp-1}
\begin{cases}
C_2(\alpha_i) = C_2(1-\alpha_i)\\
C_1(\alpha_i) = - C_1(1-\alpha_i) - 2C_2(1-\alpha_i)\\
C_0(\alpha_i) = C_0(1-\alpha_i) + C_1(1-\alpha_i) + C_2(1-\alpha_i)
\end{cases}
\end{equation}

Therefore, \eqref{alpand1-alp} implies that:
\begin{equation*}
\begin{split}
& 2 D^2(1-\alpha_i) \cdot A(1-\alpha_i,1-\zeta)\\
= & C_0(1-\alpha_i) + C_1(1-\alpha_i)\cdot (1-\zeta) + C_2(1-\alpha_i)\cdot (1-\zeta)^2\\
= & \underbrace{C_0(1-\alpha_i) + C_1(1-\alpha_i) + C_2(1-\alpha_i)}_{= C_0(\alpha_i)} + \underbrace{\left(- C_1(1-\alpha_i) - 2C_2(1-\alpha_i)\right)}_{= C_1(\alpha_i)}\cdot \zeta + \underbrace{C_2(1-\alpha_i)}_{= C_2(\alpha_i)} \cdot \zeta^2\\
= & 2 D^2(\alpha_i) \cdot A(\alpha_i,\zeta),
\end{split}
\end{equation*}
where the last equality follows from \eqref{alpand1-alp-1}. Combined with $D(\alpha_i) = D(1-\alpha_i)$, we obtain $A(\alpha_i,\zeta) = A(1-\alpha_i,1-\zeta)$ for any $\alpha_i\in[0,1]$ and $\zeta\in[0,1]$.

\paragraph{Step 2: Establish $\max_{\alpha_i \in [0,1]} A(\alpha_i, \zeta) = \max_{\alpha_i \in [0,1]} A(\alpha_i, 1-\zeta)$.} We denote $S^*$ and $S^{**}$ as the sets of all achievable values of $A$ given parameters $\zeta$ and $1-\zeta$, respectively, i.e.,
\begin{equation*}
\begin{split}
S^* \triangleq & \left\{ v \in \mathbb{R} | \exists \alpha_i \in [0,1] \text{ such that } v = A(\alpha_i, \zeta) \right\},\\
S^{**} \triangleq & \left\{ v \in \mathbb{R} | \exists \alpha_i \in [0,1] \text{ such that } v = A(\alpha_i, 1-\zeta) \right\}.
\end{split}
\end{equation*}

For any $\hat{v} \in S^*$, there exists some $\hat{\alpha}_i \in [0,1]$ such that $\hat{v} = A(\hat{\alpha}_i, \zeta)$, which implies $\hat{v} = A(1-\hat{\alpha}_i, 1-\zeta)$ by Step 1. Let $\hat{\alpha}_i' = 1 - \hat{\alpha}_i \in [0,1]$. Thus, $\hat{v} = A(\hat{\alpha}_i', 1-\zeta)$ for some $\hat{\alpha}_i'\in [0,1]$, implying $\hat{v} \in S^{**}$. Therefore, we obtain $S^{*} \subseteq S^{**}$. Applying similar arguments, we establish the reverse inclusion $S^{**} \subseteq S^{*}$. Thus, we have $S^{*} = S^{**}$, which implies $\max(S^*) = \max(S^{**})$. This yields the desired result.

\paragraph{Step 3: Comparing $B(\alpha_i,\zeta)$ and $B(1-\alpha_i,1-\zeta)$.} We define $E(\zeta) \triangleq \zeta \cdot (2-\zeta)$. We then simplify $B(\alpha_i,\zeta)$ as:
\begin{equation*}
B(\alpha_i,\zeta) = \frac{1-\alpha_i}{D(\alpha_i)} \cdot \left((1-\alpha_i) \cdot E(\zeta) + \alpha_i \cdot E(1-\zeta)\right).
\end{equation*}

Next, we compute the difference $\text{Diff} \triangleq B(\alpha_i,\zeta) - B(1-\alpha_i,1-\zeta)$:
\begin{equation*}
\begin{split}
D(\alpha_i) \cdot \text{Diff} = & (1-\alpha_i) \cdot \left((1-\alpha_i) \cdot E(\zeta) + \alpha_i \cdot E(1-\zeta)\right) - \alpha_i \cdot \left(\alpha_i  \cdot E(1-\zeta) + (1-\alpha_i) \cdot E(\zeta)\right) \\
= & (1-2\alpha_i) \cdot \left((1-\alpha_i) \cdot E(\zeta) + \alpha_i \cdot E(1-\zeta)\right).
\end{split}
\end{equation*}

Note that the term in brackets $(1-\alpha_i) \cdot E(\zeta) + \alpha_i \cdot E(1-\zeta)$ is strictly positive for $\alpha_i \in [0,1]$ and $\zeta \in (0,1)$, and $D(\alpha_i)$ is strictly positive as well. Hence, the sign of the difference $B(\alpha_i,\zeta) - B(1-\alpha_i,1-\zeta)$ is determined entirely by the sign of $1-2\alpha_i$. That is, $B(\alpha_i, \zeta) > B(1-\alpha_i, 1-\zeta)$ if $\alpha_i < \frac{1}{2}$, and $B(\alpha_i, \zeta) < B(1-\alpha_i, 1-\zeta)$ if $\alpha_i > \frac{1}{2}$.

\paragraph{Step 4: Optimal $\zeta^* \leq \frac{1}{2}$.} We prove by contradiction that for any $\zeta' > \frac{1}{2}$, $\zeta^* = 1- \zeta'$ yields strictly higher revenue for the seller. We first claim that $\alpha_i^{\max}(\theta_i, \zeta) < \frac{1}{2}$ for $\zeta > \frac{1}{2}$. Differentiating both sides of \eqref{alpand1-alp} with respect to $\alpha$,
\begin{equation*}
2 D^2(\alpha) \cdot A(\alpha,\zeta) = C_0(\alpha) + C_1(\alpha)\cdot \zeta + C_2(\alpha)\cdot \zeta^2,
\end{equation*}
yields:
\begin{equation*}
2 \left(2 D(\alpha) \cdot D'(\alpha) \cdot A + D^2(\alpha) \cdot \frac{\partial A}{\partial \alpha} \right) = C_0'(\alpha) + C_1'(\alpha) \cdot \zeta + C_2'(\alpha) \cdot \zeta^2.
\end{equation*}

Evaluating at the midpoint $\alpha=\frac{1}{2}$, we obtain:
\begin{equation*}
2 \left( 0 + \left(\frac{3}{4}\right)^2 \cdot\left.\frac{\partial A}{\partial \alpha}\right|_{\alpha=1/2} \right) = \frac{9}{8} \left.\frac{\partial A}{\partial \alpha}\right|_{\alpha=1/2} = 1 - 2\zeta + 0 \cdot \zeta^2 = 1 - 2\zeta,
\end{equation*}
which is negative for $\zeta > \frac{1}{2}$. Therefore, a strictly negative slope at the midpoint $\alpha = \frac{1}{2}$ implies that the peak must occur strictly to the left of the midpoint, i.e., $\alpha_i^{opt-\max}(\theta_i, \zeta) < \frac{1}{2}$.\footnote{Intuitively, from the value creation function in \eqref{Vgeneral}, $V_i = (\zeta \cdot \theta_i + e_i^s)\cdot((1-\zeta)\cdot\theta_i + e_i^b)$. Note that $\zeta \cdot \theta_i$ is strictly larger than $(1-\zeta)\cdot\theta_i$ for $\zeta > \frac{1}{2}$, which implies the marginal productivity of the winner's effort, $\zeta \theta_i + e_i^s$, is higher than that of the seller's effort, $(1-\zeta)\theta_i + e_i^b$. Hence, surplus maximization requires a higher incentive for the bidder than the seller, i.e., $1-\alpha_i > \alpha_i$, which implies $\alpha_i < \frac{1}{2}$.}

Therefore, for any $\zeta' > \frac{1}{2}$, we have the optimal $\alpha'_i = \alpha_i^{opt-\max}(\theta_i, \zeta') < \frac{1}{2}$. Then, according to Steps 1 and 3, there exists $\zeta^* = 1- \zeta'$ and $\alpha_i^* = 1- \alpha_i'$ such that:
\begin{equation*}
\begin{split}
A(\alpha_i^*,\zeta^*) = A(\alpha_i',\zeta'), \\
B(\alpha_i^*,\zeta^*) < B(\alpha_i',\zeta').
\end{split}
\end{equation*}

Therefore, $\phi_i(\theta_i, \zeta^*) > \phi_i(\theta_i, \zeta')$ holds pointwise for all $\theta_i$. Consequently, the expected revenue upper bound $\overline{\text{Rev}}(\bm{q}, \bm{c}, \bm{\pi}|\zeta^*)$ is strictly greater than $\overline{\text{Rev}}(\bm{q}, \bm{c}, \bm{\pi}|\zeta')$ for any $\zeta' > \frac{1}{2}$. Furthermore, if such $\zeta^* \leq \frac{1}{2}$ satisfies the conditions in Proposition \ref{ppppp07}, then the seller will choose this $\zeta^*$ such that $\text{Rev}^*(\zeta^*) = \overline{\text{Rev}}(\bm{q}, \bm{c}, \bm{\pi}|\zeta^*)$.

\end{appendices}


\bigskip
\singlespacing
\bibliographystyle{wang}
\bibliography{Bibliography_202604}

\end{document}